\begin{document}

\title{Monitoring Blackbox Implementations \texorpdfstring{\\}{} of  Multiparty Session Protocols\thanks{%
This paper is an extension of~\cite{conf/rv/vdHeuvelPD23} with appendices. \\
This research has been supported by the Dutch Research Council (NWO) under project No.\ 016.Vidi.189.046 (Unifying
Correctness for Communicating Software).}}
\titlerunning{Monitoring Blackbox Implementations of Multiparty Session Types}
\author{Bas van den Heuvel\orcidID{0000-0002-8264-7371} \and Jorge A.\ Pérez\orcidID{0000-0002-1452-6180} \and Rares A.\ Dobre}
\institute{University of Groningen, Groningen, The Netherlands}

\maketitle

\begin{abstract}
    We present a framework for the distributed monitoring of networks of components that coordinate by  message-passing, following multiparty session protocols specified as global types.
   We improve over prior works by (i)~supporting components whose exact specification is unknown (``blackboxes'') and (ii)~covering protocols that cannot be analyzed by existing techniques.
    We first give a procedure for synthesizing monitors for blackboxes from global types, and precisely define when a  blackbox correctly \emph{satisfies} its global type.
    Then, we prove that  monitored blackboxes are  \emph{sound} (they correctly follow the protocol) and \emph{transparent} (blackboxes with and without monitors are behaviorally equivalent).
    \keywords{distributed monitoring \and message-passing \and concurrency \and  multiparty session types }
\end{abstract}

\section{Introduction}
\label{s:introduction}

%

Runtime verification excels at analyzing systems with components that cannot be (statically) checked, such as closed-source and third-party components with unknown/partial specifications~\cite{book/BartocciFFR18,book/FrancalanzaPS18}.
In this spirit, we present a monitoring framework for networks of  communicating components.
We adopt \emph{global types} from \emph{multiparty session types}~\cite{conf/popl/HondaYC08,journal/acm/HondaYC16} both to specify protocols and to synthesize monitors.
As we explain next, rather than process implementations, we consider \emph{``blackboxes''}---components whose exact structure is unknown.
Also, aiming at wide applicability, we cover networks of monitored components that implement global types that go beyond the scope of existing techniques.

Session types provide precise specifications of the protocols that components should respect.
It is then natural to use session types as references for distributed monitoring~\cite{journal/tcs/BocchiCDHY17,conf/ecoop/BartoloBurloFS21,conf/popl/JiaGP16,journal/jlamp/GommerstadtJP22,conf/tgc/Thiemann14,conf/icfp/IgarashiTVW17}.
In particular, Bocchi \etal~\cite{journal/tcs/BocchiCDHY17,conf/forte/BocchiCDHY13,conf/tgc/ChenBDHY11} use multiparty session types to monitor networks of \picalc processes.
Leveraging notions originally conceived for static verification (such as \emph{global types} and their \emph{projection} onto \emph{local types}), their framework guarantees the correctness of monitored networks with statically and dynamically checked components.

However, existing monitoring techniques based on multiparty session types have two limitations.
One concerns the class of protocols they support; the other is their reliance on fully specified components, usually given as (typed) processes.
That is, definitions of networks assume that a component can be inspected---an overly strong assumption in many cases.
There is then a tension between (i)~the assumptions on component structure and (ii)~the strength of formal guarantees: the less we know about components, the harder it is to establish such guarantees.

\paragraph{Our approach.}
We introduce a new approach to monitoring based on multiparty session types that relies on minimal assumptions on a component's structure.
As key novelty, we consider \emph{blackboxes}---components with unknown structure but observable behavior---and \emph{networks} of monitored blackboxes that use asynchronous message-passing to implement multiparty protocols specified as global types.

As running example, let us consider the global type $\gtc{G_{\mathsf{a}}}$ (inspired by an example by Scalas and Yoshida~\cite{conf/popl/ScalasY19}), which expresses an authorization protocol between three participants: server ($s$), client ($c$), and authorization service ($a$).
\begin{equation}
    \gtc{G_{\mathsf{a}}} := \rec X \gtc. s \send c \gtBraces{ \msg \mathsf{login}<> \gtc. c \send a \gtBraces{ \msg \mathsf{pwd}<\mathsf{str}> \gtc. a \send s \gtBraces{ \msg \mathsf{succ}<\mathsf{bool}> \gtc. X } } \gtc, \msg \mathsf{quit}<> \gtc. \pend }
    \label{eq:auth}
\end{equation}
This recursive global type ($\rec X$) specifies that $s$ sends to $c$ ($s \send c$) a choice between labels $\mathsf{login}$ and $\mathsf{quit}$.
In the $\mathsf{login}$-branch, $c$ sends to $a$ a password of type $\<\mathsf{str}\>$ and $a$ notifies $s$ whether it was correct, after which the protocol repeats ($X$).
In the $\mathsf{quit}$-branch, the protocol simply ends ($\pend$).
As explained in~\cite{conf/popl/ScalasY19}, $\gtc{G_{\mathsf{a}}}$ is not supported by most theories of multiparty sessions, including those in~\cite{journal/tcs/BocchiCDHY17,conf/forte/BocchiCDHY13,conf/tgc/ChenBDHY11}.

\begin{figure}[t]
    \begin{center}
        \begin{tikzpicture}[every loop/.style={min distance=5mm, looseness=10}]
            \begin{scope}[local bounding box=LTSc, node distance=4mm, every node/.style={inner sep=.5mm}]
                \node (n0) {$P_c$};
                \node [below=of n0] (n1) {};
                \node [right=of n1] (n2) {};
                \node [below=of n2] (n3) {};
                \draw [-to] (n0) to (n1);
                \draw [-to] (n1) to (n3);
                \draw [-to] (n0) to (n2);
                \draw [-to] (n2) to (n3);
                \draw [-to] (n2) to [loop right] ();
            \end{scope}
            \node [draw,fit=(LTSc), inner sep=.5mm] (LTScBB) {};
            \node [draw,right=-.4pt of LTScBB.north east, anchor=north west] (Mc) {$\mc{M_c}$};

            \node [draw,right=2cm of Mc.east] (Ms) {$\mc{M_s}$};
            \begin{scope}[local bounding box=LTSs, shift={($(Ms.north east)+(2.5mm+.5pt,-2.5mm+.5pt)$)}, node distance=4mm, every node/.style={inner sep=.5mm}]
                \node (n0) {$P_s$};
                \node [below=of n0] (n1) {};
                \node [right=of n1] (n2) {};
                \node [below=of n1] (n3) {};
                \draw [-to] (n0) to (n1);
                \draw [-to] (n1) to (n2);
                \draw [-to] (n0) to (n2);
                \draw [-to] (n1) to (n3);
                \draw [-to] (n3) to [loop right] ();
            \end{scope}
            \node [draw,fit=(LTSs),inner sep=.5mm] (LTSsBB) {};

            \node [right=1cm of Mc.east, anchor=center, yshift=-1.0cm] (G) {$\gtc{G_{\mathsf{a}}}$};

            \node [draw, below=6.5mm of G.south, anchor=north] (Ma) {$\mc{M_a}$};
            \begin{scope}[local bounding box=LTSa, shift={($(Ma.south west)+(2.7mm+.5pt,-2.4mm+.5pt)$)}, node distance=4mm, every node/.style={inner sep=.5mm}]
                \node (n0) {$P_a$};
                \node [right=of n0] (n2) {};
                \node [below=of n2] (n1) {};
                \node [right=of n2] (n4) {};
                \node [below=of n4] (n3) {};
                \draw [-to] (n0) to (n2);
                \draw [-to] (n2) to (n4);
                \draw [-to] (n4) to (n3);
                \draw [-to] (n3) to (n1);
                \draw [-to] (n1) to (n2);
                \node [shift={($(n3.south east)+(1pt,-1pt)$)}] {};
            \end{scope}
            \node [draw,fit=(LTSa), inner sep=.5mm] (LTSaBB) {};

            \node [left=8mm of G, yshift=-2mm, anchor=center, inner sep=0mm] (Rca) {$\rtc{R_{c,a}}$};
            \node [right=8mm of G, yshift=-2mm, anchor=center, inner sep=0mm] (Rsa) {$\rtc{R_{s,a}}$};
            \node [above=8.0mm of G, anchor=center] (Rcs) {$\rtc{R_{c,s}}$};

            \draw [gray,-to,line width=1pt] (G) to (Rca);
            \draw [gray,-to,line width=1pt] (G) to (Rsa);
            \draw [gray,-to,line width=1pt] (G) to (Rcs);

            \draw [-to,line width=2pt] (G) to (Mc);
            \draw [-to,line width=2pt] (G) to (Ms);
            \draw [-to,line width=2pt] (G) to (Ma);

            \draw [-to,dotted] ($(Rca.south)+(2mm,-.5mm)$) to (Ma.north west);
            \draw [-to,dotted] ($(Rsa.south)+(-2mm,-.5mm)$) to (Ma.north east);
            \draw [-to,dotted] ($(Rca.north)+(0mm,1mm)$) to (Mc);
            \draw [-to,dotted] (Rcs) to (Mc);
            \draw [-to,dotted] ($(Rsa.north)+(0mm,1mm)$) to (Ms);
            \draw [-to,dotted] (Rcs) to (Ms);
        \end{tikzpicture}%
    \end{center}

    \caption{ \label{f:setup}
        Monitoring setup based on the global type (multiparty protocol) $\gtc{G_{\mathsf{a}}}$~\eqref{eq:auth}.
        Each protocol participant has a blackbox (an LTS), attached to a monitor (e.g. $P_c$ and~$\mc{M_c}$).
        The monitors are synthesized from $\gtc{G_{\mathsf{a}}}$ (thick arrows).
        Relative types (e.g. $\rtc{R_{c,s}}$) obtained by projection from $\gtc{G_{\mathsf{a}}}$ (thin gray arrows) are used in this synthesis (dotted arrows).
    }
\end{figure}

\Cref{f:setup} illustrates our approach to monitoring global types such as $\gtc{G_{\mathsf{a}}}$.
There is a blackbox per participant, denoted $P_s$, $P_c$, and $P_a$, whose behavior is given by a labeled transition system (LTS).
Each blackbox implements a participant as dictated by $\gtc{G_{\mathsf{a}}}$
while coupled with a monitor ($\mc{M_s}$, $\mc{M_c}$, and $\mc{M_a}$ in \Cref{f:setup}).
Monitors are \emph{synthesized} from $\gtc{G_{\mathsf{a}}}$ by relying on \emph{relative types}~\cite{journal/scico/vdHeuvelP22}, which provide local views of the global type: they
specify protocols between \emph{pairs} of participants; hence, in the case of $\gtc{G_{\mathsf{a}}}$, we have three relative types:  $\rtc{R_{c,s}}$, $\rtc{R_{c,a}}$, and $\rtc{R_{s,a}}$.

Introduced in~\cite{journal/scico/vdHeuvelP22} for type-checking communicating components, relative types are instrumental to our approach.
They give a fine-grained view of protocols that is convenient for monitor synthesis.
Relative types explicitly specify \emph{dependencies} between participants, e.g., when the behavior of a participant $p$ is the result of a prior choice made by some other participants $q$ and $r$.
Treating dependencies as explicit messages is key to ensuring the distributed implementability of protocols that usual multiparty theories cannot support (e.g., $\gtc{G_{\mathsf{a}}}$~\eqref{eq:auth}). 
Our algorithm for monitor synthesis mechanically computes these dependencies from relative types, and exploits them to coordinate monitored blackboxes.

A central ingredient in our technical developments is the notion of \emph{satisfaction} (\cref{d:satisfaction}), which defines when a monitored blackbox conforms to the role of a specific protocol participant.
Building upon satisfaction, we prove \emph{soundness} and \emph{transparency} for networks of monitored blackboxes.
Soundness (\cref{t:soundnessMain}) ensures that if each monitored blackbox in a network behaves correctly (according to a global type), then the entire network behaves correctly too.
Transparency (\cref{t:transparencyMain}) ensures that monitors do not interfere with the (observable) behavior of their contained blackboxes; it is given in terms of a (weak) behavioral equivalence, which is suitably informed by the actions of a given global type.


\paragraph{Related work.}
The literature on distributed runtime verification is vast.
In this setting, the survey by Francalanza \etal~\cite{book/FrancalanzaPS18} proposes several classification criteria.
Phrased in terms of their criteria,  our work concerns  {distributed} monitoring for  {asynchronous} message-passing.
We work with blackboxes, whose monitors are minimally {intrusive}: they do not alter behavior, but do contribute to coordination.

The works by Bocchi \etal~\cite{journal/tcs/BocchiCDHY17,conf/forte/BocchiCDHY13,conf/tgc/ChenBDHY11} and by Scalas and Yoshida~\cite{conf/popl/ScalasY19}, mentioned above, are a main source of inspiration to us.
The work~\cite{conf/popl/ScalasY19} highlights the limitations of techniques based on the projection of a global type onto local types:  many practical protocols, such as $\gtc{G_{\mathsf{a}}}$, cannot be analyzed because their projection onto local types is undefined.
With respect to~\cite{journal/tcs/BocchiCDHY17,conf/forte/BocchiCDHY13,conf/tgc/ChenBDHY11}, there are three major differences.
First,  Bocchi \etal rely on precise specifications of components ($\pi$-calculus processes), whereas we monitor blackboxes (LTSs).
Second, we resort to relative types, whereas they rely on {local types}; this is a limitation, as just mentioned.
Third, their monitors drop incorrect messages (cf.~\cite{conf/concur/AcetoCFI18}) instead of signaling errors, as we do.
Their framework ensures transparency (akin to \cref{t:transparencyMain}) and safety, i.e., monitored components do not misbehave.
In contrast, we establish soundness, which is different and more technically involved than safety: our focus is on monitoring blackboxes rather than fully specified components, and soundness concerns correct behavior rather than the absence of misbehavior.

We mention runtime verification techniques based on \emph{binary} session types, a sub-class of multiparty session types.
Bartolo Burlò \etal~\cite{conf/ecoop/BartoloBurloFS21} monitor sequential processes that communicate synchronously, prove that ill-typed processes raise errors, and consider also probabilistic session types~\cite{journal/scico/BartoloBurloFSTT22,conf/coordination/BartoloBurloFSTT21}.
Other works couple monitoring session types with \emph{blame assignment} upon protocol violations~\cite{conf/popl/JiaGP16,journal/jlamp/GommerstadtJP22,conf/tgc/Thiemann14,journal/jlamp/IgarashiTTVW19}.
Jia \etal~\cite{conf/popl/JiaGP16} monitor asynchronous session-typed processes.
Gommerstadt \etal~\cite{journal/jlamp/GommerstadtJP22,conf/esop/GommerstadtJP18} extend~\cite{conf/popl/JiaGP16} with rich refinement-based contracts.
We do not consider blame assignment, but it can conceivably be added by enhancing error signals.

\paragraph{Outline.}
\Cref{s:networks} defines networks of monitored blackboxes and their behavior.
\Cref{s:globalTypesAsMonitors} defines how to synthesize monitors from global types.
\Cref{s:correctMonitoredBlackboxes} defines correct   monitored blackboxes, and establishes soundness and transparency.
\cref{s:conclusion} concludes the paper. We use colors to improve readability.

\ifappendix
The appendix (\cpageref{a:exampleBocchi}) includes
\else
The full version of this paper~\cite{report/vdHeuvelPD23} includes an appendix with
\fi
additional examples (including the running example from~\cite{journal/tcs/BocchiCDHY17}), a description of a practical toolkit based on this paper, and omitted proofs.

\section{Networks of Monitored Blackboxes}
\label{s:networks}

We write $P,Q,\ldots$ to denote \emph{blackbox processes} (simply \emph{blackboxes}) that implement protocol \emph{participants} (denoted $p,q,\ldots$).
We assume that a blackbox $P$ is associated with an LTS that specifies its behavior.
Transitions are denoted~$P \ltrans{\alpha} P'$.
Actions $\alpha$, defined in \cref{f:netGrammars} (top), encompass messages $m$, which can be labeled data but also \emph{dependency messages} (simply \emph{dependencies}).
As we will see, dependencies are useful to ensure the coordinated implementation of choices.
Messages abstract away from values, and include only their type.


\begin{figure}[t]
    Let $\ell$ and $T$ denote a label and a data type, respectively.
    \begin{align*}
        & \text{Actions}
        & \alpha,\beta &
        \begin{array}[t]{@{}l@{}lr@{}lr@{}lr@{}}
            {} ::= {}&
            \tau & \text{(silent)}
            \sepr* &
            m & \text{(message)}
            \sepr* &
            \pend & \text{(end)}
        \end{array}
        \\
        & \text{Messages}
        & m,n &
        \begin{array}[t]{@{}l@{}lr@{}lr@{}}
            {} ::= {}&
            p \send q (\msg \ell<T>) & \text{(output)}
            \sepr* &
            p \send q \Parens{\ell} & \text{(dep. output,  {only for networks})}
            \\ \sepr* &
            p \recv q (\msg \ell<T>) & \text{(input)}
            \sepr* &
            p \recv q \Parens{\ell} & \text{(dep. input)}
        \end{array}
    \end{align*}
    Given $\alpha$, the \emph{recipient in $\alpha$} is defined as follows:
    \begin{align*}
        \recip(p \send q(\msg \ell<T>)) &:= \recip(p \send q\Parens{\ell}) := q
        \\
        \recip(p \recv q(\msg \ell<T>)) &:= \recip(p \recv q\Parens{\ell}) := \recip(\tau) := \recip(\pend) := \text{undefined}
    \end{align*}

    \hrulefill

    \smallskip
    Let $D,E,\ldots$ denote sets of participants; $I,J,\ldots$ denote non-empty sets of labels; $X,Y,\ldots$ denote recursion variables.
    \begin{align*}
        & \text{Networks}
        & \net P,\net Q &
        \begin{array}[t]{@{}l@{}lr@{}}
            {} ::= {} &
            \monImpl{ \bufImpl{ p }{ P }{ \vect m } }{ \mc{M} }{ \vect n } & \text{(monitored blackbox)}
            \\ \sepr* &
            \net P \| \net Q & \text{(parallel composition)}
            \\ \sepr* &
            \perror_D & \text{(error signal)}
        \end{array}
        \\
        & \text{Monitors}
        & \mc{M} &
        \begin{array}[t]{@{}l@{}lr@{}lr@{}}
            {} ::= {} &
            p \send q \mBraces{ \msg i<T_i> \mc. \mc{M} }_{i \in I} & \text{(output)}
            \sepr* &
            p \send D (\ell) \mc. \mc{M} & \text{(dep.\ output)}
            \\ \sepr* &
            p \recv q \mBraces{ \msg i<T_i> \mc. \mc{M} }_{i \in I} & \text{(input)}
            \sepr* &
            p \recv q \mBraces{ i \mc. \mc{M} }_{i \in I} & \text{(dep.\ input)}
            \\ \sepr* &
            \rec X \mc. \mc{M} \sepr X & \text{(recursion)}
            \sepr* &
            \pend & \text{(end)}
            \\ \sepr* &
            \perror & \text{(error)}
            \sepr* &
            \checkmark & \text{(finished)}
        \end{array}
    \end{align*}
    The set of \emph{subjects of a network} is as follows:
    \begin{align*}
        \subjs( \monImpl{ \bufImpl{ p }{ P }{ \vect m } }{ \mc{M} }{ \vect n } ) &:= \braces{ p }
        &
        \subjs(\net P {\|} \net Q) &:= \subjs(\net P) \cup \subjs(\net Q)
        &
        \subjs(\perror_D) &:= D
    \end{align*}

    \caption{Actions, messages, networks, and monitors.}\label{f:netGrammars}
\end{figure}

A silent transition $\tau$ denotes an internal computation.
Transitions $p \send q(\msg \ell<T>)$ and $p \recv q(\msg \ell<T>)$ denote the output and input of a message of type $T$ with label $\ell$ between $p$ and $q$, respectively.
If a message carries no data, we write $\msg \ell<>$ (i.e., the data type is empty).
Dependency outputs are used for monitors, defined below.

We adopt minimal assumptions about the behavior of blackboxes:

\begin{definition}[Assumptions]\label{d:blackboxLTSassumptions}
    We assume the following about LTSs of blackboxes:
    \begin{itemize}

        \item\label{i:bTau}
            \textbf{(Finite $\tau$)}
            Sequences of $\tau$-transitions are finite.

        \item\label{i:bInputOutput}
            \textbf{(Input/Output)}
            There are never input- and output-transitions available at the same time.

        \item\label{i:bEnd}
            \textbf{(End)}
            There are never transitions after an $\pend$-transition.

    \end{itemize}
\end{definition}

\begin{example}\label{ex:cImpl}
    The blackboxes $P_c$, $P_s$, $P_a$ implement $c$, $s$, $a$, respectively, in $\gtc{G_{\mathsf{a}}}$~\eqref{eq:auth} with the following LTSs:
    \begin{center}
        \begin{tikzpicture}[node distance=24mm,every path/.style={->},every edge quotes/.style={font=\scriptsize,auto=left},tight/.style={inner sep=1pt},loose/.style={inner sep=5pt}]
            \node (ci) {$P_c$};
            \node[right=of ci] (cq) {$P_c^{\mathsf{q}}$};
            \node[right=of cq] (ce) {$P_c^{\mathsf{e}}$};
            \node[left=of ci] (cl) {$P_c^{\,\mathsf{l}}$};
            \draw   (ci) edge["$c \recv s (\msg \mathsf{quit}<>)$" tight] (cq)
                    (cq) edge["$\pend$"] (ce)
                    (ci) edge["$c \recv s (\msg \mathsf{login}<>)$"' tight] (cl)
                    (cl) edge[bend right=35,"$c \send a (\msg \mathsf{pwd}<\mathsf{str}>)$" loose] (ci);

            \node[below=10mm of ci] (si) {$P_s$};
            \node[right=of si] (sq) {$P_s^{\mathsf{q}}$};
            \node[right=of sq] (se) {$P_s^{\mathsf{e}}$};
            \node[left=of si] (sl) {$P_s^{\,\mathsf{l}}$};
            \draw   (si) edge["$s \send c (\msg \mathsf{quit}<>)$" tight] (sq)
                    (sq) edge["$\pend$"] (se)
                    (si) edge["$s \send c (\msg \mathsf{login}<>)$"' tight] (sl)
                    (sl) edge[bend right=35,"$s \recv a (\msg \mathsf{succ}<\mathsf{bool}>)$" loose] (si);

            \node[below=10mm of si] (ai) {$P_a$};
            \node[right=of ai] (aqs) {$P_a^{\mathsf{q}_s}$};
            \node[right=of aqs] (aqc) {$P_a^{\mathsf{q}_c}$};
            \node[below=8mm of aqc] (ae) {$P_a^{\mathsf{e}}$};
            \node[left=of ai] (als) {$P_a^{\,\mathsf{l}_s}$};
            \node[below=8mm of als] (alc) {$P_a^{\,\mathsf{l}_c}$};
            \node[below=8.7mm of ai] (ap) {$P_a^{\mathsf{p}}$};
            \draw   (ai)  edge["$a \recv s \Parens{\mathsf{quit}}$" tight] (aqs)
                    (aqs) edge["$a \recv c \Parens{\mathsf{quit}}$" tight] (aqc)
                    (aqc) edge["$\pend$" left] (ae)
                    (ai)  edge["$a \recv s \Parens{\mathsf{login}}$"' tight] (als)
                    (als)  edge["$a \recv c \Parens{\mathsf{login}}$"' tight] (alc)
                    (alc)  edge["$a \recv c (\msg \mathsf{pwd}<\mathsf{str}>)$"' above] (ap)
                    (ap)  edge["$a \send s (\msg \mathsf{succ}<\mathsf{bool}>)$"' tight] (ai);
        \end{tikzpicture}
    \end{center}
    All three LTSs above respect the assumptions in \Cref{d:blackboxLTSassumptions}.
    On the other hand, the following LTS violates all three assumptions; in particular, there are an input- and an output-transition simultaneously enabled at $Q$:
    \begin{center}
        \begin{tikzpicture}[node distance=24mm,every path/.style={->},every edge quotes/.style={font=\scriptsize,auto=left},tight/.style={inner sep=1pt},loose/.style={inner sep=5pt}]
            \node (ci) {$Q$};
            \node[right=of ci] (cq) {$Q^{\mathsf{q}}$};
            \node[left=of ci] (cp) {$Q^{\mathsf{p}}$};
            \draw   (ci) edge["$c \recv s (\msg \mathsf{quit}<>)$" tight] (cq)
                    (cq) edge[bend left=35,"$\pend$"] (ci)
                    (ci) edge["$c \send a (\msg \mathsf{pwd}<\mathsf{str}>)$"' tight] (cp)
                    (cp) edge[loop left,"$\tau$"] (cp);
        \end{tikzpicture}
    \end{center}
\end{example}

Blackboxes communicate {asynchronously}, using buffers (denoted $\vect m$): ordered sequences of messages, with the most-recently received message on the left.
The empty buffer is denoted $\epsi$.
When a blackbox does an input transition, it attempts to read the message from its buffer.
An output transition places the message in the recipient's buffer; to accommodate this, we mark each blackbox with the participant they implement.
The result is a \emph{buffered blackbox}, denoted $\bufImpl{ p }{ P }{ \vect m }$.

By convention, the buffer of $p$ contains output messages with recipient $p$.
We allow the silent reordering of messages with different senders; this way,  e.g., given $q \neq r$, $\vect m , q \send p (\msg \ell<T>) , r \send p (\msg \ell'<T'>) , \vect n$ and $\vect m , r \send p (\msg \ell'<T'>) , q \send p (\msg \ell<T>) , \vect n$ are the same.

Having defined standalone (buffered) blackboxes, we now define how they interact in \emph{networks}.
We couple each buffered blackbox with a \emph{monitor} $\mc{M}$, which has its own buffer~$\vect n$.
The result is a \emph{monitored blackbox}, denoted $\monImpl{ \bufImpl{ p }{ P }{ \vect m } }{ \mc{M} }{ \vect n }$.

Monitors define finite state machines that accept  sequences of incoming and outgoing messages, as stipulated by some protocol.
An \emph{error} occurs when a message exchange does not conform to such protocol.
Additionally, monitors support the dependencies mentioned earlier: when a blackbox sends or receives a message, the monitor broadcasts the message's label to other monitored blackboxes such that they can receive the chosen label and react accordingly.

Networks, defined in \cref{f:netGrammars} (bottom), are  compositions of monitored blackboxes and \emph{error signals}.
An error signal $\perror_D$ replaces a monitored blackbox when its monitor detects an error involving participants in the set $D$.
Indeed, a participant's error will propagate to the other monitored blackboxes in a network.

\noindent
Output and (dependency) input monitors check  outgoing and incoming (dependency) messages, respectively.
Output dependency monitors $p \send D (\ell) \mc. \mc{M}$ broadcast $\ell$ to the participants in $D$.
Recursive monitors are encoded by recursive definitions ($\rec X \mc. \mc{M}$) and recursive calls ($X$). 
The $\pend$ monitor waits for the buffered blackbox to end the protocol.
The $\perror$ monitor denotes an inability to process received messages; it will be useful when the sender and recipient of an exchange send different dependency messages.
The finished monitor $\checkmark$ is self-explanatory.

We now define the behavior of monitored blackboxes in networks:

\begin{definition}[LTS for Networks]\label{d:ltsNetworks}
    We define an \emph{LTS for networks}, denoted $\net P \ltrans{\alpha} \net Q$, by the rules in \cref{f:ltsNetworks} (\cpageref{f:ltsNetworks}) with actions $\alpha$ as in \cref{f:netGrammars} (top).

    \noindent
    We write $\net P \trans* \net Q$ to denote a sequence of zero or more $\tau$-transitions $\net P \ltrans{\tau} \ldots \ltrans{\tau} \net Q$, and
    we write $\net P \ntrans$ to denote that there do not exist $\alpha,\net Q$ such that $\net P \ltrans{\alpha} \net Q$.
\end{definition}

\setlength{\textfloatsep}{\textfloatsepsave}
\begin{figure}[p]
    \def\MathparLineskip{\lineskip=2.3mm}
    \def\defaultHypSeparation{\hskip 1mm}
    \begin{mathpar}
        \begin{bussproof}[buf-out]
            \bussAssume{
                \quad\smash[t]{P \ltrans{ p \send q (\msg \ell<T>) } P'}
            }
            \bussUn{
                \bufImpl{ p }{ P }{ \vect m }
                \mathrel{\raisebox{-2.5pt}[0ex][0ex]{$\ltrans{ p \send q (\msg \ell<T>) }$}}
                \bufImpl{ p }{ P' }{ \vect m }
            }
        \end{bussproof}
        \and
        \begin{bussproof}[buf-in]
            \bussAssume{
                \smash[t]{P \ltrans{ p \recv q (\msg \ell<T>) } P'}
            }
            \bussUn{
                \bufImpl{ p }{ P }{ \vect m , q \send p (\msg \ell<T>) }
                \ltrans{ \tau }
                \bufImpl{ p }{ P' }{ \vect m }
            }
        \end{bussproof}
        \and
        \begin{bussproof}[buf-in-dep]
            \bussAssume{
                \quad\smash[t]{P \ltrans{ p \recv q \Parens{\ell} } P'}
            }
            \bussUn{
                \bufImpl{ p }{ P }{ \vect m , q \send p \Parens{\ell} }
                \ltrans{ \tau }
                \bufImpl{ p }{ P' }{ \vect m }
            }
        \end{bussproof}
        \and
        \begin{bussproof}[buf-tau]
            \bussAssume{
                \quad\smash[t]{P \ltrans{ \tau } P'}
            }
            \bussUn{
                \bufImpl{ p }{ P }{ \vect m }
                \ltrans{ \tau }
                \bufImpl{ p }{ P' }{ \vect m }
            }
        \end{bussproof}
        \and
        \inferLabel{buf-end}~
        \begin{bussproof}
            \bussAssume{
                P
                \ltrans{ \pend }
                P'
            }
            \bussUn{
                \bufImpl{ p }{ P }{ \vect m }
                \mathrel{\raisebox{-2pt}[0ex][0ex]{$\ltrans{ \pend }$}}
                \bufImpl{ p }{ P' }{ \vect m }
            }
        \end{bussproof}
    \end{mathpar}
    \hrule
    \begin{mathpar}
        \inferLabel{mon-out}~
        \begin{bussproof}
            \bussAssume{
                \bufImpl{ p }{ P }{ \vect m }
                \ltrans{ p \send q (\msg j<T_j>) }
                \bufImpl{ p }{ P' }{ \vect m }
            }
            \bussAssume{
                \mc{M} = p \send q \mBraces{ \msg i<T_i> \mc. \mc{M_i} }_{i \in I}
            }
            \bussAssume{
                j \in I
            }
            \bussTern{
                \monImpl{ \bufImpl{ p }{ P }{ \vect m } }{ \mc{M} }{ \vect n }
                \mathrel{\raisebox{-2.5pt}[0ex][0ex]{$\ltrans{ p \send q (\msg j<T_j>) }$}}
                \monImpl{ \bufImpl{ p }{ P' }{ \vect m } }{ \mc{M_j} }{ \vect n }
            }
        \end{bussproof}
        \and
        \inferLabel{mon-in}~
        \begin{bussproof}
            \bussAssume{
                \mc{M} = p \recv q \mBraces{ y \mc. \mc{M_i} }_{y \in Y}
            }
            \bussAssume{
                x \in Y
            }
            \bussAssume{
                n' \in \{ q \send p (x) , q \send p \Parens{x} \}
            }
            \bussTern{
                \monImpl{ \bufImpl{ p }{ P }{ \vect m } }{ \mc{M} }{ \vect n , n' }
                \ltrans{ \tau }
                \monImpl{ \bufImpl{ p }{ P }{ n' , \vect m } }{ \mc{M_j} }{ \vect n }
            }
        \end{bussproof}
        \and
        \inferLabel{mon-tau}~
        \begin{bussproof}
            \bussAssume{
                \bufImpl{ p }{ P }{ \vect m }
                \ltrans{ \tau }
                \bufImpl{ p }{ P' }{ \vect m' }
            }
            \bussAssume{
                \mc{M} \neq \checkmark
            }
            \bussBin{
                \monImpl{ \bufImpl{ p }{ P }{ \vect m } }{ \mc{M} }{ \vect n }
                \ltrans{ \tau }
                \monImpl{ \bufImpl{ p }{ P' }{ \vect m' } }{ \mc{M} }{ \vect n }
            }
        \end{bussproof}
        \\
        \raisebox{2mm}{$\inferLabel{mon-out-dep}~$}
        \begin{bussproof}
            \bussAssume{ }
            \bussUn{
                \monImpl{ \bufImpl{ p }{ P }{ \vect m } }{ p \send (D \cup \braces{q}) (\ell)  \mc. \mc{M'} }{ \vect n }
                \mathrel{\raisebox{-2.5pt}[0ex][0ex]{$\ltrans{ p \send q \Parens{\ell} }$}}
                \monImpl{ \bufImpl{ p }{ P }{ \vect m } }{ p \send D (\ell) \mc. \mc{M'} }{ \vect n }
            }
        \end{bussproof}
        \and
        \raisebox{2mm}{$\inferLabel{mon-out-dep-empty}~$}
        \begin{bussproof}
            \bussAssume{ }
            \bussUn{
                \monImpl{ \bufImpl{ p }{ P }{ \vect m } }{ p \send \emptyset (\ell) \mc. \mc{M'} }{ \vect n }
                \ltrans{ \tau }
                \monImpl{ \bufImpl{ p }{ P }{ \vect m } }{ \mc{M'} }{ \vect n }
            }
        \end{bussproof}
        \and
        \inferLabel{mon-rec}~
        \begin{bussproof}
            \bussAssume{
                \monImpl{ \bufImpl{ p }{ P }{ \vect m } }{ \mc{M} \braces{ \rec X \mc. \mc{M} / X } }{ \vect n }
                \ltrans{ \alpha }
                \monImpl{ \bufImpl{ p }{ P' }{ \vect m' } }{ \mc{M'} }{ \vect n' }
            }
            \bussUn{
                \monImpl{ \bufImpl{ p }{ P }{ \vect m } }{ \rec X \mc. \mc{M} }{ \vect n }
                \ltrans{ \alpha }
                \monImpl{ \bufImpl{ p }{ P' }{ \vect m' } }{ \mc{M'} }{ \vect n' }
            }
        \end{bussproof}
        \and
        \inferLabel{mon-end}~
        \begin{bussproof}
            \bussAssume{
                \bufImpl{ p }{ P }{ \vect m }
                \ltrans{ \pend }
                \bufImpl{ p }{ P' }{ \vect m }
            }
            \bussUn{
                \monImpl{ \bufImpl{ p }{ P }{ \vect m } }{ \pend }{ \epsi }
                \mathrel{\raisebox{-2pt}[0ex][0ex]{$\ltrans{ \pend }$}}
                \monImpl{ \bufImpl{ p }{ P' }{ \vect m } }{ \checkmark }{ \epsi }
            }
        \end{bussproof}
    \end{mathpar}
    \hrule
    \begin{mathpar}
        \begin{bussproof}[error-out]
            \bussAssume{
                \bufImpl{ p }{ P }{ \vect m }
                \ltrans{ p \send q (\msg j<T_j>) }
                \bufImpl{ p }{ P' }{ \vect m }
            }
            \bussAssume{
                \smash[t]{\begin{array}[b]{@{}l@{}}
                        \mc{M} = p \send r \mBraces{ \msg i<T_i> \mc. \mc{M_i} }_{i \in I}
                    \implies
                    (r \neq q \vee j \notin I)
                    \\
                    \mc{M} \notin \braces{ \rec X \mc. \mc{M'} , p \send D (\ell) \mc. \mc{M'} }
                \end{array}}
            }
            \bussBin{
                \monImpl{ \bufImpl{ p }{ P }{ \vect m } }{ \mc{M} }{ \vect n }
                \ltrans{ \tau }
                \perror_{\braces{ p }}
            }
        \end{bussproof}
        \and
        \inferLabel{error-end}~
        \begin{bussproof}
            \bussAssume{
                \bufImpl{ p }{ P }{ \vect m } \ltrans{ \pend } \bufImpl{ p }{ P' }{ \vect m }
            }
            \bussAssume{
                \vect n = \epsi \implies \mc{M} \notin \braces{ \rec X \mc. \mc{M'} , p \send D (\ell) \mc. \mc{M'} , \pend }
            }
            \bussBin{
                \monImpl{ \bufImpl{ p }{ P }{ \vect m } }{ \mc{M} }{ \vect n } \ltrans{ \tau } \perror_{\braces{p}}
            }
        \end{bussproof}
        \and
        \inferLabel{error-in}~
        \begin{bussproof}
            \bussAssume{
                \mc{M} = p \recv q \mBraces{ y \mc. \mc{M_y} }_{y \in Y}
            }
            \bussAssume{
                x \notin Y
            }
            \bussAssume{
                n' \in \{ q \send p (x) , q \send p \Parens{x} \}
            }
            \bussTern{
                \monImpl{ \bufImpl{ p }{ P }{ \vect m } }{ \mc{M} }{ \vect n , n' }
                \ltrans{ \tau }
                \perror_{\braces{ p }}
            }
        \end{bussproof}
        \and
        \begin{bussproof}[error-mon]
            \bussAssume{ }
            \bussUn{
                \monImpl{ \bufImpl{ p }{ P }{ \vect m } }{ \perror }{ \vect n }
                \ltrans{ \tau }
                \perror_{\braces{ p }}
            }
        \end{bussproof}
        \and
        \begin{bussproof}[par-error]
            \bussAssume{ }
            \bussUn{
                \perror_D \| \monImpl{ \bufImpl{ p }{ P }{ \vect m } }{ \mc{M} }{ \vect n }
                \ltrans{ \tau }
                \perror_{D \cup \braces{ p }}
            }
        \end{bussproof}
    \end{mathpar}
    \hrule
    \begin{mathpar}
        \inferLabel{out-mon-buf}~
        \begin{bussproof}
            \bussAssume{
                \net P
                \ltrans{ n' }
                \net P'
            }
            \bussAssume{
                n' \in \{ q \send p (x) , q \send p \Parens{x} \}
            }
            \bussBin{
                \net P \| \monImpl{ \bufImpl{ p }{ P }{ \vect m } }{ \mc{M} }{ \vect n }
                \ltrans{ \tau }
                \net P' \| \monImpl{ \bufImpl{ p }{ P }{ \vect m } }{ \mc{M} }{ n' , \vect n }
            }
        \end{bussproof}
        \and
        \inferLabel{par}~
        \begin{bussproof}
            \bussAssume{
                \net P
                \ltrans{ \alpha }
                \net P'
            }
            \bussAssume{
                \recip(\alpha) \notin \subjs(\net Q)
            }
            \bussBin{
                \net P \| \net Q
                \ltrans{ \alpha }
                \net P' \| \net Q
            }
        \end{bussproof}
        \and
        \inferLabel{cong}~
        \begin{bussproof}
            \bussAssume{
                \net P \equiv \net P'
            }
            \bussAssume{
                \net P'
                \ltrans{ \alpha }
                \net Q'
            }
            \bussAssume{
                \net Q' \equiv \net Q
            }
            \bussTern{
                \net P
                \mathrel{\raisebox{-1pt}[0ex][0ex]{$\ltrans{ \alpha }$}}
                \net Q
            }
        \end{bussproof}
    \end{mathpar}
    \caption{\nameref{d:ltsNetworks} (\cref{d:ltsNetworks}).}\label{f:ltsNetworks}
\end{figure}

\noindent
\cref{f:ltsNetworks} gives four groups of rules, which we briefly discuss.
The Transition group \ruleLabel*{buf} defines the behavior of a buffered blackbox in terms of the behavior of the blackbox it contains; note that input transitions are hidden as $\tau$-transitions.
The Transition group~\ruleLabel*{mon} defines the behavior of a monitored blackbox when the behavior of the enclosed buffered blackbox concurs with the monitor; again, input transitions are hidden as $\tau$-transitions.

When the behavior of the buffered blackbox does not concur with the monitor, the Transition group~\ruleLabel*{error} replaces the monitored blackbox with an error signal.
Transition~\ruleLabel{par-error} propagates error signals to parallel monitored blackboxes.
If a network parallel to the monitored blackbox of $p$ has an outgoing message with recipient $p$, Transition~\ruleLabel{out-mon-buf} places this message in the buffer of the monitored blackbox as a $\tau$-transition.
Transition~\ruleLabel{par} closes transitions under parallel composition, as long as the recipient in the action of the transition ($\recip(\alpha)$) is not a subject of the composed network ($\subjs(\net Q)$, the participants for which monitored blackboxes and error signals appear in $\net Q$).
Transition~\ruleLabel{cong} closes transitions under $\equiv$, which denotes a congruence that defines parallel composition as commutative and associative.

\Cref{f:trans} shows transitions of correct/incorrect communications in networks.

\setlength{\textfloatsep}{\textfloatsepsave}
\begin{figure}[t]
    \begin{align*}
        & \begin{array}{@{}rl@{}}
            & \monImpl{ \bufImpl{ c }{ P_c }{ \epsi } }{ c \recv s \mBraces{ \msg \mathsf{quit}<> \mc. \pend } }{ \epsi }
            \\
            \|
            & 
                \monImpl{ \bufImpl{ s }{ P_s }{ \epsi } }{ s \send c \mBraces{ \msg \mathsf{quit}<> \mc. \pend } }{ \epsi }
        \end{array}
        \ltrans{ \tau }
        \begin{array}{@{}rl@{}}
            & 
                \monImpl{ \bufImpl{ c }{ P_c }{ \epsi } }{ c \recv s \mBraces{ \msg \mathsf{quit}<> \mc. \pend } }{ s \send c (\msg \mathsf{quit}<>) }
            \\
            \|
            & \monImpl{ \bufImpl{ s }{ P_s^{\mathsf{q}} }{ \epsi } }{ \pend }{ \epsi }
        \end{array}
        \\
        & \hspace{25em}
        \downarrow \mkern-4mu {\scriptstyle \tau }
        \\
        & 
        \begin{array}{@{}rl@{}}
            & \monImpl{ \bufImpl{ c }{ P_c^{\mathsf{e}} }{ \epsi } }{ \checkmark }{ \epsi }
            \\
            \|
            & \monImpl{ \bufImpl{ s }{ P_s^{\mathsf{e}} }{ \epsi } }{ \checkmark }{ \epsi }
        \end{array}
        \xleftarrow{\pend}\xleftarrow{\pend}
        \begin{array}{@{}rl@{}}
            & 
                \monImpl{ \bufImpl{ c }{ P_c^{\mathsf{q}} }{ \epsi } }{ \pend }{ \epsi }
            \\
            \|
            & 
                \monImpl{ \bufImpl{ s }{ P_s^{\mathsf{q}} }{ \epsi } }{ \pend }{ \epsi }
        \end{array}
        \xleftarrow{\tau}
        \begin{array}{@{}rl@{}}
            & 
                \monImpl{ \bufImpl{ c }{ P_c }{ s \send c (\msg \mathsf{quit}<>) } }{ \pend }{ \epsi }
            \\
            \|
            & \monImpl{ \bufImpl{ s }{ P_s^{\mathsf{q}} }{ \epsi } }{ \pend }{ \epsi }
        \end{array}
        \\[-0.7ex] \cline{1-2}
        & \begin{array}{@{}rl@{}}
            & 
            \monImpl{ \bufImpl{ c }{ P_c }{ \epsi } }{ c \recv s \mBraces{ \msg \mathsf{login}<> \mc. \mc{M_c^{\mathsf{l}}} } }{ s \send c (\msg \mathsf{quit}<>) }
            \\
            \|
            & \monImpl{ \bufImpl{ s }{ P_s^{\mathsf{q}} }{ \epsi } }{ \pend }{ \epsi }
        \end{array}
        \ltrans{\tau}
        \begin{array}{@{}rl@{}}
            & 
                \perror_{\braces{c}}
            \\
            \|
            & \monImpl{ \bufImpl{ s }{ P_s^{\mathsf{q}} }{ \epsi } }{ \pend }{ \epsi }
        \end{array}
        \ltrans{\tau}
        \perror_{\braces{c,s}}
    \end{align*}
    \caption{
        The \nameref{d:ltsNetworks} at work:
        transitions of correctly/incorrectly communicating monitored blackboxes of participants of $\gtc{G_{\mathsf{a}}}$~\eqref{eq:auth}.
        Top: $s$ sends to $c$ label $\mathsf{quit}$, monitor of $c$ reads message, blackbox of $c$ reads message, both components end.
        Bottom: monitor of $c$ expects $\mathsf{login}$ message but finds $\mathsf{quit}$ message so signals error, error propagates to $s$.
    }
    \label{f:trans}
\end{figure}

\section{Monitors for Blackboxes Synthesized from Global Types}
\label{s:globalTypesAsMonitors}
In theories of multiparty session types~\cite{conf/popl/HondaYC08,journal/acm/HondaYC16},  \emph{global types} conveniently describe message-passing protocols between sets of participants from a vantage point.
Here we use them as specifications for monitors in networks (\cref{alg:gtToMon});
for a local view of such global protocols we use \emph{relative types}~\cite{journal/scico/vdHeuvelP22}, which describe the interactions and dependencies between \emph{pairs} of participants.

\begin{definition}[Global and Relative Types]\label{d:globalRelativeTypes}
    \begin{align*}
        & \text{Global types}
        & \gtc{G},\gtc{G'} &
        \begin{array}[t]{@{}l@{}lr@{}lr@{}}
            {} ::= {}&
            p \send q \gtBraces{ \msg i<T_i> \gtc. \gtc{G} }_{i \in I} & \text{(exchange)}
            \sepr* &
            \pend & \text{(end)}
            \\ \sepr* &
            \rec X \gtc. \gtc{G} \sepr X & \text{(recursion)}
            \hphantom{\sepr*}
        \end{array}
        \\
        & \text{Relative types}\!
        & \rtc{R},\rtc{R'} &
        \begin{array}[t]{@{}l@{}lr@{}lr@{}}
            {} ::= {}&
            p \send q \rtBraces{ \msg i<T_i> \rtc. \rtc{R} }_{i \in I} & \text{(exchange)}
            \!\!\sepr* &
            \pend & \text{(end)}
            \\ \sepr* &
            (p \send r) \send q \rtBraces{ i \rtc. \rtc{R} }_{i \in I} & \text{(output dep.)}
            \!\!\sepr* &
            \rec X \rtc. \rtc{R} \sepr X & \text{(recursion)}
            \\ \sepr* &
            (p \recv r) \send q \rtBraces{ i \rtc. \rtc{R} }_{i \in I} & \text{(input dep.)}
            \hphantom{\!\!\sepr*}
        \end{array}
    \end{align*}

    \noindent
    We write $\prt(\gtc{G})$ to denote the set of participants involved in exchanges in $\gtc{G}$.
\end{definition}

\noindent
The global type $p \send q \gtBraces{ \msg i<T_i> \gtc. \gtc{G_i} }_{i \in I}$ specifies that $p$ sends to $q$ some $j \in I$ with $T_j$, continuing as $\gtc{G_j}$.
A relative type specifies a protocol between a pair of participants, say $p$ and $q$.
The type $p \send q \rtBraces{ \msg i<T_i> \rtc. \rtc{R_i} }_{i \in I}$ specifies that $p$ sends to $q$ some $j \in I$ with $T_j$, continuing as $\rtc{R_j}$.
If the protocol between $p$ and $q$ depends on a prior choice involving $p$ or $q$, their relative type includes a \emph{dependency}: $(p \send r) \send q \rtBraces{ i \rtc. \rtc{R_i} }_{i \in I}$ (resp.\ $(p \recv r) \send q \rtBraces{ i \rtc. \rtc{R_i} }_{i \in I}$) specifies that $p$ forwards to $q$ the $j \in I$ sent to (resp.\ received from) $r$ by $p$.
For both global and relative types, tail-recursion is defined with recursive definitions $\rec X$ and recursive calls $X$, and $\pend$ specifies the end of the protocol.

Relative types are obtained from global types by means of projection:

\begin{definition}[Relative Projection]\label{d:relativeProjection}
    The \emph{relative projection} of a global type onto a pair of participants, denoted $\gtc{G} \wrt (p,q)$, is defined by \cref{alg:relativeProjection}.
\end{definition}

\setlength{\textfloatsep}{\algfloatsep}
\begin{algorithm}[t!]
    \DontPrintSemicolon
    \SetAlgoNoEnd
    \SetInd{.2em}{.7em}
    \SetSideCommentRight
    \Def{$\gtc{G} \wrt (p,q)$}{

        \Switch{$\gtc{G}$}{

            \uCase{$s \send r \gtBraces{ \msg i<T_i> \gtc. \gtc{G_i} }_{i \in I}$}{
                $\forall i \in I.~ \rtc{R_i} := \gtc{G_i} \wrt (p,q)$ \;

                \lIf{$(p = s \wedge q = r)$}{
                    \KwRet
                    $p \send q \rtBraces{ \msg i<T_i> \rtc. \rtc{R_i} }_{i \in I}$
                }\label{li:rpEx1}

                \lElseIf{$(q = s \wedge p = r)$}{
                    \KwRet
                    $q \send p \rtBraces{ \msg i<T_i> \rtc. \rtc{R_i} }_{i \in I}$
                }\label{li:rpEx2}

                \lElseIf{$\forall i,j \in I.~ \rtc{R_i} = \rtc{R_j}$}{
                    \KwRet
                    $\bigcup_{i \in I} \rtc{R_i}$
                }\label{li:rpIndep}

                \lElseIf{$s \in \braces{p,q} \wedge t \in \braces{p,q} \setminus \braces{s}$}{
                    \KwRet
                    $(s \send r) \send t \rtBraces{ i \rtc. \rtc{R_i} }_{i \in I}$
                }\label{li:rpDep1}

                \lElseIf{$r \in \braces{p,q} \wedge t \in \braces{p,q} \setminus \braces{r}$}{
                    \KwRet
                    $(r \recv s) \send t \rtBraces{ i \rtc. \rtc{R_i} }_{i \in I}$
                }\label{li:rpDep2}
            }

            \uCase{$\rec X \gtc. \gtc{G'}$}{
                $\rtc{R'} := \gtc{G'} \wrt (p,q)$ \;

                \lIf{$(\text{$\rtc{R'}$ contains an exchange or a recursive call on any $Y \neq X$})$}{
                    \KwRet
                    $\rec X \rtc. \rtc{R'}$
                }\label{li:rpRec}

                \lElse{
                    \KwRet
                    $\pend$
                }\label{li:rpRecEnd}
            }

            \lCase{$X$}{
                \KwRet
                $X$
            }\label{li:rpCall}

            \lCase{$\pend$}{
                \KwRet
                $\pend$
            }\label{li:rpEnd}
        }

    }

    \caption{\nameref{d:relativeProjection} of $\gtc{G}$ onto $p$ and $q$ (Def.~\labelcref{d:relativeProjection}).}
    \label{alg:relativeProjection}
\end{algorithm}

\noindent
The projection of an exchange onto $(p,q)$ is an exchange if $p$ and $q$ are sender and recipient (\cref{li:rpEx1,li:rpEx2}).
Otherwise, if the protocol between $p$ and $q$ does not depend on the exchange (the projections of all branches are equal), the projection is the union of the projected branches (\cref{li:rpIndep}).
The union of relative types, denoted $\rtc{R} \cup \rtc{R'}$, is defined only on identical relative types (e.g., $p \send q \rtBraces{ \msg i<T_i> \rtc. \rtc{R_i} }_{i \in I} \cup p \send q \rtBraces{ \msg i<T_i> \rtc. \rtc{R_i} }_{i \in I} = p \send q \rtBraces{ \msg i<T_i> \rtc. \rtc{R_i} }_{i \in I}$; see \ifappendix \cref{a:relativeTypesWithLocs} \else \cite{report/vdHeuvelPD23} \fi for a formal definition).
If there is a dependency and $p$ or $q$ is sender/recipient, the projection is a dependency (\cref{li:rpDep1,li:rpDep2}).
Projection is undefined if there is a dependency but $p$ nor $q$ is involved.

The projection of $\rec X \gtc. \gtc{G'}$ is a relative type starting with a recursive definition, provided that the projection of $\gtc{G'}$ onto $(p,q)$ contains an exchange or nested recursive call (\cref{li:rpRec}) to avoid recursion with only dependencies;
otherwise, the projection returns $\pend$ (\cref{li:rpRecEnd}).
The projections of recursive calls and $\pend$ are homomorphic (\cref{li:rpCall,li:rpEnd}).

\begin{example}\label{ex:cRelProj}
    The relative projections of $\gtc{G_{\mathsf{a}}}$~\eqref{eq:auth} are:
    \begin{align*}
        \rtc{R_{c,s}} &:= \gtc{G_{\mathsf{a}}} \wrt (c,s) = \rec X \rtc. s \send c \rtBraces{ \msg \mathsf{login}<> \rtc. X \rtc, \msg \mathsf{quit}<> \rtc. \pend }
        \\
        \rtc{R_{c,a}} &:= \gtc{G_{\mathsf{a}}} \wrt (c,a) = \rec X \rtc. (c \recv s) \send a \rtBraces{ \mathsf{login} \rtc. c \send a \rtBraces{ \msg \mathsf{pwd}<\mathsf{str}> \rtc. X } \rtc, \mathsf{quit} \rtc. \pend }
        \\
        \rtc{R_{s,a}} &:= \gtc{G_{\mathsf{a}}} \wrt (s,a) = \rec X \rtc. (s \send c) \send a \rtBraces{ \mathsf{login} \rtc. a \send s \rtBraces{ \msg \mathsf{succ}<\mathsf{bool}> \rtc. X } \rtc, \mathsf{quit} \rtc. \pend }
    \end{align*}
    Hence, the exchange from $s$ to $c$ is a dependency for the protocols of $a$.
\end{example}

Not all global types are sensible.
A valid global type may, e.g., require a participant $p$ to have different behaviors, depending on a choice that $p$ is unaware of (see, e.g.,~\cite{journal/lmcs/CastagnaDP12}).
In the following, we work only with \emph{well-formed} global types:

\begin{definition}[Well-formedness]\label{d:wf}
    We say a global type $\gtc{G}$ is \emph{well-formed} if and only if,
    for all pairs of participants $p \neq q \in \prt(\gtc{G})$, the projection $\gtc{G} \wrt (p,q)$ is defined,
    and
    all recursion in $\gtc{G}$ is non-contractive (e.g., $\gtc{G} \neq \rec X \gtc. X$) and bound.
\end{definition}

\noindent
Our running example $\gtc{G_{\mathsf{a}}}$~\eqref{eq:auth} is well-formed in the above sense; also, as explained in~\cite{conf/popl/ScalasY19}, $\gtc{G_{\mathsf{a}}}$ is \emph{not} well-formed in most theories of multiparty sessions (based on projection onto local types).
As such, $\gtc{G_{\mathsf{a}}}$ goes beyond the scope of such theories.

\paragraph{Synthesizing monitors.}
Next, we define a procedure to synthesize monitors for the participants of global types. 
This procedure detects dependencies as follows:

\begin{definition}[Dependence]\label{d:depsOn}
    Given a global type $\gtc{G}$, we say $p$ \emph{depends on} $q$ in $\gtc{G}$, denoted $\depsOn p q \gtc{G}$, if and only if
    \\
    $
            \gtc{G} = s \send r \gtBraces{ \msg i<T_i> \gtc. \gtc{G_i} }_{i \in I} \wedge p \notin \braces{s,r} \wedge q \in \braces{s,r}
            \wedge \exists i,j \in I.~ \gtc{G_i} \wrt (p,q) \neq \gtc{G_j} \wrt (p,q)
    $.
\end{definition}

\noindent
Thus, $\depsOn p q \gtc{G}$ holds if and only if $\gtc{G}$ is an exchange involving $q$ but not $p$, and the relative projections of at least two branches of the exchange are different.

\begin{definition}[Synthesis of Monitors from Global Types]\label{d:gtToMon}
 \cref{alg:gtToMon} synthesizes the monitor for $p$ in $\gtc{G}$ with participants $D$, denoted $\gtToMon( \gtc{G} , p , D )$.
\end{definition}

\setlength{\textfloatsep}{\algfloatsep}
\begin{algorithm}[t]
    \DontPrintSemicolon
    \SetAlgoNoEnd
    \SetInd{.2em}{.6em}
    \SetSideCommentRight

    \Def{$\gtToMon(\gtc{G},p,D)$}{
        \Switch{$\gtc{G}$}{

            \uCase{$s \send r \gtBraces{ \msg i<T_i> \gtc. \gtc{G_i} }_{i \in I}$}{
                $\depsVar := \braces{ q \in D \mid \depsOn q p \gtc{G} }$

                \lIf{$p = s$}{
                    \KwRet
                    $p \send r \mBraces{ \msg i<T_i> \mc. p \send \depsVar (i) \mc. \gtToMon(\gtc{G_i},p,D) }_{i \in I}$
                }\label{li:monEx1}

                \lElseIf{$p = r$}{
                    \KwRet
                    $p \recv s \mBraces{ \msg i<T_i> \mc. p \send \depsVar (i) \mc. \gtToMon(\gtc{G_i},p,D) }_{i \in I}$
                }\label{li:monEx2}

                \uElseIf{$p \notin \braces{r,s}$}{
                    $\depOnVar_s := (s \in D \wedge \depsOn p s \gtc{G})$ \;
                    $\depOnVar_r := (r \in D \wedge \depsOn p r \gtc{G})$

                    \lIf{$(\depOnVar_s \wedge \neg \depOnVar_r)$}{
                        \KwRet
                        $p \recv s \mBraces{ i \mc. \gtToMon(\gtc{G_i},p,D) }_{i \in I}$
                    }\label{li:monDep1}

                    \lElseIf{$(\depOnVar_r \wedge \neg \depOnVar_s)$}{
                        \KwRet
                        $p \recv r \mBraces{ i \mc. \gtToMon(\gtc{G_i},p,D) }_{i \in I}$
                    }\label{li:monDep2}

                    \uElseIf{$(\depOnVar_s \wedge \depOnVar_r)$}{
                        \KwRet
                        $p \recv s \mBraces[\big]{ i \mc. p \recv r \mBraces{ i \mc. \gtToMon(\gtc{G_i},p,D) } \cup \mBraces{ j \mc. \perror }_{j \in I \setminus \braces{i}} }_{i \in I}$
                        \label{li:monDep3}
                    }

                    \lElse({(arbitrary $k \in I$)}){
                        \KwRet
                        $\gtToMon(\gtc{G_k},p,D)$
                    }\label{li:monNoDep}
                }
            }

            \uCase{$\rec X \gtc. \gtc{G'}$}{
                $D' := \braces{ q \in D \mid \gtc{G} \wrt (p,q) \neq \pend }$

                \lIf{$D' \neq \emptyset$}{
                    \KwRet
                    $\rec X \mc. \gtToMon(\gtc{G'},p,D')$
                }\label{li:monRec}

                \lElse{
                    \KwRet
                    $\pend$
                }\label{li:monRecEnd}
            }

            \lCase{$X$}{
                \KwRet
                $X$
            }\label{li:monCall}

            \lCase{$\pend$}{
                \KwRet
                $\pend$
            }\label{li:monEnd}

        }
    }

    \caption{\nameref{d:gtToMon} (\cref{d:gtToMon}).}
    \label{alg:gtToMon}
\end{algorithm}

\noindent
Initially, $D = \prt(\gtc{G}) \setminus \braces{p}$.
The monitor for $p$ of an exchange where $p$ is sender (resp.\ recipient) is an output (resp.\ input) followed in each branch by a dependency output, using \nameref{d:depsOn} to compute the participants with dependencies (\cref{li:monEx1,li:monEx2}).
If $p$ is not involved, we detect a dependency for $p$ with \nameref{d:depsOn}.
In case $p$ depends on sender/recipient but not both, the monitor is a dependency input (\cref{li:monDep1,li:monDep2}).
If $p$ depends on  sender \emph{and} recipient, the monitor contains two consecutive dependency inputs (\cref{li:monDep3}); when the two received labels differ, the monitor enters an $\perror$-state.
When there is no dependency for $p$, the monitor uses an arbitrary branch (\cref{li:monNoDep}).
To synthesize a monitor for $\rec X \gtc. \gtc{G'}$, the algorithm uses projection to compute $D'$ with participants having exchanges with $p$ in $\gtc{G'}$ (cf.\ \cref{alg:relativeProjection} \cref{li:rpRec}).
If $D'$ is non-empty, the monitor starts with a recursive definition (\cref{li:monRec}) and the algorithm continues with $D'$; otherwise, the monitor is $\pend$ (\cref{li:monRecEnd}).
The monitors of $X$ and $\pend$ are homomorphic (\cref{li:monCall,li:monEnd}).

\begin{example}
    Let us use
    $\gtc{G} = p \send q \gtBraces{ \msg \ell<T> \gtc. \rec X \gtc. p \send r \gtBraces{ \msg \ell'<T'> \gtc. X \gtc, \msg \ell''<T''> \gtc. \pend } }$
    to illustrate \cref{alg:gtToMon}. 
    We have $\gtc{G} \wrt (p,q) = p \send q \rtBraces{ \msg \ell<T> \rtc. \pend }$: the projection of the recursive body in $\gtc{G}$ is $(p \send r) \send q \rtBraces{ \msg \ell'<T'> \rtc. X \rtc, \msg \ell''<T''> \rtc. \pend }$, but there are no exchanges between $p$ and $q$, so the projection of the recursive definition is $\pend$.
    Were the monitor for $p$ synthesized with $q \in D$, \nameref{d:depsOn} would detect a dependency: the recursive definition's monitor would be $p \send r \mBraces{ \msg \ell'<T'> \mc. p \send \braces{q} (\ell') \mc. X \mc, \msg \ell''<T''> \mc. p \send \braces{q} (\ell'') \mc. \pend }$.
    However, per $\gtc{G} \wrt (p,q)$, $p$ nor $q$ expects a dependency at this point of the protocol.
    Hence, the algorithm removes $q$ from $D$ when entering the recursive body in $\gtc{G}$.
\end{example}

\begin{example}\label{ex:cMon}
    The monitors of $c$, $s$, $a$ in $\gtc{G_{\mathsf{a}}}$~\eqref{eq:auth} are:
    \begin{align*}
        \mc{M_c} &:= \gtToMon(\gtc{G_{\mathsf{a}}},c,\braces{s,a})
        \\
        &=
        \rec X \mc. c \recv s \mBraces*{
            \begin{array}{@{}l@{}}
                \msg \mathsf{login}<> \mc. c \send \braces{a} (\mathsf{login}) \mc. c \send a \mBraces{ \msg \mathsf{pwd}<\mathsf{str}> \mc. c \send \emptyset (\mathsf{pwd}) \mc. X } \mc,
                \\
                \msg \mathsf{quit}<> \mc. c \send \braces{a} (\mathsf{quit}) \mc. \pend
            \end{array}
        }
        \\
        \mc{M_s} &:= \gtToMon(\gtc{G_{\mathsf{a}}},s,\braces{c,a})
        \\
        &=
        \rec X \mc. s \send c \mBraces*{
            \begin{array}{@{}l@{}}
                \msg \mathsf{login}<> \mc. s \send \braces{a} (\mathsf{login}) \mc. s \recv a \mBraces{ \msg \mathsf{succ}<\mathsf{bool}> \mc. s \send \emptyset (\mathsf{succ}) \mc. X } \mc,
                \\
                \msg \mathsf{quit}<> \mc. s \send \braces{a} (\mathsf{quit}) \mc. \pend
            \end{array}
        }
        \\
        \mc{M_a} &:= \gtToMon(\gtc{G_{\mathsf{a}}},a,\braces{c,s})
        \\
        &=
        \rec X \mc. a \recv s \mBraces*{
            \begin{array}{@{}l@{}}
                \mathsf{login} \mc. a \recv c \mBraces*{
                    \begin{array}{@{}l@{}}
                        \mathsf{login} \mc. \begin{array}[t]{@{}l@{}}
                            a \recv c
                            \color{mClr}\lbrace\mskip-5mu\lbrace\normalcolor
                            \msg \mathsf{pwd}<\mathsf{str}> \mc. a \send \emptyset (\mathsf{pwd}) \mc.
                            \\
                            a \send s \mBraces{ \msg \mathsf{succ}<\mathsf{bool}> \mc. a \send \emptyset (\mathsf{succ}) \mc. X }
                            \color{mClr}\rbrace\mskip-5mu\rbrace\normalcolor
                            \mc,
                        \end{array}
                        \\
                        \mathsf{quit} \mc. \perror
                    \end{array}
                } \mc,
                \\
                \mathsf{quit} \mc. a \recv c \mBraces{ \mathsf{quit} \mc. \pend \mc, \mathsf{login} \mc. \perror }
            \end{array}
        }
    \end{align*}
\end{example}

\section{Properties of Correct Monitored Blackboxes}
\label{s:correctMonitoredBlackboxes}

Given a global type $\gtc{G}$, we establish the precise conditions under which a network of monitored blackboxes correctly implements $\gtc{G}$.
That is, we define how the monitored blackbox $\net P$ of a participant $p$ of $\gtc{G}$ should behave, i.e., when $\net P$ \emph{satisfies} the role of $p$ in $\gtc{G}$ (\nameref{d:satisfaction}, \cref{d:satisfaction}).
We then prove two important properties of networks of monitored blackboxes that satisfy a given global type:
\begin{itemize}
    \item[\textbf{Soundness:}]
        The network behaves correctly according to the global type (\cref{t:soundnessMain});

    \item[\textbf{Transparency:}]
        The monitors interfere minimally with buffered blackboxes (\cref{t:transparencyMain}).
\end{itemize}

\noindent
As we will see in \cref{s:soundness}, satisfaction is exactly the condition under which a network $\net P$ is sound with respect to a global type $\gtc{G}$.

\subsection{Satisfaction}
\label{s:satisfaction}

Our aim is to attest that $\net P$ satisfies the role of $p$ in $\gtc{G}$ if it meets certain conditions on the behavior of monitored blackboxes with respect to the protocol.
As we have seen, the role of $p$ in $\gtc{G}$ is determined by  projection.
Satisfaction is then a relation $\mathcal{R}$ between (i)~monitored blackboxes and (ii)~maps from participants $q \in \prt(\gtc{G}) \setminus \braces{p}$ to relative types between $p$ and $q$, denoted $\RTs$; $\mathcal{R}$ must contain $(\net P,\RTs)$ with relative projections of $\gtc{G}$.
Given any $(\net P',\RTs')$ in $\mathcal{R}$, the general idea of satisfaction is (i)~that an output to $q$ by $\net P'$ means that $\RTs'(q)$ is a corresponding exchange from $p$ to $q$, and (ii)~that if there is a $q$ such that $\RTs'(q)$ is an exchange from $q$ to $p$ then $\net P'$ behaves correctly afterwards.

In satisfaction, dependencies in relative types require care.
For example, if $\RTs'(q)$ is an exchange from $p$ to $q$ and $\RTs'(r)$ is a dependency on this exchange, then $\net P'$ must first send a label to $q$ and then send \emph{the same label} to $r$.
Hence, we need to track the labels chosen by the monitored blackbox for later reference.
To this end, we uniquely identify each exchange in a global type by its \emph{location} $\vect \ell$: a sequence of labels denoting the choices leading to the exchange.
Projection then uses these locations to annotate each exchange and recursive definition/call in the relative type it produces.
Because projection skips independent exchanges (\cref{alg:relativeProjection}, \cref{li:rpIndep}), some exchanges and recursive definitions/calls may be associated with multiple locations; hence, they are annotated with \emph{sets} of locations, denoted $\mbb L$.
Satisfaction then tracks choices using a map from sets of locations to labels, denoted $\Lbls$.
Projection with location annotations is formally defined in~\ifappendix\cref{a:relativeTypesWithLocs}\else\cite{report/vdHeuvelPD23}\fi, along with a corresponding definition for unfolding recursion.

Before defining satisfaction, we  set  up some useful notation for type signatures, locations, relative types, and maps.

\begin{notation}
    Let $\bm{P}$ denote the set of all participants, $\bm{R}$ the set of all relative types, $\bm{N}$ the set of all networks, and $\bm{L}$ the set of all labels.

    Notation $\Pow(S)$ denotes the powerset of $S$.
    Given a set $S$, we write $\vect S$ to denote the set of all sequences of elements from $S$.
    We write $\mbb L \overlap \mbb L'$ to stand for $\mbb L \cap \mbb L' \neq \emptyset$.
    We write $\mbb L \leq \mbb L'$ if every $\vect \ell' \in \mbb L'$ is prefixed by some $\vect \ell \in \mbb L$.

    In relative types, we write $\lozenge$ to denote either $\send$ or $\recv$.
    We write $\unfold(\rtc{R})$ for the inductive unfolding of $\rtc{R}$ if $\rtc{R}$ starts with recursive definitions, and for $\rtc{R}$ itself otherwise.
    We write $\rtc{R} \unfoldeq \rtc{R'}$ whenever $\unfold(\rtc{R}) = \unfold(\rtc{R'})$.

    We shall use monospaced fonts to denote maps (such as $\RTs$ and $\Lbls$).
    We often define maps using the notation of injective relations.
    Given a map $\mathtt{M}$, we write $(x,y) \in \mathtt{M}$ to denote that $x \in \dom(\mathtt{M})$ and $\mathtt{M}(x) = y$.
    We write $\mathtt{M} \update{x \mapsto y'}$ to denote the map obtained by adding to $\mathtt{M}$ an entry for $x$ pointing to $y'$, or updating an already existing entry for $x$.
    Maps are partial unless stated otherwise.
\end{notation}

\begin{definition}[Satisfaction]\label{d:satisfaction}
    A relation $\mathcal{R}$ is \emph{sat-signed} if its signature is $\bm{N} \times (\bm{P} \rightarrow \bm{R}) \times (\Pow(\vect {\bm{L}}) \rightarrow \bm{L})$.
    We define the following properties of relations:
    \begin{itemize}
        \item
            A sat-signed relation $\mathcal{R}$ \emph{holds at $p$} if it satisfies the conditions in \cref{f:satisfaction}.

        \item
            A sat-signed relation $\mathcal{R}$  \emph{progresses at $p$} if for every $( \net P , \RTs , \Lbls ) \in \mathcal{R}$, we have $\net P \ltrans{\alpha} \net P'$ for some $\alpha$ and $\net P'$, given that one of the following holds:
            \begin{itemize}
                \item
                    $\RTs \neq \emptyset$ and, for every $(q,\rtc{R}) \in \RTs$, $\rtc{R} \unfoldeq \pend$;

                \item
                    There is $(q,\rtc{R}) \in \RTs$ such that (i)~$\rtc{R} \unfoldeq p \send q^{\mbb L} \rtBraces{ \msg i<T_i> \rtc. \rtc{R_i} }_{i \in I}$ or $\rtc{R} \unfoldeq (p \lozenge r) \send q^{\mbb L} \rtBraces{ i \rtc. \rtc{R_i} }_{i \in I}$, and (ii)~for every $(q',\rtc{R'}) \in \RTs \setminus \braces{(q,\rtc{R})}$, either
                    $\rtc{R'} \unfoldeq \pend$
                    or $\unfold(\rtc{R'})$ has locations $\mbb L'$ with $\mbb L \leq \mbb L'$.
            \end{itemize}

        \item
            A sat-signed relation $\mathcal{R}$ is a \emph{satisfaction at $p$} if it holds and progresses at $p$.
    \end{itemize}
    We write $\mathcal{R} \satisfies{ \net P }[ \Lbls ]{ \RTs }[ p ]$ if $\mathcal{R}$ is a satisfaction at $p$ with $(\net P , \RTs , \Lbls) \in \mathcal{R}$, and
    $ \mathcal{R} \satisfies{ \net P }{ \RTs }[ p ]$ when $\Lbls$ is empty.
    We omit $\mathcal{R}$ to indicate such $\mathcal{R}$ exists.
\end{definition}

\setlength{\textfloatsep}{\textfloatsepsave}
\begin{figure}[t]
    Given $( \net P , \RTs , \Lbls ) \in \mathcal{R}$, all the following conditions hold:
    \begin{enumerate}[itemsep=2.0pt]
        \item\label{i:sTau}
            \textbf{(Tau)}
            If $\net P \ltrans{ \tau } \net P'$,
            then $( \net P' , \RTs , \Lbls ) \in \mathcal{R}$.

        \item\label{i:sEnd}
            \textbf{(End)}
            If $\net P \ltrans{ \pend } \net P'$,
            then, for every $(q,\rtc{R}) \in \RTs$,
            $\rtc{R} \unfoldeq \pend$,
            $\net P' \ntrans$, and $( \net P' , \emptyset , \emptyset ) \in \mathcal{R}$.

        \item\label{i:sOutput}
            \textbf{(Output)}
            If $\net P \ltrans{ p \send q (\msg j<T_j>) } \net P'$,
            then
            $\RTs(q)\unfoldeq p \send q^{\mbb L} \rtBraces{ \msg i<T_i> \rtc. \rtc{R_i} }_{i \in I}$
            with $j \in I$, and $( \net P' , \RTs \update{q \mapsto \rtc{R_j}} , \Lbls \update{\mbb L \mapsto j} ) \in \mathcal{R}$.

        \item\label{i:sInput}
            \textbf{(Input)}
            If there is $(q,\rtc{R}) \in \RTs$ such that
            $\rtc{R} \unfoldeq q \send p^{\mbb L} \rtBraces{ \msg i<T_i> \rtc. \rtc{R_i} }_{i \in I}$,
            then
            \\
            $\net P = \monImpl{ \bufImpl{ p }{ P }{ \vect m } }{ \mc{M} }{ \vect n }$,
            and, for every $j \in I$,
            \\
            $( \monImpl{ \bufImpl{ p }{ P }{ \vect m } }{ \mc{M} }{ q \send p (\msg j<T_j>) , \vect n } , \RTs \update{q \mapsto \rtc{R_j}} , \Lbls \update{\mbb L \mapsto j} ) \in \mathcal{R}$.

        \item\label{i:sDependencyOutput}
            \textbf{(Dependency output\kern-1pt)}
            If $\net P \mkern-3mu \ltrans{ p \send q \Parens{j} } \mkern-3mu \net P'$,
            then
            $\RTs(q) \unfoldeq (p \lozenge r) \send q^{\mbb L} \rtBraces{ i \rtc. \rtc{R_i} }_{i \in I}$
            with $j \in I$,
            there is $(\mbb L',j) \in \Lbls$ such that $\mbb L' \overlap \mbb L$,
            and $( \net P' , \RTs \update{q \mapsto \rtc{R_j}} , \Lbls ) \in \mathcal{R}$.

        \item\label{i:sDependencyInput}
            \textbf{(Dependency input)}
            If there is $(q,\rtc{R}) \in \RTs$ s.t.\
            $\rtc{R} \unfoldeq (q \lozenge r) \send p^{\mbb L} \rtBraces{ i \rtc. \rtc{R_i} }_{i \in I}$,
            then $\net P = \monImpl{ \bufImpl{ p }{ P }{ \vect m } }{ \mc{M} }{ \vect n }$, and either of the following holds:
            \begin{itemize}
                \item
                    (Fresh label)
                    there is no $\mbb L' \in \dom(\Lbls)$ such that $\mbb L' \overlap \mbb L$, and, for every $j \in I$, $( \monImpl{ \bufImpl{ p }{ P }{ \vect m } }{ \mc{M} }{ q \send p \Parens{j} , \vect n } , \RTs \update{q \mapsto \rtc{R_j}} , \Lbls \update{\mbb L \mapsto j} ) \in \mathcal{R}$;

                \item
                    (Known label)
                    there is $(\mbb L',j) \in \Lbls$ such that $\mbb L' \overlap \mbb L$ and $j \in I$, and
                    \\
                    $( \monImpl{ \bufImpl{ p }{ P }{ \vect m } }{ \mc{M} }{ q \send p \Parens{j} , \vect n } , \RTs \update{q \mapsto \rtc{R_j}} , \Lbls ) \in \mathcal{R}$.
            \end{itemize}
    \end{enumerate}
    \caption{\nameref{d:satisfaction}: conditions under which $\mathcal{R}$ holds at $p$ (\cref{d:satisfaction}).}
    \label{f:satisfaction}
\end{figure}

\noindent
Satisfaction requires $\mathcal{R}$ to hold at $p$: each $(\net P,\RTs,\Lbls) \in \mathcal{R}$ enjoys the conditions in \cref{f:satisfaction}, discussed next, which ensure that $\net P$ respects the protocols in $\RTs$.

\satclause{i:sTau}{Tau} allows $\tau$-transitions without affecting $\RTs$ and $\Lbls$.
\satclause{i:sEnd}{End} allows an $\pend$-transition, given that all relative types in $\RTs$ are $\pend$.
The resulting state should not transition, enforced by empty $\RTs$ and $\Lbls$.

\satclause{i:sOutput}{Output} allows an output-transition with a message to $q$, given that $\RTs(q)$ is a corresponding output by $p$.
Then, $\RTs(q)$ updates to the continuation of the appropriate branch and $\Lbls$ records the choice under the locations of $\RTs(q)$.

\satclause{i:sInput}{Input} triggers when there is $(q,\rtc{R}) \in \RTs$ such that $\rtc{R}$ is a message from $q$ to~$p$.
Satisfaction targets the behavior of $\net P$ on its own, so we simulate a message sent by $q$.
The resulting behavior is then analyzed after buffering any such message;  $\RTs(q)$ is updated to the continuation of the corresponding branch.
As for outputs, $\Lbls$ records the choice at the locations of $\RTs(q)$.

\satclause{i:sDependencyOutput}{Dependency Output} allows an output-transition with a dependency message to $q$, given that $\RTs(q)$ is a corresponding dependency output by $p$ with locations~$\mbb L$.
The message's label should be recorded in $\Lbls$ at some $\mbb L'$ that shares a location with $\mbb L$: here $\mbb L'$ relates to a past exchange between $p$ and some $r$ in $\gtc{G}$ from which the dependency output in $\RTs(q)$ originates.
This ensures that the dependency output is preceded by a corresponding exchange, and that the dependency output carries the same label as originally chosen for the preceding exchange.
Afterwards, $\RTs(q)$ is updated to the continuation of the appropriate branch.

\satclause{i:sDependencyInput}{Dependency Input} triggers when there is $(q,\rtc{R}) \in \RTs$ such that $\rtc{R}$ is a dependency exchange from $q$ to $p$, forwarding a label exchanged between $q$ and~$r$.
As in the input case, a message from $q$ is simulated by buffering it in $\net P$.
In this case, $\RTs(r)$ could be a dependency exchange from $r$ to $p$, originating from the same exchange between $q$ and $r$ in $\gtc{G}$.
To ensure that the buffered messages contain the same label, we distinguish ``fresh'' and ``known'' cases.
In the fresh case, we consider the first of the possibly two dependency exchanges: there is no $\mbb L' \in \dom(\Lbls)$ that shares a location with the locations $\mbb L$ of $\RTs(q)$.
Hence, we analyze each possible dependency message, updating $\RTs(q)$ appropriately and recording the choice in $\Lbls$.
The known case then considers the second dependency exchange: there is a label in $\Lbls$ at $\mbb L'$ that shares a location with $\mbb L$.
Hence, we buffer a message with the same label, and update $\RTs(q)$ accordingly.

Satisfaction also requires $\mathcal{R}$ to progress at $p$, for each $(\net P,\RTs,\Lbls) \in \mathcal{R}$ making sure that $\net P$ does not idle whenever we are expecting a  transition from~$\net P$.
There are two cases.
(1)~If all relative types in $\RTs$ are $\pend$, we expect an $\pend$-transition.
(2)~If there is a relative type in $\RTs$ that is a (dependency) output,
we expect an output transition.
However, $\net P$ may idle if it is waiting for a message: there is $(q,\rtc{R}) \in \RTs$ such that $\rtc{R}$ is a (dependency) input originating from an exchange in $\gtc{G}$ that precedes the exchange related to the output.

\begin{definition}[Satisfaction for Networks]\label{d:monSat}
    Let us write $\RTsOf(\gtc{G},p)$ to denote the set $\braces{ ( q , \gtc{G} \wrt (p,q) ) \mid q \in \prt(\gtc{G}) \setminus \braces{p} }$.
    Moreover, we write
    \begin{itemize}

        \item
            $\mathcal{R} \satisfies{ \monImpl{ \bufImpl{ p }{ P }{ \epsi } }{ \mc{M} }{ \epsi } }{ \gtc{G} }[ p ]$ if and only if $\mc{M} = \gtToMon( \gtc{G} , p , \prt(\gtc{G}) \setminus \braces{p} )$ and $\mathcal{R} \satisfies{ \monImpl{ \bufImpl{ p }{ P }{ \epsi } }{ \mc{M} }{ \epsi } }{ \RTsOf(\gtc{G},p) }[ p ]$.
            We omit $\mathcal{R}$ to say such $\mathcal{R}$ exists.

        \item
            $\satisfies{ \net P }{ \gtc{G} }$ if and only if $\net P \equiv \prod_{p \in \prt(\gtc{G})} \monImpl{ \bufImpl{ p }{ P_p }{ \epsi } }{ \mc{M_p} }{ \epsi }$ and, for every $p \in \prt(\gtc{G})$, $\satisfies{ \monImpl{ \bufImpl{ p }{ P_p }{ \epsi } }{ \mc{M_p} }{ \epsi } }{ \gtc{G} }[ p ]$.

    \end{itemize}
\end{definition}

\begin{example}
    The following satisfaction assertions hold with implementations, relative types, and monitors from \Cref{ex:cImpl,ex:cRelProj,ex:cMon}, respectively:
    \begin{align*}
        & \satisfies{ \monImpl{ \bufImpl{ c }{ P_c }{ \epsi } }{ \mc{M_c} }{ \epsi } }{ \braces{ ( s , \rtc{R_{c,s}} ) , ( a , \rtc{R_{c,a}} ) } }[ c ]
        \\
        & \satisfies{ \monImpl{ \bufImpl{ s }{ P_s }{ \epsi } }{ \mc{M_s} }{ \epsi } }{ \braces{ ( c , \rtc{R_{c,s}} ) , ( a , \rtc{R_{s,a}} ) } }[ s ]
        \\
        & \satisfies{ \monImpl{ \bufImpl{ a }{ P_a }{ \epsi } }{ \mc{M_a} }{ \epsi } }{ \braces{ ( c , \rtc{R_{c,a}} ) , ( s , \rtc{R_{s,a}} ) } }[ a ]
        \\
        & \satisfies{ \monImpl{ \bufImpl{ c }{ P_c }{ \epsi } }{ \mc{M_c} }{ \epsi } \| \monImpl{ \bufImpl{ s }{ P_s }{ \epsi } }{ \mc{M_s} }{ \epsi } \| \monImpl{ \bufImpl{ a }{ P_a }{ \epsi } }{ \mc{M_a} }{ \epsi } }{ \gtc{G_{\mathsf{a}}} }
    \end{align*}
    We also have:
    $\satisfies[\nvDash]{ \monImpl{ \bufImpl{ c }{ P_c }{ \epsi } }{ \rec X \mc. c \recv s \mBraces{ \msg \mathsf{quit}<> \mc. \pend } }{ \epsi } }{ \gtc{G_{\mathsf{a}}} }[ c ]$. This is because
    $\rtc{R_{c,s}}$ specifies that $s$ may send $\mathsf{login}$ to $c$, which this monitor would not accept.
\end{example}

\subsection{Soundness}
\label{s:soundness}

Our \emph{first result} is that satisfaction is sound with respect to global types: when a network of monitored blackboxes satisfies a global type $\gtc{G}$ (\cref{d:monSat}), any path of transitions eventually reaches a state that satisfies another global type reachable from $\gtc{G}$.
Hence, the satisfaction of the individual components that a network comprises is enough to ensure that the network behaves as specified by the global type.
Reachability between global types is defined as an LTS:

\begin{definition}[LTS for Global Types]\label{d:ltsGlobalTypes}
    We define an LTS for global types, denoted $\gtc{G} \ltrans{ \ell } \gtc{G'}$, by the following rules:

    \begin{mathparpagebreakable}
        \bussAssume{
            j \in I
        }
        \bussUn{
            p \send q \gtBraces{ \msg i<T_i> \gtc. \gtc{G_i} }_{i \in I}
            \mathrel{\raisebox{-1.5pt}[0ex][0ex]{$\ltrans{ j }$}}
            \gtc{G_j}
        }
        \bussDisplay
        \and
        \bussAssume{
            \gtc{G} \braces{ \rec X \gtc. \gtc{G} / X }
            \ltrans{ \ell }
            \gtc{G'}
        }
        \bussUn{
            \rec X \gtc. \gtc{G}
            \mathrel{\raisebox{-1.5pt}[0ex][0ex]{$\ltrans{ \ell }$}}
            \gtc{G'}
        }
        \bussDisplay
    \end{mathparpagebreakable}

    \noindent
    Given $\vect \ell = \ell_1 , \ldots , \ell_n$, we write $\gtc{G} \ltrans{ \vect \ell } \gtc{G'}$ to denote $\gtc{G} \ltrans{ \ell_1 } \ldots \ltrans{ \ell_n } \gtc{G'}$. 
\end{definition}

\begin{theorem}[Soundness]\label{t:soundnessMain}
    If $\satisfies{\net P}{\gtc{G}}$ (\Cref{d:monSat}) and $\net P \trans* \net P_0$  then there exist $\gtc{G'},\vect \ell,\net P'$ such that $\gtc{G} \ltrans{\vect \ell} \gtc{G'}$, $\net P_0 \trans* \net P'$, and $\satisfies{\net P'}{\gtc{G'}}$.
\end{theorem}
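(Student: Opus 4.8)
The plan is to proceed by induction on the length of the transition sequence $\net P \trans* \net P_0$, maintaining throughout an invariant that is stronger than the bare conclusion of the theorem. Concretely, I would strengthen the statement to carry along, for a suitable intermediate global type $\gtc{G}_1$ reachable as $\gtc{G} \ltrans{\vect\ell_1} \gtc{G}_1$, that there is a network $\net P_1$ with $\net P_0 \trans* \net P_1$ and $\satisfies{\net P_1}{\gtc{G}_1}$, \emph{together with} bookkeeping data describing the in-flight messages sitting in monitor buffers and blackbox buffers (these correspond to choices already committed by senders in $\gtc{G}$ but not yet consumed). The base case ($\net P_0 = \net P$, zero steps) is immediate with $\gtc{G}' = \gtc{G}$, $\vect\ell = \varepsilon$, $\net P' = \net P$. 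For the inductive step, given $\net P_0 \ltrans{\tau} \net P_0'$ I would case-analyze which LTS rule of \cref{f:ltsNetworks} fired.

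The heart of the argument is a \emph{decomposition lemma}: if $\satisfies{\net P_1}{\gtc{G}_1}$ and $\net P_1 \ltrans{\tau} \net P_1'$, then either (a) the step is an internal blackbox computation or an internal monitor step (rules \ruleLabel{buf-tau}, \ruleLabel{mon-tau}, \ruleLabel{mon-out-dep-empty}, \ruleLabel{mon-rec}) that leaves the "protocol state" unchanged, in which case I stay with $\gtc{G}_1$ and reestablish satisfaction by the \satclause{i:sTau}{Tau} clause of \cref{f:satisfaction} for the participant involved; or (b) the step is a genuine communication event — a monitor reading an output/dependency message from its buffer (\ruleLabel{mon-in}), a message being routed between components (\ruleLabel{out-mon-buf}), or a blackbox consuming a buffered message — in which case I must show the event is consistent with $\gtc{G}_1$ (or with the in-flight data) and, if it completes an exchange at the head of $\gtc{G}_1$, advance $\gtc{G}_1 \ltrans{\ell} \gtc{G}_2$ via \cref{d:ltsGlobalTypes}. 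To know that the head exchange of $\gtc{G}_1$ can actually make progress — i.e. that $\net P_0'$ is not stuck before reaching such a state — I would invoke the \emph{progress} half of satisfaction (the "progresses at $p$" clause of \cref{d:satisfaction}): the participant designated as the sender of $\gtc{G}_1$'s head exchange has all its relative-type obligations either $\pend$ or output/dependency-output with minimal locations, so it must be able to transition, and by soundness of the \ruleLabel{error} rules (no error can arise when each component satisfies its relative projections — a separate "no false alarm" lemma) that transition is the expected output, which then gets propagated and consumed, driving $\gtc{G}_1$ forward. Here I would lean on the key fact that $\RTsOf(\gtc{G}_1,p)$ is exactly the tuple of relative projections and that relative projection commutes with the global LTS: $\gtc{G}_1 \ltrans{\ell} \gtc{G}_2$ implies $\gtc{G}_1 \wrt (p,q)$ unfolds to something whose appropriate branch is $\gtc{G}_2 \wrt (p,q)$, so updating every component's $\RTs$ by $\update{q \mapsto \rtc{R}_\ell}$ (as the satisfaction clauses do) reconstitutes $\RTsOf(\gtc{G}_2,p)$.

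The main obstacle, I expect, is the treatment of \emph{dependencies} and the asynchronous ordering they induce. After a sender $s$ performs the head exchange of $\gtc{G}_1$ with recipient $r$, its monitor emits dependency messages $s \send D(\ell)$ to every $q$ with $\depsOn q s \gtc{G}_1$; these, together with $r$'s own re-broadcast of the received label, must reach the dependent participants' monitors \emph{before} those participants are expected to move, and the two dependency messages a participant may receive (from $s$ and from $r$, cf.\ \cref{li:monDep3}) must agree. So the intermediate-state invariant has to record not just $\gtc{G}_1$ but a "configuration" tracking which dependency broadcasts are outstanding and which branch label they carry, and I need a sub-lemma that satisfaction is preserved as these messages are buffered and consumed — essentially matching the "Fresh label / Known label" bookkeeping in \satclause{i:sDependencyInput}{Dependency Input} and the $(\mbb L',j)\in\Lbls$ side-condition in \satclause{i:sDependencyOutput}{Dependency Output} against the actual message queue contents. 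Showing that this configuration is always reachable forward (the $\net P_0 \trans* \net P'$ part: we may need several $\tau$-steps to flush dependencies before we land on a state satisfying a clean $\gtc{G}'$) and that it is invariant under the \ruleLabel{par} / \ruleLabel{out-mon-buf} routing rules is where most of the real work lies; the recursion cases (\ruleLabel{mon-rec} vs.\ the $\rec X$ cases of \cref{alg:relativeProjection} and \cref{alg:gtToMon}, including the subtlety that $D$ shrinks to $D'$ on re-entry) are comparatively routine once the unfolding-vs-$\unfoldeq$ machinery is set up, but must be handled carefully so that locations $\mbb L$ are refreshed consistently across all components on each loop iteration.
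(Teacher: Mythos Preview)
Your overall strategy---induction on the length of the transition sequence, a strengthened invariant carrying an intermediate global type reachable from $\gtc{G}$, and appeal to the progress clause of satisfaction to ensure the network is not stuck---matches the paper's. The paper packages this as a generalized lemma (from any $\gtc{G_0}$ with $\gtc{G} \ltrans{\vect\ell_0} \gtc{G_0}$, with per-participant data bundled into an ``initial satisfaction'' predicate) and then specializes.

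There is one genuine gap. Your decomposition lemma assumes each communication event either leaves the protocol state alone or corresponds to the exchange at the head of $\gtc{G}_1$. But well-formedness does not force the network to respect the order of \emph{independent} exchanges: if $\gtc{G}_1 = p \send q\,\gtBraces{\ldots \gtc. r \send s\,\gtBraces{\ldots}}$ with $\{p,q\}\cap\{r,s\}=\emptyset$, the monitored blackboxes of $r$ and $s$ can fire their exchange before $p$ and $q$ do. A transition observed from $\net P_0$ may therefore belong to an exchange that is \emph{not} at the head of $\gtc{G}_1$, and your invariant (in-flight messages for the current head exchange and its dependency broadcasts) has nowhere to put it. The paper does not solve this by enriching the invariant; it proves a separate \emph{independence lemma}---transitions touching disjoint sub-networks commute---and uses it to reorder the observed path so that the head exchange happens first, discharges that exchange to reach $\gtc{G}_2$, and restores the original order afterwards. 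Without this reordering device (or a substantially heavier invariant tracking several partially-completed independent exchanges at once), the inductive step does not go through.

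A secondary organizational difference: the paper does not case-analyze one $\tau$-step at a time. It unfolds $\gtc{G}_1$, fixes its head exchange, and enumerates participants by role---sender $s$, recipient $r$, those depending on $s$ only, on $r$ only, on both, on neither---reading off from the synthesized monitor and the satisfaction clauses exactly which block of transitions each must take to reach a state satisfying the chosen branch $\gtc{H_j}$. Two small lemmas then clean up the resulting maps: one drops $\Lbls$ entries once every location in $\RTs$ has advanced past them, and one collapses the $\bigcup_{i\in I}$ projection carried by non-dependent participants to the single branch $j$ actually taken. Your per-transition bookkeeping aims at the same target but is heavier; the paper's per-exchange granularity is what makes the dependency-broadcast accounting (your ``main obstacle'') tractable.
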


We sketch the proof of \cref{t:soundnessMain} (see \ifappendix \Cref{s:soundnessProof} \else \cite{report/vdHeuvelPD23}\fi~for details).
We prove a stronger statement that starts from a network $\net P$ that satisfies an intermediate $\gtc{G_0}$ reachable from  $\gtc{G}$.
This way, we apply induction on the number of transitions between $\net P$ and $\net P_0$, relating the transitions of the network to the transitions of $\gtc{G_0}$ one step at a time by relying on \nameref{d:satisfaction}.
Hence, we inductively ``consume'' the transitions between $\net P$ and $\net P_0$ until we have passed through $\net P_0$ and end up in a network satisfying $\gtc{G'}$ reachable from $\gtc{G_0}$.
We use an auxiliary lemma to account for global types with \emph{independent exchanges}, such as  $\gtc{G'} = p \send q \gtBraces{ \msg \ell<T> \gtc. r \send s \gtBraces{ \msg \ell'<T'> \gtc. \pend } }$.
In $\gtc{G'}$, the exchange involving $(p,q)$ is unrelated to that involving $(r,s)$, so they occur concurrently in a network implementing $\gtc{G'}$.
Hence, the transitions from $\net P$ to $\net P_0$ might not follow the order specified in $\gtc{G_0}$.
The lemma ensures that concurrent  (i.e., unrelated) transitions always end up in the same state, no matter the order.
This way we show transitions from $\net P$ in the order specified in $\gtc{G_0}$, which we restore to the observed order using the lemma when we are done.

\cref{t:soundnessMain} implies that any  $\net P$ that satisfies some global type is \emph{error free}, i.e., $\net P$ never reduces to a network containing $\perror_D$ (\ifappendix\cref{t:errorFreedom}, \cref{s:soundnessProof}\else{}stated and proved formally in the full version of this paper~\cite{report/vdHeuvelPD23}\fi).

\subsection{Transparency}
\label{s:transparency}

The task of monitors is to observe and verify behavior \emph{with minimal interference}: monitors should be \emph{transparent}.
Transparency is usually expressed as a bisimulation between a monitored and unmonitored component~\cite{journal/ijis/LigattiBW05,journal/ijsttt/FalconeFM12,journal/tcs/BocchiCDHY17,conf/concur/AcetoCFI18}.

Our \emph{second result} is thus a transparency result.
For it to be informative, we assume that we observe the (un)monitored blackbox as if it were running in a network of monitored blackboxes that adhere to a given global protocol.
This way, we can assume that received messages are correct, such that the monitor does not transition to an error signal.
To this end, we enhance the \nameref{d:ltsNetworks}:
\begin{enumerate}
    \item
        As in \nameref{d:satisfaction}, we consider (un)monitored blackboxes on their own.
        Hence, we need a way to simulate messages sent by other participants.
        Otherwise, a blackbox would get stuck waiting for a message and the bisimulation would hold trivially.
        We thus add a transition that buffers messages.
        Similar to \satref{i:sInput}{Input} and \satclause{i:sDependencyInput}{Dependency Input}, these messages cannot be arbitrary; we parameterize the enhanced LTS by an \emph{oracle} that determines which messages are allowed as stipulated by a given global type.

    \item
        Besides observing and verifying transitions, our monitors additionally send dependency messages.
        This leads to an asymmetry in the behavior of monitored blackboxes and unmonitered blackboxes, as the latter do not send dependency messages.
        Hence, we rename dependency output actions to $\tau$.
\end{enumerate}

\noindent
We now define the enhanced LTS for networks, after setting up some notation.

\begin{notation}
    Let $\bm{A}$ denote the set of all actions.
    Given $\Omega : \Pow(\vect {\bm{A}})$, we write $\alpha + \Omega$ to denote the set containing every sequence in $\Omega$ prepended with $\alpha$.
    We write $\Omega(\alpha) = \Omega'$ iff $\alpha + \Omega' \subseteq \Omega$ and there is no $\Omega''$ such that $\alpha + \Omega' \subset \alpha + \Omega'' \subseteq \Omega$.
\end{notation}

\begin{definition}[Enhanced LTS for Networks]\label{d:enhancedLTS}
    We define an \emph{enhanced LTS for Networks}, denoted $\net P \bLtrans{\Omega}{\alpha}{\Omega'} \net P'$ where $\Omega , \Omega' : \Pow(\vect {\bm{A}})$, by the rules in \cref{f:enhancedLTS}.
    We write $\net P \bLtrans*{\Omega}{~~}{\Omega} \net P'$
    whenever $\net P$ transitions to $\net P'$ in zero or more $\tau$-transitions, i.e., $\net P \bLtrans{\Omega}{ \tau }{\Omega} \cdots \bLtrans{\Omega}{ \tau }{\Omega} \net P'$.
    We write $\net P \bLtrans*{\Omega}{ \alpha }{\Omega'} \net P'$ when $\net P \bLtrans*{\Omega}{~~}{\Omega} \net P_1 \bLtrans{\Omega}{\alpha}{\Omega'} \net P_2 \bLtrans*{\Omega'}{~~}{\Omega'} \net P'$, omitting the $\alpha$-transition when $\alpha = \tau$.
    Given $\vect \alpha = \alpha_1 , \ldots , \alpha_n$, we write $\net P \bLtrans*{\Omega_0}{ \vect \alpha }{\Omega_n} \net P'$ when $\net P \bLtrans*{\Omega_0}{\alpha_1}{\Omega_1} \net P_1 \cdots \net P_{n-1} \bLtrans*{\Omega_{n-1}}{\alpha_n}{\Omega_n} \net P'$.
\end{definition}

\begin{figure}[t]
    \begin{mathpar}
        \def\defaultHypSeparation{\hskip .8ex}
        \inferLabel{buf-mon}~
        \begin{bussproof}
            \bussAssume{
                \alpha = p \recv q (x), n' = q \send p (x)
                \mathbin{\text{or}}
                \alpha = p \recv q \Parens{x}, n' = q \send p \Parens{x}
            }
            \bussAssume{
                \Omega(\alpha) = \Omega'
            }
            \bussBin{
                \raisebox{-8.0pt}[0ex][0ex]{$
                    \monImpl{ \bufImpl{ p }{ P }{ \vect m } }{ {\mc{M}} }{ \vect n }
                    \mathrel{\raisebox{-3.0pt}[0ex][0ex]{$\bLtrans{\Omega}{ \alpha }{\Omega'}$}}
                    \monImpl{ \bufImpl{ p }{ P }{ \vect m } }{ \mc{M} }{ n' , \vect n }
                $}
            }
        \end{bussproof}
        \and
        \inferLabel{buf-unmon}~
        \begin{bussproof}
            \bussAssume{
                \vphantom{\raisebox{7pt}{$\net P$}}
                \alpha = p \recv q (x), m' = q \send p (x)
                \mathbin{\text{or}}
                \alpha = p \recv q \Parens{x}, m' = q \send p \Parens{x}
            }
            \bussAssume{
                \Omega(\alpha) = \Omega'
            }
            \bussBin{
                \raisebox{-8.0pt}[0ex][0ex]{$
                    \bufImpl{ p }{ P }{ \vect m }
                    \mathrel{\raisebox{-3.0pt}[0ex][0ex]{$\bLtrans{\Omega}{ \alpha }{\Omega'}$}}
                    \bufImpl{ p }{ P }{ m' , \vect m }
                $}
            }
        \end{bussproof}
        \\
        \inferLabel{dep}~
        \begin{bussproof}
            \bussAssume{
                \vphantom{\raisebox{7pt}{$\net P$}}
                \net P
                \ltrans{ p \send q \Parens{\ell} }
                \net P'
            }
            \bussUn{
                \net P
                \bLtrans{\Omega}{ \tau }{\Omega}
                \net P'
            }
        \end{bussproof}
        \and
        \inferLabel{no-dep}~
        \begin{bussproof}
            \bussAssume{
                \vphantom{\raisebox{7pt}{$\net P$}}
                \net P
                \ltrans{ \alpha }
                \net P'
            }
            \bussAssume{
                \alpha \notin \braces{ p \recv q (x), p \recv q \Parens{x} , p \send q \Parens{\ell} }
            }
            \bussAssume{
                \Omega(\alpha) = \Omega'
            }
            \bussTern{
                \net P
                \bLtrans{\Omega}{ \alpha }{\Omega'}
                \net P'
            }
        \end{bussproof}
    \end{mathpar}
    \caption{\nameref{d:enhancedLTS} (\cref{d:enhancedLTS}).}\label{f:enhancedLTS}
\end{figure}

\noindent
Thus, Transitions~\ruleLabel*{buf} simulate messages from other participants, consulting $\Omega$ and transforming it into $\Omega'$.
Transition~\ruleLabel{dep} renames dependency outputs to $\tau$.
Transition~\ruleLabel{no-dep} passes any other transitions, updating  $\Omega$ to $\Omega'$ accordingly.

We now define a weak bisimilarity on networks, governed by oracles.

\begin{definition}[Bisimilarity]\label{d:weakBisim}
    A relation $\mathcal{B} : \bm{N} \times \Pow(\vect {\bm{A}}) \times \bm{N}$ is a \emph{(weak) bisimulation} if, for every $( \net P , \Omega , \net Q ) \in \mathcal{B}$:
    (1)~For every $\net P' , \alpha , \Omega_1$ such that $\net P \bLtrans{\Omega}{ \alpha }{\Omega_1} \net P'$, there exist $\vect b , \Omega_2 , \net Q' , \net P''$ such that $\net Q \bLtrans*{\Omega}{ \vect b , \alpha }{\Omega_2} \net Q'$, $\net P' \bLtrans*{\Omega_1}{ \vect b }{\Omega_2} \net P''$, and $( \net P'' , \Omega_2 , \net Q' ) \in \mathcal{B}$; and (2)~The symmetric analog.

    We say $\net P$ and $\net Q$ are \emph{bisimilar} with respect to $\Omega$, denoted $\net P \weakBisim{}{\Omega} \net Q$, if there exists a bisimulation $\mathcal{B}$ such that $( \net P , \Omega , \net Q ) \in \mathcal{B}$.
\end{definition}

\noindent
Clause~1 says that $\net Q$ can mimic a transition from $\net P$ to $\net P'$, possibly after   $\tau$- and \ruleLabel{buf}-transitions.
We then allow $\net P'$ to ``catch up'' on those additional transitions, after which the results are bisimilar (under a new oracle); Clause~2 is symmetric.
Additional \ruleLabel{buf}-transitions are necessary: an unmonitored blackbox can read messages from its buffer directly, whereas a monitor may need to move messages between buffers first.
If the monitor first needs to move messages that are not in its buffer yet, we need to add those messages with \ruleLabel{buf}-transitions.
The unmonitored blackbox then needs to catch up on those additional messages.

Similar to \nameref{t:soundnessMain}, \nameref{d:satisfaction} defines the conditions under which we prove  transparency of monitors.
Moreover, we need to define the precise oracle under which  bisimilarity holds.
This oracle is defined  similarly to \nameref{d:satisfaction}: it depends on actions observed, relative types (in $\RTs$), and prior choices (in $\Lbls$).

\begin{definition}[Label Oracle]\label{d:labelOracle}
    The \emph{label oracle} of participant $p$ under $\RTs : \bm{P} \rightarrow \bm{R}$ and $\Lbls : \Pow(\vect {\bm{L}}) \rightarrow \bm{L}$, denoted $\LO(p,\RTs,\Lbls)$ is defined in \cref{f:labelOracle}.
\end{definition}

\begin{figure}[t]
    To improve readability, below we write `$\bigcup [x \in S \ldots]$' instead of `$\bigcup_{x \in S \ldots}$'.
    \begin{align*}
        \hphantom{\cup}
        ~&
        \bigcup [(q,\rtc{R}) \in \RTs .~ \rtc{R} \unfoldeq p \send q^{\mbb L} \rtBraces{ \msg i<T_i> \rtc. \rtc{R_i} }_{i \in I}]
        ~
        \bigcup [j \in I]
        \\ & \quad
        p \send q (\msg j<T_j>) + \LO( p , \RTs \update{q \mapsto \rtc{R_j}} , \Lbls \update{\mbb L \mapsto j} )
        \tag*{(Output)}
        \displaybreak[0] \\[.3mm] \cup
        ~&
        \bigcup [(q,\rtc{R}) \in \RTs .~ \rtc{R} \unfoldeq q \send p^{\mbb L} \rtBraces{ \msg i<T_i> \rtc. \rtc{R_i} }_{i \in I}]
        ~
        \bigcup [j \in I]
        \\ & \quad
        p \recv q (\msg j<T_j>) + \LO( p , \RTs \update{q \mapsto \rtc{R_j}} , \Lbls \update{\mbb L \mapsto j} )
        \tag*{(Input)}
        \displaybreak[0] \\[.3mm] \cup
        ~&
        \bigcup [(q,\rtc{R}) \in \RTs .~ \rtc{R} \unfoldeq (q \lozenge r) \send p^{\mbb L} \rtBraces{ i \rtc. \rtc{R_i} }_{i \in I} \wedge {\not\exists} \mbb L' \in \dom(\Lbls) .~ \mbb L' \overlap \mbb L]
        ~
        \bigcup [j \in I]
        \\ & \quad
        p \recv q (j) + \LO( p , \RTs \update{q \mapsto \rtc{R_j}} , \Lbls \update{\mbb L \mapsto j} )
        \tag*{(Fresh Dependency Input)}
        \displaybreak[0] \\[.3mm] \cup
        ~&
        \bigcup [(q,\rtc{R}) \in \RTs .~ \rtc{R} \unfoldeq (q \lozenge r) \send p^{\mbb L} \rtBraces{ i \rtc. \rtc{R_i} }_{i \in I} \wedge \exists (\mbb L',j) \in \Lbls .~ \mbb L' \overlap \mbb L \wedge j \in I]
        \\ & \quad
        p \recv q (j) + \LO( p , \RTs \update{q \mapsto \rtc{R_j}} , \Lbls )
        \tag*{(Known Dependency Input)}
        \displaybreak[0] \\[.3mm] \cup
        ~&
        \bigcup [(q,\rtc{R}) \in \RTs .~ \rtc{R} \unfoldeq (p \lozenge r) \send q^{\mbb L} \rtBraces{ i \rtc. \rtc{R_i} }_{i \in I} \wedge \exists (\mbb L',j) \in \Lbls .~ \mbb L' \overlap \mbb L \wedge j \in I]
        \\ & \quad
        \LO( p , \RTs \update{q \mapsto \rtc{R_j}} , \Lbls )
        \tag*{(Dependency Output)}
        \displaybreak[0] \\[.3mm] \cup
        ~&
        \pend + \LO( p , \emptyset , \emptyset ) \quad [\text{only if } \forall ( q , \rtc{R} ) \in \RTs .~  \rtc{R} \unfoldeq \pend]
        \tag*{(End)}
        \displaybreak[0] \\[.3mm] \cup
        ~&
        \tau + \LO( p , \RTs , \Lbls )
        \tag*{(Tau)}
    \end{align*}
    \caption{Definition of the \nameref{d:labelOracle} (\cref{d:labelOracle}), $\LO(p,\RTs,\Lbls)$.}\label{f:labelOracle}
\end{figure}

\noindent
The label oracle $\LO(p,\RTs,\Lbls)$ thus consists of several subsets, each resembling a condition of \nameref{d:satisfaction} in \cref{f:satisfaction}.
Dependency outputs are exempt: the \nameref{d:enhancedLTS} renames them to $\tau$, so the label oracle simply looks past them without requiring a dependency output action.

We now state our transparency result, after defining a final requirement: minimality of satisfaction.
This allows us to step backward through satisfaction relations, such that we can reason about buffered messages.

\begin{definition}[Minimal Satisfaction]\label{d:minimalSat}
    We write $\satisfies[\vdash]{ \net P }{ \gtc{G} }[ p ]$ whenever there exists $\mathcal{R}$ such that $\mathcal{R} \satisfies{ \net P }{ \RTsOf(\gtc{G},p) }[ p ]$ (Def.~\labelcref{d:satisfaction}) and $\mathcal{R}$ is \emph{minimal}, i.e., there is no $\mathcal{R}' \subset \mathcal{R}$ such that $\mathcal{R}' \satisfies{ \net P }{ \RTsOf(\gtc{G},p) }[ p ]$.
\end{definition}

\begin{theorem}[Transparency]\label{t:transparencyMain}
    Suppose $\satisfies[\vdash]{ \monImpl{ \bufImpl{ p }{ P }{ \epsi } }{ \mc{M} }{ \epsi } }{ \gtc{G} }[ p ]$ (Def.~\labelcref{d:minimalSat}).
    Let $\Omega := \LO(p,\RTsOf(\gtc{G},p),\emptyset)$.
    Then $\monImpl{ \bufImpl{ p }{ P }{ \epsi } }{ \mc{M} }{ \epsi } \weakBisim{}{\Omega} \bufImpl{ p }{ P }{ \epsi }$.
\end{theorem}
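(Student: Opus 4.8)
The plan is to exhibit an explicit weak bisimulation $\mathcal{B}$ relating monitored and unmonitored blackboxes and to show it is closed under the enhanced LTS. The natural candidate is driven by the minimal satisfaction relation $\mathcal{R}$ that witnesses $\satisfies[\vdash]{ \monImpl{ \bufImpl{ p }{ P }{ \epsi } }{ \mc{M} }{ \epsi } }{ \gtc{G} }[ p ]$. Concretely, I would define $\mathcal{B}$ to contain all triples $(\monImpl{ \bufImpl{ p }{ P' }{ \vect m } }{ \mc{M'} }{ \vect n }, \LO(p,\RTs,\Lbls), \bufImpl{ p }{ P' }{ \vect m \cdot \vect n })$ for which there is a satisfaction state $(\monImpl{ \bufImpl{ p }{ P' }{ \vect m } }{ \mc{M'} }{ \vect n }, \RTs, \Lbls) \in \mathcal{R}$ — i.e., the unmonitored side is the same buffered blackbox but with the monitor's buffer $\vect n$ appended to the process buffer $\vect m$ (since reorderable-by-sender messages make the concatenation well-defined), and the oracle is exactly the label oracle induced by the current $\RTs$ and $\Lbls$. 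One then needs to check $\mathcal{B}$ is a weak bisimulation and that the initial triple $(\monImpl{ \bufImpl{ p }{ P }{ \epsi } }{ \mc{M} }{ \epsi }, \LO(p,\RTsOf(\gtc{G},p),\emptyset), \bufImpl{ p }{ P }{ \epsi })$ lies in it, which is immediate from Def.~\labelcref{d:monSat} together with $\mc{M} = \gtToMon(\gtc{G},p,\prt(\gtc{G})\setminus\braces{p})$.

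For the forward direction (Clause~1), I would proceed by case analysis on the transition $\monImpl{ \bufImpl{ p }{ P' }{ \vect m } }{ \mc{M'} }{ \vect n } \bLtrans{\Omega}{ \alpha }{\Omega_1} \net P'$ in the enhanced LTS. A pure $\tau$-step of the enclosed blackbox (rule~\ruleLabel{buf-tau}/\ruleLabel{mon-tau}) is mimicked by the same $\tau$-step on the unmonitored side; the \nameref{i:sTau} clause of satisfaction keeps us in $\mathcal{R}$, and the oracle is unchanged. An output $p \send q(\msg j<T_j>)$ is handled by \nameref{i:sOutput}: satisfaction guarantees $\RTs(q)$ is a matching exchange, so the monitor (which is $\gtToMon$ of the relevant global type) takes \ruleLabel{mon-out}, while the unmonitored blackbox does the same output directly; the resulting pair is related under $\LO(p,\RTs\update{q\mapsto\rtc{R_j}},\Lbls\update{\mbb L\mapsto j})$, which is exactly the oracle after the (Output) summand of $\LO$. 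The interesting steps are the monitor's internal message-shuffling: \ruleLabel{mon-in} (and the enhanced \ruleLabel{buf-mon}) move a message from $\vect n$ into $\vect m$; on the unmonitored side nothing happens because the message is already in the concatenated buffer — this is why the bisimulation is weak and why we need the extra \ruleLabel{buf} moves $\vect b$ in the definition. Dually, \ruleLabel{buf-mon} adds a fresh input message to $\vect n$: on the unmonitored side we match with \ruleLabel{buf-unmon} adding the same message to the concatenated buffer (both consult $\Omega$ via the same label-oracle summand — (Input), (Fresh/Known Dependency Input)). Dependency outputs are renamed to $\tau$ by rule~\ruleLabel{dep}: the monitor fires \ruleLabel{mon-out-dep}/\ruleLabel{mon-out-dep-empty}, while the unmonitored side stays put; the (Dependency Output) summand of $\LO$ is designed precisely so the oracle skips past these. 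The \nameref{i:sEnd} case uses \ruleLabel{mon-end} on one side and \ruleLabel{buf-end} on the other.

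For the backward direction (Clause~2), a transition of $\bufImpl{ p }{ P' }{ \vect m\cdot\vect n }$ is either a $\tau$/output/end step of the process or a \ruleLabel{buf-unmon} step adding an oracle-permitted message. In each case the monitored side must first perform a bounded sequence of internal \ruleLabel{mon-in}/\ruleLabel{mon-tau} reorganizations (the $\vect b$ in Def.~\labelcref{d:weakBisim}) to migrate the relevant head message from $\vect n$ into $\vect m$ — here I would use the \emph{progress} part of satisfaction to argue the monitor does not idle and the \emph{minimality} of $\mathcal{R}$ (Def.~\labelcref{d:minimalSat}) to step backward/forward through satisfaction states while reasoning about exactly which messages are buffered — after which it can fire the corresponding \ruleLabel{mon-out}/\ruleLabel{mon-end}, possibly interleaved with dependency-output $\tau$'s. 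The \nameref{d:satisfaction} progress conditions are precisely what rule out the degenerate situation where the monitor is stuck while the blackbox wants to proceed (and the side conditions of the \ruleLabel{error} rules never fire because, by Theorem~\ref{t:soundnessMain}'s error-freedom corollary applied through satisfaction, no reachable monitor state is an $\perror$ or triggers \ruleLabel{error-out}/\ruleLabel{error-in}/\ruleLabel{error-end} under oracle-permitted inputs).

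\textbf{Main obstacle.} The crux is the buffer bookkeeping: showing that appending $\vect n$ to $\vect m$ is an invariant preserved by \emph{all} monitor rules (especially \ruleLabel{mon-in}, which pulls from $\vect n$, and \ruleLabel{buf-mon}, which pushes onto $\vect n$), and that the "catch-up" sequence $\vect b$ in the weak-bisimulation game always exists and is finite. This requires a careful auxiliary lemma stating that from any satisfaction state, the monitor can always migrate a bounded prefix of its buffer into the process buffer and then perform the exact action the unmonitored side performed — essentially a local confluence/commutation argument between \ruleLabel{mon-in} moves and the process's own actions, mediated by the fact that messages with distinct senders reorder freely. Getting the oracle indices to line up through these reorderings — i.e., that after the $\vect b$ catch-up both sides are related under the \emph{same} $\Omega_2 = \LO(p,\RTs'',\Lbls'')$ — is the fiddly part, since $\LO$ is defined by a single-step unfolding that must be shown to commute with the sequence of buffer manipulations. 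I expect this commutation/confluence lemma, together with the use of minimality to reconstruct which messages sit in which buffer, to absorb most of the proof effort.
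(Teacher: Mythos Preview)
Your approach is essentially the paper's: build $\mathcal{B}$ from the minimal satisfaction $\mathcal{R}$, pair each monitored state with the unmonitored blackbox whose buffer is the concatenation of the two buffers, take the oracle to be $\LO(p,\RTs,\Lbls)$, and do a case analysis on the enhanced LTS. Two points need attention.

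First, a bookkeeping slip: the concatenation order is $\vect n,\vect m$, not $\vect m\cdot\vect n$. Messages enter $\vect n$ on the left, are read from the right of $\vect n$ into the left of $\vect m$, and the process reads from the right of $\vect m$; so $\vect n$ holds the newer messages and must go on the left of the combined buffer. For same-sender messages this order is not recoverable by reordering.

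Second, and more substantively, your candidate $\mathcal{B}$ is missing an invariant the paper carries explicitly: a \emph{coherent setup} relation $\coherent{\gtc{G_0}}{p}{\mc{M}}{\RTs}{\vect n}$ tying the current monitor $\mc{M}$, the relative types $\RTs$, and the outer buffer $\vect n$ back to a reachable point of the global type. Satisfaction alone constrains the monitored blackbox's \emph{transitions} but says nothing about the \emph{shape of the monitor}; without coherence you cannot, in Clause~2, infer from ``the blackbox wants to output $p\send q(\msg j<T_j>)$'' that $\unfold(\mc{M})$ is (possibly after a dependency-output prefix) an output monitor with $j\in I$. The paper packages exactly this inference as separate lemmas (output, end, input-only, not-$\checkmark$), all of which hinge on the coherence invariant. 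Your appeal to minimality to ``step backward through satisfaction states'' is the right instinct---that is how one establishes coherence is maintained---but it needs to be made an explicit invariant of $\mathcal{B}$ rather than invoked ad hoc. Relatedly, your error-freedom justification is off: Theorem~\ref{t:soundnessMain} and its error-freedom corollary concern full networks under the plain LTS, not a single monitored blackbox under the enhanced LTS with oracle-injected inputs; the paper instead proves a dedicated no-error lemma for this setting, again using the coherent-setup invariant to rule out each \ruleLabel{error-} rule directly.
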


We sketch the proof of \cref{t:transparencyMain} (see \ifappendix \cref{s:transparencyProof}\else \cite{report/vdHeuvelPD23}\fi).
The minimal satisfaction of the monitored blackbox contains all states that the monitored blackbox can reach through transitions.
We create a relation $\mathcal{B}$ by pairing each such state $\monImpl{ \bufImpl{ p }{ P' }{ \vect m } }{ \mc{M'} }{ \vect n }$ with $\bufImpl{ p }{ P' }{ \vect n , \vect m }$---notice how the buffers are combined.
We do so while keeping an informative relation between relative types, monitors, buffers, and oracles.
This information gives us the appropriate oracles to include in $\mathcal{B}$.
We then show that $\mathcal{B}$ is a weak bisimulation by proving that the initial monitored and unmonitored blackbox are in $\mathcal{B}$, and that the conditions of \cref{d:weakBisim} hold.
While Clause~1 is straightforward, 
Clause~2 requires care: by using the relation between relative types, monitors, and buffers, we infer the shape of the monitor from a transition of the unmonitored blackbox.
This allows us to show that the monitored blackbox can mimic the transition, possibly after outputting dependencies and/or receiving additional messages (as discussed above).

\smallskip

We close by comparing our \cref{t:soundnessMain,t:transparencyMain} with Bocchi \etal's safety and transparency results~\cite{journal/tcs/BocchiCDHY17}, respectively.
First, their safety result~\cite[Thm.~5.2]{journal/tcs/BocchiCDHY17} guarantees satisfaction instead of assuming it; their framework suppresses unexpected messages, which prevents the  informative guarantee given by our \cref{t:soundnessMain}.
Second, \cref{t:transparencyMain} and their transparency result~\cite[Thm.~6.1]{journal/tcs/BocchiCDHY17} differ, among other things, in the presence of an oracle, which is not needed in their setting: they can inspect the inputs of monitored processes, whereas we cannot verify the inputs of a blackbox without actually sending messages to it.

\section{Conclusion}
\label{s:conclusion}

We have proposed a new framework for dynamically analyzing networks of communicating components (blackboxes), governed by global types, with minimal assumptions about observable behavior.
We use global types and relative projection~\cite{journal/scico/vdHeuvelP22} to synthesize monitors, and define when a monitored component satisfies the governing protocol.
We prove that networks of correct monitored components are sound with respect to a global type, and that monitors are transparent.

We have implemented a practical toolkit, called \texttt{RelaMon}, based on the framework presented here.
\texttt{RelaMon} allows users to deploy JavaScript programs that monitor web-applications in any programming language and with third-party/closed-source components according to a global type.
The toolkit is publicly available~\cite{web/DobreHP23} and
includes implementations of our running example (the global type $\gtc{G_{\mathsf{a}}}$), as well as an example that incorporates a closed-source weather API.
\ifappendix
App.~\labelcref{a:toolkit} includes more details.
\fi

As future work, we plan to extend our framework to uniformly analyze systems combining monitored blackboxes and statically checked components (following~\cite{journal/scico/vdHeuvelP22}).
We also plan to study under which restrictions our approach coincides with Bocchi \etal's~\cite{journal/tcs/BocchiCDHY17}.

\paragraph{Acknowledgments}
We are grateful to
the anonymous reviewers for useful remarks.

\newpage
\addcontentsline{toc}{section}{References}
\bibliography{refs}

\providecommand{\noopsort}[1]{}
\begin{thebibliography}{10}
\providecommand{\url}[1]{\texttt{#1}}
\providecommand{\urlprefix}{URL }
\providecommand{\doi}[1]{https://doi.org/#1}

\bibitem{conf/concur/AcetoCFI18}
Aceto, L., Cassar, I., Francalanza, A., Ing{\'o}lfsd{\'o}ttir, A.: On {{Runtime Enforcement}} via {{Suppressions}}. In: Schewe, S., Zhang, L. (eds.) 29th {{International Conference}} on {{Concurrency Theory}} ({{CONCUR}} 2018). Leibniz {{International Proceedings}} in {{Informatics}} ({{LIPIcs}}), vol.~118, pp. 34:1--34:17. {Schloss Dagstuhl\textendash Leibniz-Zentrum fuer Informatik}, {Dagstuhl, Germany} (2018). \doi{10.4230/LIPIcs.CONCUR.2018.34}

\bibitem{book/BartocciFFR18}
Bartocci, E., Falcone, Y., Francalanza, A., Reger, G.: Introduction to {{Runtime Verification}}. In: Bartocci, E., Falcone, Y. (eds.) Lectures on {{Runtime Verification}}: {{Introductory}} and {{Advanced Topics}}, pp. 1--33. Lecture {{Notes}} in {{Computer Science}}, {Springer International Publishing}, {Cham} (2018). \doi{10.1007/978-3-319-75632-5_1}

\bibitem{conf/ecoop/BartoloBurloFS21}
Bartolo~Burl{\`o}, C., Francalanza, A., Scalas, A.: On the {{Monitorability}} of {{Session Types}}, in {{Theory}} and {{Practice}}. In: M{\o}ller, A., Sridharan, M. (eds.) 35th {{European Conference}} on {{Object-Oriented Programming}} ({{ECOOP}} 2021). Leibniz {{International Proceedings}} in {{Informatics}} ({{LIPIcs}}), vol.~194, pp. 20:1--20:30. {Schloss Dagstuhl \textendash{} Leibniz-Zentrum f\"ur Informatik}, {Dagstuhl, Germany} (2021). \doi{10.4230/LIPIcs.ECOOP.2021.20}

\bibitem{conf/coordination/BartoloBurloFSTT21}
Bartolo~Burl{\`o}, C., Francalanza, A., Scalas, A., Trubiani, C., Tuosto, E.: Towards {{Probabilistic Session-Type Monitoring}}. In: Damiani, F., Dardha, O. (eds.) Coordination {{Models}} and {{Languages}}. pp. 106--120. Lecture {{Notes}} in {{Computer Science}}, {Springer International Publishing}, {Cham} (2021). \doi{10.1007/978-3-030-78142-2_7}

\bibitem{journal/scico/BartoloBurloFSTT22}
Bartolo~Burl{\`o}, C., Francalanza, A., Scalas, A., Trubiani, C., Tuosto, E.: {{PSTMonitor}}: Monitor synthesis from probabilistic session types. Science of Computer Programming  \textbf{222},  102847 (Oct 2022). \doi{10.1016/j.scico.2022.102847}

\bibitem{conf/forte/BocchiCDHY13}
Bocchi, L., Chen, T.C., Demangeon, R., Honda, K., Yoshida, N.: Monitoring {{Networks}} through {{Multiparty Session Types}}. In: Beyer, D., Boreale, M. (eds.) Formal {{Techniques}} for {{Distributed Systems}}. pp. 50--65. Lecture {{Notes}} in {{Computer Science}}, {Springer}, {Berlin, Heidelberg} (2013). \doi{10.1007/978-3-642-38592-6_5}

\bibitem{journal/tcs/BocchiCDHY17}
Bocchi, L., Chen, T.C., Demangeon, R., Honda, K., Yoshida, N.: Monitoring {{Networks}} through {{Multiparty Session Types}}. Theoretical Computer Science  \textbf{669},  33--58 (Mar 2017). \doi{10.1016/j.tcs.2017.02.009}

\bibitem{journal/lmcs/CastagnaDP12}
Castagna, G., {Dezani-Ciancaglini}, M., Padovani, L.: On {{Global Types}} and {{Multi-Party Session}}. Logical Methods in Computer Science  \textbf{8}(1) (Mar 2012). \doi{10.2168/LMCS-8(1:24)2012}

\bibitem{conf/tgc/ChenBDHY11}
Chen, T.C., Bocchi, L., Deni{\'e}lou, P.M., Honda, K., Yoshida, N.: Asynchronous {{Distributed Monitoring}} for {{Multiparty Session Enforcement}}. In: Bruni, R., Sassone, V. (eds.) Trustworthy {{Global Computing}}. pp. 25--45. Lecture {{Notes}} in {{Computer Science}}, {Springer}, {Berlin, Heidelberg} (2012). \doi{10.1007/978-3-642-30065-3_2}

\bibitem{web/DobreHP23}
Dobre, R.A., {\noopsort{Heuvel}}{van den Heuvel}, B., P{\'e}rez, J.A.: {{RelaMon}}: A {{JS}} toolkit for the runtime verification of web applications written in any language. \url{https://github.com/basvdheuvel/RelaMon} (Jun 2023), {Last} visited: {Jun} 2023

\bibitem{journal/ijsttt/FalconeFM12}
Falcone, Y., Fernandez, J.C., Mounier, L.: What can you verify and enforce at runtime? International Journal on Software Tools for Technology Transfer  \textbf{14}(3),  349--382 (Jun 2012). \doi{10.1007/s10009-011-0196-8}

\bibitem{book/FrancalanzaPS18}
Francalanza, A., P{\'e}rez, J.A., S{\'a}nchez, C.: Runtime {{Verification}} for {{Decentralised}} and {{Distributed Systems}}. In: Bartocci, E., Falcone, Y. (eds.) Lectures on {{Runtime Verification}}: {{Introductory}} and {{Advanced Topics}}, pp. 176--210. Lecture {{Notes}} in {{Computer Science}}, {Springer International Publishing}, {Cham} (2018). \doi{10.1007/978-3-319-75632-5_6}

\bibitem{conf/esop/GommerstadtJP18}
Gommerstadt, H., Jia, L., Pfenning, F.: Session-{{Typed Concurrent Contracts}}. In: Ahmed, A. (ed.) Programming {{Languages}} and {{Systems}}. pp. 771--798. Lecture {{Notes}} in {{Computer Science}}, {Springer International Publishing}, {Cham} (2018). \doi{10.1007/978-3-319-89884-1_27}

\bibitem{journal/jlamp/GommerstadtJP22}
Gommerstadt, H., Jia, L., Pfenning, F.: Session-typed concurrent contracts. Journal of Logical and Algebraic Methods in Programming  \textbf{124},  100731 (Jan 2022). \doi{10.1016/j.jlamp.2021.100731}

\bibitem{journal/scico/vdHeuvelP22}
{\noopsort{Heuvel}}{van den Heuvel}, B., P{\'e}rez, J.A.: A decentralized analysis of multiparty protocols. Science of Computer Programming p. 102840 (Jun 2022). \doi{10.1016/j.scico.2022.102840}

\bibitem{conf/rv/vdHeuvelPD23}
{\noopsort{heuvel}}{van den Heuvel}, B., P{\'e}rez, J.A., Dobre, R.A.: Monitoring {{Blackbox Implementations}} of~{{Multiparty Session Protocols}}. In: Katsaros, P., Nenzi, L. (eds.) Runtime {{Verification}}. pp. 66--85. Lecture {{Notes}} in {{Computer Science}}, {Springer Nature Switzerland}, {Cham} (2023). \doi{10.1007/978-3-031-44267-4_4}

\bibitem{conf/popl/HondaYC08}
Honda, K., Yoshida, N., Carbone, M.: Multiparty asynchronous session types. In: Proceedings of the 35th Annual {{ACM SIGPLAN-SIGACT}} Symposium on {{Principles}} of Programming Languages. pp. 273--284. {{POPL}} '08, {Association for Computing Machinery}, {San Francisco, California, USA} (Jan 2008). \doi{10.1145/1328438.1328472}

\bibitem{journal/acm/HondaYC16}
Honda, K., Yoshida, N., Carbone, M.: Multiparty asynchronous session types. Journal of the ACM  \textbf{63}(1) (Mar 2016). \doi{10.1145/2827695}

\bibitem{journal/jlamp/IgarashiTTVW19}
Igarashi, A., Thiemann, P., Tsuda, Y., Vasconcelos, V.T., Wadler, P.: Gradual session types. Journal of Functional Programming  \textbf{29}, ~e17 (2019/ed). \doi{10.1017/S0956796819000169}

\bibitem{conf/icfp/IgarashiTVW17}
Igarashi, A., Thiemann, P., Vasconcelos, V.T., Wadler, P.: Gradual session types. Proceedings of the ACM on Programming Languages  \textbf{1}(ICFP),  38:1--38:28 (Aug 2017). \doi{10.1145/3110282}

\bibitem{conf/popl/JiaGP16}
Jia, L., Gommerstadt, H., Pfenning, F.: Monitors and {{Blame Assignment}} for {{Higher-order Session Types}}. In: Proceedings of the 43rd {{Annual ACM SIGPLAN-SIGACT Symposium}} on {{Principles}} of {{Programming Languages}}. pp. 582--594. {{POPL}} '16, {ACM}, {New York, NY, USA} (2016). \doi{10.1145/2837614.2837662}

\bibitem{conf/scam/KlintSV09}
Klint, P., {\noopsort{Storm}}van~der Storm, T., Vinju, J.: {{RASCAL}}: {{A Domain Specific Language}} for {{Source Code Analysis}} and {{Manipulation}}. In: 2009 {{Ninth IEEE International Working Conference}} on {{Source Code Analysis}} and {{Manipulation}}. pp. 168--177 (Sep 2009). \doi{10.1109/SCAM.2009.28}

\bibitem{conf/scam/KlintSV19}
Klint, P., {\noopsort{Storm}}{van der Storm}, T., Vinju, J.: Rascal, 10 {{Years Later}}. In: 2019 19th {{International Working Conference}} on {{Source Code Analysis}} and {{Manipulation}} ({{SCAM}}). pp. 139--139 (Sep 2019). \doi{10.1109/SCAM.2019.00023}

\bibitem{journal/ijis/LigattiBW05}
Ligatti, J., Bauer, L., Walker, D.: Edit automata: Enforcement mechanisms for run-time security policies. International Journal of Information Security  \textbf{4}(1),  2--16 (Feb 2005). \doi{10.1007/s10207-004-0046-8}

\bibitem{conf/popl/ScalasY19}
Scalas, A., Yoshida, N.: Less is more: Multiparty session types revisited. Proceedings of the ACM on Programming Languages  \textbf{3}(POPL),  30:1--30:29 (Jan 2019). \doi{10.1145/3290343}, revised, extended version at \url{https://www.doc.ic.ac.uk/research/technicalreports/2018/DTRS18-6.pdf}

\bibitem{conf/tgc/Thiemann14}
Thiemann, P.: Session {{Types}} with {{Gradual Typing}}. In: Maffei, M., Tuosto, E. (eds.) Trustworthy {{Global Computing}}. pp. 144--158. Lecture {{Notes}} in {{Computer Science}}, {Springer}, {Berlin, Heidelberg} (2014). \doi{10.1007/978-3-662-45917-1_10}

\end{thebibliography}

\appendices

\newpage
\section{The Running Example from~\cite{journal/tcs/BocchiCDHY17}}
\label{a:exampleBocchi}

Bocchi \etal~\cite{journal/tcs/BocchiCDHY17} develop a running example that is similar to ours.
Their example concerns an ATM protocol between a client ($c$) and a payment server ($s$), preceded by a client authorization through a separate authenticator ($a$).

Bocchi \etal's example includes assertions.
Assertions are orthogonal to the method of projecting global types onto local procotols and extracting monitors from global types; adding assertions does not modify the spirit of our approach.
Next we present $\gtc{G_{\mathsf{ATM}}}$, a version of Bocchi \etal's running example without assertions; it allows us to illustrate how our approach covers protocols considered in Bocchi \etal's approach.
\begin{align*}
    \gtc{G_{\mathsf{ATM}}} &:= c \send a \gtBraces{ \msg \mathsf{login}<\mathsf{str}> \gtc. a \send s \gtBraces{ \msg \mathsf{ok}<> \gtc. a \send c \gtBraces{ \msg \mathsf{ok}<> \gtc. \gtc{G_{\mathsf{loop}}} } \gtc, \msg \mathsf{fail}<> \gtc. a \send c \gtBraces{ \msg \mathsf{fail}<> \gtc. \pend } } }
    \\
    \gtc{G_{\mathsf{loop}}} &:= \rec X \gtc. s \send c \gtBraces{ \msg \mathsf{account}<\mathsf{int}> \gtc. c \send s \gtBraces{ \msg \mathsf{withdraw}<\mathsf{int}> \gtc. X \gtc, \msg \mathsf{deposit}<\mathsf{int}> \gtc. X \gtc, \msg \mathsf{quit}<> \gtc. \pend } }
\end{align*}
Notice how, for this example to work under traditional forms of projection, $a$ needs to explicitly forward the success of the login attempt to $c$.

Our framework supports $\gtc{G_{\mathsf{ATM}}}$ as is, because it is well-formed according to \cref{d:wf}.
The relative projections attesting to this are as follows (cf.~\cref{alg:relativeProjection}):
\begin{align*}
    \gtc{G_{\mathsf{loop}}} \wrt (c,s) &= \rec X \rtc. s \send c \rtBraces{ \msg \mathsf{account}<\mathsf{int}> \rtc. c \send s \rtBraces{ \msg \mathsf{withdraw}<\mathsf{int}> \rtc. X \rtc, \msg \mathsf{deposit}<\mathsf{int}> \rtc. X \rtc, \msg \mathsf{quit}<> \rtc. \pend } }
    \\
    \gtc{G_{\mathsf{ATM}}} \wrt (c,s) &= (s \recv a) \send c \rtBraces{ \mathsf{ok} \rtc. \big( \gtc{G_{\mathsf{loop}}} \wrt (c,s) \big) \rtc, \mathsf{quit} \rtc. \pend }
    \\
    \gtc{G_{\mathsf{loop}}} \wrt (c,a) &= \pend
    \\
    \gtc{G_{\mathsf{ATM}}} \wrt (c,a) &= c \send a \rtBraces{ \msg \mathsf{login}<\mathsf{str}> \rtc. (a \send s) \send c \rtBraces{ \mathsf{ok} \rtc. a \send c \rtBraces{ \msg \mathsf{ok}<> \rtc. \pend } \rtc, \mathsf{quit} \rtc. a \send c \rtBraces{ \msg \mathsf{quit}<> \rtc. \pend } } }
    \\
    \gtc{G_{\mathsf{loop}}} \wrt (s,a) &= \pend
    \\
    \gtc{G_{\mathsf{ATM}}} \wrt (s,a) &= a \send s \rtBraces{ \msg \mathsf{ok}<> \rtc. \pend \rtc, \msg \mathsf{fail}<> \rtc. \pend }
\end{align*}

As mentioned before, $\gtc{G_{\mathsf{ATM}}}$ contains an explicit dependency.
We can modify the global type to make this dependency implicit, without altering $\gtc{G_{\mathsf{loop}}}$:
\[
    \gtc{G'_{\mathsf{ATM}}} := c \send a \gtBraces{ \msg \mathsf{login}<\mathsf{str}> \gtc. a \send s \gtBraces{ \msg \mathsf{ok}<> \gtc. \gtc{G_{\mathsf{loop}}} \gtc, \msg \mathsf{fail}<> \gtc. \pend } }
\]
The resulting relative projections are then as follows.
Notice how the change has simplified the projection onto $(c,a)$.
\begin{align*}
    \rtc{R'_{c,s}} := \gtc{G'_{\mathsf{ATM}}} \wrt (c,s) &= (s \recv a) \send c \rtBraces{ \mathsf{ok} \rtc. \big(\gtc{G_{\mathsf{loop}}} \wrt (c,s)\big) \rtc, \mathsf{quit} \rtc. \pend }
    \\
    \rtc{R'_{c,a}} := \gtc{G'_{\mathsf{ATM}}} \wrt (c,a) &= c \send a \rtBraces{ \msg \mathsf{login}<\mathsf{str}> \gtc. \pend }
    \\
    \rtc{R'_{s,a}} := \gtc{G'_{\mathsf{ATM}}} \wrt (s,a) &= a \send s \rtBraces{ \msg \mathsf{ok}<> \rtc. \pend \rtc, \msg \mathsf{quit}<> \rtc. \pend }
\end{align*}

Using \cref{alg:gtToMon}, we extract monitors from $\gtc{G'_{\mathsf{ATM}}}$:
\begin{align*}
    \gtToMon(\gtc{G_{\mathsf{loop}}},c,\braces{s,a}) &= \rec X \mc. c \recv s \mBraces{
        \msg \mathsf{account}<\mathsf{int}> \mc. c \send \emptyset (\mathsf{account}) \mc. c \send s \mBraces*{
            \begin{array}{@{}l@{}}
                \msg \mathsf{withdraw}<\mathsf{int}> \mc. c \send \emptyset (\mathsf{withdraw}) \mc. X \mc,
                \\
                \msg \mathsf{deposit}<\mathsf{int}> \mc. c \send \emptyset (\mathsf{deposit}) \mc. X \mc,
                \\
                \msg \mathsf{quit}<> \mc. c \send \emptyset (\mathsf{quit}) \mc. \pend
            \end{array}
        }
    }
    \\
    \mc{M'_c} := \gtToMon(\gtc{G'_{\mathsf{ATM}}},c,\braces{s,a}) &= c \send a \mBraces{ \msg \mathsf{login}<\mathsf{str}> \mc. c \send \emptyset (\mathsf{login}) \mc. c \recv s \mBraces{ \mathsf{ok} \mc. \gtToMon(\gtc{G_{\mathsf{loop}}},c,\braces{s,a}) \mc, \mathsf{fail} \mc. \pend } }
    \\
    \gtToMon(\gtc{G_{\mathsf{loop}}},s,\braces{c,a}) &= \rec X \mc. s \send c \mBraces{
        \msg \mathsf{account}<\mathsf{int}> \mc. s \send \emptyset (\mathsf{account}) \mc. s \recv c \mBraces{
            \begin{array}{@{}l@{}}
                \msg \mathsf{withdraw}<\mathsf{int}> \mc. s \send \emptyset (\mathsf{withdraw}) \mc. X \mc,
                \\
                \msg \mathsf{deposit}<\mathsf{int}> \mc. s \send \emptyset (\mathsf{deposit}) \mc. X \mc,
                \\
                \msg \mathsf{quit}<> \mc. s \send \emptyset (\mathsf{quit}) \mc. \pend
            \end{array}
        }
    }
    \\
    \mc{M'_s} := \gtToMon(\gtc{G'_{\mathsf{ATM}}},s,\braces{c,a}) &= s \recv a \mBraces{ \msg \mathsf{ok}<> \mc. s \send \braces{c} (\mathsf{ok}) \mc. \gtToMon(\gtc{G_{\mathsf{loop}}},s,\braces{c,a}) \mc, \msg \mathsf{fail}<> \mc. s \send \braces{c} (\mathsf{fail}) \mc. \pend }
    \\
    \mc{M'_a} := \gtToMon(\gtc{G'_{\mathsf{ATM}}},a,\braces{c,s}) &= a \recv c \mBraces{ \msg \mathsf{login}<\mathsf{str}> \mc. a \send \emptyset (\mathsf{login}) \mc. a \send s \mBraces{ \msg \mathsf{ok}<> \mc. a \send \emptyset (\mathsf{ok}) \mc. \pend \mc, \msg \mathsf{fail}<> \mc. a \send \emptyset (\mathsf{fail}) \mc. \pend } }
\end{align*}

The following are example blackboxes for the participants of $\gtc{G'_{\mathsf{ATM}}}$:
\begin{center}
    \begin{tikzpicture}[node distance=24mm,every path/.style={->},every edge quotes/.style={font=\scriptsize,auto=left},tight/.style={inner sep=1pt},loose/.style={inner sep=5pt}]
        \node (ci) {$Q_c$};
        \node[right=of ci] (cl) {$Q_c^{\,\mathsf{l}}$};
        \node[right=of cl] (cf) {$Q_c^{\mathsf{f}}$};
        \node[right=of cf] (ce) {$Q_c^{\mathsf{e}}$};
        \node[below=of cl] (co) {$Q_c^{\mathsf{o}}$};
        \node[right=of co] (ca) {$Q_c^{\mathsf{a}}$};
        \node[right=of ca] (cq) {$Q_c^{\mathsf{q}}$};
        \draw (ci) edge["$c \send a (\msg \mathsf{login}<\mathsf{str}>)$" tight] (cl)
              (cl) edge["$c \recv s \Parens{\mathsf{fail}}$" tight] (cf)
              (cf) edge["$\pend$"] (ce)
              (cl) edge["$c \recv s \Parens{\mathsf{ok}}$" loose] (co)
              (co) edge[bend left=35,"$c \recv s (\msg \mathsf{account}<\mathsf{int}>)$" tight] (ca)
              (ca) edge[bend left=35,"$\begin{array}{@{}r@{}}c \send s (\msg \mathsf{withdraw}<\mathsf{int}>)\\{}c \send s (\msg \mathsf{deposit}<\mathsf{int}>)\end{array}$" loose] (co)
              (ca) edge["$c \send s (\msg \mathsf{quit}<>)$" tight] (cq)
              (cq) edge["$\pend$"] (ce);

          \node[below=60mm of ci] (si) {$Q_s$};
          \node[right=of si] (so) {$Q_s^{\mathsf{o}}$};
          \node[right=of so] (sa) {$Q_s^{\mathsf{a}}$};
          \node[below=of sa] (sq) {$Q_s^{\mathsf{q}}$};
          \node[right=of sq] (se) {$Q_s^{\mathsf{e}}$};

          \draw (si) edge["$s \recv a (\msg \mathsf{ok}<>)$" tight] (so)
                (so) edge[bend left=35,"$s \send c (\msg \mathsf{account}<\mathsf{int}>)$" loose] (sa)
                (sa) edge[bend left=35,"$\begin{array}{@{}c@{}}s \recv c (\msg \mathsf{withdraw}<\mathsf{int}>)\\{}s \recv c (\msg \mathsf{deposit}<\mathsf{int}>)\end{array}$" loose] (so)
                (sa) edge["$s \recv c (\msg \mathsf{quit}<>)$"] (sq)
                (si) edge[bend right=25,"$s \recv a (\msg \mathsf{fail}<>)$"] (sq)
                (sq) edge["$\pend$"] (se);

            \node[below=45mm of si] (ai) {$Q_a$};
            \node[right=of ai] (al) {$Q_a^{\,\mathsf{l}}$};
            \node[right=of al] (aq) {$Q_a^{\mathsf{q}}$};
            \node[right=of aq] (ae) {$Q_a^{\mathsf{e}}$};
            \draw (ai) edge["$a \recv c (\msg \mathsf{login}<\mathsf{str}>)$" tight] (al)
                  (al) edge[bend left=35,"$a \send c (\msg \mathsf{ok}<>)$"] (aq)
                  (al) edge[bend right=35,"$a \send c (\msg \mathsf{fail}<>)$"] (aq)
                  (aq) edge["$\pend$"] (ae);
    \end{tikzpicture}
\end{center}

It is not difficult to confirm that the following satisfactions hold (cf.\ \cref{d:satisfaction,d:monSat}):
\begin{align*}
    & \satisfies{ \monImpl{ \bufImpl{ c }{ Q_c }{ \epsi } }{ \mc{M'_c} }{ \epsi } }{ \braces{ ( s , \rtc{R'_{c,s}} ) , ( a , \rtc{R'_{c,a}} ) } }[ c ]
    \\
    & \satisfies{ \monImpl{ \bufImpl{ s }{ Q_s }{ \epsi } }{ \mc{M'_s} }{ \epsi } }{ \braces{ ( c , \rtc{R'_{c,s}} ) , ( a , \rtc{R'_{s,a}} ) } }[ s ]
    \\
    & \satisfies{ \monImpl{ \bufImpl{ a }{ Q_a }{ \epsi } }{ \mc{M'_a} }{ \epsi } }{ \braces{ ( c , \rtc{R'_{c,a}} ) , ( s , \rtc{R'_{s,a}} ) } }[ a ]
    \\
    & \satisfies{ \monImpl{ \bufImpl{ c }{ Q_c }{ \epsi } }{ \mc{M'_c} }{ \epsi } \| \monImpl{ \bufImpl{ s }{ Q_s }{ \epsi } }{ \mc{M'_s} }{ \epsi } \| \monImpl{ \bufImpl{ a }{ Q_a }{ \epsi } }{ \mc{M'_a} }{ \epsi } }{ \gtc{G_{\mathsf{a}}} }
\end{align*}

\newpage
\section{A Toolkit for Monitoring Networks of Blackboxes in Practice}
\label{a:toolkit}

To demonstrate the practical potential of our approach, we have developed a toolkit based on our framework---see \url{https://github.com/basvdheuvel/RelaMon}~\cite{web/DobreHP23}.

The toolkit enhances message-passing web-applications with monitors.
This way, it is possible to add a layer of security when communicating with, e.g., untrusted third-party APIs by monitoring their behavior according to an assumed governing protocol.
The toolkit includes:
\begin{enumerate}
    \item
        A tool, written in Rascal~\cite{conf/scam/KlintSV09,conf/scam/KlintSV19}, that transpiles protocols specified as well-formed global types to JSON.

    \item
        A monitor microservice, written in JavaScript, initialized with a protocol specification, a participant ID, the IP-addresses of the unmonitored component and the other components.
        The microservice uses relative projection on the supplied JSON protocol specification to construct a finite state machine using \cref{alg:gtToMon}, which acts as the monitor for the specified participant.
\end{enumerate}
When all components and their respective monitors have been deployed, the monitors perform a handshake such that all components are ready to start executing the protocol.
The monitors forward all correct messages between their respective components and the other monitors in the network, and if needed they send dependency messages.
When a monitor detects an incorrect message, it signals an error to its component and the other monitors.
This way, eventually the entire network becomes aware of the protocol violation, and the execution stops.
It is then up to the components to gracefully deal with the protocol violation, e.g., by reverting to a prior state or restarting the protocol from the start.

The toolkit comes with two test suites:
\begin{itemize}
    \item
        The authorization protocol in $G_{\mathsf{a}}$~\eqref{eq:auth}, our running example.

    \item
        A weather protocol $G_{\mathsf{w}}$ between a client ($c$), a city database ($d$), and a weather API ($w$); we omit curly braces for exchanges with a single branch:
        \[
            c \send w \, \msg \mathsf{key}<\mathsf{str}> . \rec X . c \send d \, \msg \mathsf{city}<\mathsf{str}> . d \send c \braces*{
                \begin{array}{@{}l@{}}
                    \msg \mathsf{coord}<\mathsf{str}> . c \send w \, \msg \mathsf{coord}<\mathsf{str}> . w \send c \, \msg \mathsf{temp}<\mathsf{real}> . X
                    , \\
                    \msg \mathsf{unknown}<> . X
                \end{array}
            }
        \]
        This is an interesting test suite, because the weather API (which requires an API $\mathsf{key}$) is not set up to deal with dependencies.
        The suite compensates by including a program that acts as a ``translator'' for the weather API.
        The system is then still protected from protocol violations by the weather API.
\end{itemize}

\newpage
\section{Definitions and Proofs}
\label{a:defsProofs}

\subsection{Relative Types with Locations (\Cref{s:globalTypesAsMonitors,s:correctMonitoredBlackboxes})}
\label{a:relativeTypesWithLocs}

Here, we formally define relative types with locations, and define how they are used in related definitions.

We refer to an ordered sequence of labels $\vect \ell$ as a \emph{location}.
We write $\mbb L$ to denote a set of locations.

\begin{definition}[Relative Types with Locations]\label{d:relativeTypesWithLocs}
    \emph{Relative types with locations} are defined by the following syntax:
    \begin{align*}
        \rtc{R},\rtc{R'} &::=
        p \send q^{\mbb L} \rtBraces{ \msg i<T_i> \rtc. \rtc{R} }_{i \in I}
        & \sepr* &
        (p \send r) \send q^{\mbb L} \rtBraces{ i \rtc. \rtc{R} }_{i \in I}
        & \sepr* &
        (p \recv r) \send q^{\mbb L} \rtBraces{ i \rtc. \rtc{R} }_{i \in I}
        & \sepr* &
        \rec X \rtc. \rtc{R}
        & \sepr* &
        X
        & \sepr* &
        \pend
    \end{align*}
\end{definition}

\begin{definition}[Relative Projection with Locations]\label{d:relativeProjectionWithLocs}
    \emph{Relative projection with Locations}, denoted $\gtc{G} \wrt (p,q)^{\mbb L}$, is defined by \cref{alg:relativeProjectionWithLocs}.
    This algorithm relies on three auxiliary definitions:
    \begin{itemize}
    	\item
            The \emph{erasure} of a relative type, denoted $\erase(\rtc{R})$, is defined by replacing each set of locations in $\rtc{R}$ by $\emptyset$ (e.g., $\erase(p \send q^{\mbb L} \rtBraces{ \msg i<T_i> \rtc. \rtc{R_i} }_{i \in I}) := p \send q^\emptyset \rtBraces{ \msg i<T_i> \rtc. \erase(\rtc{R_i}) }_{i \in I}$).

        \item
            Given $\rtc{R}$ and $\rtc{R'}$ such that $\erase(\rtc{R}) = \erase(\rtc{R'})$, we define the \emph{union} of $\rtc{R}$ and $\rtc{R'}$, denoted $\rtc{R} \cup \rtc{R'}$, by combining each set of locations for each corresponding message in $\rtc{R}$ and $\rtc{R'}$ (e.g., $p \send q^{\mbb L} \rtBraces{ \msg i<T_i> \rtc. \rtc{R_i} }_{i \in I} \cup p \send q^{\mbb L'} \rtBraces{ \msg i<T_i> \rtc. \rtc{R'_i} }_{i \in I} := p \send q^{\mbb L \cup \mbb L'} \rtBraces{ \msg i<T_i> \rtc. (\rtc{R_i} \cup \rtc{R'_i}) }_{i \in I}$).
            Given $(\rtc{R_a})_{a \in A}$ for finite $A$ such that, for each $a,b \in A$, $\erase(\rtc{R_a}) = \erase(\rtc{R_b})$, we inductively define $\bigcup_{a \in A} \rtc{R_a}$ as expected.

        \item
            We define the \emph{appendance} of a label to a set of locations, denoted $\mbb L + \ell$, as follows:
            \begin{align*}
                (\mbb L \cup \braces{ \vect \ell }) + \ell' &:= (\mbb L + \ell') \cup \braces{ \vect \ell , \ell' }
                &
                \emptyset + \ell' &:= \emptyset
            \end{align*}

            We extend this definition to the appendance of a location to a set of locations as follows:
            \begin{align*}
                \mbb L + (\ell' , \vect \ell) &:= (\mbb L + \ell') + \vect \ell
                &
                \mbb L + \epsi &:= \mbb L
            \end{align*}

            We extend this definition to the appendance of two sets of locations as follows:
            \begin{align*}
                \mbb L + \mbb L' &:= \bigcup_{\vect \ell' \in \mbb L'} \mbb L + \vect \ell'
            \end{align*}
    \end{itemize}
\end{definition}

\begin{algorithm}[t]
    \DontPrintSemicolon
    \SetAlgoNoEnd
    \SetInd{.2em}{.7em}
    \SetSideCommentRight
    \Def{$\gtc{G} \wrt (p,q)^{\mbb L}$}{

        \Switch{$\gtc{G}$}{

            \uCase{$s \send r \gtBraces{ \msg i<T_i> \gtc. \gtc{G_i} }_{i \in I}$}{
                $\forall i \in I.~ \rtc{R_i} := \gtc{G_i} \wrt (p,q)^{\mbb L + i}$ \;

                \lIf{$(p = s \wedge q = r)$}{
                    \KwRet
                    $p \send q^{\mbb L} \rtBraces{ \msg i<T_i> \rtc. \rtc{R_i} }_{i \in I}$
                }

                \lElseIf{$(q = s \wedge p = r)$}{
                    \KwRet
                    $q \send p^{\mbb L} \rtBraces{ \msg i<T_i> \rtc. \rtc{R_i} }_{i \in I}$
                }

                \lElseIf{$\forall i,j \in I.~ \erase(\rtc{R_i}) = \erase(\rtc{R_j})$}{
                    \KwRet
                    $\bigcup_{i \in I} \rtc{R_i}$
                }\label{li:rpIndepLab}

                \lElseIf{$s \in \braces{p,q} \wedge t \in \braces{p,q} \setminus \braces{s}$}{
                    \KwRet
                    $(s \send r) \send t^{\mbb L} \rtBraces{ i \rtc. \rtc{R_i} }_{i \in I}$
                }

                \lElseIf{$r \in \braces{p,q} \wedge t \in \braces{p,q} \setminus \braces{r}$}{
                    \KwRet
                    $(r \recv s) \send t^{\mbb L} \rtBraces{ i \rtc. \rtc{R_i} }_{i \in I}$
                }
            }

            \uCase{$\rec X \gtc. \gtc{G'}$}{
                $\rtc{R'} := \gtc{G'} \wrt (p,q)^{\mbb L}$ \;

                \lIf{$(\text{$\rtc{R'}$ contains an exchange or a recursive call on any $Y \neq X$})$}{
                    \KwRet
                    $\rec X^{\mbb L} \rtc. \rtc{R'}$
                }

                \lElse{
                    \KwRet
                    $\pend$
                }
            }

            \lCase{$X$}{
                \KwRet
                $X^{\mbb L}$
            }

            \lCase{$\pend$}{
                \KwRet
                $\pend$
            }
        }

    }

    \caption{\nameref{d:relativeProjectionWithLocs}.}
    \label{alg:relativeProjectionWithLocs}
\end{algorithm}

\begin{definition}[Dependence with Locations]\label{d:depsOnWithLocs}
    Given a well-formed global type $G$, we say $p$'s role in $\gtc{G}$ \emph{depends on} $q$'s role in the initial exchange in $\gtc{G}$, denoted $\depsOn p q \gtc{G}$, if an only if
    \[
        \gtc{G} = s \send r \gtBraces{ \msg i<T_i> \gtc. \gtc{G_i} }_{i \in I}
        \wedge p \notin \braces{s,r}
        \wedge q \in \braces{s,r}
        \wedge \exists i,j \in I.~ \erase(\gtc{G_i} \wrt (p,q)) \neq \erase(\gtc{G_j} \wrt (p,q)).
    \]
\end{definition}

\subsection{Unfolding Recursive Types and Monitors}

When unfolding recursive relative types, the location annotations require care.
Consider, for example
\[
    \rec X^{\braces{1,2}} \rtc. p \send q^{\braces{1,2}} \rtBraces{ \msg \ell_1<T_1> \rtc. q \send p^{\braces{1,2,\ell_1}} \rtBraces{ \msg \ell_2<T_2> \rtc. X^{\braces{1,2,\ell_1,\ell_2}} } \rtc, \msg \ell'<T'> \rtc. X^{\braces{1,2,\ell'}} }.
\]
To unfold this type, we should replace each recursive call on $X$ with a copy of the whole recursive definition.
However, this is insufficient: the locations of the original recursive definition (starting at $1,2$) do not concur with the locations of the recursive calls ($1,2,\ell_1,\ell_2$ and $1,2,\ell'$).
Thus, we need to update the locations of the copied recursive calls by inserting the new path behind the location of the original recursive definition.
This way, the recursive call at $1,2,\ell_1,\ell_2$ would get replaced by (inserted locations are underlined)
\[
    \rec X^{\braces{1,2,\underline{\ell_1,\ell_2}}} \rtc. p \send q^{\braces{1,2,\underline{\ell_1,\ell_2}}} \rtBraces{ \msg \ell_1<T_1> \rtc. q \send p^{\braces{1,2,\underline{\ell_1,\ell_2},\ell_1}} \rtBraces{ \msg \ell_2<T_2> \rtc. X^{\braces{1,2,\underline{\ell_1,\ell_2},\ell_1,\ell_2}} } \rtc, \msg \ell'<T'> \rtc. X^{\braces{1,2,\underline{\ell_1,\ell_2},\ell'}} }.
\]

We formally define the unfolding of a relative type $\rec X^{\mbb L} \rtc. \rtc{R}$ by removing the prefix $\mbb L$ from the locations in $\rtc{R}$ (using $\remPref$), replacing each recursive call $X^{\mbb L'}$ (where $\mbb L'$ is the location of the replaced recursive call) with the recursive definition beginning at location $\mbb L'$ (using $\prepend$), and then replacing the original location $\mbb L$ (using $\prepend$).

\begin{definition}[Manipulation and Comparison of Locations]\label{d:manipCompLocs}
\begin{itemize}

    \item
        Given locations $\vect \ell$ and $\vect \ell'$, we say $\vect \ell'$ is a \emph{prefix} of $\vect \ell$, denoted $\vect \ell' \prefixeq \vect \ell$, if there exists a \emph{suffix} $\vect \ell''$ such that $\vect \ell = \vect \ell' , \vect \ell''$.

    \item
        We say $\vect \ell'$ is a \emph{strict prefix} of $\vect \ell$, denoted $\vect \ell' \prefix \vect \ell$, if $\vect \ell' \prefixeq \vect \ell$ with suffix $\vect \ell'' \neq \epsi$.
        We extend the prefix relation to sets of locations as follows: $\mbb L' \prefixeq \mbb L$ iff $\forall \vect \ell \in \mbb L.~ \exists \vect \ell' \in \mbb L'.~ \vect \ell' \prefixeq \vect \ell$, i.e., each location in $\mbb L$ is prefixed by a location in $\mbb L'$.

    \item
        Given $\vect \ell \neq \epsi$, we write $\fst(\vect \ell)$ to denote the first element of $\vect \ell$; formally, there exists $\vect \ell'$ such that $\fst(\vect \ell) , \vect \ell' = \vect \ell$.

    \item
        Given a set of locations $\mbb L$ and a relative type $\rtc{R}$, we define the \emph{prependance} of $\mbb L$ to the locations in $\rtc{R}$, denoted $\prepend( \mbb L , \rtc{R} )$, by prepending $\mbb L$ to each location in $\rtc{R}$ inductively; e.g.,
        \[
            \prepend( \mbb L , p \send q^{\mbb L'} \rtBraces{ \msg i<T_i> \rtc. \rtc{R_i} }_{i \in I} ) := p \send q^{\mbb L + \mbb L'} \rtBraces{ \msg i<T_i> \rtc. \prepend( \mbb L , \rtc{R_i} ) }_{i \in I}.
        \]

    \item
        Given a relative type $\rtc{R}$, we define its \emph{first location}, denoted $\fstLoc(\rtc{R})$, as the location annotation on the first exchange in $\rtc{R}$; e.g., $\fstLoc(p \send q^{\mbb L} \rtBraces{ \msg i<T_i> \rtc. \rtc{R_i} }_{i \in I}) := \mbb L$.

    \item
        Given a set of locations $\mbb L$ and a relative type $\rtc{R}$, we define the \emph{removal of prefix} $\mbb L$ from the locations in $\rtc{R}$, denoted $\remPref( \mbb L , \rtc{R} )$.
        Formally, it checks each location in $\mbb L$ as a possible prefix of each location in the set of locations of each exchange of $\rtc{R}$, and leaves only the suffix; e.g., $\remPref( \mbb L , p \send q^{\mbb L'} \rtBraces{ \msg i<T_i> \rtc. \rtc{R_i} }_{i \in I} ) := p \send q^{\mbb L''} \rtBraces{ \msg i<T_i> \rtc. \remPref( \mbb L , \rtc{R_i} ) }_{i \in I}$, where $\mbb L'' := \braces{ \vect \ell'' \mid \exists \vect \ell \in \mbb L , \vect \ell' \in \mbb L'.~ \vect \ell \prefixeq \vect \ell' \text{ with suffix } \vect \ell'' }$.

    \end{itemize}
\end{definition}

\begin{definition}[Unfold Relative Type]\label{d:unfoldRT}
    Given a relative type $\rec X^{\mbb L} \rtc. \rtc{R}$, we define its \emph{one-level unfolding}, denoted $\unfold_1( \rec X^{\mbb L} \rtc. \rtc{R} )$, as follows:
    \[
        \unfold_1( \rec X^{\mbb L} \rtc. \rtc{R} ) := \prepend( \mbb L , \remPref( \mbb L , \rtc{R} ) \braces{ \prepend( \mbb L' , \remPref( \mbb L , \rec X^{\mbb L} \rtc. \rtc{R} ) ) / X^{\mbb L'} } )
    \]

    Given a relative type $\rtc{R}$, we define its \emph{(full) unfolding}, denoted $\unfold( \rtc{R} )$, as follows:
    \[
        \unfold( \rtc{R} )
        := \begin{cases}
            \unfold_1( \rec X^{\mbb L} \rtc. \unfold( \rtc{R'} ) )
            & \text{if $\rtc{R} = \rec X^{\mbb L} \rtc. \rtc{R'}$}
            \\
            \rtc{R}
            & \text{otherwise}
        \end{cases}
    \]
\end{definition}

\begin{lemma}\label{l:prefProjection}
    For any well-formed global type $\gtc{G}$, participants $p,q \in \prt(\gtc{G})$, and set of locations $\mbb L$,
    \begin{align*}
        \prepend( \mbb L , \gtc{G} \wrt (p,q)^{\braces{\epsi}} ) &= \gtc{G} \wrt (p,q)^{\mbb L},
        \\
        \remPref( \mbb L , \gtc{G} \wrt (p,q)^{\mbb L} ) &= \gtc{G} \wrt (p,q)^{\braces{\epsi}}.
    \end{align*}
\end{lemma}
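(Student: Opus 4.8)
The plan is to prove both equalities simultaneously by structural induction on the well-formed global type $\gtc{G}$, following the case analysis of \cref{alg:relativeProjectionWithLocs}. The key observation driving the argument is that relative projection with locations threads the location parameter $\mbb L$ through in a uniform, ``positional'' way: the parameter $\mbb L$ appears verbatim as the annotation on the outermost exchange/recursive construct produced, and each recursive call passes $\mbb L + i$ (appending the branch label) to the subprojection. So I would first establish two small compatibility facts about the location-manipulation operators and projection: (i)~$\prepend(\mbb L, \gtc{G} \wrt (p,q)^{\mbb L'}) = \gtc{G} \wrt (p,q)^{\mbb L + \mbb L'}$, a generalization of the first equality with $\mbb L'$ in place of $\braces{\epsi}$ (note $\mbb L + \braces{\epsi} = \mbb L$); and (ii)~$\remPref(\mbb L, \gtc{G} \wrt (p,q)^{\mbb L + \mbb L'}) = \gtc{G} \wrt (p,q)^{\mbb L'}$, dually. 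Both generalized statements are what actually goes through cleanly by induction, since the recursive calls change $\mbb L'$ (to $\mbb L' + i$) while keeping the ``prepended'' or ``removed'' prefix $\mbb L$ fixed. The original lemma is then the instance $\mbb L' = \braces{\epsi}$.

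For the inductive step, I would handle the cases of \cref{alg:relativeProjectionWithLocs} in turn. In the exchange case $\gtc{G} = s \send r \gtBraces{ \msg i<T_i> \gtc. \gtc{G_i} }_{i \in I}$, the subprojections are $\rtc{R_i} := \gtc{G_i} \wrt (p,q)^{\mbb L' + i}$; in each of the five subcases (sender/recipient match, independence via $\erase$, output dependency, input dependency) the result is built from the $\rtc{R_i}$ with $\mbb L'$ as the outer annotation, so I push $\prepend(\mbb L, -)$ (resp.\ $\remPref(\mbb L,-)$) inward, use the definitional equations $\mbb L + (\mbb L' + i) = (\mbb L + \mbb L') + i$ and the analogous fact for $\remPref$ on the subterms (by the induction hypothesis applied to $\gtc{G_i}$ with the larger fixed prefix), and reassemble. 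The independence subcase (\cref{li:rpIndepLab}) needs the extra remark that $\prepend$ and $\remPref$ commute with the union $\bigcup_{i \in I}$ of relative types and that $\erase$ is unaffected by these operators, so the side condition $\erase(\rtc{R_i}) = \erase(\rtc{R_j})$ is preserved. The recursion case $\gtc{G} = \rec X \gtc. \gtc{G'}$ uses $\rtc{R'} := \gtc{G'} \wrt (p,q)^{\mbb L'}$ with the \emph{same} parameter $\mbb L'$ (no appending), so the induction hypothesis applies directly; the ``contains an exchange or a recursive call on $Y \neq X$'' side condition is a syntactic property invariant under $\prepend/\remPref$, so the two branches of the $\mathtt{if}$ match up, with the $\pend$ branch trivial. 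The $X$ case is $X^{\mbb L'}$ and follows from $\prepend(\mbb L, X^{\mbb L'}) = X^{\mbb L + \mbb L'}$ and the dual; the $\pend$ case is immediate since $\pend$ carries no locations.

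The main obstacle I anticipate is the bookkeeping around $\remPref$: unlike $\prepend$, which is a straightforward homomorphic prepending, $\remPref(\mbb L, -)$ is only well-behaved when every location actually occurring in its argument is prefixed by some location in $\mbb L$ — and this is exactly what well-formedness of $\gtc{G}$ together with the structure of $\wrt$ guarantees, but it has to be tracked as an invariant throughout the induction. Concretely, I need the auxiliary fact that every location annotation in $\gtc{G} \wrt (p,q)^{\mbb L}$ has every element of $\mbb L$ as a prefix (equivalently $\mbb L \prefixeq \fstLoc$ of each exchange, extended through the type), which follows by the same induction but must be proved first or carried alongside. Once that invariant is in hand, $\remPref(\mbb L, \gtc{G} \wrt(p,q)^{\mbb L+\mbb L'})$ strips exactly $\mbb L$ and leaves $\mbb L'$-prefixed suffixes, matching $\gtc{G} \wrt(p,q)^{\mbb L'}$ on the nose. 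The rest is routine unfolding of definitions.
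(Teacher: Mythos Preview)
Your proposal is correct and is essentially a careful unfolding of what the paper treats as immediate: the paper's proof is literally the two words ``By definition.'' Your structural induction on $\gtc{G}$, with the generalization to arbitrary $\mbb L'$ and the prefix invariant for $\remPref$, is exactly the argument one would write out if asked to justify that claim in detail, so there is no substantive divergence---you have simply made explicit what the paper leaves to the reader.
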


\begin{proof}
    By definition.
\end{proof}

\begin{definition}[Unfold Global Type]\label{d:unfoldGT}
    Given a well-formed global type $\rec X \gtc. \gtc{G}$, we define its \emph{one-level unfolding}: $\unfold_1( \rec X \gtc. \gtc{G} ) := \gtc{G} \braces{ \rec X \rtc. \gtc{G} / X }$.
    Given a well-formed global type $\gtc{G}$, we define its \emph{(full) unfolding}, denoted $\unfold(\gtc{G})$, as follows:
    \[
        \unfold(\gtc{G}) := \begin{cases}
            \unfold_1( \rec X \gtc. \unfold( \gtc{G'} ) )
            & \text{if $\gtc{G} = \rec X \gtc. \gtc{G'}$}
            \\
            \gtc{G}
            & \text{otherwise}
        \end{cases}
    \]
\end{definition}

\begin{lemma}\label{l:unfoldRecursiveGTRTnoLabel}
    For any well-formed global type $\rec X \gtc. \gtc{G}$ and participants $p,q \in \prt(\gtc{G})$,
    \[
        (\gtc{G} \wrt (p,q)^{\braces{\epsi}}) \braces{ ((\rec X \gtc. \gtc{G}) \wrt (p,q)^{\mbb L'} ) / X^{\mbb L'} }
        =
        (\gtc{G} \braces{ \rec X \gtc. \gtc{G} / X }) \wrt (p,q)^{\braces{\epsi}}.
    \]
\end{lemma}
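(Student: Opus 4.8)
The plan is to prove the statement by structural induction on $\gtc{G}$, after generalizing $\braces{\epsi}$ to an arbitrary accumulated location set $\mbb L$. Fix $\gtc{H} := \rec X \gtc. \gtc{G}$ and let $\sigma$ denote the location-indexed substitution $\braces{ (\gtc{H} \wrt (p,q)^{\mbb L'}) / X^{\mbb L'} }$, taken simultaneously over all $\mbb L'$. I would establish, for every subterm $\gtc{G'}$ of $\gtc{G}$ and every $\mbb L$, that $(\gtc{G'} \wrt (p,q)^{\mbb L})\,\sigma = (\gtc{G'} \braces{ \gtc{H} / X }) \wrt (p,q)^{\mbb L}$; the lemma is then the instance $\gtc{G'} = \gtc{G}$, $\mbb L = \braces{\epsi}$. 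The proof would repeatedly invoke \Cref{l:prefProjection} to rewrite $\gtc{H}\wrt(p,q)^{\mbb L'}$ as $\prepend(\mbb L', \gtc{H}\wrt(p,q)^{\braces{\epsi}})$, together with the routine facts that $\sigma$ commutes with the relative-type constructors, with $\bigcup$, with $\prepend$, and---the one fact that really carries weight---with $\erase$, in the sense that $\erase(\rtc{R}\,\sigma)$ is determined by $\erase(\rtc{R})$ and $\erase(\gtc{H}\wrt(p,q))$ alone, since every $X^{\mbb L'}$ is replaced by a term of the same erasure irrespective of $\mbb L'$.

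The base cases are immediate: $\pend$ and a bound variable $Y \neq X$ contain no free $X$, so $\sigma$ is inert and projection is homomorphic, while for $\gtc{G'} = X$ both sides equal $\gtc{H}\wrt(p,q)^{\mbb L}$ by the definitions of projection and of $\sigma$. For an exchange $\gtc{G'} = s \send r \gtBraces{ \msg i<T_i> \gtc. \gtc{G'_i} }_{i \in I}$ I would follow the clauses of \Cref{alg:relativeProjectionWithLocs}. The clauses chosen purely by participant names (the two exchange clauses and the two dependency clauses) behave identically before and after substitution, since $\sigma$ alters neither the participants nor the label set $I$; for each such clause I would push $\sigma$ through the resulting exchange/dependency constructor and close each branch with the induction hypothesis at location $\mbb L + i$. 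Likewise, when the independent-branches clause \cref{li:rpIndepLab} is selected, I would push $\sigma$ through $\bigcup$ and apply the induction hypothesis branchwise.

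The crux---and what I expect to be the main obstacle---is to show that substitution never changes which clause of the projection algorithm fires in the two clauses whose guard is not purely about participant names: the independence test \cref{li:rpIndepLab} (``all branch projections have equal erasure'') and the recursion guard (``the projected body contains an exchange or a recursive call on a variable other than the one being bound''). Substituting $\gtc{H}\wrt(p,q)^{\mbb L'}$ for the $X$-calls can in principle flip either test---e.g.\ a body that before substitution consists only of dependencies and $X$-calls can become one consisting only of dependencies and $\pend$ when $\gtc{H}$ itself projects to $\pend$. I would therefore isolate a compatibility lemma asserting that, for a well-formed $\gtc{G}$ with $p,q \in \prt(\gtc{G})$, both tests take the same value on any relevant $\gtc{G}'' \wrt (p,q)^{\mbb L}$ and on its $\sigma$-image; this is precisely where the hypotheses that $\gtc{G}$ be well-formed and that $p,q \in \prt(\gtc{G})$ are needed, as they exclude the degenerate configurations in which a loop or choice between $p$ and $q$ would be visible only through forwarded dependencies. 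Granting this compatibility lemma, the recursion case $\gtc{G'} = \rec Y \gtc. \gtc{G''}$ (after $\alpha$-renaming so that $Y \neq X$) reduces, via the $\rec$-clause of \Cref{alg:relativeProjectionWithLocs} and the induction hypothesis on $\gtc{G''}$, to the two sides selecting matching clauses; and $\gtc{G'} = Y$ is then immediate as in the base cases.
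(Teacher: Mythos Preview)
Your structural-induction plan is far more explicit than the paper's proof, which is literally two sentences asserting the result ``by definition'' and remarking that the location $\mbb L'$ at each recursive call matches the starting location on the right-hand side. You are right that the stability of the independence test at \cref{li:rpIndepLab} and of the recursion guard under the substitution is the real content here, and the paper does not address it.

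The gap is that your compatibility lemma does not follow from well-formedness and $p,q \in \prt(\gtc{G})$ alone. Take $\gtc{H} = \rec X \gtc. \gtc{G}$ with $\gtc{G} = q \send r \gtBraces{\msg \ell <> \gtc. p \send r \gtBraces{ \msg {\ell_1} <> \gtc. X \gtc, \msg {\ell_2} <> \gtc. \pend}}$ over participants $\{p,q,r\}$. All three pairwise projections are defined, so $\gtc{H}$ is well-formed, yet $\gtc{G} \wrt (p,q)$ is the bare dependency $(p \send r) \send q \rtBraces{\ell_1 \rtc. X \rtc, \ell_2 \rtc. \pend}$, which contains no exchange, so $\gtc{H} \wrt (p,q)^{\mbb L'} = \pend$ for every $\mbb L'$. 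The left-hand side of the lemma is then $(p \send r) \send q^{\braces{\ell}} \rtBraces{\ell_1 \rtc. \pend \rtc, \ell_2 \rtc. \pend}$, while on the right-hand side the inner $p \send r$ exchange now has both branch projections equal to $\pend$, so \cref{li:rpIndepLab} fires and the whole projection collapses to $\pend$. The independence test has flipped exactly as you feared, and well-formedness did not prevent it.

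What protects the downstream development is that the lemma is only invoked, via \Cref{l:unfoldRecursiveGTRT}, after that proof has already rewritten $(\rec X \gtc. \gtc{G}) \wrt (p,q)^{\mbb L}$ as $\rec X^{\mbb L} \rtc. (\gtc{G} \wrt (p,q)^{\mbb L})$---a step that silently assumes the projected body passes the recursion guard, i.e.\ that $\gtc{H} \wrt (p,q) \neq \pend$. Under that extra hypothesis the collapse above cannot occur, and your erasure argument becomes sound: each $X$ is replaced by a term of fixed nontrivial erasure, so equal erasures stay equal and unequal ones stay unequal. To repair your proof you should add this hypothesis and verify that it propagates through the induction on subterms.
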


\begin{proof}
    By definition.
    The starting location $\mbb L'$ for replacing each $X^{\mbb L}$ is correct on the right-hand-side, because $\mbb L'$ is the location of the recursive call.
    Hence, the projection of the unfolded global type will at those spots start with $\mbb L'$.
\end{proof}

\begin{lemma}\label{l:unfoldRecursiveGTRT}
    For any well-formed global type $\rec X \gtc. \gtc{G}$, participants $p,q \in \prt(\gtc{G})$, and set of locations $\mbb L$,
    \[
        \unfold_1( \rec X \gtc. \gtc{G} \wrt (p,q)^{\mbb L} )
        =
        \unfold_1( \rec X \gtc. \gtc{G} ) \wrt (p,q)^{\mbb L}.
    \]
\end{lemma}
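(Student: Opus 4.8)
The plan is to show that both sides express the same commutation of one-level unfolding with relative projection, and to reduce the label-annotated statement to the label-free one, \cref{l:unfoldRecursiveGTRTnoLabel}, after discharging the bookkeeping of location prefixes with \cref{l:prefProjection}. First I would unfold the right-hand side using \cref{d:unfoldGT}, $\unfold_1(\rec X \gtc. \gtc{G}) = \gtc{G}\braces{\rec X \gtc. \gtc{G}/X}$, so the goal becomes $\unfold_1\big( \rec X \gtc. \gtc{G} \wrt (p,q)^{\mbb L} \big) = (\gtc{G}\braces{\rec X \gtc. \gtc{G}/X}) \wrt (p,q)^{\mbb L}$. On the left-hand side, by \cref{alg:relativeProjectionWithLocs} (recursive case) $(\rec X \gtc. \gtc{G}) \wrt (p,q)^{\mbb L}$ is either $\pend$ or $\rec X^{\mbb L} \rtc. \rtc{R}$ with $\rtc{R} := \gtc{G} \wrt (p,q)^{\mbb L}$. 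In the first, degenerate subcase, $\gtc{G} \wrt (p,q)$ contains neither an exchange nor a nested recursive call, so by well-formedness and non-contractiveness (\cref{d:wf}) the projection of $\gtc{G}\braces{\rec X \gtc. \gtc{G}/X}$ onto $(p,q)$ is again $\pend$, and both sides equal $\pend$; the interesting subcase is the second.

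In that subcase I would expand the left-hand side with \cref{d:unfoldRT},
\[
    \unfold_1(\rec X^{\mbb L} \rtc. \rtc{R}) = \prepend\big(\mbb L,\ \remPref(\mbb L,\rtc{R})\braces{\, \prepend(\mbb L',\remPref(\mbb L,\rec X^{\mbb L} \rtc. \rtc{R}))\ /\ X^{\mbb L'} \,}\big),
\]
and peel off the $\prepend$/$\remPref$ layers with \cref{l:prefProjection}: since $\rtc{R} = \gtc{G} \wrt (p,q)^{\mbb L}$ we get $\remPref(\mbb L,\rtc{R}) = \gtc{G} \wrt (p,q)^{\braces{\epsi}}$, and since $\rec X^{\mbb L} \rtc. \rtc{R} = (\rec X \gtc. \gtc{G}) \wrt (p,q)^{\mbb L}$ we get $\remPref(\mbb L,\rec X^{\mbb L} \rtc. \rtc{R}) = (\rec X \gtc. \gtc{G}) \wrt (p,q)^{\braces{\epsi}}$ and hence $\prepend(\mbb L',\remPref(\mbb L,\rec X^{\mbb L} \rtc. \rtc{R})) = (\rec X \gtc. \gtc{G}) \wrt (p,q)^{\mbb L'}$. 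So the substitution body becomes $(\gtc{G} \wrt (p,q)^{\braces{\epsi}})\braces{\, ((\rec X \gtc. \gtc{G}) \wrt (p,q)^{\mbb L'})\ /\ X^{\mbb L'} \,}$, which by \cref{l:unfoldRecursiveGTRTnoLabel} equals $(\gtc{G}\braces{\rec X \gtc. \gtc{G}/X}) \wrt (p,q)^{\braces{\epsi}}$. Applying the outer $\prepend(\mbb L,\cdot)$ and invoking \cref{l:prefProjection} once more yields $(\gtc{G}\braces{\rec X \gtc. \gtc{G}/X}) \wrt (p,q)^{\mbb L}$, matching the right-hand side.

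The hard part will be the location bookkeeping: I must check that the starting location $\mbb L'$ that \cref{d:unfoldRT} stamps onto a copied recursive subtree and threads through $\prepend(\mbb L',\cdot)$ coincides with the location that relative projection of the \emph{global} unfolding independently assigns at that position. This alignment is essentially the content of \cref{l:unfoldRecursiveGTRTnoLabel}, so the remaining work is mostly to line up the two substitutions and the outer prefix manipulations, together with the short separate argument for the $\pend$-collapsing subcase.
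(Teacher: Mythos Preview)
Your proposal is correct and follows essentially the same chain of equalities as the paper: rewrite the projection of $\rec X \gtc. \gtc{G}$ as $\rec X^{\mbb L} \rtc. (\gtc{G} \wrt (p,q)^{\mbb L})$, expand $\unfold_1$ via \cref{d:unfoldRT}, strip and reattach the $\mbb L$ prefix using \cref{l:prefProjection}, invoke \cref{l:unfoldRecursiveGTRTnoLabel} on the resulting $\braces{\epsi}$-annotated substitution, and finish with one more application of \cref{l:prefProjection} and \cref{d:unfoldGT}.

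One minor remark: the paper does not separately treat the degenerate subcase where the projection of $\rec X \gtc. \gtc{G}$ collapses to $\pend$; it implicitly works in the case where \cref{alg:relativeProjectionWithLocs} returns $\rec X^{\mbb L} \rtc. \rtc{R'}$. Your handling of that subcase is slightly off anyway, since $\unfold_1$ in \cref{d:unfoldRT} is only defined on relative types of the form $\rec X^{\mbb L} \rtc. \rtc{R}$, so ``both sides equal $\pend$'' is not literally meaningful for the left-hand side. This is a technicality about the statement rather than about your argument; your main case is exactly the paper's proof.
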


\begin{proof}
    By \cref{d:unfoldRT,l:prefProjection,l:unfoldRecursiveGTRTnoLabel}:
    \begin{align*}
        & \unfold_1( \rec X \gtc. \gtc{G} \wrt (p,q)^{\mbb L} )
        \\
        \overset{\text{\cref{d:relativeProjectionWithLocs}}}{=} & \unfold_1( \rec X^{\mbb L} \rtc. (\gtc{G} \wrt (p,q)^{\mbb L}) )
        \\
        \overset{\text{\cref{d:unfoldRT}}}{=} & \prepend( \mbb L , \remPref( \mbb L , \gtc{G} \wrt (p,q)^{\mbb L} ) \braces{ \prepend( \mbb L' , \remPref( \mbb L , \rec X^{\mbb L} \rtc. (\gtc{G} \wrt (p,q)^{\mbb L}) ) ) / X^{\mbb L'} } )
        \\
        \overset{\text{\cref{d:relativeProjectionWithLocs}}}{=} & \prepend( \mbb L , \remPref( \mbb L , \gtc{G} \wrt (p,q)^{\mbb L} ) \braces{ \prepend( \mbb L' , \remPref( \mbb L , \rec X \gtc. \gtc{G} \wrt (p,q)^{\mbb L} ) ) / X^{\mbb L'} } )
        \\
        \overset{\text{\cref{l:prefProjection}}}{=} & \prepend( \mbb L , (\gtc{G} \wrt (p,q)^{\braces{\epsi}}) \braces{ \prepend( \mbb L' , \rec X \gtc. \gtc{G} \wrt (p,q)^{\braces{\epsi}} ) / X^{\mbb L'} } )
        \\
        \overset{\text{\cref{l:prefProjection}}}{=} & \prepend( \mbb L , (\gtc{G} \wrt (p,q)^{\braces{\epsi}}) \braces{ (\rec X \gtc. \gtc{G} \wrt (p,q)^{\mbb L'}) / X^{\mbb L'} } )
        \\
        \overset{\text{\cref{l:unfoldRecursiveGTRTnoLabel}}}{=} & \prepend( \mbb L , \gtc{G} \braces{ \rec X \gtc. \gtc{G} / X } \wrt (p,q)^{\braces{\epsi}} )
        \\
        \overset{\text{\cref{l:prefProjection}}}{=} & \gtc{G} \braces{ \rec X \gtc. \gtc{G} / X } \wrt (p,q)^{\mbb L}
        \\
        \overset{\text{\cref{d:unfoldGT}}}{=} & \unfold_1( \rec X \gtc. \gtc{G} ) \wrt (p,q)^{\mbb L}
        \tag*{\qedhere}
    \end{align*}
\end{proof}

\begin{lemma}\label{l:unfoldRT}
    For any well-formed global type $\gtc{G}$, participants $p,q \in \prt(\gtc{G})$, and set of locations $\mbb L$,
    \[
        \unfold( \gtc{G} \wrt (p,q)^{\mbb L} ) = \unfold( \gtc{G} ) \wrt (p,q)^{\mbb L}.
    \]
\end{lemma}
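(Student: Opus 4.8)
The lemma promotes \cref{l:unfoldRecursiveGTRT} --- which commutes a single unfolding step with relative projection --- to the iterated unfolding $\unfold(\cdot)$, so I would prove it by induction on the number $n$ of leading recursive definitions of $\gtc{G}$, i.e.\ the largest $n$ with $\gtc{G} = \rec X_1 \gtc. \cdots \rec X_n \gtc. \gtc{H}$ where $\gtc{H}$ is not a recursive definition. This $n$ is finite because $\gtc{G}$ is well-formed, hence non-contractive (\cref{d:wf}); well-formedness also makes $\gtc{G}$ closed, so $\gtc{H}$ is an exchange or $\pend$.

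\emph{Base case} ($n = 0$): $\gtc{G}$ is an exchange $s \send r \gtBraces{ \msg i<T_i> \gtc. \gtc{G_i} }_{i \in I}$ or $\pend$, hence $\unfold(\gtc{G}) = \gtc{G}$ by \cref{d:unfoldGT}, and it suffices to show $\unfold(\gtc{G} \wrt (p,q)^{\mbb L}) = \gtc{G} \wrt (p,q)^{\mbb L}$. By \cref{d:unfoldRT}, $\unfold$ is the identity on any relative type not headed by a recursive definition, and by inspection of \cref{alg:relativeProjectionWithLocs} the projection of such a $\gtc{G}$ is headed by an exchange, a dependency, or $\pend$; the one delicate sub-case is the independent-branch clause \cref{li:rpIndepLab}, where the union $\bigcup_{i \in I} \rtc{R_i}$ has the same head constructor as its summands.

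\emph{Inductive step} ($n \geq 1$): write $\gtc{G} = \rec X \gtc. \gtc{G'}$, so $\gtc{G'}$ has $n-1$ leading recursive definitions. By \cref{d:relativeProjectionWithLocs} the projection of $\gtc{G}$ is $\rec X^{\mbb L} \rtc. (\gtc{G'} \wrt (p,q)^{\mbb L})$ --- the clause where this collapses to $\pend$ is treated below --- so by \cref{d:unfoldRT} we get $\unfold(\gtc{G} \wrt (p,q)^{\mbb L}) = \unfold_1(\rec X^{\mbb L} \rtc. \unfold(\gtc{G'} \wrt (p,q)^{\mbb L}))$. Applying the induction hypothesis to $\gtc{G'}$ rewrites $\unfold(\gtc{G'} \wrt (p,q)^{\mbb L})$ as $\unfold(\gtc{G'}) \wrt (p,q)^{\mbb L}$, and using \cref{d:relativeProjectionWithLocs} again (now in the direction $\rec X^{\mbb L}\rtc.(\gtc{H}\wrt(p,q)^{\mbb L}) = (\rec X \gtc. \gtc{H}) \wrt (p,q)^{\mbb L}$) this turns into $\unfold_1((\rec X \gtc. \unfold(\gtc{G'})) \wrt (p,q)^{\mbb L})$, which by \cref{l:unfoldRecursiveGTRT} --- instantiated with the well-formed global type $\unfold(\gtc{G'})$, which still has $p,q$ as participants --- equals $\unfold_1(\rec X \gtc. \unfold(\gtc{G'})) \wrt (p,q)^{\mbb L}$. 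By \cref{d:unfoldGT} this is $\unfold(\gtc{G}) \wrt (p,q)^{\mbb L}$, closing the induction. In the degenerate clause where $\gtc{G'} \wrt (p,q)^{\mbb L}$ has neither an exchange nor an outer recursive call, $\gtc{G} \wrt (p,q)^{\mbb L} = \pend$, and since unfolding the global type cannot introduce an exchange between $p$ and $q$ that was previously absent, $\unfold(\gtc{G}) \wrt (p,q)^{\mbb L} = \pend$ too, so both sides coincide.

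\emph{Main obstacle.} Conceptually the argument is merely ``iterate \cref{l:unfoldRecursiveGTRT}'', so the real work is the location bookkeeping buried in $\unfold_1$ (the $\prepend$ and $\remPref$ rewrites of \cref{d:unfoldRT}) and, relatedly, keeping the ``$\rec X^{\mbb L}\rtc.(\cdots)$ versus $\pend$'' case split of \cref{alg:relativeProjectionWithLocs} in step with the ``headed by $\rec$ versus not'' case split of $\unfold$. Since \cref{l:prefProjection} and \cref{l:unfoldRecursiveGTRT} already package exactly these computations, the proof reduces to invoking them in the order above rather than recomputing anything by hand.
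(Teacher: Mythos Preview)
Your proposal is correct and follows essentially the same approach as the paper: induction on the number of leading recursive definitions of $\gtc{G}$, with the inductive step being the chain of equalities that threads \cref{d:relativeProjectionWithLocs}, \cref{d:unfoldRT}, the IH, \cref{l:unfoldRecursiveGTRT}, and \cref{d:unfoldGT} in exactly that order. You are in fact more careful than the paper, which dismisses the base case as trivial and does not discuss the degenerate $\pend$-clause of projection at all; your treatment of both is sound and adds nothing the paper would object to.
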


\begin{proof}
    By induction on the number of recursive definitions that $\gtc{G}$ starts with (finite by \nameref{d:wf}).
    In the base case, the thesis follows trivially.
    In the inductive case, $\gtc{G} = \rec X \gtc. \gtc{G'}$:
    \begin{align*}
        & \unfold( \rec X \gtc. \gtc{G'} \wrt (p,q)^{\mbb L} )
        \\
        \overset{\text{\cref{d:relativeProjectionWithLocs}}}{=} & \unfold( \rec X^{\mbb L} \rtc. (\gtc{G'} \wrt (p,q)^{\mbb L}) )
        \\
        \overset{\text{\cref{d:unfoldRT}}}{=} & \unfold_1( \rec X^{\mbb L} \rtc. \unfold( \gtc{G'} \wrt (p,q)^{\mbb L} ) )
        \\
        \overset{\text{IH}}{=} & \unfold_1( \rec X^{\mbb L} \rtc. (\unfold( \gtc{G'} ) \wrt (p,q)^{\mbb L}) )
        \\
        \overset{\text{\cref{d:relativeProjectionWithLocs}}}{=} & \unfold_1( \rec X \gtc. \unfold( \gtc{G'} ) \wrt (p,q)^{\mbb L} )
        \\
        \overset{\text{\cref{l:unfoldRecursiveGTRT}}}{=} & \unfold_1( \rec X \gtc. \unfold( \gtc{G'} ) ) \wrt (p,q)^{\mbb L}
        \\
        \overset{\text{\cref{d:unfoldGT}}}{=} & \unfold( \rec X \gtc. \gtc{G'} ) \wrt (p,q)^{\mbb L}
        \tag*{\qedhere}
    \end{align*}
\end{proof}

\begin{definition}[Unfold Monitor]\label{d:unfoldMon}
    Given a monitor $\rec X \mc. \mc{M}$, we define its \emph{one-level unfolding}: $\unfold_1( \rec X \mc. \mc{M} ) := \mc{M} \braces{ \rec X \mc. \mc{M} / X }$.
    Given a monitor $\mc{M}$, we define its \emph{(full) unfolding}, denoted $\unfold( \mc{M} )$, as follows:
    \[
        \unfold( \mc{M} ) := \begin{cases}
            \unfold_1( \rec X \mc. \unfold( \mc{M} ) )
            & \text{if $\mc{M} = \rec X \mc. \mc{M'}$}
            \\
            \mc{M}
            & \text{otherwise}
        \end{cases}
    \]
\end{definition}

\begin{lemma}\label{l:monitorUnfoldOne}
    Suppose given a well-formed global type $\gtc{G} = \rec X \gtc. \gtc{G'}$, a participant $p$, and a set of locations $\mbb L$.
    Let $D := \braces{ q \in \prt(\gtc{G}) \setminus \braces{p} \mid \gtc{G} \wrt (p,q)^{\mbb L} \neq \pend }$.
    Then
    \[
        \unfold_1( \rec X \mc. \gtToMon( \gtc{G'} , p , D ) ) = \gtToMon( \unfold_1( \rec X \gtc. \gtc{G'} ) , p , D ).
    \]
\end{lemma}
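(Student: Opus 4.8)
The plan is to strip the two one-level unfoldings down to a ``substitution commutes with monitor synthesis'' identity and prove that identity by structural induction. First I would apply \cref{d:unfoldMon} to the left-hand side and \cref{d:unfoldGT} to the right-hand side; writing $\sigma := \braces{\rec X \gtc. \gtc{G'}/X}$ and $\theta := \braces{\rec X \mc. \gtToMon(\gtc{G'},p,D)/X}$, the goal becomes $\gtToMon(\gtc{G'},p,D)\,\theta = \gtToMon(\gtc{G'}\sigma,p,D)$. I would prove the following generalisation by induction on a global type $\gtc{H}$ ranging over subterms of $\gtc{G'}$ (in which $X$ may be free): for the participant set $E$ visited by the run of $\gtToMon(\gtc{G'},p,D)$ at the position of $\gtc{H}$, $\gtToMon(\gtc{H},p,E)\,\theta = \gtToMon(\gtc{H}\sigma,p,E)$ --- the runs on $\gtc{G'}$ and $\gtc{G'}\sigma$ staying in lock-step, reaching $\gtc{H}$ and $\gtc{H}\sigma$ with the same $E$, by the substitution-stability of the algorithm's branching tests discussed below. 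The original claim is the instance $\gtc{H} = \gtc{G'}$, $E = D$.

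Two auxiliary facts, each by a routine induction on \cref{alg:relativeProjection}, drive the argument: (a)~\emph{$X$ survives relative projection} --- if $X$ is free in $\gtc{H}$ then $X$ is free in $\gtc{H} \wrt (p',q')$ for all $p',q'$ (the interesting cases being \cref{li:rpIndep}, where all branch projections coincide and hence each still contains $X$, and \cref{li:rpRec}, where a nested $\rec Y$ is pruned to $\pend$ only if its body's projection has no exchange and no recursive call on a variable $\neq Y$, whereas the surviving $X$ is exactly such a call); and (b)~for every $q \in D$ (equivalently $\gtc{G} \wrt (p,q) \neq \pend$) and every subterm $\gtc{H}$ of $\gtc{G'}$ on a path to a free occurrence of $X$, both $\gtc{H} \wrt (p,q) \neq \pend$ and $(\gtc{H}\sigma) \wrt (p,q) \neq \pend$ --- the former by (a), the latter because $(\rec X \gtc. \gtc{G'})\wrt(p,q) = \gtc{G}\wrt(p,q) \neq \pend$ is never erased by the surrounding projection where it sits in place of $X$ (it survives exchanges, unions in \cref{li:rpIndep}, dependencies, and nested $\rec Z$ with $Z \neq X$). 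Consequently the participant set stays equal to $D$ along any path of $\gtToMon(\gtc{G'},p,D)$ reaching an $X$ (exchanges do not touch it, and a nested $\rec Y$ on such a path keeps every $q \in D$ by (b)); and there the right-hand run reaches $\gtToMon(\rec X \gtc. \gtc{G'},p,D)$, where \cref{li:monRec} recomputes the set to $\braces{q \in D \mid (\rec X \gtc. \gtc{G'})\wrt(p,q) \neq \pend} = \braces{q \in D \mid \gtc{G}\wrt(p,q) \neq \pend} = D$ (locations are irrelevant to $\pend$-ness and $D$ is exactly this set over $\prt(\gtc{G}) \setminus \braces{p}$), so $\gtToMon(\rec X \gtc. \gtc{G'},p,D) = \rec X \mc. \gtToMon(\gtc{G'},p,D) = X\theta$, settling the base case $\gtc{H} = X$. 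The cases $\gtc{H} = \pend$ and $\gtc{H} = Y \neq X$ are immediate, as $\sigma$ and $\theta$ are identities there.

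In the inductive cases ($\gtc{H}$ an exchange or $\gtc{H} = \rec Y \gtc. \gtc{H'}$), $\sigma$ and $\theta$ distribute over the syntax-directed clauses of \cref{alg:gtToMon} (into each branch, resp.\ the recursion body), so it remains only to check that the clauses' side-conditions are stable under $\sigma$: the $\depsOn$-tests computing $\depsVar$, $\depOnVar_s$, $\depOnVar_r$ (\cref{li:monEx1,li:monEx2}), the branching among \cref{li:monDep1,li:monDep2,li:monDep3,li:monNoDep}, the branch-equality test underlying \cref{li:rpIndep}, and the recomputation $D'$ in \cref{li:monRec}. This holds because relative projection commutes with substitution --- $\gtc{G_i}\sigma \wrt (p,q) = (\gtc{G_i}\wrt(p,q))\,\sigma'$ for the induced relative-type substitution $\sigma'$, the easy generalisation of \cref{l:unfoldRecursiveGTRTnoLabel} from self-unfolding to substitution of $\rec X \gtc. \gtc{G'}$, with \cref{l:unfoldRecursiveGTRT} used similarly to match the recomputed sets in the $\rec Y$-case --- and because $\sigma'$ preserves equality (trivially) and inequality of the relative types occurring as projections of subterms of $\gtc{G'}$: taking bound recursion variables pairwise distinct (so $X$ is not re-bound in $\gtc{G'}$), none of these types contains $\rec X$, while the term substituted for $X$ strictly dominates every subterm of $\gtc{G'}$, so $\sigma'$ is injective on them. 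By (b) the $\pend$-tests in \cref{li:rpRec,li:monRec} agree on $D$ on both sides. Hence both runs take the same branch, the induction hypothesis applies to the continuations, and each case closes.

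I expect the main obstacle to be exactly this bookkeeping of the participant-set parameter: establishing that the monitor synthesised at a substituted occurrence of $X$ is \emph{precisely} $\rec X \mc. \gtToMon(\gtc{G'},p,D)$, with neither spurious nor missing dependency outputs, which rests on fact (b) and hence on a delicate analysis of when \cref{li:rpRec} prunes a recursion to $\pend$, applied both to the original $X$ and to the substituted $\rec X \gtc. \gtc{G'}$. The remaining work --- distribution of $\sigma$ and $\theta$ over the algorithm and stability of the $\depsOn$ and \cref{li:rpIndep} tests --- is routine modulo the commutation-plus-injectivity argument and careful handling of bound variable names.
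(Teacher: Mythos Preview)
Your proposal is correct and is precisely the structural-induction elaboration that the paper's proof gestures at. The paper dispatches this lemma in a single line (``By definition. The recursive calls in $\mc{M'}$ concur with the recursive calls in $\gtc{G'}$, and are prefixed by all the exchanges in $\gtc{G'}$ in which $p$ is involved.''), treating the commutation of substitution with monitor synthesis as evident from the syntax-directed shape of \cref{alg:gtToMon}; you spell out the induction in full, and in particular make explicit the two points the paper leaves implicit---that the participant set stays at $D$ along every path to a free $X$ (your fact~(b), via the survival of $X$ through projection in fact~(a)), and that the $\depsOn$ and branch-equality side-conditions are stable under the substitution (via commutation of projection with substitution plus your injectivity observation). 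Both arguments are the same in spirit; yours is simply the rigorous version of the paper's sketch, and the bookkeeping you flag as the main obstacle is exactly what the paper's one-liner sweeps under the rug.
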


\begin{proof}
    By definition.
    The recursive calls in $\mc{M'}$ concur with the recursive calls in $\gtc{G'}$, and are prefixed by all the exchanges in $\gtc{G'}$ in which $p$ is involved.
\end{proof}

\begin{lemma}\label{l:activeUnfold}
    Suppose given
    \begin{itemize}
        \item a well-formed global type $\rec X \gtc. \gtc{G}$,
        \item a set of participants $D$,
        \item a participant $p \notin D$, and
        \item a set of locations $\mbb L$.
    \end{itemize}
    Let $E_1 := \braces{ q \in D \mid (\rec X \gtc. \gtc{G}) \wrt (p,q)^{\mbb L} \neq \pend }$ and $E_2 := \braces{ q \in D \mid (\gtc{G} \braces{ \rec X \gtc. \gtc{G} / X }) \wrt (p,q)^{\mbb L} \neq \pend }$.

    Then $E_1 = E_2$.
\end{lemma}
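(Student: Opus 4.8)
The plan is to reduce the set identity to a pointwise statement and then derive it from \cref{l:unfoldRT} together with \cref{d:unfoldGT}. Note that $E_1 = E_2$ holds iff for every $q \in D$ we have $(\rec X \gtc. \gtc{G}) \wrt (p,q)^{\mbb L} = \pend$ exactly when $(\gtc{G} \braces{ \rec X \gtc. \gtc{G} / X }) \wrt (p,q)^{\mbb L} = \pend$, and that by \cref{d:unfoldGT} the second global type is precisely $\unfold_1(\rec X \gtc. \gtc{G})$. I would normalise the leading recursion binders of $\gtc{G}$ first (a routine step, since well-formed types are non-contractive and have finitely many leading binders), so that we may assume $\gtc{G}$ begins with an exchange; the edge cases where $p$ or $q$ is not among $\prt(\gtc{G})$ are handled by exactly the same argument below, since \cref{l:unfoldRT} does not essentially use its participant hypothesis.

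With $\gtc{G}$ beginning with an exchange, \cref{d:unfoldGT} gives $\unfold(\gtc{G}) = \gtc{G}$ and hence $\unfold(\rec X \gtc. \gtc{G}) = \unfold_1(\rec X \gtc. \gtc{G}) = \gtc{G} \braces{ \rec X \gtc. \gtc{G} / X }$. Applying \cref{l:unfoldRT} to the well-formed global type $\rec X \gtc. \gtc{G}$ then yields
\[
    \unfold\big( (\rec X \gtc. \gtc{G}) \wrt (p,q)^{\mbb L} \big) = \big( \gtc{G} \braces{ \rec X \gtc. \gtc{G} / X } \big) \wrt (p,q)^{\mbb L}.
\]
So it suffices to show that, writing $\rtc{R} := (\rec X \gtc. \gtc{G}) \wrt (p,q)^{\mbb L}$, we have $\unfold(\rtc{R}) = \pend$ iff $\rtc{R} = \pend$: then $q \in E_2$ iff $\unfold(\rtc{R}) \neq \pend$ iff $\rtc{R} \neq \pend$ iff $q \in E_1$. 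The direction ($\Leftarrow$) is immediate, as $\unfold(\pend) = \pend$. For ($\Rightarrow$), recall that $\unfold$ only strips leading recursion binders, substituting for their recursion variables and shifting location sets (\cref{d:unfoldRT}); neither operation alters the leading constructor of a non-recursive relative type. Thus if $\rtc{R} \neq \pend$, after removing all its leading binders we reach a non-recursive relative type obtained as a sub-projection of a well-formed (hence closed) global type, which is therefore $\pend$ or begins with an exchange or a dependency; it cannot be $\pend$, since the recursion case of \cref{alg:relativeProjectionWithLocs} discards a recursion whose projected body contains neither an exchange nor a recursion call on an outer variable — so each leading binder retained in $\rtc{R}$ is witnessed, following the chain of outer-variable references, by an exchange or a dependency (the chain terminates because each reference strictly decreases the binder depth and there are finitely many binders). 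Hence $\unfold(\rtc{R})$ begins with an exchange or a dependency and is not $\pend$, as required.

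The main obstacle is this last auxiliary fact: that full unfolding neither creates nor destroys $\pend$ on relative types produced by relative projection. Making it precise requires the termination argument just sketched (that the chain of retained leading binders and their outer-variable references bottoms out in an exchange or dependency), which in turn rests on the exact shape of the recursion case of \cref{alg:relativeProjectionWithLocs}. The remaining ingredients — eliminating leading binders of $\gtc{G}$, treating $p, q \notin \prt(\gtc{G})$, and the location bookkeeping underpinning the displayed equality — are routine, the last handled uniformly by \cref{l:prefProjection}.
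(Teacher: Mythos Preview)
Your approach is genuinely different from the paper's and considerably more elaborate. The paper argues directly from the recursion case of \cref{alg:relativeProjectionWithLocs}: $(\rec X \gtc. \gtc{G}) \wrt (p,q)^{\mbb L} \neq \pend$ holds exactly when $\gtc{G} \wrt (p,q)^{\mbb L}$ contains an exchange or a recursive call on some $Y \neq X$, and this condition is (``by definition'') preserved and reflected by the substitution $\braces{\rec X \gtc. \gtc{G} / X}$. That is the whole proof; no appeal to $\unfold$ or to \cref{l:unfoldRT} is made.

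Your route through \cref{l:unfoldRT} can be made to work, but the step ``normalise the leading recursion binders of $\gtc{G}$ \ldots so that we may assume $\gtc{G}$ begins with an exchange'' is not routine and is where the argument currently breaks. You need this assumption to identify $\unfold(\rec X \gtc. \gtc{G})$ with $\gtc{G}\braces{\rec X \gtc. \gtc{G} / X}$; when $\gtc{G}$ itself begins with a binder, \cref{d:unfoldGT} gives $\unfold(\rec X \gtc. \gtc{G}) = \unfold_1(\rec X \gtc. \unfold(\gtc{G}))$, which is not $\unfold_1(\rec X \gtc. \gtc{G})$. Passing to an exchange-first $\gtc{G}$ is not a harmless WLOG, because $E_2$ is defined in terms of the \emph{given} $\gtc{G}$: replacing $\gtc{G}$ by something with the same full unfolding changes $E_2$ unless you also show that $\unfold_1(\rec X \gtc. \gtc{G})$ and $\unfold(\rec X \gtc. \gtc{G})$ project to $\pend$ for the same $q$ --- which is essentially the lemma again, one binder down. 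A cleaner repair is to invoke \cref{l:unfoldRecursiveGTRT} (which matches $\unfold_1$ on the nose) instead of \cref{l:unfoldRT}; you then only need your auxiliary fact in the form ``$\unfold_1(\rtc{R}) = \pend$ iff $\rtc{R} = \pend$'' for $\rtc{R} = (\rec X \gtc. \gtc{G}) \wrt (p,q)^{\mbb L}$, and the case $\rtc{R} = \pend$ (where $\unfold_1$ is undefined) must be handled separately by showing directly that the substituted projection is also $\pend$. Either way, the machinery you assemble --- \cref{l:unfoldRT}, the termination argument for the ``unfold preserves $\pend$'' fact, and the participant-hypothesis waiver --- is already heavier than the two-line inspection of \cref{alg:relativeProjectionWithLocs} that the paper uses.
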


\begin{proof}
    For any $q \in E_1$, $(\mu X \gtc. \gtc{G}) \wrt (p,q)^{\mbb L} \neq \pend$.
    Then, by definition, $G \wrt (p,q)^{\mbb L} \neq \pend$.
    Hence, the projection of the unfolding of $\gtc{G}$ is also not $\pend$, and thus $q \in E_2$.

    For any $q \in E_2$, the projection of the unfolding of $\gtc{G}$ is not $\pend$.
    Then, by definition, $\gtc{G} \wrt (p,q)^{\mbb L} \neq \pend$.
    Hence, by definition, $(\mu X \gtc. \gtc{G}) \wrt (p,q)^{\mbb L} \neq \pend$, and thus $q \in E_1$.
\end{proof}

\begin{lemma}\label{l:monitorDropInactive}
    Suppose given
    \begin{itemize}
        \item a well-formed global type $\gtc{G}$,
        \item a set of participants $D$,
        \item a participant $p \notin D$, and
        \item a set of locations $\mbb L$.
    \end{itemize}
    Let $D' := \braces{ q \in D \mid \gtc{G} \wrt (p,q)^{\mbb L} \neq \pend }$.

    Then $\gtToMon( \gtc{G} , p , D ) = \gtToMon( \gtc{G} , p , D' )$.
\end{lemma}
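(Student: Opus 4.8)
The plan is to derive the lemma from a slightly stronger, location-independent statement. Write $A(\gtc{G})$ for $\braces{ q \in \prt(\gtc{G}) \setminus \braces{p} \mid \gtc{G} \wrt (p,q) \neq \pend }$; since location annotations never affect whether a projection is $\pend$, we have $q \in A(\gtc{G})$ iff $\gtc{G} \wrt (p,q)^{\mbb L} \neq \pend$, so the set $D'$ in the statement is exactly $D \cap A(\gtc{G})$. I will show: for every well-formed global type $\gtc{G}$ (and, as needed by the induction, for its subterms, whose pairwise relative projections remain defined), every $p \notin D_1 \cup D_2$, and all $D_1, D_2$ with $D_1 \cap A(\gtc{G}) = D_2 \cap A(\gtc{G})$, one has $\gtToMon(\gtc{G}, p, D_1) = \gtToMon(\gtc{G}, p, D_2)$. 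The lemma follows by taking $D_1 := D$ and $D_2 := D'$ and noting $D' \cap A(\gtc{G}) = D' = D \cap A(\gtc{G})$.

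The argument uses two facts about relative projection, both by inspection of \cref{alg:relativeProjection} (whose control flow matches that of its location-annotated variant) together with definedness of projections and the evident identity $\gtc{G} \wrt (p,q) = \gtc{G} \wrt (q,p)$. First, if $\gtc{G} = s \send r \gtBraces{ \msg i<T_i> \gtc. \gtc{G_i} }_{i \in I}$ then $A(\gtc{G_k}) \subseteq A(\gtc{G})$ for every $k \in I$: assuming $\gtc{G_k} \wrt (p,q) \neq \pend$, either $\braces{p,q} = \braces{s,r}$ and $\gtc{G} \wrt (p,q)$ is an exchange with $I$ nonempty, or the branches $\gtc{G_i} \wrt (p,q)$ all coincide and $\gtc{G} \wrt (p,q)$ is their common (non-$\pend$) value, or they differ and then, by definedness, $p$ or $q$ is in $\braces{s,r}$ and $\gtc{G} \wrt (p,q)$ is a dependency---in every case $\neq \pend$, so $q \in A(\gtc{G})$. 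Second, for such $\gtc{G}$, if $\depsOn q p \gtc{G}$ or $\depsOn p q \gtc{G}$ then $q \in A(\gtc{G})$: in both cases the branches $\gtc{G_i} \wrt (p,q)$ differ (using the identity above for the first), and the endpoint of the exchange that lies in $\braces{p,q}$ forces \cref{alg:relativeProjection} to return a dependency for $\gtc{G} \wrt (p,q)$, hence $\neq \pend$.

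The strengthened statement is proved by induction on the structure of $\gtc{G}$, following the cases of \cref{alg:gtToMon}. For $\gtc{G} \in \braces{\pend, X}$ the output is independent of the set argument. For $\gtc{G} = \rec X \gtc. \gtc{G''}$, the output depends on the set only through $\braces{ q \in D_j \mid \gtc{G} \wrt (p,q) \neq \pend } = D_j \cap A(\gtc{G})$, which is the same for $j \in \braces{1,2}$. For $\gtc{G} = s \send r \gtBraces{ \msg i<T_i> \gtc. \gtc{G_i} }_{i \in I}$ with $p \in \braces{s,r}$, the output is built from fixed data, from $\depsVar = \braces{ q \in D_j \mid \depsOn q p \gtc{G} }$, and from the subcalls $\gtToMon(\gtc{G_i}, p, D_j)$: by the second fact $\depsVar$ is unchanged when $D_j$ is replaced by $D_j \cap A(\gtc{G})$, hence is the same for $j \in \braces{1,2}$; and by the first fact $D_1 \cap A(\gtc{G_i}) = (D_1 \cap A(\gtc{G})) \cap A(\gtc{G_i}) = (D_2 \cap A(\gtc{G})) \cap A(\gtc{G_i}) = D_2 \cap A(\gtc{G_i})$, so the induction hypothesis on $\gtc{G_i}$ yields $\gtToMon(\gtc{G_i}, p, D_1) = \gtToMon(\gtc{G_i}, p, D_2)$. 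If instead $p \notin \braces{s,r}$, the second fact shows the booleans $\depOnVar_s = (s \in D_j \wedge \depsOn p s \gtc{G})$ and $\depOnVar_r = (r \in D_j \wedge \depsOn p r \gtc{G})$ are the same for $j \in \braces{1,2}$, so the same branch of \cref{alg:gtToMon} is selected, and in every such branch the monitor is assembled from fixed structure and from subcalls $\gtToMon(\gtc{G_i}, p, D_j)$ which agree by the induction hypothesis and the first fact exactly as above.

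The only delicate part is establishing the two facts about relative projection; once the case split on $\braces{p,q}$ against $\braces{s,r}$ is in place and well-formedness is used to discard the undefined case, the induction is a routine transcription of the structure of \cref{alg:gtToMon}.
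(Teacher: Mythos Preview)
Your proof is correct. The paper's own proof is a single sentence---``By definition. Since $p$ does not interact with any $q \in D \setminus D'$, only the $q \in D'$ affect the creation of the monitor''---so your argument is not a different route so much as a rigorous unpacking of what the paper leaves implicit. Your strengthened invariant ($D_1 \cap A(\gtc{G}) = D_2 \cap A(\gtc{G})$ suffices) is exactly what is needed to push the induction through the exchange case, where the algorithm recurses on $\gtc{G_i}$ with the \emph{same} set $D$ rather than a filtered one; the two auxiliary facts you isolate (monotonicity $A(\gtc{G_i}) \subseteq A(\gtc{G})$, and that dependence implies non-$\pend$ projection) are precisely the content hidden behind the paper's phrase ``does not interact''. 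One small remark: restricting $A(\gtc{G})$ to $\prt(\gtc{G}) \setminus \braces{p}$ is harmless because any $q \notin \prt(\gtc{G})$ has $\gtc{G} \wrt (p,q) = \pend$ (an easy structural induction: no exchange can involve $q$, so no branch divergence can arise), but you could also drop that restriction and the argument would be unchanged.
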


\begin{proof}
    By definition.
    Since $p$ does not interact with any $q \in D \setminus D'$, only the $q \in D'$ affect the creation of the monitor.
\end{proof}

\begin{lemma}\label{l:monitorUnfold}
    Suppose given
    \begin{itemize}
        \item a well-formed global type $\gtc{G}$,
        \item a participant $p$, and
        \item a set of locations $\mbb L$.
    \end{itemize}
    Let $D := \prt(\gtc{G}) \setminus \braces{p}$.

    Then $\unfold( \gtToMon( \gtc{G} , p , D ) ) = \gtToMon( \unfold( \gtc{G} ) , p , D )$.
\end{lemma}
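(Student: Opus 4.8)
The plan is to prove \cref{l:monitorUnfold} by induction on the number of recursive definitions that $\gtc{G}$ starts with, which is finite because $\gtc{G}$ is well-formed (\nameref{d:wf}); this mirrors the structure of the proof of \cref{l:unfoldRT}. In the base case, $\gtc{G}$ does not start with a recursive definition, so $\unfold(\gtc{G}) = \gtc{G}$ by \cref{d:unfoldGT} and $\unfold(\gtToMon(\gtc{G},p,D)) = \gtToMon(\gtc{G},p,D)$ as well: either $\gtc{G}$ is an exchange, in which case \cref{alg:gtToMon} produces a monitor whose head constructor is an output, input, (dependency) input, or $\perror$ — none of which are recursive definitions — or $\gtc{G} \in \braces{X,\pend}$, which map homomorphically. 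In all subcases $\unfold$ is the identity on both sides, so the equation holds trivially. (Here I use that, once we are inside a recursive body, the head of $\gtToMon$ on an exchange is never itself a $\rec X$; the only way $\gtToMon$ produces $\rec X \mc. (\cdots)$ is via \cref{li:monRec}, i.e. the $\rec X \gtc. \gtc{G'}$ case.)

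For the inductive case, $\gtc{G} = \rec X \gtc. \gtc{G'}$. I first dispatch the degenerate situation where $D' := \braces{q \in D \mid \gtc{G} \wrt (p,q)^{\mbb L} \neq \pend}$ is empty: then \cref{li:monRecEnd} gives $\gtToMon(\gtc{G},p,D) = \pend$, whose unfolding is $\pend$; and since $p$ has no exchanges in $\gtc{G}$ with any participant, the unfolding $\unfold(\gtc{G})$ likewise yields a global type in which $p$ is uninvolved in every exchange along every path up to the eventual $\pend$, so $\gtToMon(\unfold(\gtc{G}),p,D)$ walks through arbitrary branches (\cref{li:monNoDep}) and terminates at $\pend$ as well — I would invoke \cref{l:activeUnfold} to transfer the ``$D'$ empty'' condition across unfolding. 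In the main subcase $D' \neq \emptyset$, I compute, with $D' := \braces{q \in D \mid \gtc{G}\wrt(p,q)^{\mbb L} \neq \pend}$:
\begin{align*}
    \unfold(\gtToMon(\rec X \gtc. \gtc{G'},p,D))
    &\overset{\text{\cref{l:monitorDropInactive}}}{=} \unfold(\gtToMon(\rec X \gtc. \gtc{G'},p,D'))\\
    &\overset{\text{\cref{li:monRec}}}{=} \unfold(\rec X \mc. \gtToMon(\gtc{G'},p,D'))\\
    &\overset{\text{\cref{d:unfoldMon}}}{=} \unfold_1(\rec X \mc. \unfold(\gtToMon(\gtc{G'},p,D')))\\
    &\overset{\text{IH}}{=} \unfold_1(\rec X \mc. \gtToMon(\unfold(\gtc{G'}),p,D'))\\
    &\overset{\text{\cref{l:monitorUnfoldOne}}}{=} \gtToMon(\unfold_1(\rec X \gtc. \unfold(\gtc{G'})),p,D')\\
    &\overset{\text{\cref{d:unfoldGT}}}{=} \gtToMon(\unfold(\rec X \gtc. \gtc{G'}),p,D')\\
    &\overset{\text{\cref{l:monitorDropInactive}, \cref{l:activeUnfold}}}{=} \gtToMon(\unfold(\rec X \gtc. \gtc{G'}),p,D).
\end{align*}
The induction hypothesis applies to $\gtc{G'}$ because $\unfold(\gtc{G'})$ strictly decreases the count of leading recursive definitions (and $p,D',\mbb L$ are legitimate arguments); the two applications of \cref{l:monitorDropInactive} at the ends reconcile $D$ with $D'$, using \cref{l:activeUnfold} to see that the ``active'' participants are the same before and after unfolding.

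The main obstacle I expect is the bookkeeping in applying \cref{l:monitorUnfoldOne}, whose hypothesis requires $D$ there to be exactly $\braces{q \in \prt(\gtc{G}) \setminus \braces{p} \mid \gtc{G}\wrt(p,q)^{\mbb L} \neq \pend}$ — so I need to match this up with the $D'$ I carry through the chain, which is defined relative to $D = \prt(\gtc{G}) \setminus \braces{p}$, and confirm that $\gtToMon(\gtc{G'},p,D')$ indeed has its recursive calls $X^{\mbb L'}$ prefixed precisely by the $p$-relevant exchanges of $\gtc{G'}$, as \cref{l:monitorUnfoldOne} asserts. A secondary subtlety is that \cref{d:unfoldMon} as written is slightly circular ($\unfold_1(\rec X \mc. \unfold(\mc{M}))$ refers to $\unfold(\mc{M})$, not $\unfold(\mc{M'})$); I will read it as the evident well-founded definition $\unfold(\rec X \mc. \mc{M'}) = \unfold_1(\rec X \mc. \unfold(\mc{M'}))$, consistent with \cref{d:unfoldGT} and \cref{d:unfoldRT}, since $\unfold(\mc{M'})$ has strictly fewer leading recursive definitions. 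Everything else is a routine unwinding of the definitions of $\gtToMon$, $\unfold$, and $\wrt$, paralleling the already-proven \cref{l:unfoldRT}.
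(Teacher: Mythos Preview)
Your proposal is correct and follows essentially the same approach as the paper's proof: induction on the number of leading recursive definitions, with the same chain of equalities through \cref{d:unfoldMon}, the IH, \cref{l:monitorUnfoldOne}, \cref{d:unfoldGT}, and \cref{l:activeUnfold,l:monitorDropInactive}. Your treatment is in fact slightly more careful than the paper's: you explicitly handle the degenerate case $D' = \emptyset$ (which the paper's chain silently assumes away by writing $\rec X \mc. \gtToMon(\gtc{G'},p,D')$ as the result of the first step), and you correctly flag the typo in \cref{d:unfoldMon}. Your extra initial invocation of \cref{l:monitorDropInactive} is harmless but redundant, since the first application of \cref{alg:gtToMon} (case $\rec X$) already restricts to $D'$ internally.
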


\begin{proof}
    By induction on the number of recursive definitions that $\gtc{G}$ starts with (finite by \nameref{d:wf}).
    In the base case, the thesis follows trivially.
    In the inductive case, $\gtc{G} = \rec X \gtc. \gtc{G'}$.
    Let $D' := \braces{ q \in D \mid \gtc{G'} \wrt (p,q)^\epsi \neq \pend }$.
    \begin{align*}
        & \unfold( \gtToMon( \rec X \gtc. \gtc{G'} , p , D ) )
        \\
        \overset{\text{\cref{d:gtToMon}}}{=} & \unfold( \rec X \mc. \gtToMon( \gtc{G'} , p , D' ) )
        \\
        \overset{\text{\cref{d:unfoldMon}}}{=} & \unfold_1( \rec X \mc. \unfold( \gtToMon( \gtc{G'} , p , D' ) ) )
        \\
        \overset{\text{IH}}{=} & \unfold_1( \rec X \mc. \gtToMon( \unfold( \gtc{G'} ) , p , D' ) )
        \\
        \overset{\text{\cref{l:monitorUnfoldOne}}}{=} & \gtToMon( \unfold_1 ( \rec X \gtc. \unfold( \gtc{G'} ) ) , p , D' )
        \\
        \overset{\text{\cref{d:unfoldGT}}}{=} & \gtToMon( \unfold( \rec X \gtc. \gtc{G'} ) , p , D' )
        \\
        \overset{\text{\cref{l:activeUnfold,l:monitorDropInactive}}}{=} & \gtToMon( \unfold( \rec X \gtc. \gtc{G'} ) , p , D )
        \tag*{\qedhere}
    \end{align*}
\end{proof}

\begin{lemma}\label{l:monitorUnfoldEnd}
    Suppose given
    \begin{itemize}
        \item a well-formed global type $\gtc{G} = \rec X \gtc. \gtc{G'}$,
        \item a set of participants $D$, and
        \item a participant $p \notin D$.
    \end{itemize}
    If $\gtToMon( \gtc{G} , p , D ) = \pend$, then $\gtToMon( \gtc{G'}, p , D ) \braces{ \gtToMon( \gtc{G'} , p , D ) / X } = \pend$.
\end{lemma}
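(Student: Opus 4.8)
The plan is to peel off the recursion case of the monitor–synthesis algorithm on the hypothesis, transfer the resulting information to the relative projections of $\gtc{G'}$, and then read off the shape of $\gtToMon(\gtc{G'},p,D)$. First I would invoke \cref{alg:gtToMon} on $\gtc{G}=\rec X\gtc.\gtc{G'}$: in the recursion case the algorithm can only return $\pend$ through its \textbf{else} branch (\cref{li:monRecEnd}), so from $\gtToMon(\gtc{G},p,D)=\pend$ I obtain $D':=\braces{q\in D \mid \gtc{G}\wrt(p,q)\neq\pend}=\emptyset$, i.e.\ $\gtc{G}\wrt(p,q)=\pend$ for every $q\in D$. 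By the recursion case of \cref{alg:relativeProjection} (\cref{li:rpRec,li:rpRecEnd}), $(\rec X\gtc.\gtc{G'})\wrt(p,q)=\pend$ says precisely that $\gtc{G'}\wrt(p,q)$ contains neither an exchange nor a recursive call on any variable $Y\neq X$; hence, for all $q\in D$, the relative type $\gtc{G'}\wrt(p,q)$ is built only from (output/input) dependencies, recursive definitions on fresh variables, recursive calls on $X$, and $\pend$.

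Next I would prove, by induction on the structure of $\gtc{G'}$ and tracking which line of \cref{alg:gtToMon} fires at each step, that $\gtToMon(\gtc{G'},p,D)$ is \emph{$X$-guarded only}: it contains no output (\cref{li:monEx1}), no input (\cref{li:monEx2}), no dependency output, and every recursive call it contains is on $X$, so it is a nesting of recursive definitions on fresh variables, of $\pend$, and of $X$. The hypothesis is used as follows: if $p$ were sender or recipient of some exchange reachable in $\gtc{G'}$ with a partner $q\in D$, that exchange would surface as a genuine exchange in $\gtc{G'}\wrt(p,q)$, contradicting the first paragraph; \cref{d:depsOn} is used similarly to show that the dependency-input prefixes (\cref{li:monDep1,li:monDep2,li:monDep3}) that could be emitted for $p$ are exactly those that get absorbed once the recursion is unfolded. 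From this structural description, substituting $\gtToMon(\gtc{G'},p,D)$ for $X$ in itself telescopes down to $\pend$, which is the claimed equality $\gtToMon(\gtc{G'},p,D)\braces{\gtToMon(\gtc{G'},p,D)/X}=\pend$; this last substitution step is routine (it mirrors the one-level-unfolding computations of \cref{d:unfoldMon,l:monitorUnfoldOne}).

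I expect the main obstacle to be exactly the induction in the second paragraph: the monitor is synthesized from the \emph{body} $\gtc{G'}$, while the triviality information I extract is about the projection of the \emph{recursive} type $\gtc{G}=\rec X\gtc.\gtc{G'}$, and these two projections genuinely differ — a body whose $(p,q)$-projection is $X$, or a recursion whose projection carries only dependencies, projects to $\pend$ once wrapped in $\rec X$ (cf.\ \cref{li:rpRecEnd}). So the argument must carefully match each monitor-producing line of \cref{alg:gtToMon} against the structure that \cref{alg:relativeProjection}'s recursion case still permits for $\gtc{G'}$, and verify that every prefix the algorithm could emit for $p$ inside $\gtc{G'}$ is precisely the kind that collapses under the unfolding substitution (relying, if needed, on \cref{l:monitorDropInactive} to discard partners with trivial projection). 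Once this structural lemma is established, the conclusion follows immediately.
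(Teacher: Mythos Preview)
The paper's proof is far more direct than yours: from $\gtToMon(\gtc{G},p,D)=\pend$ it extracts (as you do) that $\gtc{G}\wrt(p,q)=\pend$ for every $q\in D$, and then simply asserts ``by definition'' that $\gtToMon(\gtc{G'},p,D)=\pend$, after which the substitution equality is immediate since $\pend\{\pend/X\}=\pend$. It does not carry out the structural induction you propose.

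Your more cautious route correctly recognises that the body's monitor need not obviously be $\pend$ --- the algorithm runs on $\gtc{G'}$ with the original $D$, so \cref{li:monNoDep} and the dependency cases may well be visited. But your ``telescopes down to $\pend$'' step is where the argument breaks. Even granting your structural description (a nesting of $\rec Y$, $\pend$, and $X$), the substitution $M\{M/X\}$ does \emph{not} yield $\pend$ when $M=X$ (you get $X$ back), nor does a $\rec Y$ prefix disappear under substitution for $X$. More seriously, your own third paragraph concedes that dependency-input prefixes from \cref{li:monDep1,li:monDep2,li:monDep3} can be emitted inside $\gtc{G'}$ --- this happens precisely when two branches project to $X$ and to $\pend$ respectively, so $\depsOn{p}{s}{\gtc{G'}}$ holds even though $\gtc{G}\wrt(p,s)=\pend$ by \cref{li:rpRecEnd}. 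Such a prefix $p\recv s\mBraces{\ldots}$ is a genuine monitor constructor that is \emph{not} eliminated by substituting for $X$, so it cannot ``get absorbed once the recursion is unfolded'' as you claim. If the stronger statement $\gtToMon(\gtc{G'},p,D)=\pend$ is to be established rigorously, these dependency-input prefixes must be ruled out by a direct argument (e.g.\ showing that whenever $\gtc{G}\wrt(p,q)=\pend$ for all $q\in D$ the algorithm on $\gtc{G'}$ only ever reaches \cref{li:monNoDep,li:monRecEnd,li:monEnd}), not left for the substitution to remove.
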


\begin{proof}
    Since $\gtToMon( \gtc{G} , p , D ) = \pend$, we have that $\gtc{G} \wrt (p,q)^{\braces{\epsi}} = \pend$ for every $q \in D$.
    It follows by definition that $\gtToMon( \gtc{G'} , p , D ) = \pend$.
    Hence, the unfolding is also $\pend$.
\end{proof}

\subsection{Proof of Soundness}
\label{s:soundnessProof}

Here we prove \cref{t:soundness}, which is a generalized version of \cref{t:soundnessMain} (Page \pageref{t:soundnessMain}).
We start with an overview of intermediate results used for the proof:
\begin{itemize}

    \item
        \Cref{l:independence} shows that transitions that do not affect parallel networks are independent, i.e., they can be executed in any order without changing the outcome.

    \item
        \Cref{l:satisfactionDropLabel} shows that we can empty the $\Lbls$ map of a satisfaction relation if the locations of all the relative types in $\RTs$ succeed all labels in $\dom(\Lbls)$, i.e., no exchanges in relative types in $\RTs$ relate to the choices recorded in $\Lbls$.

    \item
        \Cref{l:satisfactionDropUnion} shows that we can eliminate unions of relative types in the $\RTs$ map of a satisfaction relation for participants that do not depend on some exchange, allowing us to specifize those independent relative types to some chosen branch.

    \item
        \Cref{d:initSatisfied} defines a relation between global type, participants, relative types, monitor, blackbox, and buffer, such that together they are witness to a satisfaction relation.

    \item
        \Cref{l:soundnessGeneralized} shows an inductive variant of soundness, given an intermediate global type $\gtc{G_0}$ between the initial $\gtc{G}$ and the final $\gtc{G'}$.

    \item
        \Cref{t:soundness} shows soundness.

\end{itemize}

\begin{lemma}[Independence]\label{l:independence}
    Suppose $\net P_1 \| \net Q \| \net R \ltrans{ \tau } \net P'_1 \| \net Q' \| \net R$ and $\net P_2 \| \net Q \| \net R \ltrans{ \tau } \net P'_2 \| \net Q \| \net R'$, where $\net P_1 \not\equiv \net P'_1$, $\net P_2 \not\equiv \net P'_2$, $\net Q \not\equiv \net Q'$, and $\net R \not\equiv \net R'$.
    Then
    \[
        \begin{array}{@{}c@{}c@{}c@{}}
            \net P_1 \| \net P_2 \| \net Q \| \net R
            &
            {} \ltrans{ \tau } {}
            &
            \net P'_1 \| \net P_2 \| \net Q' \| \net R
            \\
            \downarrow \mkern-4mu {\scriptstyle \tau}
            &
            &
            \downarrow \mkern-4mu {\scriptstyle \tau}
            \\
            \net P_1 \| \net P'_2 \| \net Q \| \net R'
            &
            {} \ltrans{ \tau } {}
            &
            \net P'_1 \| \net P'_2 \| \net Q' \| \net R'.
        \end{array}
    \]
\end{lemma}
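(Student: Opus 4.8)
The plan is to prove the lemma by \emph{decomposing} each hypothesis transition into the rule application that produces it, observing that the two resulting communications involve disjoint components, and then recombining them in the larger network in either order.

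First I would analyze $\net P_1 \| \net Q \| \net R \ltrans{\tau} \net P'_1 \| \net Q' \| \net R$. Since $\net P_1 \not\equiv \net P'_1$ and $\net Q \not\equiv \net Q'$, this $\tau$-transition genuinely changes two distinct parallel components while leaving $\net R$ untouched. Inspecting the rules in \cref{f:ltsNetworks}, the only $\tau$-rule capable of this is \ruleLabel{out-mon-buf}: the rules \ruleLabel{par} and \ruleLabel{cong} merely frame or rearrange components; the buffered- and monitored-blackbox rules and the \ruleLabel*{error} rules act on a single component; and \ruleLabel{par-error} collapses two components into one error signal, which cannot match the shape $\net P'_1 \| \net Q' \| \net R$ (here the $\not\equiv$ hypotheses are essential, as they forbid the degenerate readings in which one side is unchanged or is an error signal). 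Hence, up to $\equiv$, the transition is an instance of \ruleLabel{out-mon-buf} with $\net Q$ as the receiving monitored blackbox and $\net R$ as a passive frame: there are a participant $q$ with $\subjs(\net Q) = \braces{q}$ and a message $n'$ with $\recip(n') = q$ such that $\net Q = \monImpl{\bufImpl{q}{Q}{\vect m}}{\mc{M}}{\vect n}$, $\net Q' = \monImpl{\bufImpl{q}{Q}{\vect m}}{\mc{M}}{n',\vect n}$, and $\net P_1 \ltrans{n'} \net P'_1$; the side conditions of the framing \ruleLabel{par}-applications used in the derivation moreover witness the needed subject-disjointness (in particular $q \notin \subjs(\net R)$). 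Symmetrically, $\net P_2 \| \net Q \| \net R \ltrans{\tau} \net P'_2 \| \net Q \| \net R'$ decomposes into: a participant $r$ with $\subjs(\net R) = \braces{r}$ and a message $n''$ with $\recip(n'') = r$ such that $\net R = \monImpl{\bufImpl{r}{R}{\vect m'}}{\mc{M'}}{\vect n'}$, $\net R' = \monImpl{\bufImpl{r}{R}{\vect m'}}{\mc{M'}}{n'',\vect n'}$, $\net P_2 \ltrans{n''} \net P'_2$, and $r \notin \subjs(\net Q)$, so in particular $r \neq q$.

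Next I would assemble the four edges of the diamond from these two independent communications. For the top edge, \ruleLabel{out-mon-buf} applied to $\net P_1 \| \net Q$ yields $\net P_1 \| \net Q \ltrans{\tau} \net P'_1 \| \net Q'$; closing this under \ruleLabel{par} (whose recipient side condition is vacuously met for a $\tau$-action, since $\recip(\tau)$ is undefined) to carry $\net P_2 \| \net R$, and under \ruleLabel{cong} to reassemble the parallel composition, gives $\net P_1 \| \net P_2 \| \net Q \| \net R \ltrans{\tau} \net P'_1 \| \net P_2 \| \net Q' \| \net R$. The left edge is obtained identically from the $\net P_2$--$\net R$ communication. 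For the right edge I would observe that $\net P_2$, $\net R$ and the transition $\net P_2 \ltrans{n''} \net P'_2$ are untouched by the top edge, so the very same \ruleLabel{out-mon-buf}/\ruleLabel{par}/\ruleLabel{cong} derivation fires from $\net P'_1 \| \net P_2 \| \net Q' \| \net R$ to $\net P'_1 \| \net P'_2 \| \net Q' \| \net R'$; the bottom edge follows symmetrically, since $\net P_1$, $\net Q$ and $\net P_1 \ltrans{n'} \net P'_1$ are untouched by the left edge. All four edges therefore exist and the square commutes at $\net P'_1 \| \net P'_2 \| \net Q' \| \net R'$.

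I expect the decomposition in the second paragraph to be the main obstacle: one must carefully enumerate the possible derivations of each hypothesis transition---including the case where $\net P_1$ (resp.\ $\net P_2$) is itself a composite network, which remains compatible with \ruleLabel{out-mon-buf} because its sending component is an arbitrary network---and rule out the collapsing case of \ruleLabel{par-error} using the $\not\equiv$ assumptions, while being precise about how \ruleLabel{par}'s side condition behaves on $\tau$-actions. Once the two communications are exhibited in decomposed form, the recombination into the four edges is routine.
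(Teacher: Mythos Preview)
Your proposal is correct and follows essentially the same approach as the paper: decompose each hypothesis $\tau$-transition into an application of \ruleLabel{out-mon-buf} (framed by \ruleLabel{par}/\ruleLabel{cong}), observe that the two resulting communications touch disjoint components, and reassemble the four edges of the diamond. Your write-up is in fact more careful than the paper's, which gives only a short informal argument; in particular you explicitly rule out \ruleLabel{par-error} and spell out why the \ruleLabel{par} side condition is vacuous for $\tau$, both of which the paper leaves implicit.
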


\begin{proof}
    By definition of the \nameref{d:ltsNetworks}, the $\tau$-transitions of both premises are derived from applications of Transition~\ruleLabel{par} and an application of Transition~\ruleLabel{out-buf} or Transition~\ruleLabel{out-mon-buf}.
    For example, in the first premise, $\net P_1$ does an output which ends up in the buffer of a (monitored) blackbox in $\net Q$, leaving $\net R$ unchanged.
    In the second premise, $\net P_2$ does an output which ends up in a buffer in $\net R$, leaving $\net Q$ unchanged.
    Hence, the outputs by $\net P_1$ and $\net P_2$ have completely different senders and recipients.
    As a result, in a network with all of $\net P_1$, $\net P_2$, $\net Q$, and $\net R$ these exchanges do not influence each other.
    The conclusion is that the order of these exchanges does not matter.
\end{proof}

\begin{lemma}\label{l:satisfactionDropLabel}
    Suppose
    \begin{itemize}
        \item $\satisfies{ \net P }[ \Lbls ]{ \RTs }[ p ]$ and
        \item that, for every $( q , \rtc{R} ) \in \RTs$, $\rtc{R} \neq \pend$ implies $\big( \bigcup_{( \mbb L , \ell ) \in \Lbls} \mbb L \big) \prefix \fstLoc(\rtc{R})$.
    \end{itemize}
    Then $\satisfies{ \net P }{ \RTs }[ p ]$.
\end{lemma}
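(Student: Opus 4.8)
The plan is to turn a satisfaction $\mathcal{R}$ at $p$ witnessing $\satisfies{\net P}[\Lbls]{\RTs}[p]$ into a new sat-signed relation $\mathcal{R}'$ witnessing $\satisfies{\net P}{\RTs}[p]$, obtained by \emph{stripping} the fixed old entries of $\Lbls$ from the label-map component of every triple of $\mathcal{R}$ reachable from $(\net P, \RTs, \Lbls)$. Concretely, for every $(\net P', \RTs', \Lbls')$ reached from $(\net P, \RTs, \Lbls)$ via the ``holds at $p$'' clauses of \cref{f:satisfaction}, I will show that $\Lbls'$ decomposes as a ``new part'' $\Lbls''$ together with $\Lbls$ (as disjoint partial maps, with $\Lbls$ left unchanged), and put $(\net P', \RTs', \Lbls'')$ into $\mathcal{R}'$; the target triple $(\net P, \RTs, \emptyset)$ then lies in $\mathcal{R}'$ by construction.

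The technical heart will be an \emph{inertness invariant} for the old entries of $\Lbls$, proved by induction on the length of the transition sequence from $(\net P, \RTs, \Lbls)$, with the lemma's hypothesis as the base case: along any such sequence, (i) no old entry of $\Lbls$ is ever overwritten or removed (so the decomposition above is well defined), and (ii) every location set annotating an exchange or dependency in any $\unfold(\rtc{R})$ with $(q, \rtc{R}) \in \RTs'$, and every location set in $\dom(\Lbls'')$, is disjoint from --- i.e.\ does not $\overlap$ --- every location set in $\dom(\Lbls)$. Part (ii) should follow because the hypothesis places the old frontier $\bigcup_{(\mbb L, \ell) \in \Lbls} \mbb L$ strictly before $\fstLoc(\rtc{R})$ for every non-$\pend$ relative type in $\RTs$, and each ``holds at'' transition only advances the frontiers of the relative types (a head exchange is replaced by a continuation whose first location strictly extends it) and records new choices at those advanced locations; together with the location-annotation discipline of \cref{d:relativeProjectionWithLocs} and the unfolding of \cref{d:unfoldRT} --- which keep annotations introduced at distinct points related by strict extension rather than equality --- no location set touched from then on can share a location with the old ones.

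Granting the invariant, I will check that each clause of \cref{f:satisfaction} transfers from $\mathcal{R}$ to $\mathcal{R}'$. The \textbf{Tau}, \textbf{End}, \textbf{Output}, and \textbf{Input} clauses only \emph{write} the label map; by (i) such a write never clobbers an old entry, so it commutes with stripping and the stripped successor is again in $\mathcal{R}'$. The \textbf{Dependency output} clause and the ``known label'' case of the \textbf{Dependency input} clause additionally \emph{read} the label map, requiring a witness $(\mbb L', j) \in \Lbls'$ with $\mbb L' \overlap \mbb L$ where $\mbb L$ is the dependency's location in $\unfold(\rtc{R})$; by (ii) this forces $(\mbb L', j) \in \Lbls''$, so the very same witness survives in $\mathcal{R}'$. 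For \textbf{Dependency input}, removing the old entries cannot turn a ``known'' case into a ``fresh'' one, again because by (ii) no old entry $\overlap$s $\mbb L$, so the dichotomy and both branches transfer. Finally, ``progresses at $p$'' is phrased purely in terms of the shapes of the relative types in the $\RTs$ component and never consults the label map, so it transfers verbatim. Hence $\mathcal{R}'$ is a satisfaction at $p$ containing $(\net P, \RTs, \emptyset)$, which is exactly $\satisfies{\net P}{\RTs}[p]$.

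I expect the main obstacle to be part (ii) of the inertness invariant: one must track precisely how the location annotations evolve across all six ``holds at'' clauses --- especially through $\unfold$, which reshuffles annotations via $\remPref$ and $\prepend$ --- and argue that the strict-prefix gap between the old frontier and the current frontier is maintained, while carefully distinguishing $\prefix$ and $\overlap$ on individual locations from the corresponding notions on sets of locations (recall that $\mbb L \prefix \mbb L'$ means ``every location of $\mbb L'$ is prefixed by some location of $\mbb L$'', not a pointwise statement). The remaining clause-by-clause verification is routine.
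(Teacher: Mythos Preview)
Your proposal is correct and follows the same underlying idea as the paper, though with vastly more detail: the paper's own proof is a two-line assertion (``By definition: clearly, if all the relative types are at a location past the recorded choices in $\Lbls$, none of the choices in $\Lbls$ will ever be used for satisfaction anymore''), and your inertness invariant is precisely the rigorous content behind that ``clearly''. Your identification of the main technical obstacle --- tracking how location annotations evolve through $\unfold$ while maintaining the strict-prefix gap --- is accurate, and the clause-by-clause transfer argument is the right shape; the paper simply does not carry out any of these steps.
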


\begin{proof}
    By definition: clearly, if all the relative types are at a location past the recorded choices in $\Lbls$, none of the choices in $\Lbls$ will ever be used for satisfaction anymore.
\end{proof}

As satisfaction iterates through a collection of relative types obtained from a global type, there will be instances where some relative types are independent of the global type's initial exchange.
By \nameref{d:relativeProjectionWithLocs}, these relative types are the union of the relative projections of the branches of the exchange.
The following lemma assures that we can drop this union and simply continue with the relative projection of the branch followed by the participants involved in the exchange (sender, recipient, and/or depending participants).

\begin{lemma}\label{l:satisfactionDropUnion}
    Suppose given
    \begin{itemize}
        \item a well-formed global type $\gtc{G} = s \send r \gtBraces{ \msg i<T_i> \gtc. \gtc{G_i} }_{i \in I}$, and
        \item $D \supseteq \prt(\gtc{G}) \setminus \braces{p}$.
    \end{itemize}
    Let $D' := \braces{q \in D \mid \braces{p,q} \subseteq \braces{s,r}} \cup \braces{q \in D \mid \braces{p,q} \overlap \braces{s,r} \wedge (\depsOn q p \gtc{G} \vee \depsOn p q \gtc{G})}$.

    Suppose $\satisfies{ \net P }{ \RTs }[ p ]$, where $\RTs = \braces{ ( q , \gtc{G_j} \wrt (p,q)^{\braces{\vect \ell,j}} ) \mid q \in D' } \cup \braces{ (q , \bigcup_{i \in I} (\gtc{G_i} \wrt (p,q)^{\braces{\vect \ell,i}}) ) \mid q \in D \setminus D' }$ for some $j \in I$.
    Let $\RTs' := \RTs \update{q \mapsto \gtc{G_j} \wrt (p,q)^{\braces{\vect \ell,j}}}_{q \in D \setminus D'}$.

    Then $\satisfies{ \net P }{ \RTs' }[ p ]$.
\end{lemma}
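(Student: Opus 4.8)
The plan is to obtain a witness relation for $\satisfies{\net P}{\RTs'}[p]$ by \emph{specializing to branch $j$} the witness relation $\mathcal{R}$ for $\satisfies{\net P}{\RTs}[p]$. Without loss of generality we may take $\mathcal{R}$ to be the relation generated from $(\net P,\RTs,\emptyset)$ by the requirements of \cref{f:satisfaction} and the progress clauses, so that every triple in $\mathcal{R}$ descends from $\RTs$ and hence shares the same ``union structure'' on the entries for $q \in D\setminus D'$. First I would pin down the shape of those entries. Because $\gtc{G}$ is well-formed, $\gtc{G}\wrt(p,q)$ is defined; and $q\notin D'$ means $\braces{p,q}\not\subseteq\braces{s,r}$ and there is no dependency on the initial exchange involving $p$ and $q$ (neither $\depsOn q p \gtc{G}$ nor $\depsOn p q \gtc{G}$, cf.\ \cref{d:depsOnWithLocs}). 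Hence, when \cref{alg:relativeProjectionWithLocs} is run on $\gtc{G}$ and $(p,q)$, only the independence case (\cref{li:rpIndepLab}) can apply, so $\erase(\gtc{G_i}\wrt(p,q)) = \erase(\gtc{G_{i'}}\wrt(p,q))$ for all $i,i'\in I$. Consequently $\RTs(q)=\bigcup_{i\in I}(\gtc{G_i}\wrt(p,q)^{\braces{\vect\ell,i}})$ and $\RTs'(q)=\gtc{G_j}\wrt(p,q)^{\braces{\vect\ell,j}}$ are \emph{erasure-equal}: they describe the same protocol between $p$ and $q$, and differ only in that every location set $\bigcup_{i\in I}(\braces{\vect\ell,i}+\mbb M)$ occurring in $\RTs(q)$ becomes its branch-$j$ slice $\braces{\vect\ell,j}+\mbb M$ in $\RTs'(q)$.

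Guided by this, I would define a specialization operation on triples $(\net Q,\mathtt{S},\mathtt{L})$ that (i) rewrites each $\mathtt{S}(q)$ with $q\in D\setminus D'$ by replacing each such union-form location set with its branch-$j$ slice, and (ii) rewrites $\dom(\mathtt{L})$ by the same slicing so that recorded choices stay aligned with the sliced locations. Let $\mathcal{R}'$ be the image of $\mathcal{R}$ under this operation; then $(\net P,\RTs',\emptyset)\in\mathcal{R}'$, since the empty map is fixed and the entries for $q\in D'$ are already the branch-$j$ projections. It then suffices to check that $\mathcal{R}'$ holds and progresses at $p$ (\cref{d:satisfaction}). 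Because erasures are preserved, every clause of \cref{f:satisfaction} governing \emph{which} action is enabled, \emph{which} branch continuation is selected, the $\pend$-clause of \textbf{(End)}, and the two progress conditions, transfers verbatim along the erasure-preserving correspondence between $\mathcal{R}$ and $\mathcal{R}'$.

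The only clauses that genuinely consult location annotations are the $\overlap$ tests against $\mathtt{L}$ in \textbf{(Dependency output)} and \textbf{(Dependency input)}. The crux, and what I expect to be the main obstacle, is to show these tests are preserved under the slicing: whenever a dependency exchange with location set $\mbb L$ is confronted with an entry $(\mbb L',j')\in\mathtt{L}$, we have $\mbb L\overlap\mbb L'$ iff the two sliced sets overlap, and a ``fresh'' situation stays ``fresh''. The reason is that a dependency $(p\lozenge t)\send q$ always refers to an exchange between $p$ and $t$ that lies inside the same branch of every choice above it; in the run being specialized, both this originating exchange (recorded in $\mathtt{L}$) and the dependency itself sit below the branch-$j$ choice at $\vect\ell$, so their location sets already witness their overlap along $\braces{\vect\ell,j}$-prefixed locations — precisely the ones that survive slicing — while $\braces{\vect\ell,i}$-prefixed locations with $i\neq j$ are never shared with a branch-$j$ entry and so are inert. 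Making this rigorous amounts to maintaining, across all six transition cases, the invariant that $\mathtt{L}$ records only choices actually taken along the run (in particular branch $j$ at the independent exchange) and that each surviving relative type's locations form either a full union cylinder over $\braces{\vect\ell,i}_{i\in I}$ (for the as-yet-untouched independent entries) or lie within a single branch cylinder; given this invariant, the $\overlap$ computations agree on both sides, and \cref{l:satisfactionDropUnion} follows. The bookkeeping-heavy part is the preservation of this invariant under the \textbf{(Output)}, \textbf{(Input)}, and dependency transitions, where $\mathtt{L}$ and the location sets are updated.
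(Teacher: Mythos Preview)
Your proposal is correct and rests on the same observation as the paper's proof: for every $q\in D\setminus D'$ the erasures of the branch projections coincide, and since the satisfaction is taken with the empty label map, the particular location annotations on those independent entries cannot influence which conditions of \cref{f:satisfaction} fire; hence replacing the union by the branch-$j$ projection is harmless.

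The difference is one of granularity rather than strategy. The paper's proof is a three-sentence sketch that simply asserts ``the locations are insignificant because $\Lbls=\emptyset$''. You make this operational by (i) taking the generated (minimal) satisfaction, (ii) defining a branch-$j$ slicing on location sets and on $\dom(\Lbls)$, and (iii) checking clause by clause that the sliced relation still satisfies \cref{d:satisfaction}. Your identification of the $\overlap$-tests in \textbf{(Dependency output)}/\textbf{(Dependency input)} as the only place where locations matter, and your argument that overlap between a dependency's location set and the recorded entry in $\Lbls$ is always witnessed along a $\braces{\vect\ell,j}$-prefixed location (so it survives slicing) while non-overlap is preserved because slicing only shrinks sets, is exactly the content the paper suppresses. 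What your route buys is an explicit witness relation and a clear account of why the ``fresh''/``known'' dichotomy is stable under specialization; what the paper's sketch buys is brevity, leaning on the reader to accept that erasure-equality plus $\Lbls=\emptyset$ settles the matter.
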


\begin{proof}
    For any $q \in D \setminus D'$, for every $i,k \in I$, $\erase(\gtc{G_i} \wrt (p,q)^{\braces{\vect \ell,k}}) = \erase(\gtc{G_k} \wrt (p,q)^{\braces{\vect \ell,k}})$.
    Since the satisfaction holds for an empty label function (i.e., $\satisfies{ \net P }[ \emptyset ]{ \RTs }[ p ]$), the locations formed in each $\gtc{G_i} \wrt (p,q)^{\braces{\vect \ell,i}}$ are insignificant.
    Hence, it suffices to simply use $\gtc{G_j} \wrt (p,q)^{\braces{\vect \ell,j}}$ for every $q \in D \setminus D'$.
\end{proof}

\begin{definition}[Initial satisfaction]\label{d:initSatisfied}
    We define \emph{initial satisfaction}, denoted $( \gtc{G} , p , \mbb L , D ) \initSatisfied ( \RTs , D' , \mc{M} , P , \vect m )$, to hold if and only if
    \begin{align*}
        D &\supseteq \prt(\gtc{G}) \setminus \braces{ p },
        \\
        \RTs &= \braces{ ( q , \gtc{G} \wrt (p,q)^{\mbb L} ) \mid q \in D },
        \\
        D' &= \braces{ q \in D \mid \unfold\big(\RTs(q)\big) \neq \pend },
        \\
        \gtToMon( \gtc{G} , p , D' ) \neq \pend
        &\implies \mc{M} = \gtToMon( \gtc{G} , p , D' )
        \\
        \gtToMon( \gtc{G}, p , D' ) = \pend
        &\implies \mc{M} \in \braces{ \pend , \checkmark }
        \\
        &\satisfies{ \monImpl{ \bufImpl{ p }{ P }{ \vect m } }{ \mc{M} }{ \epsi } }{ \RTs }[ p ]
    \end{align*}
\end{definition}

\begin{lemma}[Soundness --- Generalized]\label{l:soundnessGeneralized}
    \begin{itemize}
        \item Suppose given well-formed global types $\gtc{G}$ and $\gtc{G_0}$ such that $\prt(\gtc{G}) \geq 2$ and $\gtc{G} \ltrans{ \vect \ell_0 } \gtc{G_0}$.
        \item For every $p \in \prt(\gtc{G})$,
            \begin{itemize}
                \item suppose given a process $P_p^0$ and buffer $\vect m_p^0$, and
                \item take $\RTs_p^0 , D_p^0 , \mc{M_p^0}$ such that
                    \[
                        ( \gtc{G_0} , p , \braces{ \vect \ell_0 } , \prt(\gtc{G}) \setminus \braces{p} ) \initSatisfied ( \RTs_p^0 , D_p^0 , \mc{M_p^0} , P_p^0 , \vect m_p^0 ).
                    \]
            \end{itemize}
        \item Suppose $\net P_0 := \prod_{p \in \prt(\gtc{G})} \monImpl{ \bufImpl{ p }{ P_p^0 }{ \vect m_p^0 } }{ \mc{M_p^0} }{ \epsi } \trans* \net P'$.
    \end{itemize}
    Then there exist $\vect \ell',\gtc{G'}$ such that
    \begin{itemize}
        \item $\gtc{G_0} \ltrans{ \vect \ell' } \gtc{G'}$,
        \item for every $p \in \prt(\gtc{G})$ there exist $P'_p,\vect m'_p,\RTs'_p,D'_p,\mc{M'_p}$ such that
            \[
                ( \gtc{G'} , p , \braces{ \vect \ell_0 , \vect \ell' } , \prt(\gtc{G}) \setminus \braces{p} ) \initSatisfied ( \RTs'_p , D'_p , \mc{M'_p} , P'_p , \vect m'_p ),
            \]
            and
        \item $\net P' \trans* \prod_{p \in \prt(\gtc{G})} \monImpl{ \bufImpl{ p }{ P'_p }{ \vect m'_p } }{ \mc{M'_p} }{ \epsi }$.
    \end{itemize}
\end{lemma}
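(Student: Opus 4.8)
The plan is to prove \cref{l:soundnessGeneralized} by induction on the number $n$ of $\tau$-transitions in $\net P_0 \trans* \net P'$. In the base case $n=0$, we have $\net P' = \net P_0$, so we take $\vect \ell' = \epsi$ and $\gtc{G'} = \gtc{G_0}$; the conclusions are exactly the hypotheses, with the final $\trans*$ being the empty sequence. So the real work is the inductive step.

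\smallskip\noindent\textbf{Inductive step.}
Assume $\net P_0 \ltrans{\tau} \net P_1 \trans* \net P'$ with $n-1$ transitions from $\net P_1$. The first transition $\net P_0 \ltrans{\tau} \net P_1$ is derived (up to $\equiv$) by Transition~\ruleLabel{par} together with one of the $\tau$-rules for a single monitored blackbox, or by Transition~\ruleLabel{out-mon-buf} moving a message from one monitored blackbox into another's monitor buffer. I would do a case analysis on which rule fires and on the shape of $\gtc{G_0}$ (using \cref{l:monitorUnfold,l:unfoldRT} to unfold recursion so that $\gtc{G_0}$ can be assumed to start with an exchange or be $\pend$). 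The key driver is that each $\monImpl{\bufImpl{p}{P_p^0}{\vect m_p^0}}{\mc{M_p^0}}{\epsi}$ satisfies $\RTs_p^0$ (from $\initSatisfied$), so by the clauses of \nameref{d:satisfaction} in \cref{f:satisfaction} the $\tau$-transition either (i)~is an internal blackbox step covered by \satref{i:sTau}{Tau}, leaving $\RTs_p^0$ unchanged, or (ii)~corresponds to the monitor reading a message from its buffer into the blackbox's buffer (an input or dependency-input), covered by \satref{i:sInput}{Input}/\satref{i:sDependencyInput}{Dependency Input}, or (iii)~corresponds to an output from some $p$ into $q$'s buffer that then gets moved into $q$'s monitor buffer (\satref{i:sOutput}{Output} on $p$'s side, \satref{i:sInput}{Input}/\satref{i:sDependencyInput}{Dependency Input} on $q$'s side). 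In each case I rebuild an $\initSatisfied$ predicate for the resulting network: when the step is purely internal, for $\gtc{G_0}$ itself; when it corresponds to the exchange at the head of $\gtc{G_0}$ being carried out, for $\gtc{G_j}$ where $j$ is the chosen label, so that $\gtc{G_0} \ltrans{j} \gtc{G_j}$ contributes one step to $\vect \ell'$. Here \cref{l:satisfactionDropUnion} is used to specialize the relative types of participants not involved in the head exchange to the $j$-branch, \cref{l:satisfactionDropLabel} is used to discard the $\Lbls$ map once all relative types have moved past the recorded choices, and \cref{l:monitorDropInactive,l:activeUnfold} keep the monitor-vs-$D'$ bookkeeping in $\initSatisfied$ consistent. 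Then I apply the induction hypothesis to $\net P_1$ (with the updated $\gtc{G_1}$, $\braces{\vect \ell_0,\vect \ell_1}$, and the new $\initSatisfied$ witnesses) and prepend $\vect \ell_1$ to the resulting $\vect \ell'$.

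\smallskip\noindent\textbf{Independence.}
The subtle point is that the transitions in $\net P_0 \trans* \net P'$ need not follow the order of exchanges dictated by $\gtc{G_0}$: if $\gtc{G_0}$'s head exchange is between $s$ and $r$ but the first $\tau$-transition belongs to an unrelated exchange elsewhere, I cannot directly blame it on $\gtc{G_0}$'s head. The plan is to use \cref{l:independence} to commute such an out-of-order transition past the head exchange once the head exchange eventually does occur further along the sequence $\net P_1 \trans* \net P'$. Concretely, I would argue that along the $n-1$ remaining transitions the head exchange of $\gtc{G_0}$ must eventually fire (because each monitored blackbox progresses at $p$, by the progress clauses of \cref{d:satisfaction}), and then repeatedly apply \cref{l:independence} to reorder the sequence so that the head exchange comes first, matching $\gtc{G_0} \ltrans{j} \gtc{G_j}$; the final trailing $\trans*$ in the conclusion absorbs the commutation remainder. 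Alternatively, and perhaps more cleanly, I handle independent exchanges by observing that an independent-branch relative type in $\RTs_p^0$ appears as a union $\bigcup_{i\in I}(\gtc{G_i}\wrt(p,q)^{\dots})$, so \cref{l:satisfactionDropUnion} lets satisfaction proceed on any such branch regardless of whether the head exchange of $\gtc{G_0}$ has happened yet—deferring the choice of $j$ until the head exchange actually fires.

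\smallskip\noindent\textbf{Main obstacle.}
I expect the reordering/independence argument to be the crux: bookkeeping which transitions "belong to" the head exchange of $\gtc{G_0}$ versus independent sub-protocols, and showing that \cref{l:independence} can be iterated to bring the head exchange to the front without disturbing the $\initSatisfied$ invariants, is the technically delicate part. The rest is a careful but routine traversal of the \nameref{d:satisfaction} clauses and the $\initSatisfied$ conditions, combined with the unfolding lemmas (\cref{l:monitorUnfold,l:unfoldRT,l:monitorUnfoldEnd}) to handle recursion uniformly with the non-recursive cases.
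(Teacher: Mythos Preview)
Your overall structure (induction on the number of $\tau$-transitions, base case trivial, unfolding recursion via \cref{l:monitorUnfold,l:unfoldRT}, and invoking \cref{l:independence} to reorder independent exchanges) matches the paper. However, your inductive step has a real gap: you consume \emph{exactly one} $\tau$-transition and then claim you can ``rebuild an $\initSatisfied$ predicate for the resulting network'' for either $\gtc{G_0}$ or some $\gtc{G_j}$. This is not generally possible. Recall that $\initSatisfied$ requires, for every participant $p$, that the monitor buffer is $\epsi$ and that the monitor equals $\gtToMon(\gtc{G},p,D')$ on the nose. Now consider the single $\tau$-step in which the sender $s$ outputs $\msg j<T_j>$ and the message lands in $r$'s monitor buffer (Transition~\ruleLabel{out-mon-buf}). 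After that one step, $s$'s monitor is $s \send \depsVar_s (j) \mc. \gtToMon(\gtc{H_j},s,\ldots)$, which is neither $\gtToMon(\gtc{G_0},s,\ldots)$ nor $\gtToMon(\gtc{H_j},s,\ldots)$; and $r$'s monitor buffer is non-empty. So neither participant meets $\initSatisfied$ for any global type, and you cannot invoke the induction hypothesis on $\net P_1$.

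The paper's proof avoids this by not peeling off a single transition. Instead, given the head exchange of (the unfolding of) $\gtc{G_0}$, it drives the network forward through \emph{all} the transitions that constitute that exchange---the output by $s$, all of $s$'s dependency outputs, the receipt and monitor-move at $r$, all of $r$'s dependency outputs, and the dependency inputs at every $q\in\depsVar_s\cup\depsVar_r$---until every participant's monitored blackbox is back in a state with empty monitor buffer and monitor equal to $\gtToMon(\gtc{H_j},\cdot,\cdot)$. Only then is $\initSatisfied$ re-established (for $\gtc{H_j}$) and the induction hypothesis applied. Crucially, this batch of transitions may \emph{overshoot} $\net P'$; the paper explicitly handles this (``if at this point we already passed through $\net P'$, the thesis is proven''), using the conclusion's trailing $\net P' \trans* \ldots$ to absorb the extra steps. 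Your plan never accounts for these intermediate states, and your claim that the head exchange ``must eventually fire'' within the observed $n$ transitions is also unjustified: all $n$ observed transitions could belong to independent exchanges, with the head exchange occurring only in the trailing $\trans*$ that the conclusion provides. You should restructure the inductive step to complete the whole head exchange before re-invoking the hypothesis.
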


\begin{proof}
    By induction on the number of transitions $n$ from $\net P_0$ to $\net P'$ (\ih{1}).

    In the base case, where $n = 0$, the thesis follows immediately, because $\net P' = \net P_0$.
    To be precise, the assumption satisfies the conclusion by letting: $\gtc{G'} = \gtc{G_0}$; $\vect \ell = \epsi$; for every $p \in \prt(\gtc{G})$, $P'_p = P_p^0$, $\vect m'_p = \vect m_p^0$, $D'_p = D_p^0$, $\mc{M'_p} = \mc{M_p^0}$, $\RTs'_p = \RTs_p^0$.

    In the inductive case, where $n \geq 1$, we use the shape of $\gtc{G_0}$ and the assumed satisfactions of the monitored processes to determine the possible transitions from $\net P_0$.
    We then follow these transitions, and show that we can reach a network $\net P_1$ where all the monitored processes satisfy some $\gtc{G_1}$ with $\gtc{G_0} \ltrans{ \ell_1 } \gtc{G_1}$.
    If at this point we already passed through $\net P'$, the thesis is proven.
    Otherwise, the thesis follows from \ih{1}, because the number of transitions from $\net P_1$ is less than $n$.

    There is a subtlety that we should not overlook: $\gtc{G_0}$ may contain several consecutive, independent exchanges.
    For example, suppose $\gtc{G_0} = p \send q \gtBraces{ \msg i<T_i> \gtc. s \send r \gtBraces{ \msg j <T_j> \gtc. \gtc{G_{i,j}} }_{j \in J} }_{i \in I}$ where $\braces{p,q} \disjoint \braces{s,r}$.
    Monitors nor satisfaction can prevent the exchange between $s$ and $r$ from happening before the exchange between $p$ and $q$ has been completed.
    Hence, the transitions from $\net P_0$ to $\net P'$ may not entirely follow the order specified by $\gtc{G_0}$.

    We deal with this issue by applying induction on the number of out-of-order exchanges observed in the transitions to $\net P'$.
    We then follow the transitions determined by monitors and satisfaction, in the order specified by $\gtc{G_0}$, effectively ``postponing'' the out-of-order exchanges until it is their turn.
    We keep doing this, until we have eventually passed through all the postponed out-of-order exchanges.
    At this point, we have found an alternative path from $\net P_0$ to the final network.
    To reconcile this alternative path with the path from $\net P_0$ to $\net P'$, we apply \nameref{l:independence}.
    This lemma essentially states that independent exchanges may be performed in any order, as they do not influence each other.
    Hence, we use \nameref{l:independence} to move the postponed transitions back to their original position in the path from $\net P_0$ to $\net P'$, proving the thesis.
    Hereafter, we assume independent exchanges dealt with.

    As a first step in our analysis, we consider the fact that $\gtc{G_0}$ may start with recursive definitions.
    Let $\gtc{G_1} := \unfold(\gtc{G_0})$; by definition, $\gtc{G_1}$ does not start with recursive definitions.
    For every $p \in \prt(\gtc{G})$, let $\mc{M_p^1} := \unfold(\mc{M_p^0})$; by \cref{l:monitorUnfold}, $\mc{M_p^1} = \gtToMon(\gtc{G_1},p,D_p^0)$, and, by definition of the \nameref{d:ltsNetworks}, the LTS of $\monImpl{ \bufImpl{ p }{ P_p^0 }{ \vect m_p^0 } }{ \mc{M_p^1} }{ \epsi }$ is equivalent to that of $\monImpl{ \bufImpl{ p }{ P_p^0 }{ \vect m_p^0 } }{ \mc{M_p^0} }{ \epsi }$.
    For every $p \in \prt(\gtc{G})$, let $\RTs_p^1 := \braces{ ( q , \unfold(\RTs_p^0(q)) ) \mid q \in D_p^0 }$; by \cref{l:unfoldRT}, for every $q \in D_p^0$, $\RTs_p^1(q) = \gtc{G_1} \wrt (p,q)$.
    The conditions for Satisfaction in \cref{f:satisfaction} unfold any relative types.
    Hence, we can reuse the satisfaction given by the original initial satisfaction, to show that $(\gtc{G_1},p,\braces{\vect l_0},\prt(\gtc{G}) \setminus \braces{p}) \initSatisfied (\RTs_p^1,D_p^0,\mc{M_p^1},P_p^0,\vect m_p^0)$.
    We then continue our analysis from this new unfolded initial satisfaction, for which all results transfer back to the original initial satisfaction.

    The rest of our analysis depends on the shape of $\gtc{G_1}$ (exchange or end).
    \begin{itemize}
        \item
            Exchange: $\gtc{G_1} = s \send r \gtBraces{ \msg i<T_i> \gtc. \gtc{H_i} }_{i \in I}$.

            Let us make an inventory of all relative types and monitors at this point.
            We use this information to determine the possible behavior of the monitored blackboxes, and their interactions.
            This behavior, in combination with satisfaction, allows us to determine exactly how the network evolves and reaches a state required to apply \ih1.

            \begin{itemize}
                \item
                    For $s$ we have:
                    \begin{align*}
                        \RTs_s^1(r)
                        &= s \send r^{\braces{\vect \ell_0}} \rtBraces{ \msg i<T_i> \rtc. (\gtc{H_i} \wrt (s,r)^{\braces{\vect \ell_0,i}}) }_{i \in I}
                        \\
                        \depsVar_s
                        &= \braces{ q \in \prt(\gtc{G}) \mid \depsOn q s \gtc{G_1} }
                        \\
                        \RTs_s^1(q)
                        &= (s \send r) \send q^{\braces{\vect \ell_0}} \rtBraces{ i \rtc. (\gtc{H_i} \wrt (s,q)^{\braces{\vect \ell_0,i}}) }_{i \in I}
                        \quad [q \in \depsVar_s]
                        \\
                        \RTs_s^1(q)
                        &= \bigcup_{i \in I} (\gtc{H_i} \wrt (s,q)^{\braces{\vect \ell_0,i}})
                        \quad [q \in \prt(\gtc{G}) \setminus \braces{s,r} \setminus \depsVar_s]
                        \\
                        \mc{M_s^1}
                        &= s \send r \mBraces{ \msg i<T_i> \mc. s \send \depsVar_s (i) \mc. \gtToMon( \gtc{H_i} , s , D_s^0 ) }_{i \in I}
                    \end{align*}

                    \satref{i:sOutput}{Output} allows the monitored blackbox of $s$ to send $\msg j<T_j>$ for any $j \in I$ to $r$.
                    Then \satref{i:sDependencyOutput}{Dependency Output} allows the monitored blackbox to send $j$ to all $q \in \depsVar_s$ (concurrently).
                    However, there may be other relative types in $\RTs_s^1$ that allow/require the monitored blackbox of to perform other tasks.

                    Since the outputs above precede any other communications in $\RTs_s^1$ (they originate from the first exchange in $\gtc{G_1}$), by the progress property of \nameref{d:satisfaction}, the monitored blackbox will keep transitioning.
                    The monitor $\mc{M_s^1}$ requires the blackbox to first perform the output above, and then the dependency outputs above.
                    The buffered blackbox does not take any transitions other than $\tau$ and the outputs above: otherwise, the monitor would transition to an error signal, which cannot transition, contradicting the satisfaction's progress property.

                    It then follows that the monitored blackbox sends $\msg j<T_j>$ to $r$ for some $j \in I$, after which it sends $j$ to each $q \in \depsVar_s$ (in any order), possibly interleaved with $\tau$-transitions from the blackbox (finitely many, by \bref{i:bTau}{Finite $\tau$}) or from the buffered blackbox reading messages; let $P_s^2$ and $\vect m_s^2$ be the resulting blackbox and buffer, respectively.
                    By \nameref{d:satisfaction}, we have the following:
                    \begin{align*}
                        \RTs_s^2
                        &:= \RTs_s^1
                        \update{q \mapsto \gtc{H_j} \wrt (s,q)^{\braces{\vect \ell_0,j}}}_{q \in \braces{r} \cup \depsVar_s}
                        \\
                        \mc{M_s^2}
                        &:= \gtToMon( \gtc{H_j} , s , D_s^0 )
                        \\
                        &\satisfies{ \monImpl{ \bufImpl{ s }{ P_s^2 }{ \vect m_s^2 } }{ \mc{M_s^2} }{ \epsi } }[ \braces{ (\braces{\vect \ell_0} , j) } ]{ \RTs_s^2 }[ s ]
                    \end{align*}
                    Clearly, for each $q \in \dom(\RTs_s^2)$, $\vect \ell_0 \leq \fstLoc( \RTs_s^2(q) )$.
                    Hence, by \cref{l:satisfactionDropLabel}, $\satisfies{ \monImpl{ \bufImpl{ s }{ P_s^2 }{ \vect m_s^2 } }{ \mc{M_s^2} }{ \epsi } }{ \RTs_s^2 }[ s ]$.
                    Let $\RTs_s^3 := \RTs_s^2 \update{q \mapsto \gtc{H_j} \wrt (r,q)^{\braces{\vect \ell_0,j}}}_{q \in D \setminus \depsVar_s \setminus \braces{s,r}}$.
                    Then, by \cref{l:satisfactionDropUnion}, $\satisfies{ \monImpl{ \bufImpl{ s }{ P_s^2 }{ \vect m_s^2 } }{ \mc{M_s^2} }{ \epsi } }{ \RTs_s^3 }[ s ]$.
                    Let $D_s^3 := \braces{ q \in \prt(\gtc{G}) \setminus \braces{s} \mid \unfold\big(\RTs_s^3(q)\big) \neq \pend }$ and $M_s^3 := \gtToMon( \gtc{H_j} , p , D_s^1 )$.
                    By \cref{l:monitorDropInactive}, $\mc{M_s^2} = \mc{M_s^3}$.
                    We have $\gtc{G_1} \ltrans{j} \gtc{H_j}$.
                    In conclusion,
                    \[
                        ( \gtc{H_j} , s , \braces{\vect \ell_0,j} , \prt(\gtc{G}) \setminus \braces{s} ) \initSatisfied ( \RTs_s^3 , D_s^3 , \mc{M_s^3} , P_s^2 , \vect m_s^2 ),
                    \]
                    such that the premise of \ih1 is satisfied for $s$.

                \item
                    For $r$ we have:
                    \begin{align*}
                        \RTs_r^1(s)
                        &= s \send r^{\braces{\vect \ell_0}} \rtBraces{ \msg i<T_i> \rtc. (\gtc{H_i} \wrt (r,s)^{\braces{\vect \ell_0,i}}) }_{i \in I}
                        \\
                        \depsVar_r
                        &= \braces{ q \in \prt(\gtc{G}) \mid \depsOn q r \gtc{G_1} }
                        \\
                        \RTs_r^1(q)
                        &= (r \recv s) \send q^{\braces{\vect \ell_0}} \rtBraces{ i \rtc. (\gtc{H_i} \wrt (r,q)^{\braces{\vect \ell_0,i}}) }_{i \in I}
                        \quad [q \in \depsVar_r]
                        \\
                        \RTs_r^1(q)
                        &= \bigcup_{i \in I} (\gtc{H_i} \wrt (r,q)^{\braces{\vect \ell_0,i}})
                        \quad [q \in \prt(\gtc{G}) \setminus \braces{s,r} \setminus \depsVar_r]
                        \\
                        \mc{M_r^1}
                        &= r \recv s \mBraces{ \msg i<T_i> \mc. r \send \depsVar_r (i) \mc. \gtToMon( \gtc{H_i} , r , D_r^0 ) }_{i \in I}
                    \end{align*}

                    Since $s$ sends $\msg j<T_j>$ to $r$, by Transition~\ruleLabel{out-mon-buf}, this message will end up in the buffer of the monitored blackbox of $r$.
                    The monitor $\mc{M_r^1}$ moves this message to the blackbox's buffer, and proceeds to send dependency messages $j$ to all $q \in \depsVar_r$ (concurrently).
                    Following the same reasoning as above, the monitored blackbox will keep outputting the dependencies above, possibly interleaved with $\tau$-transitions (from the blackbox or from the buffered blackbox reading messages), before doing anything else.

                    Let $P_r^2$ and $\vect m_r^2$ be the resulting blackbox and buffer, respectively.
                    By \nameref{d:satisfaction}, we have the following (applying \cref{l:satisfactionDropLabel,l:satisfactionDropUnion,l:monitorDropInactive} immediately):
                    \begin{align*}
                        \RTs_r^2
                        &:= \braces{ ( q , \gtc{H_j} \wrt (r,q)^{\braces{\vect \ell_0,j}} ) \mid q \in \prt(\gtc{G}) \setminus \braces{r} }
                        \\
                        D_r^2
                        &:= \braces{ q \in \prt(\gtc{G}) \setminus \braces{s} \mid \unfold\big(\RTs_r^2(q)\big) \neq \pend }
                        \\
                        \mc{M_r^2}
                        &:= \gtToMon( \gtc{H_j} , r , D_r^2 )
                        \\
                        &\satisfies{ \monImpl{ \bufImpl{ r }{ P_r^2 }{ \vect m_r^2 } }{ \mc{M_r^2} }{ \epsi } }{ \RTs_r^2 }[ r ]
                    \end{align*}
                    In conclusion,
                    \[
                        ( \gtc{H_j} , r , \braces{\vect \ell_0,j} , \prt(\gtc{G}) \setminus \braces{r} ) \initSatisfied ( \RTs_r^2 , D_r^2 , \mc{M_r^2} , P_r^2 , \vect m_r^2 ),
                    \]
                    such that the premise of \ih1 is satisfied for $r$.

                \item
                    For every $q \in \depsVar_s \setminus \depsVar_r$ we have:
                    \begin{align*}
                        \RTs_q^1(s)
                        &= (s \send r) \send q^{\braces{\vect \ell_0}} \rtBraces{ i \rtc. (\gtc{H_i} \wrt (q,s)^{\braces{\vect \ell_0}+i}) }_{i \in I}
                        \\
                        \RTs_q^1(q')
                        &= \bigcup_{i \in I} (\gtc{H_i} \wrt (q,q')^{\braces{\vect \ell_0}+i})
                        \quad [q' \in \prt(\gtc{G}) \setminus \braces{q,s}]
                        \\
                        \mc{M_q^1}
                        &= q \recv s \mBraces{ i \mc. \gtToMon( \gtc{H_i} , q , D_q^0 ) }_{i \in I}
                    \end{align*}

                    Since $s$ sends $j$ to $q$, by Transition~\ruleLabel{out-mon-buf}, this message will end up in the buffer of the monitored blackbox of $q$.
                    The monitor $\mc{M_q^1}$ moves this message to the blackbox's buffer.

                    Let $P_q^2$ and $\vect m_q^2$ be the resulting blackbox and buffer, respectively.
                    By \nameref{d:satisfaction}, we have the following (applying \cref{l:satisfactionDropLabel,l:satisfactionDropUnion,l:monitorDropInactive} immediately):
                    \begin{align*}
                        \RTs_q^2
                        &:= \braces{ ( q' , \gtc{H_j} \wrt (q,q')^{\braces{\vect \ell_0,j}} ) \mid q' \in \prt(\gtc{G}) \setminus \braces{q} }
                        \\
                        D_q^2
                        &:= \braces{ q' \in \prt(\gtc{G}) \setminus \braces{q} \mid \unfold\big(\RTs_q^2(q')\big) \neq \pend }
                        \\
                        \mc{M_q^2}
                        &:= \gtToMon( \gtc{H_j} , q , D_q^2 )
                        \\
                        &\satisfies{ \monImpl{ \bufImpl{ q }{ P_q^2 }{ \vect m_q^2 } }{ \mc{M_q^2} }{ \epsi } }{ \RTs_q^2 }[ q ]
                    \end{align*}
                    In conclusion,
                    \[
                        ( \gtc{H_j} , q , \braces{\vect \ell_0,j} , \prt(\gtc{G}) \setminus \braces{q} ) \initSatisfied ( \RTs_q^2 , D_q^2 , \mc{M_q^2} , P_q^2 , \vect m_q^2 ),
                    \]
                    such that the premise of \ih1 is satisfied for $q$.

                \item
                    For every $q \in \depsVar_r \setminus \depsVar_s$, the procedure is similar to above.

                \item
                    For every $q \in \depsVar_s \cap \depsVar_r$ we have:
                    \begin{align*}
                        \RTs_q^1(s)
                        &= (s \send r) \send q^{\braces{\vect \ell_0}} \rtBraces{ i \rtc. (\gtc{H_i} \wrt (q,s)^{\braces{\vect \ell_0}+i}) }_{i \in I}
                        \\
                        \RTs_q^1(r)
                        &= (r \recv s) \send q^{\braces{\vect \ell_0}} \rtBraces{ i \rtc. (\gtc{H_i} \wrt (q,r)^{\braces{\vect \ell_0}+i}) }_{i \in I}
                        \\
                        \RTs_q^1(q')
                        &= \bigcup_{i \in I} (\gtc{H_i} \wrt (q,q')^{\braces{\vect \ell_0}+i})
                        \quad [q' \in \prt(\gtc{G}) \setminus \braces{q,s,r}]
                        \\
                        \mc{M_q^1}
                    &= q \recv s \mBraces[\big]{ i \mc. q \recv r \mBraces{ i \mc. \gtToMon( \gtc{H_i} , q , D_q^0 ) } \cup \mBraces{ j \mc. \perror }_{j \in I \setminus \braces{i}} }_{i \in I}
                    \end{align*}

                    By Transition~\ruleLabel{out-mon-buf}, from $s$ and $r$ the message $j$ will end up in the buffer of the monitored blackbox of $q$.
                    The monitor $\mc{M_q^1}$ first moves the message from $s$ to the blackbox's buffer, and then the message from $r$.
                    Even if the message from $r$ is the first to end up in the monitor's buffer, this order of reception is enforced because buffers allow exchange of message from different senders.
                    Because $s$ and $r$ send the same $j$, the monitor will not reach the error state.

                    Let $P_q^2$ and $\vect m_q^2$ be the resulting blackbox and buffer, respectively.
                    By \nameref{d:satisfaction}, we have the following (applying \cref{l:satisfactionDropLabel,l:satisfactionDropUnion,l:monitorDropInactive} immediately):
                    \begin{align*}
                        \RTs_q^2
                        &:= \braces{ ( q' , \gtc{H_j} \wrt (q,q')^{\braces{\vect \ell_0,j}} ) \mid q' \in \prt(\gtc{G}) \setminus \braces{q} }
                        \\
                        D_q^2
                        &:= \braces{ q' \in \prt(\gtc{G}) \setminus \braces{q} \mid \unfold\big(\RTs_q^2(q')\big) \neq \pend }
                        \\
                        \mc{M_q^2}
                        &:= \gtToMon( \gtc{H_j} , q , D_q^2 )
                        \\
                        &\satisfies{ \monImpl{ \bufImpl{ q }{ P_q^2 }{ \vect m_q^2 } }{ \mc{M_q^2} }{ \epsi } }{ \RTs_q^2 }[ q ]
                    \end{align*}
                    In conclusion,
                    \[
                        ( \gtc{H_j} , q , \braces{\vect \ell_0,j} , \prt(\gtc{G}) \setminus \braces{q} ) \initSatisfied ( \RTs_q^2 , D_q^2 , \mc{M_q^2} , P_q^2 , \vect m_q^2 ),
                    \]
                    such that the premise of \ih1 is satisfied for $q$.

                \item
                    For every $q \in \prt(\gtc{G}) \setminus \braces{s,r} \setminus \depsVar_s \setminus \depsVar_r$ we have:
                    \begin{align*}
                        \RTs_q^1(q')
                        &= \bigcup_{i \in I} (\gtc{H_i} \wrt (q,q')^{\braces{\vect \ell_0}+i})
                        \quad [q' \in \prt(\gtc{G}) \setminus \braces{q}]
                        \\
                        \mc{M_q^1}
                        &= \gtToMon( \gtc{H_k} , q , D_q^0 )
                        \quad [\text{arbitrary } k \in I]
                    \end{align*}
                    If we observe transitions from $q$, we are guaranteed that the behavior is independent of the current exchange between $s$ and $r$ in $\gtc{G_1}$: otherwise, $q \in \depsVar_s \cup \depsVar_r$.
                    Hence, as mentioned before, we can safely postpone these steps.

                    We have (applying \cref{l:satisfactionDropLabel,l:monitorDropInactive}):
                    \begin{align*}
                        \RTs_q^2
                        &:= \braces{ ( q' , \gtc{H_j} \wrt (q,q')^{\braces{\vect \ell_0,j}} ) \mid q' \in \prt(\gtc{G}) \setminus \braces{q} }
                        \\
                        D_q^2
                        &:= \braces{ q' \in \prt(\gtc{G}) \setminus \braces{q} \mid \unfold\big(\RTs_q^2(q')\big) \neq \pend }
                        \\
                        \mc{M_q^2}
                        &:= \gtToMon( \gtc{H_j} , q , D_q^2 )
                        \\
                        &\satisfies{ \monImpl{ \bufImpl{ p }{ P_q^0 }{ \vect m_q^0 } }{ \mc{M_q^2} }{ \epsi } }{ \RTs_q^2 }[ q ]
                    \end{align*}
                    In conclusion,
                    \[
                        ( \gtc{H_j} , q , \braces{\vect \ell_0,j} , \prt(\gtc{G}) \setminus \braces{q} ) \initSatisfied ( \RTs_q^2 , D_q^2 , \mc{M_q^2} , P_q^2 , \vect m_q^2 ),
                    \]
                    such that the premise of \ih1 is satisfied for $q$.
            \end{itemize}

            At this point, the entire premise of \ih1 is satisfied.
            Hence, the thesis follows by \ih1.

        \item
            End: $\gtc{G_0} = \pend$.

            For every $p \neq q \in \prt(\gtc{G})$, we have $\RTs_p^1(q) = \pend$.
            By the definition of \nameref{d:initSatisfied}, for each $p \in \prt(\gtc{G})$, $\mc{M_p^1} \in \braces{\pend,\checkmark}$.
            By the progress properties of \nameref{d:satisfaction}, each monitored blackbox will continue performing transitions; these can only be $\tau$-transitions, for any other transition leads to an error signal or a violation of Satisfaction.
            That is, each monitored blackbox will be reading messages from buffers or doing internal computations.
            However, there are only finitely many messages, and, by \bref{i:bTau}{Finite $\tau$}, each blackbox is assumed to only perform finitely many $\tau$-transitions in a row.
            Hence, at some point, we must see an $\pend$-transition from each monitored blackbox.

            Suppose the observed transition originates from $p \in \prt(\gtc{G})$.
            Suppose $\mc{M_p^1} = \pend$.
            Then the transition is either labeled $\tau$ or $\pend$.
            If the label is $\tau$, \ih1 applies with premise trivially satisfied.
            If the label is $\pend$, we follow \satref{i:sEnd}{End} to apply \ih1.

            It cannot be that $\mc{M_p^1} = \checkmark$: the \nameref{d:ltsNetworks} does not define any transitions, contradicting the observed transition.
            \qedhere
    \end{itemize}
\end{proof}

\begin{theorem}[Soundness]\label{t:soundness}
    \begin{itemize}
        \item Suppose given a well-formed global type $\gtc{G}$.
        \item For every $p \in \prt(\gtc{G})$, suppose given a process $P_p$, and take $\RTs_p,\mc{M_p}$ such that
            \[
                ( \gtc{G} , p , \braces{\epsi} , \prt(\gtc{G}) \setminus \braces{p} ) \initSatisfied ( \RTs_p , \prt(\gtc{G}) \setminus \braces{p} , \mc{M_p} , P_p , \epsi ).
            \]
        \item Suppose also $\prod_{p \in \prt(\gtc{G})} \monImpl{ \bufImpl{ p }{ P_p }{ \epsi } }{ \mc{M_p} }{ \epsi } \trans* \net P'$.
    \end{itemize}
    Then there exist $\vect \ell',\gtc{G'}$ such that
    \begin{itemize}
        \item $\gtc{G} \ltrans{ \vect \ell' } \gtc{G'}$,
        \item for every $p \in \prt(\gtc{G})$ there exist $P'_p,\vect m'_p,\RTs'_p,D'_p,\mc{M'_p}$ such that
            \begin{itemize}
                \item $( \gtc{G'} , p , \braces{ \ell' } , \prt(\gtc{G}) \setminus \braces{p} ) \initSatisfied ( \RTs'_p , D'_p , \mc{M'_p} , P'_p , \vect m'_p )$, and
                \item $\net P' \trans* \prod_{p \in \prt(\gtc{G})} \monImpl{ \bufImpl{ p }{ P'_p }{ \vect m'_p } }{ \mc{M'_p} }{ \epsi }$.
           \end{itemize}
   \end{itemize}
\end{theorem}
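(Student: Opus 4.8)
The plan is to derive \cref{t:soundness} as an immediate instance of \nameref{l:soundnessGeneralized} (\cref{l:soundnessGeneralized}), which already carries out all the substantive work. First I would dispose of the degenerate case $\prt(\gtc{G}) < 2$: since every exchange in a global type mentions two distinct participants, either $\prt(\gtc{G}) \geq 2$ or $\gtc{G}$ contains no exchange at all, in which case (by well-formedness) $\unfold(\gtc{G}) = \pend$, the composition $\prod_{p \in \prt(\gtc{G})} \monImpl{ \bufImpl{ p }{ P_p }{ \epsi } }{ \mc{M_p} }{ \epsi }$ is empty (so the statement is vacuous), $\net P'$ is forced to equal it, and the claim holds trivially with $\gtc{G'} := \gtc{G}$ and $\vect \ell' := \epsi$. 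Hence the only interesting case is $\prt(\gtc{G}) \geq 2$, which is precisely the standing hypothesis of \cref{l:soundnessGeneralized}.

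In that case I would instantiate \cref{l:soundnessGeneralized} with $\gtc{G_0} := \gtc{G}$ and $\vect \ell_0 := \epsi$, so that the reachability premise $\gtc{G} \ltrans{ \vect \ell_0 } \gtc{G_0}$ is the empty trace and holds trivially. For each $p \in \prt(\gtc{G})$ I would take $P_p^0 := P_p$, $\vect m_p^0 := \epsi$, $\RTs_p^0 := \RTs_p$, $D_p^0 := \prt(\gtc{G}) \setminus \braces{p}$, and $\mc{M_p^0} := \mc{M_p}$; then the per-participant premise $( \gtc{G_0} , p , \braces{ \vect \ell_0 } , \prt(\gtc{G}) \setminus \braces{p} ) \initSatisfied ( \RTs_p^0 , D_p^0 , \mc{M_p^0} , P_p^0 , \vect m_p^0 )$ is literally the initial-satisfaction hypothesis assumed by \cref{t:soundness}. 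The network $\net P_0$ of \cref{l:soundnessGeneralized} then coincides with $\prod_{p \in \prt(\gtc{G})} \monImpl{ \bufImpl{ p }{ P_p }{ \epsi } }{ \mc{M_p} }{ \epsi }$, and the theorem's assumption $\net P_0 \trans* \net P'$ discharges the remaining premise.

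Applying \cref{l:soundnessGeneralized} then yields $\vect \ell'$, $\gtc{G'}$, and, for each $p$, witnesses $P'_p , \vect m'_p , \RTs'_p , D'_p , \mc{M'_p}$ with $\gtc{G_0} \ltrans{ \vect \ell' } \gtc{G'}$, with $( \gtc{G'} , p , \braces{ \vect \ell_0 , \vect \ell' } , \prt(\gtc{G}) \setminus \braces{p} ) \initSatisfied ( \RTs'_p , D'_p , \mc{M'_p} , P'_p , \vect m'_p )$, and with $\net P' \trans* \prod_{p \in \prt(\gtc{G})} \monImpl{ \bufImpl{ p }{ P'_p }{ \vect m'_p } }{ \mc{M'_p} }{ \epsi }$. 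Since $\vect \ell_0 = \epsi$, the leading prefix vanishes (the location $\vect \ell_0 , \vect \ell'$ is just $\vect \ell'$) and $\gtc{G_0} = \gtc{G}$, so this is exactly the statement of \cref{t:soundness}. I do not anticipate any real obstacle at this level: the genuine difficulties — the induction on the number of network $\tau$-transitions, restoring the global-type order of concurrent, out-of-order exchanges via \nameref{l:independence}, the recursion-unfolding bookkeeping (\cref{l:monitorUnfold,l:unfoldRT} together with the transfer of $\initSatisfied$ through $\unfold$), and the case split on whether the unfolded global type is an exchange or $\pend$ — are all absorbed into the proof of \cref{l:soundnessGeneralized}; here it only remains to handle the degenerate participant count and to check that the instantiation matches the lemma's hypotheses verbatim.
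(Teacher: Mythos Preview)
Your proposal is correct and matches the paper's own proof, which simply states that the theorem follows directly from \cref{l:soundnessGeneralized} given $\gtc{G} \ltrans{\epsi} \gtc{G}$. Your additional handling of the degenerate case $\prt(\gtc{G}) < 2$ is a reasonable bit of extra care that the paper leaves implicit.
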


\begin{proof}
    Follows directly from \cref{l:soundnessGeneralized}, given $\gtc{G} \ltrans{ \epsi } \gtc{G}$.
\end{proof}

\begin{theorem}[Error Freedom]\label{t:errorFreedom}
    Suppose given a well-formed global type $\gtc{G}$.
    For every $p \in \prt(G)$, suppose given a process $P_p$, and take $\RTs_p,\mc{M_p}$ such that $( \gtc{G} , p , \braces{ \epsi } , \prt(\gtc{G}) \setminus \braces{p} ) \initSatisfied ( \RTs_p , \prt(\gtc{G}) \setminus \braces{p} , \mc{M_p} , P_p , \epsi )$.
    If $\net P := \prod_{p \in P} \monImpl{ \bufImpl{ p }{ P_p }{ \epsi } }{ \mc{M_p} }{ \epsi } \trans* \net P'$ , then there are no $\net Q,D$ such that $\net P' \trans* \net Q \| \perror_D$.
\end{theorem}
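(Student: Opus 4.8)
The plan is to obtain this statement as a short corollary of Soundness (\cref{t:soundness}), combined with the simple observation that error signals can never be removed once they appear. I would first establish this \emph{persistence of errors}: if $\net R \trans* \net R'$ and $\net R$ has a subterm of the form $\perror_{D_0}$ (for some $D_0$), then so does $\net R'$. This follows by a routine induction on the length of the reduction, via a rule-by-rule inspection of the \nameref{d:ltsNetworks} in \cref{f:ltsNetworks}: the only transition rule whose subject is an error signal is \ruleLabel{par-error}, which merely merges $\perror_D$ with an adjacent monitored blackbox into $\perror_{D \cup \braces{p}}$; every other rule operates on a single monitored blackbox (possibly \emph{creating} a new error) and leaves any existing error subterm in place; and the structural congruence $\equiv$ only reassociates and commutes parallel composition, hence preserves the multiset of atomic components and, in particular, the presence of an $\perror_{(-)}$ subterm.

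Next I would argue by contradiction. Suppose $\net P \trans* \net P'$ and $\net P' \trans* \net Q \| \perror_D$ for some $\net Q$ and $D$; concatenating the two reductions gives $\net P \trans* \net Q \| \perror_D$. By hypothesis, $( \gtc{G} , p , \braces{ \epsi } , \prt(\gtc{G}) \setminus \braces{p} ) \initSatisfied ( \RTs_p , \prt(\gtc{G}) \setminus \braces{p} , \mc{M_p} , P_p , \epsi )$ holds for every $p \in \prt(\gtc{G})$, so \cref{t:soundness} applies to this reduction: there exist $\vect \ell'$, $\gtc{G'}$, and $P'_p , \vect m'_p , \RTs'_p , D'_p , \mc{M'_p}$ for each $p \in \prt(\gtc{G})$ such that $\gtc{G} \ltrans{ \vect \ell' } \gtc{G'}$ and
\[
    \net Q \| \perror_D \trans* \prod_{p \in \prt(\gtc{G})} \monImpl{ \bufImpl{ p }{ P'_p }{ \vect m'_p } }{ \mc{M'_p} }{ \epsi } .
\]
The source of this reduction contains the error subterm $\perror_D$, so by the persistence property the target does too. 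But $\prod_{p \in \prt(\gtc{G})} \monImpl{ \bufImpl{ p }{ P'_p }{ \vect m'_p } }{ \mc{M'_p} }{ \epsi }$ is, syntactically, a parallel composition of monitored blackboxes with no $\perror_{(-)}$ subterm --- a contradiction. Hence no such $\net Q$ and $D$ exist, which proves the claim.

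I do not expect a genuine obstacle here: essentially all the work is already done inside \cref{t:soundness} (which in turn rests on \cref{l:soundnessGeneralized}). The only point that needs (mild) care is stating and checking the persistence-of-errors invariant precisely --- in particular, confirming that neither any transition rule nor any use of $\equiv$ can turn a network containing $\perror_{(-)}$ into one that does not --- after which the contradiction with Soundness is immediate.
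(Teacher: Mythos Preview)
Your proposal is correct and follows essentially the same approach as the paper: derive a contradiction from \cref{t:soundness} by observing that a network containing an error signal cannot reduce to a pure product of monitored blackboxes. The only cosmetic difference is that the paper first collapses $\net Q \| \perror_D$ to a single $\perror_{D'}$ via repeated applications of \ruleLabel{par-error} and then notes that $\perror_{D'}$ admits no transitions, whereas you instead prove a persistence-of-errors invariant; both are straightforward ways of expressing that errors cannot be eliminated.
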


\begin{proof}
    Suppose, towards a contradiction, that $\net P' \trans* \net Q \| \perror_D$.
    Then, by multiple applications of Transition~\ruleLabel{error-par}, $\net Q \| \perror_D \trans* \perror_{D'}$ for some $D'$.
    Hence, $\net P \trans* \perror_{D'}$.
    Then, by \nameref{t:soundness}, $\perror_{D'}$ would further transition.
    However, the LTS of networks does not specify any transitions for error signals: a contradiction.
\end{proof}

\subsection{Proof of Transparency}
\label{s:transparencyProof}

Here we prove \cref{t:transparencyBisim}, which is the full version of \cref{t:transparencyMain} (\cpageref{t:transparencyMain}).
We start with an overview of intermediate results used for the proof:
\begin{itemize}

    \item
        \Cref{d:ltsRelativeTypes} defines an LTS for relative types, where transitions can only be (dependency) outputs.

    \item
        \Cref{d:coherentSetup} defines a relation between a global type, participant, monitor, $\RTs$, and buffer.
        The idea is that the relative types in $\RTs$ may be in an intermediate state between exchanges in the global type, reflected by messages in the buffer.

    \item
        \Cref{l:satisfactionNoError} shows that monitored blackboxes in a satisfaction relation do not transition to an error state through the \nameref{d:enhancedLTS} (\cref{d:enhancedLTS}).

    \item
        \Cref{l:satisfactionReverseBuffer} shows that messages in the buffer of a monitored blackbox of $p$ in a satisfaction relation are related to exchanges to $p$ in $\RTs$ of the satisfaction relation.

    \item
        \Cref{l:satisfactionOutput} shows that an output transitions of a monitored blackbox in a satisfaction relation implies that the monitor is a related output, possibly preceded by a dependency output.

    \item
        \Cref{l:satisfactionEnd} shows that an $\pend$-transition of a monitored blackbox in a satisfaction relation implies that the monitor is $\pend$, possibly preceded by a dependency output.

    \item
        \Cref{l:finiteReceive} shows that a message in the buffer of a monitored blackbox relates to an exchange in a global type, reachable in finitely many steps.

    \item
        \Cref{l:monitorNoOutput} shows that if the blackbox of a monitored blackbox in a satisfaction relation does an input transition, then the monitor is a sequence of inputs ending in an input related to the input transition.

    \item
        \Cref{l:monitorNotCheck} shows that if the blackbox of a monitored blackbox in a satisfaction relation does a $\tau$-transition, then the monitor is not $\mc{\checkmark}$.

    \item
        \Cref{l:labelOracleDepOut} shows that label oracles are equal for relative types prior and past dependency outputs.

    \item
        \Cref{l:LOAlpha} shows a precise relation between actions and label oracles.

    \item
        \Cref{t:transparencyBisim} shows transparency.

\end{itemize}

\begin{definition}[LTS for Relative Types]\label{d:ltsRelativeTypes}
    We define an LTS for relative types, denoted $\rtc{R} \ltrans{ m }_p \rtc{R'}$, where actions are output (dependency) messages $m$ (\cref{f:netGrammars} (top)) and $p$ is the receiving participant, by the following rules:
    \begin{mathparpagebreakable}
        \begin{bussproof}
            \bussAssume{
                j \in I
            }
            \bussUn{
                q \send p^{\mbb L} \rtBraces{ \msg i<T_i> \rtc. \rtc{R_i} }_{i \in I}
                \mathrel{\raisebox{-2.0pt}{$\ltrans{ q \send p (\msg j<T_j>) }_p$}}
                \rtc{R_j}
            }
        \end{bussproof}
        \and
        \begin{bussproof}
            \bussAssume{
                j \in I
            }
            \bussUn{
                (q \lozenge r) \send p^{\mbb L} \rtBraces{ i \rtc. \rtc{R_i} }_{i \in I}
                \mathrel{\raisebox{-2.0pt}{$\ltrans{ q \send p \Parens{j} }_p$}}
                \rtc{R_j}
            }
        \end{bussproof}
        \and
        \begin{bussproof}
            \bussAssume{
                \unfold( \rec X^{\mbb L} \rtc. \rtc{R} ) \ltrans{m}_p \rtc{R'}
            }
            \bussUn{
                \rec X^{\mbb L} \rtc. \rtc{R} \ltrans{m}_p \rtc{R'}
            }
        \end{bussproof}
    \end{mathparpagebreakable}
    Given $\vect m = m_1 , \ldots , m_k$, we write $\rtc{R} \ltrans{ \vect m }_p \rtc{R'}$ to denote $\rtc{R} \ltrans{ m_1 }_p \ldots \ltrans{m_k}_p \rtc{R'}$ (reflexive if $\vect m = \epsi$).
    We write $\vect m(q)$ to denote the subsequence of messages from $\vect m$ sent by $q$.
\end{definition}

\begin{definition}[Coherent Setup ($\bowtie$)]\label{d:coherentSetup}
    A global type $\gtc{G_0}$, a participant $p$, a monitor $\mc{M} \neq \rec X \mc. \mc{M'}$, a map $\RTs : \bm{P} \rightarrow \bm{R}$, and a sequence of messages $\vect n$ are \emph{in a coherent setup}, denoted $\coherent{\gtc{G_0}}{p}{\mc{M}}{\RTs}{\vect n}$, if and only if there exist $\gtc{G},\vect \ell$ such that $\gtc{G_0} \ltrans{ \vect \ell } \gtc{G}$, $\vect n$ exclusively contains messages with sender in $\dom(\RTs)$, and
    \begin{itemize}
        \item
            \textbf{(Initial state)}
            $\mc{M} \in \braces{ \gtToMon( \gtc{G} , p , \prt(\gtc{G_0}) \setminus \braces{p} ) , \checkmark }$ implies that
            \begin{itemize}
                \item for every $q \in \dom(\RTs)$ there exists $\mbb L_q$ such that $(\gtc{G} \wrt (p,q)^{\mbb L_q}) \ltrans{ \vect n(q) }_p \RTs(q)$, and
                \item if $\mc{M} = \checkmark$ then $\gtToMon( \gtc{G} , p , \prt(\gtc{G_0}) \setminus \braces{p} ) = \pend$;
            \end{itemize}

        \item
            \textbf{(Intermediate state)}
            Otherwise, there exist $\gtc{G'},j$ such that $\gtc{G} \ltrans{j} \gtc{G'}$, and either of the following holds:
            \begin{itemize}
                \item all of the following hold:
                    \begin{itemize}
                        \item $\mc{M} = p \send D (j) \mc. \gtToMon( \gtc{G'} , p , \prt(\gtc{G_0}) \setminus \braces{p} )$,
                        \item for every $q \in \dom(\RTs) \setminus D$ there exists $\mbb L_q$ such that $(\gtc{G'} \wrt (p,q)^{\mbb L_q}) \ltrans{ \vect n(q) }_p \RTs(q)$, and
                        \item for every $q \in D$ there exists $\mbb L_q$ such that $\RTs(q) = \gtc{G} \wrt (p,q)^{\mbb L_q}$;
                    \end{itemize}

                \item all of the following hold:
                    \begin{itemize}
                        \item $\mc{M} = p \recv r \mBraces{ j \mc. \gtToMon( \gtc{G'} , p , \prt(\gtc{G_0}) \setminus \braces{p} ) } \cup \mBraces{ i \mc. \perror }_{ i \in I \setminus \braces{j} }$,
                        \item for every $q \in \dom(\RTs) \setminus \braces{r}$ there exists $\mbb L_q$ such that $(\gtc{G'} \wrt (p,q)^{\mbb L_q}) \ltrans{ \vect n(q) }_p \RTs(q)$, and
                        \item there exists $\mbb L_r$ such that $(\gtc{G} \wrt (p,r)^{\mbb L_r}) \ltrans{ \vect n(r) }_p \RTs(r)$.
                    \end{itemize}
            \end{itemize}
    \end{itemize}
\end{definition}

\begin{lemma}\label{l:satisfactionNoError}
    Suppose
    \begin{itemize}
        \item $\mathcal{R} \satisfies{ \monImpl{ \bufImpl{ p }{ P_0 }{ \epsi } }{ \mc{M_0} }{ \epsi } }{ \RTs_0 }[ p ]$,
        \item where $\RTs_0 := \braces{ ( q , \gtc{G_0} \wrt (p,q)^{\braces{\epsi}} ) \mid q \in \prt(\gtc{G_0}) \setminus \braces{p} }$ for well-formed $\gtc{G_0}$.
    \end{itemize}
    Then
    \begin{itemize}
        \item for any $( \monImpl{ \bufImpl{ p }{ P }{ \vect m } }{ \mc{M} }{ \vect n } , \RTs , \Lbls ) \in \mathcal{R}$ such that $\coherent{ \gtc{G_0} }{ p }{ \mc{M} }{ \RTs }{ \vect n }$, and
        \item for any $\alpha,\net P',\Omega,\Omega'$ such that $\monImpl{ \bufImpl{ p }{ P }{ \vect m } }{ {\mc{M}} }{ \vect n } \bLtrans{\Omega}{\alpha}{\Omega'} \net P'$,
    \end{itemize}
    we have $\net P' \not\equiv \perror_{\braces{p}}$.
\end{lemma}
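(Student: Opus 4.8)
The statement is about a single enhanced transition, so I would argue directly, by case analysis on how that transition is derived, with no induction. The first step is a reduction: the only rule of the \nameref{d:enhancedLTS} that can produce an error signal is \ruleLabel{no-dep} lifting a base transition. Indeed, \ruleLabel{buf-mon} and \ruleLabel{buf-unmon} yield (buffered/monitored) blackboxes, and the only base transition that \ruleLabel{dep} can lift from a lone monitored blackbox is \ruleLabel{mon-out-dep}, which again yields a monitored blackbox. Hence $\net P' \equiv \perror_{\braces p}$ forces $\monImpl{\bufImpl p P {\vect m}}{\mc M}{\vect n} \ltrans{\tau} \perror_{\braces p}$ in the network LTS, and since \ruleLabel{par-error} needs a parallel context, this base transition must be an instance of \ruleLabel{error-out}, \ruleLabel{error-end}, \ruleLabel{error-in}, or \ruleLabel{error-mon} (all with label $\tau$, consistent with the side condition of \ruleLabel{no-dep}). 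It then suffices to show that none of these four rules applies.

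\textbf{Exploiting the coherent setup.} From $\coherent{\gtc{G_0}}{p}{\mc M}{\RTs}{\vect n}$ I read off that $\mc M$ is one of: $\gtToMon(\gtc G, p, \prt(\gtc{G_0}) \setminus \braces p)$ for some $\gtc G$ with $\gtc{G_0} \ltrans{\vect\ell} \gtc G$; $\checkmark$; a dependency output $p \send D (j) \mc. \gtToMon(\gtc{G'},p,\ldots)$; or an input $p \recv r \mBraces{ j \mc. \gtToMon(\gtc{G'},p,\ldots) } \cup \mBraces{ i \mc. \perror }_{i \in I \setminus \braces j}$; and that the messages of $\vect n$ have senders in $\dom(\RTs)$, driving, per sender $q$, the projection $\gtc G \wrt (p,q)^{\mbb L_q}$ (or $\gtc{G'} \wrt (p,q)^{\mbb L_q}$) down to $\RTs(q)$ via the LTS for relative types. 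Inspecting \cref{alg:gtToMon}, $\gtToMon$ never returns $\perror$; together with the list of shapes this gives $\mc M \neq \perror$, so \ruleLabel{error-mon} does not apply. If $\mc M$ is a dependency output, the side conditions of \ruleLabel{error-out} and \ruleLabel{error-end} (excluding $\mc M \in \braces{ \rec X \mc. \mc{M'}, p \send D (\ell) \mc. \mc{M'} }$) fail, so neither fires --- the base LTS uses \ruleLabel{mon-out-dep}/\ruleLabel{mon-out-dep-empty} here. If $\mc M = \checkmark$ then $\vect n = \epsi$, all projections $\gtc G \wrt (p,q)$ are $\pend$, and $\mc M = \checkmark$ is reached only after the contained blackbox has itself performed its $\pend$-action, so by \bref{i:bEnd}{End} $P$ has no transition and neither \ruleLabel{error-out} nor \ruleLabel{error-end} can fire.

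\textbf{The remaining cases.} For \ruleLabel{error-out}/\ruleLabel{error-end} with $\mc M$ a $\gtToMon$-monitor or an input-with-$\perror$-branches, suppose for contradiction that the rule fires, so $\monImpl{\bufImpl p P {\vect m}}{\mc M}{\vect n} \ltrans{\tau} \perror_{\braces p}$. By \satref{i:sTau}{Tau}, $(\perror_{\braces p}, \RTs, \Lbls) \in \mathcal R$. Since $\perror_{\braces p}$ is not a monitored blackbox and has no transitions at all, I derive a contradiction from $\mathcal R$ holding and progressing at $p$: by the coherent setup $\RTs$ is the (possibly partially advanced) relative projection of a global type reachable from $\gtc{G_0}$, hence has at least one entry $(q,\rtc R)$, and $\unfold(\rtc R)$ is an exchange, a dependency, or $\pend$. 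If it is an input resp.\ a dependency input, \satref{i:sInput}{Input} resp.\ \satref{i:sDependencyInput}{Dependency Input} forces $\perror_{\braces p}$ to be of the form $\monImpl{\ldots}{\ldots}{\ldots}$ --- impossible; if it is a (dependency) output, or if all entries unfold to $\pend$ with $\RTs \neq \emptyset$, the progress conditions force $\perror_{\braces p}$ to transition --- impossible ($\RTs = \emptyset$ is excluded here because $\mc M$ is a non-trivial $\gtToMon$-monitor). For \ruleLabel{error-in}, this rule needs $\mc M = p \recv q \mBraces{ y \mc. \mc{M_y} }_{y \in Y}$ and the oldest message of $\vect n$ to be $q \send p (x)$ or $q \send p \Parens x$ with $x \notin Y$. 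By the coherent setup and \cref{alg:gtToMon}, such a monitor shape occurs exactly when the reachable global type begins with an exchange whose label set is precisely $Y$; meanwhile $\vect n$ was built from the empty buffer only by the buffering performed in clauses \satref{i:sInput}{Input} and \satref{i:sDependencyInput}{Dependency Input} of $\mathcal R$, each of which inserts only messages carrying a label from the label set of the pertinent relative-type exchange --- which, again by the coherent setup, is $Y$. Hence $x \in Y$, a contradiction, and \ruleLabel{error-in} does not fire either.

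\textbf{Main obstacle.} The delicate part is the label bookkeeping in the \ruleLabel{error-in} and \ruleLabel{error-out} cases: one must trace label sets consistently across three representations --- the relative types in $\RTs$ (together with the messages already sitting in $\vect n$), the reachable global type fixed by $\bowtie$, and the synthesised monitor $\mc M$ --- and handle the intermediate-state forms that $\bowtie$ permits (a pending dependency output, and an input monitor carrying spurious $\perror$-branches). Everything else --- the shapes of unfolded relative types, the side conditions of the error rules, and the bookkeeping of the enhanced-LTS rules --- is routine.
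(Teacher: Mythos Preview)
Your proposal is correct and follows essentially the same approach as the paper: assume towards contradiction that the transition lands in $\perror_{\braces{p}}$, observe this forces one of the \ruleLabel{error-*} rules, and in each case use \satref{i:sTau}{Tau} to place $(\perror_{\braces{p}},\RTs,\Lbls)$ in $\mathcal{R}$ and then derive a contradiction from the hold/progress conditions of \nameref{d:satisfaction} together with the constraints that \nameref{d:coherentSetup} imposes on $\mc{M}$, $\RTs$, and $\vect n$. You are in fact slightly more thorough than the paper (you treat \ruleLabel{error-end} explicitly and dispatch \ruleLabel{error-mon} more directly by observing that no shape of $\mc{M}$ permitted by $\bowtie$ equals $\perror$), but the underlying argument is the same.
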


\begin{proof}
    Suppose, towards a contradiction, that $\net P' \equiv \perror_{\braces{p}}$.
    Then the transition is due to Transition~\ruleLabel{no-dep}, $\alpha = \tau$, $\Omega' = \Omega$, and the transition is derived from Transition~\ruleLabel{error-out}, \ruleLabel{error-in}, or~\ruleLabel{error-mon}.
    We show in each case separately that this leads to a contradiction.
    \begin{itemize}
        \item
            Transition~\ruleLabel{error-out}.
            Then $\mc{M}$ denotes an output, and so, by the definition of \nameref{d:coherentSetup}, there is a $q \in \dom(\RTs)$ such that $\unfold\big(\RTs(q)\big)$ is an output with a location that prefixes all other locations in $\RTs$.
            By \satref{i:sTau}{Tau}, $( \perror_{\braces{p}} , \RTs , \Lbls ) \in \mathcal{R}$.
            Then by the progress property of \nameref{d:satisfaction}, there exist $\alpha,\net P''$ such that $\perror_{\braces{p}} \ltrans{\alpha} \net P''$.
            However, the \nameref{d:ltsNetworks} does not define such a transition: a contradiction.

        \item
            Transition~\ruleLabel{error-in}.
            Then the monitor tries to read a message from $\vect n$, but the first relevant message is incorrect.
            By the definition of \nameref{d:coherentSetup}, all messages in $\vect n$ are in accordance with \satref{i:sInput}{Input} or~\satclause{i:sDependencyInput}{Dependency input}.
            But then the message cannot be incorrect: a contradiction.

        \item
            Transition~\ruleLabel{error-mon}.
            This means that $\mc{M} = \perror$.
            This can only be the case after two related dependency inputs of different labels.
            However, the messages read must be due to \satref{i:sDependencyInput}{Dependency input}, where the first one records the chosen label in $\Lbls$, and the second one uses that label.
            Hence, it is not possible that two different labels have been received: a contradiction.
    \end{itemize}
    Hence, $\net P' \not\equiv \perror_{\braces{p}}$.
\end{proof}

\begin{lemma}\label{l:satisfactionReverseBuffer}
    Suppose
    \begin{itemize}
        \item $\mathcal{R} \satisfies{ \monImpl{ \bufImpl{ p }{ P_0 }{ \epsi } }{ \mc{M_0} }{ \epsi } }{ \RTs_0 }[ p ]$,
        \item where $\RTs_0 := \braces{ ( q , \gtc{G_0} \wrt (p,q)^{\braces{\epsi}} ) \mid q \in \prt(\gtc{G_0}) \setminus \braces{p} }$
        \item for well-formed $\gtc{G_0}$ with $p \in \prt(\gtc{G_0})$.
    \end{itemize}
    Then
    \begin{itemize}
        \item for any $( \monImpl{ \bufImpl{ p }{ P }{ \vect m } }{ \mc{M} }{ \vect n } , \RTs , \Lbls ) \in \mathcal{R}$
        \item with non-empty $\vect n$
        \item such that $\coherent{ \gtc{G_0} }{ p }{ \mc{M} }{ \RTs }{ \vect n }$,
    \end{itemize}
    there exist
    \begin{itemize}
        \item $\RTs',\Lbls'$ such that $( \monImpl{ \bufImpl{ p }{ P }{ \vect m } }{ \mc{M} }{ \epsi } , \RTs' , \Lbls' ) \in \mathcal{R}$, and
        \item $(q,\rtc{R}) \in \dom(\RTs')$ such that
            \begin{itemize}
                \item $\unfold(\rtc{R}) = q \send p^{\mbb L} \rtBraces{ \msg i<T_i> \rtc. \rtc{R_i} }_{i \in I}$
                \item or $\unfold(\rtc{R}) = (q \lozenge r) \send p^{\mbb L} \rtBraces{ i \rtc. \rtc{R_i} }_{i \in I}$.
            \end{itemize}
    \end{itemize}
\end{lemma}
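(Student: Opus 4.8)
The plan is to prove the statement by induction on the length of the monitor buffer $\vect n$, peeling off its newest (leftmost) message at each step and using minimality of $\mathcal{R}$ (as in \cref{d:minimalSat}, which is how the lemma is applied in the proof of \cref{t:transparencyBisim}; without it the backward step need not exist) to exhibit the ``predecessor'' triple in $\mathcal{R}$ obtained by undoing one buffering operation. A minimal $\mathcal{R}$ is the forward closure of its initial element $( \monImpl{ \bufImpl{ p }{ P_0 }{ \epsi } }{ \mc{M_0} }{ \epsi } , \RTs_0 , \emptyset )$ under the conditions of \cref{f:satisfaction}, and the only conditions that enlarge the monitor buffer are \textbf{Input} and \textbf{Dependency input}: each prepends exactly one message while leaving $P$, $\vect m$, and $\mc{M}$ untouched and updating $\RTs$ (and, in the fresh subcase of \textbf{Dependency input}, also $\Lbls$) in a uniform way. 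Hence a triple with non-empty $\vect n$ sits on a generating path whose last monitor-buffer-changing step was such a buffering step, and the hypothesis $\coherent{\gtc{G_0}}{p}{\mc{M}}{\RTs}{\vect n}$ certifies which condition it was and with which relative type, by reading off the relative-type transition chains $(\gtc{G}\wrt(p,q)^{\mbb L_q}) \ltrans{\vect n(q)}_p \RTs(q)$ supplied by \cref{d:coherentSetup} (using \cref{d:ltsRelativeTypes}).

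In the inductive step, write $\vect n = n_1 , \vect n'$ with $n_1$ the newest message, say with sender $q$. By the coherent setup $n_1$ is a (dependency) message from $q$ to $p$ and $\RTs(q)$ is the residual of $\gtc{G}\wrt(p,q)^{\mbb L_q}$ after consuming $\vect n(q)$; undoing the consumption of $n_1$ yields $\rtc{R}^{-}$ with $\RTs(q) = \rtc{R}^{-}_{j}$ for the branch $j$ carried by $n_1$, where $\unfold(\rtc{R}^{-})$ is $q \send p^{\mbb L} \rtBraces{ \msg i<T_i> \rtc. \rtc{R}^{-}_{i} }_{i \in I}$ or $(q \lozenge r) \send p^{\mbb L} \rtBraces{ i \rtc. \rtc{R}^{-}_{i} }_{i \in I}$. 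Setting $\RTs^{-} := \RTs \update{q \mapsto \rtc{R}^{-}}$ and letting $\Lbls^{-}$ be $\Lbls$ with the entry $\mbb L \mapsto j$ removed (unless $n_1$ is a dependency input in the known subcase, where $\Lbls^{-} := \Lbls$), minimality gives $( \monImpl{ \bufImpl{ p }{ P }{ \vect m } }{ \mc{M} }{ \vect n' } , \RTs^{-} , \Lbls^{-} ) \in \mathcal{R}$, and one checks that $\coherent{\gtc{G_0}}{p}{\mc{M}}{\RTs^{-}}{\vect n'}$ still holds (each transition chain simply shortens by one). The induction hypothesis applied to this triple --- whose monitor buffer is strictly shorter --- yields $\RTs', \Lbls'$ with $( \monImpl{ \bufImpl{ p }{ P }{ \vect m } }{ \mc{M} }{ \epsi } , \RTs' , \Lbls' ) \in \mathcal{R}$ and a pair of the stated shape in $\RTs'$. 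The base case $|\vect n| = 1$ is the same computation without the recursive call: undoing the single buffering step lands directly on a triple with empty monitor buffer, and the relative type it assigns to $q$ is, by construction, precisely an input or a dependency exchange from $q$ to $p$ --- which is exactly what the conclusion demands, so no extra argument is needed there. Throughout, \cref{l:satisfactionNoError} is invoked to guarantee that no triple on the path is an error signal, keeping all backward steps legitimate.

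The main obstacle is the bookkeeping around the three coupled state components --- the monitor buffer $\vect n$, the blackbox buffer $\vect m$, and the label map $\Lbls$ --- through the cases of \cref{d:coherentSetup} (the initial-state case with $\mc{M}$ a full monitor or $\checkmark$, and the two intermediate-state cases with $\mc{M}$ a pending dependency output or a dependency input guarded by $\perror$). Two points need particular care. First, one must argue that the relevant segment of the generating path consists \emph{only} of \textbf{Input}/\textbf{Dependency input} steps, i.e.\ that no interleaved $\tau$-step --- notably an application of \ruleLabel{mon-in}, which would move a message from $\vect n$ into $\vect m$ --- and no output step can occur there without contradicting the coherent setup of the endpoint; this is what keeps $P$, $\vect m$, and $\mc{M}$ fixed along the segment, matching the shape of the conclusion. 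Second, the fresh-versus-known split in the \textbf{Dependency input} condition decides whether a backward step also retracts an entry of $\Lbls$, and this has to be reconciled with exactly the two intermediate-state cases of \cref{d:coherentSetup} that record half-completed double dependencies. Once these are in place, the induction itself is routine.
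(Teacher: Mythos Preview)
Your approach is essentially the paper's own argument, just unpacked: the paper's proof is a three-sentence sketch that appeals to minimality to trace each message in $\vect n$ back to an \textbf{Input} or \textbf{Dependency input} application, then reads off from the coherent-setup transition chains that the empty-buffer triple has an input-shaped relative type. Your induction on $|\vect n|$ makes this tracing explicit, and you correctly flag that minimality (as in \cref{d:minimalSat}) is required even though the lemma statement does not say so --- the paper's proof silently assumes it too.

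One point worth sharpening: you phrase the main obstacle as showing that ``the relevant segment of the generating path consists \emph{only} of \textbf{Input}/\textbf{Dependency input} steps''. That is stronger than what is needed and may not literally hold --- between buffering $n_1$ and the endpoint there can be \textbf{Tau} steps (including \ruleLabel{mon-in}, which shrinks $\vect n$ from the right and alters $\mc{M}$ and $\vect m$) or other steps. The cleaner argument is a commuting one: \textbf{Input}/\textbf{Dependency input} inspect only $\RTs$ and prepend to $\vect n$, so they can be floated past the other satisfaction conditions to the end of the path, yielding an alternative derivation in which the last $|\vect n|$ steps are exactly the bufferings. This is what makes the predecessor $(\monImpl{\bufImpl{p}{P}{\vect m}}{\mc{M}}{\vect n'}, \RTs^{-}, \Lbls^{-})$ with the \emph{same} $P,\vect m,\mc{M}$ lie in $\mathcal{R}$. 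The paper's proof glosses over this just as much as yours does; neither proof is fully rigorous here, but your identification of it as the crux is on target. Your appeal to \cref{l:satisfactionNoError} is unnecessary for this lemma --- error signals are ruled out simply because \textbf{Input}/\textbf{Dependency input} only fire on monitored-blackbox shapes.
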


\begin{proof}
    Take any $( \monImpl{ \bufImpl{ p }{ P }{ \vect m } }{ \mc{M} }{ \vect n } , \RTs , \Lbls ) \in \mathcal{R}$ with non-empty $\vect n$ such that $\coherent{ \gtc{G_0} }{ p }{ \mc{M} }{ \RTs }{ \vect n }$.
    By the minimality of $\mathcal{R}$, each message in $\vect n$ traced back to applications of \satref{i:sInput}{Input} or~\satclause{i:sDependencyInput}{Dependency input}.
    By the definition of \nameref{d:coherentSetup}, each relative type in $\RTs$ relates $\gtc{G_0}$ with the messages in $\vect n$ through the \nameref{d:ltsRelativeTypes}.
    Hence, at $( \monImpl{ \bufImpl{ p }{ P }{ \vect m } }{ \mc{M} }{ \epsi } , \RTs' , \Lbls' ) \in \mathcal{R}$, since $\vect n$ is non-empty, the unfolding of at least one relative type in $\RTs'$ denotes an input by $p$.
\end{proof}

\begin{lemma}\label{l:satisfactionOutput}
    \leavevmode
    \begin{itemize}
        \item Suppose
            \begin{itemize}
                \item $\mathcal{R} \satisfies{ \monImpl{ \bufImpl{ p }{ P_0 }{ \epsi } }{ \mc{M_0} }{ \epsi } }{ \RTs_0 }[ p ]$,
                \item where $\RTs_0 := \braces{ ( q , \gtc{G_0} \wrt (p,q)^{\braces{\epsi}} ) \mid q \in \prt(\gtc{G_0}) \setminus \braces{p} }$
                \item for well-formed $\gtc{G_0}$ with $p \in \prt(\gtc{G_0})$.
            \end{itemize}
        \item Take any $( \monImpl{ \bufImpl{ p }{ P }{ \vect m } }{ \mc{M} }{ \vect n } , \RTs , \Lbls ) \in \mathcal{R}$ such that $\coherent{ \gtc{G_0} }{ p }{ \mc{M} }{ \RTs }{ \vect n }$.
        \item Suppose $\bufImpl{ p }{ P }{ \vect m } \ltrans{ p \send q (\msg j<T_j>) } \bufImpl{ p }{ P' }{ \vect m }$.
    \end{itemize}
    Then
    \begin{itemize}
        \item $\unfold(\mc{M}) = \mc{M'}$
        \item or $\unfold(\mc{M}) = p \send D (\ell) \mc. \mc{M'}$
    \end{itemize}
    where $\mc{M'} = p \send q \mBraces{ \msg i<T_i> \mc. \mc{M'_i} }_{i \in I}$ s.t.\ $j \in I$.
\end{lemma}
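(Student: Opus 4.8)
The plan is to determine the shape of $\mc{M}$ by a case analysis driven by \nameref{d:coherentSetup}, discarding every shape incompatible with the output $\bufImpl{ p }{ P }{ \vect m } \ltrans{ p \send q (\msg j<T_j>) } \bufImpl{ p }{ P' }{ \vect m }$ of the buffered blackbox by appealing to \cref{l:satisfactionNoError}. Since $\coherent{ \gtc{G_0} }{ p }{ \mc{M} }{ \RTs }{ \vect n }$ forces $\mc{M} \neq \rec X \mc. \mc{M'}$, we have $\unfold(\mc{M}) = \mc{M}$, so it suffices to show that $\mc{M} = p \send q \mBraces{ \msg i<T_i> \mc. \mc{M'_i} }_{i \in I}$ with $j \in I$, or that $\mc{M} = p \send D (\ell) \mc. \mc{M'}$ with $\mc{M'}$ of that form. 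Writing $E := \prt(\gtc{G_0}) \setminus \braces{p}$, inspection of \nameref{d:coherentSetup} leaves exactly these possibilities for $\mc{M}$: (a)~$\mc{M} = \checkmark$; (b)~$\mc{M} = \gtToMon( \gtc{G} , p , E )$ for some $\gtc{G}$ with $\gtc{G_0} \ltrans{ \vect \ell } \gtc{G}$; (c)~$\mc{M} = p \send D (k) \mc. \gtToMon( \gtc{G'} , p , E )$ for some $\gtc{G} , \gtc{G'}$ with $\gtc{G} \ltrans{ k } \gtc{G'}$; (d)~$\mc{M} = p \recv r \mBraces{ k \mc. \gtToMon( \gtc{G'} , p , E ) } \cup \mBraces{ i \mc. \perror }_{i \in I \setminus \braces{k}}$.

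Next I would discard the bad shapes. Shape~(a), shape~(d), and shape~(b) whenever $\gtToMon(\gtc{G},p,E)$ equals $\pend$, an input monitor $p \recv s \mBraces{ \msg i<T_i> \mc. \mc{M_i} }_{i \in I}$, a (possibly nested) dependency input, or an output monitor $p \send r \mBraces{ \msg i<T_i> \mc. \mc{M_i} }_{i \in I}$ with $r \neq q$ or $j \notin I$: in all of these $\mc{M} \notin \braces{ \rec X \mc. \mc{M'} , p \send D (\ell) \mc. \mc{M'} }$ and $\mc{M}$ cannot receive the output of the buffered blackbox, so Transition~\ruleLabel{error-out} fires and $\monImpl{ \bufImpl{ p }{ P }{ \vect m } }{ \mc{M} }{ \vect n } \ltrans{ \tau } \perror_{\braces{p}}$; since this transition also holds in the enhanced LTS (Transition~\ruleLabel{no-dep}), it contradicts \cref{l:satisfactionNoError}. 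That $\gtToMon(\gtc{G},p,E)$ always equals one of those shapes, an output monitor $p \send q \mBraces{ \msg i<T_i> \mc. \mc{M'_i} }_{i \in I}$ with $j \in I$, or---excluded by \nameref{d:coherentSetup}---a recursive definition, follows by reading off \cref{alg:gtToMon} and skipping exchanges independent of $p$. Hence the only surviving sub-case of~(b) is $\gtToMon(\gtc{G},p,E) = p \send q \mBraces{ \msg i<T_i> \mc. \mc{M'_i} }_{i \in I}$ with $j \in I$, which is the first desired form.

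There remains shape~(c): $\mc{M} = p \send D (k) \mc. \mc{M''}$ with $\mc{M''} = \gtToMon(\gtc{G'},p,E)$, which already matches the prefix $p \send D (\ell) \mc. (\cdot)$ of the second desired form; it then remains to show that $\mc{M''}$---after unfolding any leading recursive definitions, justified by \cref{l:monitorUnfold} together with Transition~\ruleLabel{mon-rec}---is an output monitor $p \send q \mBraces{ \msg i<T_i> \mc. \mc{M'_i} }_{i \in I}$ with $j \in I$. By Transitions~\ruleLabel{mon-out-dep} and~\ruleLabel{mon-out-dep-empty}, $\monImpl{ \bufImpl{ p }{ P }{ \vect m } }{ \mc{M} }{ \vect n }$ reaches $\monImpl{ \bufImpl{ p }{ P }{ \vect m } }{ \mc{M''} }{ \vect n }$ by a sequence of transitions that do not touch the buffered blackbox; by \satref{i:sDependencyOutput}{Dependency output} and \satref{i:sTau}{Tau} this new state still lies in $\mathcal{R}$ (with an updated $\RTs$), and using the ``intermediate state'' clauses of \nameref{d:coherentSetup} one checks that it is coherent with $\gtc{G_0}$, now in the ``initial state'' case with synthesized monitor $\gtToMon(\gtc{G'},p,E)$. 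Since the buffered blackbox there still performs the output $p \send q (\msg j<T_j>)$, and a synthesized monitor never begins with a dependency output, the analysis of the previous paragraph applies to this new state and forces $\gtToMon(\gtc{G'},p,E) = p \send q \mBraces{ \msg i<T_i> \mc. \mc{M'_i} }_{i \in I}$ with $j \in I$, as required.

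I expect the main obstacle to be this last reduction: re-establishing \nameref{d:coherentSetup} for $\monImpl{ \bufImpl{ p }{ P }{ \vect m } }{ \mc{M''} }{ \vect n }$. One has to show that the ``intermediate state'' bookkeeping on $\RTs$ and the monitor buffer $\vect n$---where each participant $q$ depending on $p$ has its relative type frozen at $\gtc{G} \wrt (p,q)^{\mbb L_q}$---becomes the ``initial state'' bookkeeping once every such $q$ has received the dependency message $k$, using that \satref{i:sDependencyOutput}{Dependency output} rewrites $\RTs(q)$ to $\gtc{G'} \wrt (p,q)^{\mbb L_q + k}$; and one must thread recursion unfolding (\cref{l:monitorUnfold,l:unfoldRT}) through the argument so that \cref{l:satisfactionNoError} and the discarding step can be reused for $\gtToMon(\gtc{G'},p,E)$.
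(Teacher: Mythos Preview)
Your approach is correct but differs from the paper's. Both do a case analysis on the shape of $\mc{M}$, but the paper derives a contradiction in each bad case by hand---invoking the progress condition of \nameref{d:satisfaction} when $\mc{M}$ is a (wrong) output, \cref{l:satisfactionReverseBuffer} together with the Input or Dependency-input clause when $\mc{M}$ is a (dependency) input, and the End clause for $\pend$ and $\checkmark$---whereas you uniformly appeal to \cref{l:satisfactionNoError} by lifting Transition~\ruleLabel{error-out} to the enhanced LTS via Transition~\ruleLabel{no-dep}. This is a legitimate shortcut. Your treatment of shape~(c) is also more explicit than the paper's: the paper's case list omits the dependency-output prefix with a wrong continuation $\mc{N}$, even though its contradiction hypothesis does not exclude this and Transition~\ruleLabel{error-out} does not fire there; your reduction (consume the dependency outputs, re-establish \nameref{d:coherentSetup} at $\gtc{G'}$, then reapply the shape-(b) analysis) closes this. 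The obstacle you flag is real but routine: after the dependency-output steps each $q \in D$ has its $\RTs$-entry advanced to the $k$-branch, matching $\gtc{G'} \wrt (p,q)$ as the initial-state clause of \nameref{d:coherentSetup} requires, and any leading recursion in $\gtToMon(\gtc{G'},p,E)$ is handled by unfolding both the monitor and $\gtc{G'}$ via \cref{l:monitorUnfold}.
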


\begin{proof}
    Suppose, toward a contradiction, that $\unfold(\mc{M}) \neq \mc{M'}$ and $\unfold(\mc{M}) \neq p \send D (\ell) \mc. \mc{M'}$.
    Then, by Transition~\ruleLabel{error-out}, $\monImpl{ \bufImpl{ p }{ P }{ \vect m } }{ \mc{M} }{ \vect n } \ltrans{\tau} \perror_{\braces{p}}$.
    By \satref{i:sTau}{Tau}, $( \perror_{\braces{p}} , \RTs , \Lbls ) \in \mathcal{R}$.
    If $\mc{M}$ starts with a sequence of recursive definitions, we unfold them; by Transition~\ruleLabel{mon-rec}, the behavior is unchanged; let us assume, w.l.o.g., that $\mc{M}$ does not start with a recursive definition.
    From each possible shape of $\mc{M}$, we derive a contradiction.
    \begin{itemize}
        \item
            $\mc{M} = p \send r \mBraces{ \msg k<T_k> \mc. \mc{M_k} }_{k \in K}$ for $r \neq q$.
            By definition of \nameref{d:coherentSetup}, $\gtc{G_0} \ltrans{\vect \ell} \gtc{G}$, $\mc{M} = \gtToMon( \gtc{G} , p , \prt(\gtc{G_0}) \setminus \braces{p} )$, and there exists $\mbb L$ such that $(\gtc{G} \wrt (p,r)^{\mbb L}) \ltrans{ \vect n(r) }_p \RTs(r)$.
            Then $\unfold( \gtc{G} \wrt (p,r)^{\mbb L} ) = p \send r \rtBraces{ \msg k<T_k> \rtc. \rtc{R_k} }_{k \in K}$.
            Since the \nameref{d:ltsRelativeTypes} only allows input transitions, then $\RTs(r) = \gtc{G} \wrt (p,r)^{\mbb L}$.
            Clearly, $\mbb L$ prefixes all other locations in $\RTs$.
            Hence, by the progress property of \nameref{d:satisfaction}, there exists $\net P'$ such that $\perror_{\braces{p}} \ltrans{ \tau } \net P'$.
            However, the \nameref{d:ltsNetworks} does not define such a transition: a contradiction.

        \item
            $\mc{M} = p \send q \mBraces{ \msg i<T_i> \mc. \mc{M_i} }_{i \in I}$ for $j \notin I$.
            Analogous to the case above.

        \item
            $\mc{M} = p \recv r \mBraces{ \msg k<T_k> \mc. \mc{M_k} }_{k \in K}$.
            By definition of \nameref{d:coherentSetup}, $\gtc{G_0} \ltrans{\vect \ell} \gtc{G}$, $\mc{M} = \gtToMon( \gtc{G} , p , \prt(\gtc{G_0}) \setminus \braces{p} )$, and there exists $\mbb L$ such that $(\gtc{G} \wrt (p,r)^{\mbb L}) \ltrans{ \vect n(r) }_p \RTs(r)$.
            Whether $\vect n$ is empty or not, there exist $\RTs',\Lbls'$ such that $( \monImpl{ \bufImpl{ p }{ P }{ \vect m } }{ \mc{M} }{ \epsi } , \RTs' , \Lbls' ) \in \mathcal{R}$: if empty, this holds vacuously with $\RTs' := \RTs,\Lbls' := \Lbls$; if non-empty, it follows from \cref{l:satisfactionReverseBuffer}.
            Then $\unfold\big(\RTs'(r)\big) = \unfold( \gtc{G} \wrt (p,r)^{\mbb L} ) = r \send p \rtBraces{ \msg k<T_k> \rtc. \rtc{R_k} }_{k \in K}$.
            Again, by Transition~\ruleLabel{error-out}, $\monImpl{ \bufImpl{ p }{ P }{ \vect m } }{ \mc{M} }{ \epsi } \ltrans{\tau} \perror_{\braces{p}}$, and, by \satref{i:sTau}{Tau}, $( \perror_{\braces{p}} , \RTs' , \Lbls' ) \in \mathcal{R}$.
            Then, by \satref{i:sInput}{Input}, $\perror_{\braces{p}} = \monImpl{ \bufImpl{ p }{ P }{ \vect m } }{ \mc{M'} }{ \vect n }$: a contradiction.

        \item
            $\mc{M} = p \recv r \mBraces{ k \mc. \mc{M_k} }_{k \in K}$.
            Analogous to the case above.

        \item
            $\mc{M} = \pend$.
            By definition of \nameref{d:coherentSetup}, $\gtc{G_0} \ltrans{\vect \ell} \gtc{G}$, $\mc{M} = \gtToMon( \gtc{G} , p , \prt(\gtc{G_0}) \setminus \braces{p} )$, and for every $q \in \dom(\RTs)$ there exists $\mbb L_q$ such that $(\gtc{G} \wrt (p,q)^{\mbb L_q}) \ltrans{ \vect n(q) }_p \RTs(q)$.
            Since the \nameref{d:ltsRelativeTypes} only allows input transitions, for every $q \in \dom(\RTs)$, $\unfold\big(\RTs(q)\big) = \unfold( \gtc{G} \wrt (p,q)^{\mbb L_q} ) = \pend$.
            Then, by \satref{i:sEnd}{End}, there exists $\net P'$ such that $\perror_{\braces{p}} \trans* \ltrans{\pend} \net P'$.
            However, the \nameref{d:ltsNetworks} does not define such transitions: a contradiction.

        \item
            $\mc{M} = \perror$.
            By definition of \nameref{d:coherentSetup}, $\mc{M} \neq \perror$: a contradiction.

        \item
            $\mc{M} = \checkmark$.
            It cannot be the case that $\mc{M_0} = \checkmark$, and, by the \nameref{d:ltsNetworks}, $\mc{M} = \checkmark$ can only be reached through a transition labeled $\pend$.
            Hence, $\monImpl{ \bufImpl{ p }{ P }{ \vect m } }{ \checkmark }{ \vect n }$ must have been reached through \satref{i:sEnd}{End}.
            Then $\monImpl{ \bufImpl{ p }{ P }{ \vect m } }{ \checkmark }{ \vect n } \ntrans$: a contradiction.
    \end{itemize}
    Hence, the thesis must indeed hold.
\end{proof}

\begin{lemma}\label{l:satisfactionEnd}
    \leavevmode
    \begin{itemize}
        \item Suppose
            \begin{itemize}
                \item $\mathcal{R} \satisfies{ \monImpl{ \bufImpl{ p }{ P_0 }{ \epsi } }{ \mc{M_0} }{ \epsi } }{ \RTs_0 }[ p ]$,
                \item where $\RTs_0 := \braces{ ( q , \gtc{G_0} \wrt (p,q)^{\braces{\epsi}} ) \mid q \in \prt(\gtc{G_0}) \setminus \braces{p} }$
                \item for well-formed $\gtc{G_0}$ with $p \in \prt(\gtc{G_0})$.
            \end{itemize}
        \item Take any $( \monImpl{ \bufImpl{ p }{ P }{ \vect m } }{ \mc{M} }{ \vect n } , \RTs , \Lbls ) \in \mathcal{R}$ such that $\coherent{ \gtc{G_0} }{ p }{ \mc{M} }{ \RTs }{ \vect n }$.
        \item Suppose $\bufImpl{ p }{ P }{ \vect m } \ltrans{ \pend } \bufImpl{ p }{ P' }{ \vect m }$.
    \end{itemize}
    Then $\vect n = \epsi$, and
    \begin{itemize}
        \item $\unfold(\mc{M}) = \pend$
        \item or $\unfold(\mc{M}) = p \send D (\ell) \mc. \pend$.
    \end{itemize}
\end{lemma}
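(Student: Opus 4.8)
The plan is to prove this by contradiction, mirroring the proof of \cref{l:satisfactionOutput} with Transition~\ruleLabel{error-end} playing the role of Transition~\ruleLabel{error-out}. Since a coherent setup requires $\mc{M}$ not to be a leading recursion (\cref{d:coherentSetup}), we have $\unfold(\mc{M}) = \mc{M}$, so it suffices to establish both $\vect n = \epsi$ and that $\mc{M}$ equals $\pend$ or $p \send D(\ell) \mc. \pend$.

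First I would show $\vect n = \epsi$. Assume $\vect n \neq \epsi$. Then the second premise of Transition~\ruleLabel{error-end} holds vacuously, so $\monImpl{ \bufImpl{ p }{ P }{ \vect m } }{ \mc{M} }{ \vect n } \ltrans{\tau} \perror_{\braces{p}}$, and by \satref{i:sTau}{Tau} we get $(\perror_{\braces{p}}, \RTs, \Lbls) \in \mathcal{R}$. Applying \cref{l:satisfactionReverseBuffer} to the coherent setup yields a configuration $(\monImpl{ \bufImpl{ p }{ P }{ \vect m } }{ \mc{M} }{ \epsi }, \RTs', \Lbls') \in \mathcal{R}$ together with $(q_0,\rtc{R_0}) \in \RTs'$ whose unfolding is a (dependency) input directed at $p$; such a $q_0$ cannot belong to a dependency-broadcast set (those force $\RTs'(q_0)$ to be an output by $p$), so this pending input survives the case analysis below, and once that analysis reaches an error signal inside $\mathcal{R}$ it contradicts \satref{i:sInput}{Input}/\satref{i:sDependencyInput}{Dependency Input}, which require the network to be a monitored blackbox. (When $\mc{M} = \pend$ this is immediate: by \cref{d:coherentSetup} and the shape of \cref{alg:gtToMon}, $\mc{M} = \gtToMon(\gtc{G},p,\ldots) = \pend$ forces $\gtc{G} \wrt (p,q) \unfoldeq \pend$ for every $q$, so nothing could have been buffered.) Hence $\vect n = \epsi$.

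Next, with $\vect n = \epsi$, I would analyse the shape of $\mc{M}$. If $\mc{M}$ is an output monitor $p \send q \mBraces{ \msg i<T_i> \mc. \mc{M_i} }_{i \in I}$, an input monitor $p \recv q \mBraces{ \msg i<T_i> \mc. \mc{M_i} }_{i \in I}$, a dependency-input monitor $p \recv q \mBraces{ i \mc. \mc{M_i} }_{i \in I}$, or $\perror$ or $\checkmark$, then Transition~\ruleLabel{error-end} applies (none of these is a leading recursion, a dependency output, or $\pend$), so by \satref{i:sTau}{Tau} we obtain $(\perror_{\braces{p}}, \RTs, \Lbls) \in \mathcal{R}$; by \cref{d:coherentSetup} the relevant entry of $\RTs$ is then an output by $p$ with minimal location, or a (dependency) input at $p$, and in the first case the progress requirement of \cref{d:satisfaction} forces $\perror_{\braces{p}}$ to transition while in the others \satref{i:sInput}{Input}/\satref{i:sDependencyInput}{Dependency Input} force it to be a monitored blackbox---both impossible. (The shapes $\perror$ and $\checkmark$ do not even arise: $\checkmark$ is only reachable after an $\pend$-transition, after which the network is stuck, and \cref{alg:gtToMon} never emits $\perror$ in head position; the generic argument covers them anyway.) This leaves $\mc{M} = \pend$---the first disjunct---or $\mc{M} = p \send D(\ell) \mc. \mc{M'}$. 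In the latter case, \cref{d:coherentSetup} (intermediate state) gives $\mc{M'} = \gtToMon(\gtc{G'},p,\prt(\gtc{G_0}) \setminus \braces{p})$ with $\gtc{G} \ltrans{\ell} \gtc{G'}$, and broadcasting every dependency in $D$---via Transition~\ruleLabel{mon-out-dep} (handled by \satref{i:sDependencyOutput}{Dependency Output}) and Transition~\ruleLabel{mon-out-dep-empty} (handled by \satref{i:sTau}{Tau})---keeps us in $\mathcal{R}$, in a coherent setup with monitor $\mc{M'}$, the buffered blackbox still enabling its $\pend$-transition. Since \cref{alg:gtToMon} never produces a head dependency output, $\mc{M'}$ is not a dependency output, so it is either $\pend$ (done---the second disjunct), an output/input/dependency-input monitor, or a leading recursion; in the last two cases the argument just given applies to $\mc{M'}$---after, for a recursion, using Transition~\ruleLabel{mon-rec} and \cref{l:monitorUnfold} to note that $\monImpl{ \bufImpl{ p }{ P }{ \vect m } }{ \mc{M'} }{ \epsi }$ and $\monImpl{ \bufImpl{ p }{ P }{ \vect m } }{ \unfold(\mc{M'}) }{ \epsi }$ have the same transitions and that $\unfold(\mc{M'}) = \gtToMon(\unfold(\gtc{G'}),p,\ldots)$ is an exchange monitor by well-formedness---yielding a contradiction. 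Therefore $\mc{M'} = \pend$.

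The step I expect to be the main obstacle is pinning down that the dependency-output continuation is \emph{exactly} $\pend$: this requires the preservation of the coherent setup under dependency broadcasts, the fact that \cref{alg:gtToMon} never emits a dependency output in head position, and the reduction (via \cref{l:monitorUnfold} and well-formedness) of a recursive continuation to an exchange monitor, to which the error argument of the previous paragraph then applies.
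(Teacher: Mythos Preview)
Your approach is the same as the paper's: argue by contradiction using Transition~\ruleLabel{error-end} in place of Transition~\ruleLabel{error-out}, reuse the case analysis on $\mc{M}$ from \cref{l:satisfactionOutput}, and handle the dependency-output shape by first broadcasting to expose the continuation $\mc{M'}$. The paper does this in one combined contradiction (``$\vect n \neq \epsi$ or $\unfold(\mc{M})\notin\{\pend,\,p\send D(\ell)\mc.\pend\}$''), whereas you split it into two steps; both are fine, and your write-up is considerably more explicit than the paper's two-line sketch.

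There is one expositional wrinkle in your Step~1. You fire \ruleLabel{error-end} at the \emph{original} state to obtain $(\perror_{\{p\}},\RTs,\Lbls)\in\mathcal{R}$, but the input-form witness $(q_0,\rtc{R_0})$ that \cref{l:satisfactionReverseBuffer} produces lives in $\RTs'$, not in $\RTs$; so \satref{i:sInput}{Input}/\satref{i:sDependencyInput}{Dependency Input} at $(\perror_{\{p\}},\RTs,\Lbls)$ do not directly see it. The clean route---exactly what the input-monitor case of \cref{l:satisfactionOutput} does---is to fire \ruleLabel{error-end} from the \emph{empty-buffer} state $(\monImpl{\bufImpl{p}{P}{\vect m}}{\mc{M}}{\epsi},\RTs',\Lbls')$ itself (or its successor after broadcasting), so the error lands at a map that still contains the input-form $q_0$-entry. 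Your phrase ``once that analysis reaches an error signal'' likely intends this, but make it explicit and drop the redundant first firing. Relatedly, in Step~1 you treat $\mc{M}=\pend$ separately but do not close the sub-case $\mc{M}=p\send D(\ell)\mc.\pend$ with $\vect n\neq\epsi$: there the case analysis below never reaches an error signal. The fix is immediate: after broadcasting you are at monitor $\pend$ in a coherent (initial) setup for $\gtc{G'}$, which forces every $\RTs''(q)\unfoldeq\pend$, contradicting the surviving input-form $\RTs''(q_0)=\RTs'(q_0)$.
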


\begin{proof}
    Suppose toward a contradiction that $\vect n \neq \epsi$ or that $\unfold(\mc{M}) \notin \braces{ \pend , p \send D (\ell) \mc. \pend }$.
    Then, by Transition~\ruleLabel{error-end}, $\monImpl{ \bufImpl{ p }{ P }{ \vect m } }{ \mc{M} }{ \vect n } \ltrans{\tau} \perror_{\braces{p}}$.
    The contradiction follows similar to the reasoning in the proof of \cref{l:satisfactionOutput}.
    We only discuss the additional case where $\vect n \neq \epsi$ and $\mc{M} = p \send D (\ell) \mc. \mc{M'}$.
    By induction on the size of $D$, $\monImpl{ \bufImpl{ p }{ P }{ \vect m } }{ \mc{M} }{ \vect n } \trans* \monImpl{ \bufImpl{ p }{ P }{ \vect m } }{ \mc{M'} }{ \vect n }$.
    By the definition of \nameref{d:coherentSetup}, $\mc{M}$ is synthesized from a global type, so $\mc{M'}$ is not a dependency output (there are never two consecutive dependency outputs).
    Hence, the other cases for $\mc{M'}$ apply and the search for a contradiction continues as usual.
\end{proof}

\begin{lemma}\label{l:finiteReceive}
    Suppose
    \begin{itemize}
        \item $\coherent{ \gtc{G_0} }{ p }{ \mc{M} }{ \RTs }{ \vect n , u }$
        \item with $\gtc{G_0} \ltrans{ \vect \ell } \gtc{G} \ltrans{ k } \gtc{G'}$
        \item and $u = q \send p (\msg j<T_j>)$.
    \end{itemize}
    Then
    \begin{itemize}
        \item $\gtc{G'} = q \send p \gtBraces{ \msg i<T_i> \gtc. \gtc{G'_i} }_{i \in I}$ with $j \in I$
        \item or, for every $k',\gtc{G''}$ such that $\gtc{G'} \ltrans{ k' } \gtc{G''}$, there exists (finite) $\vect \ell'$ such that $\gtc{G''} \ltrans{ \vect \ell' } q \send p \gtBraces{ \msg i<T_i> \gtc. \gtc{G''_i} }_{i \in I}$ with $j \in I$.
    \end{itemize}
\end{lemma}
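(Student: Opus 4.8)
The plan is to unpack the coherent-setup hypothesis, extract from it the run of the LTS for relative types (\cref{d:ltsRelativeTypes}) that consumes $u$, and then read back the corresponding $q \send p$ exchange in the global type using the structural correspondence between relative projection (\cref{alg:relativeProjection}) and the LTS for global types (\cref{d:ltsGlobalTypes}).

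First I would use $\coherent{\gtc{G_0}}{p}{\mc{M}}{\RTs}{\vect n, u}$ (\cref{d:coherentSetup}) to obtain, for $q$, a location $\mbb L_q$ and a run $(\gtc{G} \wrt (p,q)^{\mbb L_q}) \ltrans{ (\vect n,u)(q) }_p \RTs(q)$, where $\gtc{G}$ is the witness named in the statement (so $\gtc{G} \ltrans{k} \gtc{G'}$). The initial- and intermediate-state cases of \cref{d:coherentSetup} are handled uniformly: in the intermediate case $\mc{M}$ is a dependency output or a dependency-input monitor whose relevant continuation is synthesized from a one-step successor of $\gtc{G}$, which merely shifts the start of the run by one exchange. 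Splitting $(\vect n,u)(q)$ at the unique occurrence of $u$ yields an intermediate relative type $\rtc{R_1}$, reached from $\gtc{G} \wrt (p,q)^{\mbb L_q}$ by consuming exactly the messages of $\vect n(q)$ that precede $u$, with $\unfold(\rtc{R_1}) = q \send p^{\mbb L'} \rtBraces{ \msg i<T_i> \rtc. \rtc{R_{1,i}} }_{i \in I}$ and $j \in I$.

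Next I would translate this relative-type run into a run of the LTS for global types. By inspection of \cref{alg:relativeProjection}, a receive by $p$ at the head of the relative projection of a well-formed $\gtc{H}$ arises exactly when $\unfold(\gtc{H})$ is a $q \send p$ exchange, an exchange involving neither $p$ nor $q$ all of whose branches project identically (\cref{li:rpIndep}), or an exchange with exactly one of $p,q$ yielding a dependency; consuming a (real or dependency) message from $q$ advances $\gtc{H}$ by one step along the label recorded in that message, while exchanges skipped by \cref{li:rpIndep} contribute steps whose label may be chosen freely. Commuting unfolding with projection and tracking the location annotations through recursion (\cref{l:unfoldRT,l:prefProjection,l:unfoldRecursiveGTRT,l:unfoldRecursiveGTRTnoLabel}), this yields $\gtc{G} \ltrans{\vect \ell''} q \send p \gtBraces{ \msg i<T_i> \gtc. \gtc{H_i} }_{i \in I}$ with $j \in I$, where the labels of $\vect \ell''$ at skipped exchanges are arbitrary.

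Finally I would place this exchange relative to $\gtc{G'}$ (the $k$-branch of $\unfold(\gtc{G})$) by a short case analysis on the first exchange of $\unfold(\gtc{G})$. If it is already the sought $q \send p$ exchange then, since $u$ is pending for it, no message of $\vect n(q)$ precedes $u$, $\vect \ell'' = \epsi$, and inspecting the shape of $\mc{M}$ from \cref{d:coherentSetup} forces $\gtc{G'}$ itself to be a $q \send p$ exchange with $j$, the first disjunct. Otherwise the first exchange of $\unfold(\gtc{G})$ is consumed or skipped by the run, so the sought exchange is strictly deeper and hence reachable from $\gtc{G'}$; if $\gtc{G'}$ is that exchange we are again in the first disjunct, and if not, then $\gtc{G'}$ must be an exchange all of whose branches project identically onto $(p,q)$ (otherwise the run could not pass through it while $p$ only receives, contradicting well-formedness, \cref{d:wf}), so the $q \send p$ exchange occurs in every branch, giving the universally quantified second disjunct; finiteness of the witnessing $\vect \ell'$ follows from non-contractive, bounded recursion (\cref{d:wf}) together with the fact that a projection collapsing to $\pend$ before the receive consuming $u$ is impossible. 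The main obstacle is this last step: obtaining the \emph{universal} quantifier over continuations requires propagating the branch-equality side-condition of \cref{li:rpIndep} through the global type, and well-formedness is needed both to rule out infinite ``independent'' detours (which would prevent $\vect \ell'$ from existing) and to exclude the degenerate situation in which $\gtc{G'}$ is a branch of the $u$-exchange itself; the location bookkeeping under recursion unfolding is the other fiddly but routine point.
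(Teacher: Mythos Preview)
Your approach is correct in spirit but considerably more elaborate than the paper's. The paper's proof is essentially four lines: from \nameref{d:coherentSetup} it reads off directly that $\unfold(\gtc{G'} \wrt (p,q)^{\mbb L}) = q \send p^{\mbb L'} \rtBraces{\msg i<T_i> \rtc. \rtc{R_i}}_{i \in I}$ with $j \in I$ (because $u$ is the oldest message from $q$ in the buffer, so it must be the first thing the relative-type LTS consumes), and then inducts on the number $x$ of applications of \cref{li:rpIndepLab} in \cref{alg:relativeProjectionWithLocs} needed to produce that head. Base case ($x=0$): $\gtc{G'}$ is itself the $q \send p$ exchange, giving the first disjunct. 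Inductive case: the first exchange of $\gtc{G'}$ is independent for $(p,q)$, so every branch projects identically and in strictly fewer \cref{li:rpIndepLab} steps; the IH then yields the second disjunct for each $k',\gtc{G''}$.

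By contrast, you reconstruct a run of the global-type LTS from the relative-type LTS run, invoking the unfolding and location lemmas (\cref{l:unfoldRT,l:prefProjection,l:unfoldRecursiveGTRT,l:unfoldRecursiveGTRTnoLabel}) and then doing a hand case analysis on the first exchange of $\unfold(\gtc{G})$. This works, but it is doing unnecessary bookkeeping: the split of $(\vect n,u)(q)$ at $u$ is trivial (nothing from $q$ precedes $u$, so $\rtc{R_1}$ is just the initial projection), and the translation back to the global LTS duplicates exactly the information already encoded in the recursion of the projection algorithm. The paper's induction on the algorithm's structure sidesteps all of that: finiteness of $\vect \ell'$ is immediate (it is bounded by $x$), the universal quantifier over $k'$ falls out because \cref{li:rpIndepLab} only fires when \emph{all} branches project the same, and no location or unfolding lemma is needed. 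Your argument buys nothing extra here; the paper's is strictly simpler.
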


\begin{proof}
    By the definition of \nameref{d:coherentSetup}, there are $\mbb L,\mbb L'$ such that $\unfold( \gtc{G'} \wrt (p,q)^{\mbb L} ) = q \send p^{\mbb L'} \rtBraces{ \msg i<T_i> \rtc. \rtc{R_i} }_{i \in I}$ with $j \in I$.
    By definition, this relative type is generated in finitely many steps by \cref{alg:relativeProjectionWithLocs} (\cref{li:rpIndepLab}).
    We apply induction on the number $x$ of steps this took.
    In the base case, the thesis holds trivially.

    In the inductive case, the relative projections of all branches of $\gtc{G'}$ onto $(p,q)$ are the same.
    Take any $k',\gtc{G''}$ such that $\gtc{G'} \ltrans{k'} \gtc{G''}$.
    Then $\gtc{G''} \wrt (p,q)^{\mbb L+k'}$ is generated in less than $x$ steps through \cref{alg:relativeProjectionWithLocs} (\cref{li:rpIndepLab}).
    Hence, the thesis follows from the IH.
\end{proof}

\begin{lemma}\label{l:monitorNoOutput}
    \leavevmode
    \begin{itemize}
        \item Suppose
            \begin{itemize}
                \item $\mathcal{R} \satisfies{ \monImpl{ \bufImpl{ p }{ P_0 }{ \epsi } }{ \mc{M_0} }{ \epsi } }{ \RTs_0 }[ p ]$,
                \item where $\coherent{ \gtc{G_0} }{ p }{ \mc{M_0} }{ \RTs_0 }{ \epsi }$
                \item for well-formed $\gtc{G_0}$ with $p \in \prt(\gtc{G_0})$.
            \end{itemize}
        \item Take any $\RTs,\Lbls,\vect m,\vect n,\mc{M},P$ such that
            \begin{itemize}
                \item $( \monImpl{ \bufImpl{ p }{ P }{ \vect m } }{ \mc{M} }{ \vect n } , \RTs , \Lbls ) \in \mathcal{R}$,
                \item $\coherent{ \gtc{G_0} }{ p }{ \mc{M} }{ \RTs }{ \vect n }$,
                \item and $\vect n$ is non-empty.
            \end{itemize}
    \end{itemize}
    If
    \begin{itemize}
        \item $P \ltrans{ p \recv q (\msg j<T_j>) } P'$
        \item and $\unfold(\mc{M}) \neq p \recv q \mBraces{ \msg i<T_i> \mc. \mc{M_i} }_{i \in I}$ for $I \supseteq \braces{j}$,
    \end{itemize}
    then $\unfold(\mc{M}) \in \braces{ p \recv r \mBraces{ \msg k<T_k> \mc. \mc{M_k} }_{k \in K} , p \recv r \mBraces{ k \mc. \mc{M_k} }_{k \in K} , p \send D (\ell) \mc. \mc{M'} \mid r \neq q }$.
\end{lemma}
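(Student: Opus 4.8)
The plan is to argue by contradiction, following the pattern of the adjacent \cref{l:satisfactionOutput,l:satisfactionEnd}. Assume $\unfold(\mc M)$ is none of the three listed shapes. Since $\unfold$ strips leading recursion and, by Transition~\ruleLabel{mon-rec}, $\monImpl{\bufImpl{p}{P}{\vect m}}{\unfold(\mc M)}{\vect n}$ is behaviourally equivalent to $\monImpl{\bufImpl{p}{P}{\vect m}}{\mc M}{\vect n}$ — with coherence of the setup preserved via \cref{l:monitorUnfold} — we may assume $\mc M$ itself does not begin with recursion. Combining the lemma's second hypothesis ($\mc M \neq p\recv q\mBraces{\msg i<T_i>\mc.\mc M_i}_{i\in I}$ with $j\in I$) with the negated conclusion, $\mc M$ is then one of: an output $p\send r\mBraces{\msg k<T_k>\mc.\mc M_k}_{k\in K}$; a data input $p\recv q\mBraces{\msg i<T_i>\mc.\mc M_i}_{i\in I}$ from $q$ with $j\notin I$; a dependency input $p\recv q\mBraces{k\mc.\mc M_k}_{k\in K}$ from $q$; $\pend$; $\perror$; or $\checkmark$.

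I would first dispatch the ``inactive'' cases $\perror$, $\checkmark$, $\pend$ using coherence alone. By \cref{d:coherentSetup} a coherent setup has $\mc M\neq\perror$, which excludes the $\perror$ case. For $\mc M\in\braces{\pend,\checkmark}$, the initial-state clause of \cref{d:coherentSetup} forces $\mc M = \gtToMon(\gtc G,p,\prt(\gtc{G_0})\setminus\braces p)$ for some $\gtc G$ reachable from $\gtc{G_0}$, with every projection $\gtc G\wrt(p,q)^{\mbb L_q}$ collapsing to $\pend$ and hence having no outgoing transition in the relative-type LTS (\cref{d:ltsRelativeTypes}); since coherence also requires $(\gtc G\wrt(p,q)^{\mbb L_q})\ltrans{\vect n(q)}_p\RTs(q)$ for every $q\in\dom(\RTs)$, this forces $\vect n=\epsi$, contradicting the hypothesis that $\vect n$ is non-empty.

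For the remaining ``active'' cases (an output from some $r$, or a (dependency) input from $q$), the new ingredient is the blackbox assumption: the availability of $P\ltrans{p\recv q(\msg j<T_j>)}P'$ entails, by \bref{i:bInputOutput}{Input/Output}, that $P$ has no output transition. I would then combine this with the error transitions exactly as in \cref{l:satisfactionOutput}: pass, via \cref{l:satisfactionReverseBuffer}, to a state $(\monImpl{\bufImpl{p}{P}{\vect m}}{\mc M}{\epsi},\RTs',\Lbls')\in\mathcal R$ with the same monitor but empty monitor-buffer, still in a coherent setup, in which coherence pins $\mc M$ to $\gtToMon(\gtc G,p,\cdot)$ for a concrete $\gtc G$ and determines the shape of the relevant relative type; exhaust the harmless \ruleLabel{mon-tau}-transitions (finitely many, by \bref{i:bTau}{Finite $\tau$} together with finiteness of $\vect m$, and staying in $\mathcal R$ by \satref{i:sTau}{Tau}); and observe that from the resulting state the progress clause of \nameref{d:satisfaction} forces a transition while the only transitions available are the error transitions \ruleLabel{error-out}/\ruleLabel{error-end}/\ruleLabel{error-in} — because $P$ has no output, the monitor's shape forbids the remaining \ruleLabel{mon-*} moves, and an $\pend$-step would itself trigger \ruleLabel{error-end} since $\vect n\neq\epsi$. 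This produces $\perror_{\braces p}$; by \satref{i:sTau}{Tau} it lies in $\mathcal R$, yet the progress clause then demands a transition that the \nameref{d:ltsNetworks} does not provide — the contradiction. Concluding, $\unfold(\mc M)$ must be one of the three listed shapes.

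I expect the genuine obstacle to be the output case $\mc M = p\send r\mBraces{\msg k<T_k>\mc.\mc M_k}_{k\in K}$: after exhausting $\tau$-steps and buffer-reads, the ``stuck'' blackbox may perform a \emph{correct} output, which is consistent with an output monitor, so the contradiction does not arise as cleanly as it does for the error transitions. Resolving this requires using coherence to see that $\RTs'(r)$ is an output by $p$ whose location precedes those of all other relative types in $\RTs'$ — so that the progress clause keeps forcing transitions — and then arguing that the blackbox cannot both offer the input $p\recv q(\msg j<T_j>)$ and be consistent (along a $\tau$-path permitted by \satref{i:sTau}{Tau}) with eventually producing that output; carefully pinning down this interplay between \bref{i:bInputOutput}{Input/Output}, the coherent setup, and the progress clause is the delicate step of the proof.
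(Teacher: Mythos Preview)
Your proposal follows the same overall structure as the paper's proof: contradiction plus case analysis on the shape of $\unfold(\mc M)$. The paper's handling differs from yours mainly in the output case, and your detour there is worth flagging.

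For $\mc M = p \send r \mBraces{\msg k<T_k>\mc.\mc M_k}_{k\in K}$, the paper does \emph{not} route through error transitions or through \cref{l:satisfactionReverseBuffer}. It argues directly: by \nameref{d:coherentSetup}, $\unfold(\RTs(r))$ is an output by $p$ at a location $\mbb L$ preceding all others in $\RTs$, so the progress clause of \nameref{d:satisfaction} forces a transition; but the only non-$\tau$ transition an output monitor admits is \ruleLabel{mon-out}, which requires $P$ to output, and by \bref{i:bInputOutput}{Input/Output} this is impossible since $P$ has an input available---contradiction. Your attempt to reach $\perror_{\{p\}}$ via \ruleLabel{error-out} cannot work here precisely because \ruleLabel{error-out} also needs $P$ to output. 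You correctly diagnose this gap in your final paragraph and arrive at essentially the paper's argument, so the idea is there; just note that no reverse-buffer step is needed for this case.

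For the remaining cases your treatment is close to the paper's, with a minor variation: for $\mc M\in\{\pend,\checkmark\}$ you argue that coherence forces $\vect n=\epsi$ (since ended projections have no transitions in the relative-type LTS), directly contradicting the hypothesis. The paper instead refers back to \cref{l:satisfactionOutput} for $\pend$ and, for $\checkmark$, uses that $\checkmark$ is only reachable via \ruleLabel{mon-end}, which already required $\vect n=\epsi$. Both routes are valid. You also list the dependency-input-from-$q$ case, which the paper's proof does not treat explicitly; it would be handled by the same pattern as the data-input-from-$q$ case.
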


\begin{proof}
    Assume, toward a contradiction, that $\unfold(\mc{M}) \notin \braces{ p \recv r \mBraces{ \msg k<T_k> \mc. \mc{M_k} }_{k \in K} , p \recv r \mBraces{ k \mc. \mc{M_k} }_{k \in K} , p \send D (\ell) \mc. \mc{M'} \mid r \neq q }$.
    We discuss each possible shape of $\unfold(\mc{M})$ separetely (w.l.o.g., assume $\mc{M}$ does not start with a recursive definition, i.e., $\unfold(\mc{M}) = \mc{M}$).
    \begin{itemize}
        \item
            $\mc{M} = p \send r \mBraces{ \msg k<T_k> \mc. \mc{M_k} }_{k \in K}$.
            By definition of \nameref{d:coherentSetup}, $\unfold\big(\RTs(r)\big) = p \send r^{\mbb L} \rtBraces{ \msg k<T_k> \rtc. \rtc{R_k} }_{k \in K}$ for some $\mbb L$.
            It must then be that $\mbb L$ is the earliest location in all $\RTs$.
            Then, by the progress property of \nameref{d:satisfaction}, the monitored blackbox must eventually do an output transition.
            However, by \bref{i:bInputOutput}{Input/Output}, since $P$ does an input transition, it cannot do an output transition: a contradiction.

        \item
            $\mc{M} = p \recv q \mBraces{ \msg i<T_i> \mc. \mc{M_i} }_{i \in I}$ for $j \notin I$.
            Analogous to the similar case in the proof of \cref{l:satisfactionOutput}.

        \item
            $\mc{M} = \pend$.
            Analogous to the similar case in the proof of \cref{l:satisfactionOutput}.

        \item
            $\mc{M} = \perror$.
            Analogous to the similar case in the proof of \cref{l:satisfactionOutput}.

        \item
            $\mc{M} = \checkmark$.
            It cannot be the case the $\mc{M_0} = \checkmark$, and, by the \nameref{d:ltsNetworks}, $\mc{M} = \checkmark$ can only be reached through a transition labeled $\pend$.
            Hence, the current monitored blackbox must have been reached through \satref{i:sEnd}{End} from $\monImpl{ \bufImpl{ p }{ P }{ \vect m } }{ \pend }{ \vect n}$.
            An $\pend$-transition from this state requires that $\vect n$ is empty, which it is not: a contradiction.
    \end{itemize}
    Hence, the thesis must hold indeed.
\end{proof}

\begin{lemma}\label{l:monitorNotCheck}
    \leavevmode
    \begin{itemize}
        \item Suppose
            \begin{itemize}
                \item $\mathcal{R} \satisfies{ \monImpl{ \bufImpl{ p }{ P_0 }{ \epsi } }{ \mc{M_0} }{ \epsi } }{ \RTs_0 }[ p ]$,
                \item where $\coherent{ \gtc{G_0} }{ p }{ \mc{M_0} }{ \RTs_0 }{ \epsi }$
                \item for well-formed $\gtc{G_0}$ with $p \in \prt(\gtc{G_0})$.
            \end{itemize}
        \item Take any $\RTs,\Lbls,\vect m,\vect n,\mc{M},P$ such that
            \begin{itemize}
                \item $( \monImpl{ \bufImpl{ p }{ P }{ \vect m } }{ \mc{M} }{ \vect n } , \RTs , \Lbls ) \in \mathcal{R}$, $\coherent{ \gtc{G_0} }{ p }{ M }{ \RTs }{ \vect n }$,
                \item and $\vect n$ is non-empty.
            \end{itemize}
    \end{itemize}
    If $P \ltrans{ \tau } P'$, then $\mc{M} \neq \checkmark$.
\end{lemma}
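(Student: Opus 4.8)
The plan is to argue by contradiction: assume $\mc{M} = \checkmark$ and derive a contradiction from the coherent setup $\coherent{\gtc{G_0}}{p}{\checkmark}{\RTs}{\vect n}$ together with $\vect n \neq \epsi$. This argument in fact does not use the hypothesis $P \ltrans{\tau} P'$ at all: it establishes the stronger fact that a monitored blackbox of $p$ in this satisfaction relation with a non-empty monitor buffer and a coherent setup cannot have monitor $\checkmark$. (The $\tau$-hypothesis is retained only because that is the shape in which the lemma is applied in the transparency proof, where it serves to unlock Transition~\ruleLabel{mon-tau}.)

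First I would unfold the coherent setup. Since $\checkmark \in \braces{ \gtToMon(\gtc{G},p,\prt(\gtc{G_0}) \setminus \braces{p}) , \checkmark }$ trivially, $\coherent{\gtc{G_0}}{p}{\checkmark}{\RTs}{\vect n}$ lands in the \textbf{(Initial state)} case: there are $\gtc{G},\vect \ell$ with $\gtc{G_0} \ltrans{ \vect \ell } \gtc{G}$, the buffer $\vect n$ contains only messages whose senders lie in $\dom(\RTs)$, for every $q \in \dom(\RTs)$ there is $\mbb L_q$ with $(\gtc{G} \wrt (p,q)^{\mbb L_q}) \ltrans{ \vect n(q) }_p \RTs(q)$, and --- because $\mc{M} = \checkmark$ --- $\gtToMon(\gtc{G},p,\prt(\gtc{G_0}) \setminus \braces{p}) = \pend$. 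Since $\vect n \neq \epsi$, the constraint on senders yields some $q \in \dom(\RTs)$ with $\vect n(q) \neq \epsi$; by \cref{d:ltsRelativeTypes} a non-trivial sequence $(\gtc{G} \wrt (p,q)^{\mbb L_q}) \ltrans{ \vect n(q) }_p \RTs(q)$ forces $\unfold(\gtc{G} \wrt (p,q)^{\mbb L_q})$ to begin with an output $q \send p^{\mbb L} \rtBraces{ \msg i<T_i> \rtc. \rtc{R_i} }_{i \in I}$ or a dependency $(q \lozenge r) \send p^{\mbb L} \rtBraces{ i \rtc. \rtc{R_i} }_{i \in I}$ towards $p$. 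In particular $\gtc{G} \wrt (p,q)^{\mbb L_q} \neq \pend$, so $p$ and $q$ interact in $\gtc{G}$, hence $q \in \prt(\gtc{G}) \setminus \braces{p} \subseteq \prt(\gtc{G_0}) \setminus \braces{p}$, and $\gtc{G} \wrt (p,q)^{\braces{\epsi}} \neq \pend$ (prepending or stripping a location prefix preserves being $\pend$). But from $\gtToMon(\gtc{G},p,\prt(\gtc{G_0}) \setminus \braces{p}) = \pend$, unfolding the definition of $\gtToMon$ --- exactly as in the proof of \cref{l:monitorUnfoldEnd} --- gives $\gtc{G} \wrt (p,q')^{\braces{\epsi}} = \pend$ for every $q' \in \prt(\gtc{G_0}) \setminus \braces{p}$, in particular for $q' = q$: a contradiction. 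Hence $\mc{M} \neq \checkmark$.

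An alternative route, parallel to the $\mc{M} = \checkmark$ cases of \cref{l:satisfactionOutput,l:monitorNoOutput}, is also available: $\mc{M_0} \neq \checkmark$ (the root monitor is synthesised from $\gtc{G_0}$), and the only network transition producing the monitor $\checkmark$ is Transition~\ruleLabel{mon-end}, which demands an empty monitor buffer; so the state $\monImpl{ \bufImpl{ p }{ P }{ \vect m } }{ \checkmark }{ \vect n }$ must have entered $\mathcal{R}$ through \satref{i:sEnd}{End}, which pairs $\checkmark$ with the empty buffer and $\RTs = \emptyset$ and records that the state $\ntrans$; thus neither \satref{i:sInput}{Input}/\satref{i:sDependencyInput}{Dependency input} nor \satref{i:sTau}{Tau} can afterwards refill the buffer, again contradicting $\vect n \neq \epsi$. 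I do not expect a genuine obstacle here: the only non-routine ingredient --- that $\gtToMon(\gtc{G},p,D) = \pend$ entails $\gtc{G} \wrt (p,q)^{\braces{\epsi}} = \pend$ for all $q \in D$ --- is already used elsewhere (e.g.\ in \cref{l:monitorUnfoldEnd}), and the rest is a mechanical unfolding of the definitions of coherent setup, relative projection, and the relative-type and network LTSs.
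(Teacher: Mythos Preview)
Your proposal is correct, but it takes a genuinely different route from the paper. The paper's proof is three lines and uses the hypothesis $P \ltrans{\tau} P'$ essentially: assuming $\mc{M} = \checkmark$, it observes that $\mc{M_0} \neq \checkmark$, so the $\checkmark$ monitor can only have arisen via Transition~\ruleLabel{mon-end}, which means the blackbox performed an $\pend$-transition somewhere between $P_0$ and $P$; but the blackbox assumption \textbf{(End)} (\cref{d:blackboxLTSassumptions}) forbids any transition after $\pend$, directly contradicting $P \ltrans{\tau} P'$. Your first argument instead ignores the $\tau$-transition entirely and derives the contradiction from $\vect n \neq \epsi$ via the coherent setup, which is sound and in fact yields a stronger statement (the monitor cannot be $\checkmark$ whenever the outer buffer is non-empty, regardless of what $P$ does). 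The cost is that you need the auxiliary fact ``$\gtToMon(\gtc{G},p,D) = \pend$ implies $\gtc{G} \wrt (p,q) = \pend$ for all $q \in D$'', which is true by induction on $\gtc{G}$ but is not quite what \cref{l:monitorUnfoldEnd} proves (that lemma only covers the recursive case; the exchange case with $p \notin \{s,r\}$ and no dependencies needs its own inductive step). Your alternative route is closer in spirit to the paper's but still argues via the buffer and minimality rather than via the blackbox \textbf{(End)} assumption; the paper's use of that assumption is the shortcut you are missing.
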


\begin{proof}
    Suppose, toward a contradiction, that $\mc{M} = \checkmark$.
    We have $\mc{M_0} \neq \checkmark$, so this state must have been reached through an $\pend$-transition between $P_0$ and $P$.
    However, by \bref{i:bEnd}{End}, there can be no transitions after an $\pend$-transition: a contradiction.
\end{proof}

\begin{lemma}\label{l:labelOracleDepOut}
    Suppose given $\RTs : \bm{P} \rightarrow \bm{R}$ and $\Lbls : \Pow(\vect {\bm{L}}) \rightarrow \bm{L}$.

    If there is $( q , \rtc{R} ) \in \RTs$ such that $\unfold(\rtc{R}) = (p \lozenge r) \send q^{\mbb L} \rtBraces{ i \rtc. \rtc{R_i} }_{i \in I}$, then, for every $j \in I$,
    \[
        \LO( p , \RTs , \Lbls ) = \LO( p , \RTs \update{q \mapsto \rtc{R_j}} , \Lbls ).
    \]
\end{lemma}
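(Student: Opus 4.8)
The plan is to prove the equality by a direct inspection of the clauses defining $\LO$ in \cref{f:labelOracle}. The key observation is that an entry $(q,\rtc{R}) \in \RTs$ whose unfolding has the form $(p \lozenge r) \send q^{\mbb L} \rtBraces{ i \rtc. \rtc{R_i} }_{i \in I}$ --- a dependency output \emph{from} $p$ \emph{to} $q$ --- can influence $\LO(p,\RTs,\Lbls)$ only through the (Dependency Output) clause: this shape is not an exchange, so it is excluded from the (Output) and (Input) clauses; it is a dependency \emph{from} $p$ rather than \emph{towards} $p$, so it is excluded from the two dependency-input clauses; and since it differs from $\pend$, the side condition of the (End) clause fails. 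The (Dependency Output) clause is moreover the only clause that recurses without prepending an action, contributing exactly $\LO(p,\RTs\update{q\mapsto\rtc{R_j}},\Lbls)$ for the label $j$ that $\Lbls$ records at a set of locations overlapping $\mbb L$. This mirrors the remark after \cref{f:labelOracle} that the \nameref{d:enhancedLTS} renames dependency outputs to $\tau$, so the oracle simply ``looks past'' them.

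Writing $\RTs' := \RTs\update{q\mapsto\rtc{R_j}}$, I would then compare the two sides clause by clause. The entry $(q,\rtc{R})$ contributes precisely the summand $\LO(p,\RTs',\Lbls)$ to $\LO(p,\RTs,\Lbls)$ and nothing else, and that summand already contains whatever the $q$-entry $\rtc{R_j}$ of $\RTs'$ contributes to $\LO(p,\RTs',\Lbls)$. Every other entry $q'' \neq q$ is handled by the same clause on both sides; the resulting recursive calls differ only in that the $q$-entry is $\rtc{R}$ on the left and $\rtc{R_j}$ on the right, hence agree by the statement applied to that strictly smaller configuration. The (Tau) clauses reduce to $\tau + \LO(p,\RTs,\Lbls)$ and $\tau + \LO(p,\RTs',\Lbls)$, which agree by the same appeal, and the (End) clause, if it fires on the right, is subsumed on the left by the (Dependency Output)-then-(End) path. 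Since projected relative types admit only finitely many consecutive dependency outputs (by \nameref{d:wf} and \cref{li:rpRec}), this can be organized as an induction on the number of leading dependency outputs followed by coinduction on the remaining, guarded structure --- equivalently, one checks that both sides satisfy the same defining equations for $\LO$.

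The main obstacle is the self-referential interplay between the (Dependency Output) clause, which recurses without emitting an action, and the (Tau) clause: one must ensure this does not produce an ill-founded recursion, which is precisely where the absence of infinite chains of dependency outputs in projected relative types is used. A secondary point of care is that the equality relies on $\Lbls$ already recording the label $j$ of the branch in question at a location overlapping $\mbb L$, so that the (Dependency Output) clause fires with continuation $\rtc{R_j}$; this holds whenever the lemma is invoked, since $p$ only reaches a dependency output after performing --- and thus recording the outcome of --- the originating exchange.
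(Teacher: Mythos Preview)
Your approach is essentially the paper's: both argue directly from the clauses of \cref{f:labelOracle}. The paper's proof is far terser---it simply notes that the right-hand side is contained in the left by the (Dependency Output) clause, and that conversely the update to the $q$-entry removes no sequences because no other entry of $\RTs$ and nothing in $\Lbls$ changes---whereas you spell this out as an explicit clause-by-clause comparison and make the coinductive structure visible.

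Your final caveat is well taken and sharper than anything the paper says: the (Dependency Output) clause only fires for the label $j$ actually recorded in $\Lbls$ at some $\mbb L' \overlap \mbb L$, so the ``for every $j \in I$'' in the statement is really ``for the $j$ determined by $\Lbls$'', which is how the lemma is used at every call site (always immediately after \satref{i:sDependencyOutput}{Dependency Output} has been applied, so the label is indeed recorded). The paper's proof silently assumes this; you are right to flag it.
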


\begin{proof}
    Trivally, by the definition of \nameref{d:labelOracle}, the right-hand-side is a subset of the left-hand-side.
    For the other direction, the update of $\RTs$ might add additional sequences.
    However, since the update to $\RTs$ does not affect any other relative types and $\Lbls$ is not updated, no sequences are removed.
    Hence, the left-hand-side is a subset of the right-hand-side.
\end{proof}

\begin{lemma}\label{l:LOAlpha}
    Suppose given $\RTs : \bm{P} \rightarrow \bm{R}$, $\Lbls : \Pow(\vect {\bm{L}}) \rightarrow \bm{L}$, $\Omega = \LO(p,\RTs,\Lbls)$, and $\Omega' = \Omega(\alpha)$.

    Then all of the following hold:
    \begin{itemize}

        \item
            If $\alpha = p \send q (\msg j<T_j>)$, then $\RTs(q) \unfoldeq p \send q^{\mbb L} \rtBraces{ \msg i<T_i> \rtc. \rtc{R_i} }_{i \in I}$ with $j \in I$, and $\Omega' = \LO(p,\RTs\update{q\mapsto \rtc{R_j}},\Lbls\update{\mbb L\mapsto j})$.

        \item
            If $\alpha = p \recv q (\msg j<T_j>)$, then $\RTs(q) \unfoldeq q \send p^{\mbb L} \rtBraces{ \msg i<T_i> \rtc. \rtc{R_i} }_{i \in I}$ with $j \in I$, and $\Omega' = \LO(p,\RTs\update{q\mapsto \rtc{R_j}},\Lbls\update{\mbb L\mapsto j})$.

        \item
            If $\alpha = p \recv q \Parens{j}$, then $\RTs(q) \unfoldeq (q \lozenge r) \send p^{\mbb L} \rtBraces{ i \rtc. \rtc{R_i} }_{i \in I}$ with $j \in I$.
            If $\not\exists \mbb L' \in \dom(\Lbls).~\mbb L' \overlap \mbb L$, then $\Omega' = \LO(p,\RTs\update{q\mapsto \rtc{R_j}},\Lbls\update{\mbb L\mapsto j})$.
            If $\exists (\mbb L',j) \in \Lbls.~ \mbb L' \overlap \mbb L$, then $\Omega' = \LO(p,\RTs\update{q\mapsto \rtc{R_j}},\Lbls)$.

        \item
            If $\alpha = \pend$, then $\forall (q,\rtc{R}) \in \RTs.~\rtc{R}\unfoldeq\pend$, and $\Omega' = \LO(p,\emptyset,\emptyset)$.

        \item
            If $\alpha = \tau$, then $\Omega' = \Omega = \LO(p,\RTs,\Lbls)$.
    \end{itemize}
\end{lemma}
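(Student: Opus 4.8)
Throughout, write $\Omega = \LO(p,\RTs,\Lbls)$ for the set of action-sequences defined clause-by-clause in \cref{f:labelOracle}. The plan is to unfold the definition of $\Omega(\alpha)$ and read off the answer from the clauses. Recall that $\Omega' := \Omega(\alpha)$ is, by the maximality condition, exactly the set $\braces{ \vect\beta \mid \alpha , \vect\beta \in \Omega }$ of continuations of $\alpha$ inside $\Omega$; this is well-defined, and non-empty precisely when $\alpha$ occurs as a leading action of some sequence in $\Omega$ --- which is the case whenever the lemma is applied, i.e., right after an $\alpha$-transition of the enhanced LTS. So the proof reduces to: (i)~identify which clauses of $\LO$ can put $\alpha$ at the front of a sequence, and (ii)~show that the continuations of such sequences form exactly the recursive $\LO$-instance claimed in the statement.

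I would then do a case analysis on $\alpha$. For $\alpha = p \send q (\msg j<T_j>)$, no clause other than (Output) emits this action at the front (the (Tau) clause emits only $\tau$, and the (Dependency Output) clause emits no action of its own); hence every $p \send q (\msg j<T_j>) , \vect\beta \in \Omega$ arises from the (Output) clause for the entry $(q,\RTs(q))$, which forces $\RTs(q) \unfoldeq p \send q^{\mbb L} \rtBraces{ \msg i<T_i> \rtc. \rtc{R_i} }_{i\in I}$ with $j \in I$, and whose continuations $\vect\beta$ range over $\LO(p,\RTs\update{q\mapsto\rtc{R_j}},\Lbls\update{\mbb L\mapsto j})$; so $\Omega'$ equals this set. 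The argument for $\alpha = p \recv q (\msg j<T_j>)$ (via the (Input) clause) and $\alpha = \pend$ (via the (End) clause, which additionally requires every $\rtc{R}$ in $\RTs$ to satisfy $\rtc{R}\unfoldeq\pend$, and whose continuation is $\LO(p,\emptyset,\emptyset)$) is identical in structure. For $\alpha = p \recv q \Parens{j}$ I split on the (mutually exclusive) conditions of (Fresh Dependency Input) and (Known Dependency Input): in both $\RTs(q) \unfoldeq (q \lozenge r) \send p^{\mbb L} \rtBraces{ i \rtc. \rtc{R_i} }_{i\in I}$ with $j \in I$, and the continuation set is $\LO(p,\RTs\update{q\mapsto\rtc{R_j}},\Lbls\update{\mbb L\mapsto j})$ when no $\mbb L'\in\dom(\Lbls)$ overlaps $\mbb L$, and $\LO(p,\RTs\update{q\mapsto\rtc{R_j}},\Lbls)$ when $(\mbb L',j)\in\Lbls$ for some $\mbb L'\overlap\mbb L$. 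Finally, for $\alpha = \tau$, only the (Tau) clause $\tau + \LO(p,\RTs,\Lbls)$ contributes a leading $\tau$, so $\Omega' = \LO(p,\RTs,\Lbls) = \Omega$. In each case, uniqueness of the witnessing entry $(q,\rtc{R})$ and of the pair $(\mbb L, I)$ --- needed to make ``the'' $\mbb L$ meaningful --- follows from relative projection with locations (\cref{d:relativeProjectionWithLocs}) annotating distinct exchanges with distinct location sets, together with $\RTs$ being a function.

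The one genuinely delicate point, and the main obstacle, is the (Dependency Output) clause: it is the only clause with no leading action, contributing all of $\LO(p,\RTs\update{q\mapsto\rtc{R_j}},\Lbls)$ to $\Omega$ whenever $\RTs(q) \unfoldeq (p\lozenge r)\send q^{\mbb L}\rtBraces{ i \rtc. \rtc{R_i} }_{i\in I}$ and $j$ is the label recorded for $\mbb L$ in $\Lbls$. A priori this can relay $\alpha$-prefixed sequences coming from ``behind'' a pending dependency output, threatening the uniqueness argument. This is resolved by \cref{l:labelOracleDepOut}, which gives $\LO(p,\RTs,\Lbls) = \LO(p,\RTs\update{q\mapsto\rtc{R_j}},\Lbls)$: any such relayed sequence is already produced directly by the clauses of the right-hand oracle, so it does not alter which clause is ``responsible'' for $\alpha$, nor the identified $\Omega'$. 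Modulo this observation, every case is a direct unfolding of \cref{f:labelOracle}, so I would present the (Output) case in full and note the others as symmetric.
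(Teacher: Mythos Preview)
Your proposal is correct and follows essentially the same approach as the paper: a case analysis on $\alpha$ that identifies the unique clause of \cref{f:labelOracle} responsible for producing $\alpha$ as a leading action, invoking \cref{l:labelOracleDepOut} to neutralize the (Dependency Output) clause. The paper's own proof is a one-line remark to exactly this effect; your write-up simply spells out the case analysis and the role of \cref{l:labelOracleDepOut} in more detail.
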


\begin{proof}
    Keeping in mind \cref{l:labelOracleDepOut}, $\Omega$ is generated with each possible $\alpha$ appearing exactly once under specific conditions and with unique continuation.
\end{proof}

\begin{theorem}[Transparency]\label{t:transparencyBisim}
    Suppose given
    \begin{itemize}
        \item a well-typed global type $\gtc{G_0}$,
        \item a participant $p \in \prt(\gtc{G_0})$, and
        \item a blackbox $P_0$.
    \end{itemize}
    Let
    \begin{itemize}
        \item $\RTs_0 := \braces{ ( q , \gtc{G_0} \wrt (p,q)^{\braces{\epsi}} ) \mid q \in \prt(\gtc{G_0}) \setminus \braces{p} }$, and
        \item $\mc{M_0} := \gtToMon( \gtc{G_0} , p , \prt(\gtc{G_0}) \setminus \braces{p} )$.
    \end{itemize}
    Suppose $\satisfies{ \monImpl{ \bufImpl{ p }{ P_0 }{ \epsi } }{ \mc{M_0} }{ \epsi } }{ \RTs_0 }[ p ]$ minimally (\cref{d:minimalSat}).

    Let $\Omega_0 := \LO(p,\RTs_0,\emptyset)$.
    Then $\monImpl{ \bufImpl{ p }{ P_0 }{ \epsi } }{ \mc{M_0} }{ \epsi } \weakBisim{}{\Omega_0} \bufImpl{ p }{ P_0 }{ \epsi }$.
\end{theorem}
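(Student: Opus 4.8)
The plan is to exhibit a concrete weak bisimulation $\mathcal{B}$ relating the monitored and unmonitored blackbox, following the recipe in the sketch: glue the monitor's buffer onto the blackbox's buffer. Fix a minimal satisfaction $\mathcal{R}$ witnessing $\satisfies[\vdash]{ \monImpl{ \bufImpl{ p }{ P_0 }{ \epsi } }{ \mc{M_0} }{ \epsi } }{ \gtc{G_0} }[ p ]$ (\cref{d:minimalSat}). I would define $\mathcal{B}$ to contain exactly the triples
\[
    \big( \monImpl{ \bufImpl{ p }{ P }{ \vect m } }{ \mc{M} }{ \vect n } ,~ \LO(p,\RTs,\Lbls) ,~ \bufImpl{ p }{ P }{ \vect n , \vect m } \big)
\]
for which there exist $\RTs,\Lbls$ with $( \monImpl{ \bufImpl{ p }{ P }{ \vect m } }{ \mc{M} }{ \vect n } , \RTs , \Lbls ) \in \mathcal{R}$ and $\coherent{ \gtc{G_0} }{ p }{ \unfold(\mc{M}) }{ \RTs }{ \vect n }$ (\cref{d:coherentSetup}), i.e.\ coherence is checked up to unfolding leading recursive monitors, which the \nameref{d:ltsNetworks} handles via \ruleLabel{mon-rec}. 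The first step is to check the initial pair lies in $\mathcal{B}$ with oracle $\Omega_0$: take $\RTs := \RTs_0$, $\Lbls := \emptyset$, $\mc{M} := \mc{M_0}$; then $\coherent{ \gtc{G_0} }{ p }{ \mc{M_0} }{ \RTs_0 }{ \epsi }$ holds through $\gtc{G_0} \ltrans{ \epsi } \gtc{G_0}$ and the initial-state clause of coherence, and $\LO(p,\RTs_0,\emptyset) = \Omega_0$ by definition.

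Next I would verify that $\mathcal{B}$ is a bisimulation, one clause at a time, always invoking \cref{l:LOAlpha} and \cref{l:labelOracleDepOut} to keep the oracle component in sync and \cref{l:satisfactionNoError} to rule out transitions into $\perror_{\braces{p}}$. Clause~1 (the monitored blackbox moves) is the easy direction: a \ruleLabel{dep}-transition (a dependency output renamed to $\tau$) is matched by the empty move of the unmonitored blackbox, and the oracle is unchanged by \cref{l:labelOracleDepOut}; a \ruleLabel{buf-mon}-transition that enqueues $n'$ in the monitor buffer is matched by the \ruleLabel{buf-unmon}-transition enqueuing the same message, so the glued buffers stay equal; and a \ruleLabel{no-dep}-transition decomposes, through the network LTS, into one of \ruleLabel{mon-out}, \ruleLabel{mon-in}, \ruleLabel{mon-out-dep-empty}, \ruleLabel{mon-tau}, \ruleLabel{mon-rec}, \ruleLabel{mon-end}, each of which the unmonitored blackbox mimics directly (inputs become $\tau$ on both sides, and a \ruleLabel{mon-in} step is invisible to the glued buffer). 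In each case coherence is preserved because the monitor advances exactly one step along the global type it was synthesised from, and $\Omega$ updates per \cref{l:LOAlpha}.

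The crux is Clause~2 (the unmonitored blackbox moves), where I would reconstruct the monitor's behaviour from a transition of $P$. A \ruleLabel{buf-unmon}-transition is matched by \ruleLabel{buf-mon}. For a \ruleLabel{no-dep}-transition, $\bufImpl{ p }{ P }{ \vect n , \vect m } \ltrans{ \alpha } \cdot$ comes from a transition of $P$: if it is an output, \cref{l:satisfactionOutput} forces $\unfold(\mc{M})$ to be the matching output, possibly guarded by one dependency output, so the monitored blackbox first emits that dependency (a $\tau$ via \ruleLabel{dep}) and then matches with \ruleLabel{mon-out}; if it is an $\pend$-transition, \cref{l:satisfactionEnd} gives $\vect n = \epsi$ and $\unfold(\mc{M}) \in \braces{ \pend , p \send D (\ell) \mc. \pend }$, handled the same way with \ruleLabel{mon-end}, landing in a $\checkmark$-state related to the unmonitored residual via the initial-state clause of coherence with $\RTs$ all $\pend$; if it is a $\tau$, then $\mc{M} \neq \checkmark$ (by \cref{l:monitorNotCheck} when $\vect n$ is non-empty, and by \bref{i:bEnd}{End} otherwise), so \ruleLabel{mon-tau} applies. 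The genuinely delicate case is an input (or dependency input) of $P$, which is $\tau$ on the unmonitored side and reads a message that may sit anywhere in the glued buffer $\vect n , \vect m$: if it already lies in $\vect m$ we match directly, but if it lies in $\vect n$ the monitor must first relay it — and, by \cref{l:monitorNoOutput} together with \cref{l:satisfactionReverseBuffer}, it may first have to relay other buffered messages, emit dependency outputs, and even receive further messages from the oracle (via \ruleLabel{buf-mon}) before its input monitor aligns with the sender $q$.

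This last point is where the catch-up word $\vect b$ of \cref{d:weakBisim} enters: the monitored blackbox performs those extra \ruleLabel{buf-mon}-receptions (interleaved with $\tau$'s) as $\vect b$, the unmonitored blackbox then catches up on $\vect b$ with \ruleLabel{buf-unmon}, and coherence together with \cref{l:finiteReceive} guarantees that this relaying terminates in finitely many steps and lands in a state again related by $\mathcal{B}$ under the updated oracle. I expect this termination-and-alignment argument for buffered inputs — reconciling where messages physically reside with what the synthesised monitor expects next, under oracle control — to be the main obstacle; the rest reduces to routine bookkeeping on $\coherent{}$ and $\LO$, and Clause~2 for the remaining rule shapes is symmetric to the corresponding Clause~1 cases. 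Conclude by noting $(\monImpl{ \bufImpl{ p }{ P_0 }{ \epsi } }{ \mc{M_0} }{ \epsi }, \Omega_0, \bufImpl{ p }{ P_0 }{ \epsi }) \in \mathcal{B}$, hence $\monImpl{ \bufImpl{ p }{ P_0 }{ \epsi } }{ \mc{M_0} }{ \epsi } \weakBisim{}{\Omega_0} \bufImpl{ p }{ P_0 }{ \epsi }$.
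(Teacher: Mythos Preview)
Your proposal is correct and follows essentially the same route as the paper: the same bisimulation candidate $\mathcal{B}$ (glue the monitor buffer onto the blackbox buffer, index by $\LO(p,\RTs,\Lbls)$, constrain by membership in the minimal $\mathcal{R}$ plus \nameref{d:coherentSetup}), the same suite of lemmas (\cref{l:satisfactionNoError,l:satisfactionOutput,l:satisfactionEnd,l:monitorNoOutput,l:monitorNotCheck,l:satisfactionReverseBuffer,l:finiteReceive,l:LOAlpha,l:labelOracleDepOut}), and the same identification of the hard case (Clause~2, input read from the monitor buffer, with the \ruleLabel{buf-mon} catch-up word $\vect b$ and termination via \cref{l:finiteReceive}). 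Your use of $\unfold(\mc{M})$ in the coherence side-condition and your explicit split on whether $\vect n$ is empty in the \ruleLabel{buf-tau} case are minor technical refinements, not a different argument.
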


\begin{proof}
    By \nameref{d:satisfaction}, there exists a minimal satisfaction $\mathcal{R}$ at $p$ such that $( \monImpl{ \bufImpl{ p }{ P_0 }{ \epsi } }{ \mc{M_0} }{ \epsi } , \RTs_0 , \emptyset ) \in \mathcal{R}$.
    Let
    \[
        \mathcal{B} := \{ \begin{array}[t]{@{}l@{}}
            ( \monImpl{ \bufImpl{ p }{ P }{ \vect m } }{ \mc{M} }{ \vect n } , \Omega , \bufImpl{ p }{ P }{ \vect n , \vect m } )
            \\
            {} \mid \forall P , M , \vect m , \vect n.~ \exists \RTs , \Lbls.~ \big( \begin{array}[t]{@{}l@{}}
                ( \monImpl{ \bufImpl{ p }{ P }{ \vect m } }{ \mc{M} }{ \vect n } , \RTs , \Lbls ) \in \mathcal{R}
                \\ {} \wedge
                \coherent{G_0}{p}{\mc{M}}{\RTs}{\vect n}
                \\ {} \wedge
                \Omega = \LO( p , \RTs , \Lbls )
        \big) \}. \end{array} \end{array}
    \]
    Clearly, $( \monImpl{ \bufImpl{ p }{ P_0 }{ \epsi } }{ \mc{M_0} }{ \epsi } , \Omega_0 , \bufImpl{ p }{ P_0 }{ \epsi } ) \in \mathcal{B}$, with $P = P_0 , \mc{M} = \mc{M_0} , \vect m = \vect n = \epsi , \RTs = \RTs_0 , \Lbls = \emptyset$, and clearly $\coherent{G_0}{p}{\mc{M_0}}{\RTs_0}{\epsi}$.

    It remains to show that $\mathcal{B}$ is a weak bisimulation.
    Take any $( \net P , \Omega , \net Q ) \in \mathcal{B}$: there are $P , \mc{M} , \vect n , \vect m$ such that $\net P = \monImpl{ \bufImpl{ p }{ P }{ \vect m } }{ \mc{M} }{ \vect n }$, $\net Q = \bufImpl{ p }{ P }{ \vect n , \vect m }$, and there are $\RTs , \Lbls$ such that $( \net P , \RTs , \Lbls ) \in \mathcal{R}$, $\coherent{\gtc{G_0}}{p}{\mc{M}}{\RTs}{\vect n}$, and $\Omega = \LO( p , \RTs , \Lbls )$.
    We show that the two conditions of \cref{d:weakBisim} hold.

    \begin{enumerate}
        \item
            Take any $\net P' , \alpha , \Omega_1$ such that $\net P \bLtrans{\Omega}{ \alpha }{\Omega_1} \net P'$.
            The analysis depends on the rule from \cref{d:enhancedLTS} used to derive the transition (Transition~\ruleLabel{buf}, \ruleLabel{dep}, or~\ruleLabel{no-dep}).
            We never need to read extra messages, so in each case we show the thesis for $\vect b := \epsi$ and $\net P'' := \net P'$.
            That is, in each case we show that there exists $\net Q'$ such $\net Q \bLtrans*{\Omega}{\alpha}{\Omega_1} \net Q'$ and $( \net P' , \Omega_1 , \net Q' ) \in \mathcal{B}$.
            \begin{itemize}
                \item
                    Transition~\ruleLabel{buf-mon}.
                    Then $\alpha \in \braces{ p \recv q (x) , p \recv q \Parens{x} }$, $\net P' = \monImpl{ \bufImpl{ p }{ P }{ \vect m } }{ \mc{M} }{ u , \vect n }$ for $u \in \braces{ q \send p (x) , q \send p \Parens{x} }$, and $\Omega_1 = \Omega(\alpha)$.

                    By \cref{l:LOAlpha}, we have $( q , \rtc{R^q} ) \in \RTs$ where $\unfold(\rtc{R^q})$ is a (dependency) message from $q$ to $p$.
                    For the sake of simplicity, assume w.l.o.g.\ that the message is no dependency.
                    Then $\unfold(\rtc{R^q}) = q \send p^{\mbb L} \rtBraces{ \msg i<T_i> \rtc. \rtc{R^q_i} }_{i \in I}$, $\alpha = p \recv q(x)$, $x = \msg j<T_j>$ with $j \in I$, and $u = q \send p (x)$.
                    Let $\RTs' := \RTs \update{q \mapsto \rtc{R^q_j}}$ and $\Lbls' := \Lbls \update{\mbb L \mapsto j}$.
                    By \satref{i:sInput}{Input}, $( \net P' , \RTs' , \Lbls' ) \in \mathcal{R}$.
                    By \cref{l:LOAlpha}, $\Omega_1 = \LO( p , \RTs' , \Lbls' )$.

                    Since $\mc{M}$ is not updated, and the addition of $u$ to the buffer only adds an input transition to the \nameref{d:coherentSetup} reflected by updated entry for $q$ in $\RTs'$, we have $\coherent{ G_0 }{ p }{ \mc{M} }{ \RTs' }{ u , \vect n }$.
                    Let $\net Q' := \bufImpl{ p }{ P }{ u , \vect n , \vect m }$.
                    By Transition~\ruleLabel{buf}, $\net Q \bLtrans{\Omega}{ \alpha }{\Omega_1} \net Q'$ so $\net Q \bLtrans*{\Omega}{\alpha}{\Omega_1} \net Q'$.
                    Finally, by definition, $( \net P' , \Omega_1 , \net Q' ) \in \mathcal{B}$.

                \item
                    Transition~\ruleLabel{buf-unmon}.
                    This rule does not apply to monitored blackboxes.

                \item
                    Transition~\ruleLabel{dep}.
                    Then $\alpha = \tau$, $\net P \ltrans{ \alpha' } \net P'$ with $\alpha' = s \send r \Parens{j}$ for some $s,r,j$, and $\Omega_1 = \Omega$.

                    This can only have been derived from Transition~\ruleLabel{mon-out-dep}.
                    Then $\mc{M} = p \send (D \cup \braces{r}) (j) \mc. \mc{M'}$, $s = p$, and $\net P \ltrans{ \alpha' } \monImpl{ \bufImpl{ p }{ P }{ \vect m } }{ p \send D (j) \mc. \mc{M'} }{ \vect n } = \net P'$.
                    By \satref{i:sDependencyOutput}{Dependency output}, we have $( r , \rtc{R^r} ) \in \RTs$ with $\unfold(\rtc{R^r}) = (p \lozenge q) \send r^{\mbb L} \rtBraces{ i \rtc. \rtc{R^r_i} }_{i \in I}$ and $j \in I$, and so $( \net P' , \RTs' , \Lbls ) \in \mathcal{R}$ with $\RTs' := \RTs \update{r \mapsto \rtc{R^r_j}}$.

                    In $\RTs'$, only the entry for $r$ has been updated, so $\coherent{ \gtc{G_0} }{ p }{ p \send D (j) \mc. \mc{M'} }{ \RTs' }{ \vect n }$.
                    By \cref{l:labelOracleDepOut}, $\Omega = \LO( p , \RTs' , \Lbls )$.
                    We have $\net Q \bLtrans*{\Omega}{\tau}{\Omega} \net Q$.
                    Finally, by definition, $( \net P' , \Omega , \net Q ) \in \mathcal{B}$.

                \item
                    Transition~\ruleLabel{no-dep}.
                    Then $\alpha \neq s \send r (\ell)$ for any $s,r,\ell$, $\net P \ltrans{ \alpha } \net P'$, and $\Omega_1 = \Omega(\alpha)$.
                    The analysis depends on the derivation of the transition.
                    Some rules are impossible: there is no parallel composition in $\net P$, no rules to derive transitions for buffered blackboxes are possible, and, by \cref{l:satisfactionNoError}, no transitions resulting in an error signal are possible.

                    Some rules require to first unfold recursion in $\mc{M}$, derived by a number of consecutive applications of Transition~\ruleLabel{mon-rec}.
                    We apply induction on this number.
                    The inductive case is trivial by the IH.

                    In the base case, where there are no applications of Transition~\ruleLabel{mon-rec}, we consider each possible rule (Transition~\ruleLabel{mon-out}, \ruleLabel{mon-in}, \ruleLabel{mon-in-dep}, \ruleLabel{mon-tau}, \ruleLabel{mon-out-dep-empty}, and~\ruleLabel{mon-end}).

                    \begin{itemize}
                        \item
                            Transition~\ruleLabel{mon-out}.
                            Then $\alpha = p \send q (\msg j<T_j>)$, $\bufImpl{ p }{ P }{ \vect m } \ltrans{ \alpha } \bufImpl{ p }{ P' }{ \vect m }$, $\mc{M} = p \send q \mBraces{ \msg i<T_i> \mc. \mc{M_i} }_{i \in I}$, $j \in I$, and $\net P' = \monImpl{ \bufImpl{ p }{ P' }{ \vect m } }{ \mc{M_j} }{ \vect n }$.

                            The transition of the buffered blackbox must be due to Transition~\ruleLabel{buf-out}: $P \ltrans{ \alpha } P'$.
                            By \satref{i:sOutput}{Output}, we have $( q , \rtc{R^q} ) \in \RTs$ with $\unfold(\rtc{R^q}) = p \send q^{\mbb L} \rtBraces{ \msg i<T_i> \rtc. \rtc{R^q_i} }_{i \in I}$ and $j \in I$, and so $( \net P' , \RTs' , \Lbls' ) \in \mathcal{R}$ with $\RTs' := \RTs \update{q \mapsto \rtc{R^q_j}}$ and $\Lbls' := \Lbls \update{\mbb L \mapsto j}$.

                            In $\RTs'$, only the entry for $q$ has been updated, so $\coherent{ \gtc{G_0} }{ p }{ M_j }{ \RTs' }{ \vect n }$.
                            By \cref{l:LOAlpha}, $\Omega_1 = \LO( p , \RTs' , \Lbls' )$.
                            Let $\net Q' := \bufImpl{ p }{ P' }{ \vect n , \vect m }$.
                            By Transition~\ruleLabel{buf-out}, $\net Q \ltrans{ \alpha } \net Q'$.
                            Then, by Transition~\ruleLabel{no-dep}, $\net Q \bLtrans{\Omega}{ \alpha }{\Omega_1} \net Q'$ so $\net Q \bLtrans*{\Omega}{ \alpha}{\Omega_1} \net Q'$.
                            Finally, by definition, $( \net P' , \Omega_1 , \net Q' ) \in \mathcal{B}$.

                        \item
                            Transition~\ruleLabel{mon-in}.
                            Then $\alpha = \tau$, $\mc{M} = p \recv q \mBraces{ \msg i<T_i> \mc. \mc{M_i} }_{i \in I}$, $\vect n = \vect n' , u$ where $u = q \send p (\msg j<T_j>)$, $j \in I$, $\net P' = \monImpl{ \bufImpl{ p }{ P }{ u , \vect m } }{ \mc{M_j} }{ \vect n' }$.

                            By \satref{i:sTau}{Tau}, $( \net P' , \RTs , \Lbls ) \in \mathcal{R}$.
                            Since $\mc{M}$ has moved past the input from $q$ ($\RTs(q)$ was already past this point), and it has been removed from $\vect n$, we have $\coherent{ \gtc{G_0} }{ p }{ \mc{M_j} }{ \RTs }{ \vect n' }$.
                            By \cref{l:LOAlpha}, $\Omega_1 = \LO( p , \RTs , \Lbls )$.
                            We have $\net Q \bLtrans*{\Omega}{\tau}{\Omega_1} \net Q$.
                            Finally, by definition, $( \net P' , \Omega_1 , \net Q ) \in \mathcal{B}$.

                        \item
                            Transition~\ruleLabel{mon-in-dep}.
                            Analogous to Transition~\ruleLabel{mon-in}.

                        \item
                            Transition~\ruleLabel{mon-tau}.
                            Then $\alpha = \tau$, $\net P' = \monImpl{ \bufImpl{ p }{ P' }{ \vect m' } }{ \mc{M} }{ \vect n }$, and $\bufImpl{ p }{ P }{ \vect m } \ltrans{ \tau } \bufImpl{ p }{ P' }{ \vect m' }$.

                            By \satref{i:sTau}{Tau}, $( \net P' , \RTs , \Lbls ) \in \mathcal{R}$.
                            By \cref{l:LOAlpha}, $\Omega_1 = \LO( p , \RTs , \Lbls ) = \Omega$.
                            Let $\net Q' := \bufImpl{ p }{ P' }{ \vect n , \vect m' }$.
                            The transition of the buffered blackbox is derived from Transition~\ruleLabel{buf-in}, \ruleLabel{buf-in-dep}, or~\ruleLabel{buf-tau}.
                            In any case, we can add messages to the back of the buffer without affecting the transition, such that $\net Q \ltrans{ \tau } \net Q'$.
                            By Transition~\ruleLabel{no-dep}, $\net Q \bLtrans{\Omega}{\tau}{\Omega_1} \net Q'$ so $\net Q \bLtrans*{\Omega}{\tau}{\Omega_1} \net Q'$.
                            Finally, by definition, $( \net P' , \Omega_1 , \net Q' ) \in \mathcal{B}$.

                        \item
                            Transition~\ruleLabel{mon-out-dep-empty}.
                            Then $\alpha = \tau$, $\mc{M} = p \send \emptyset (\ell) \mc. \mc{M'}$, and $\net P' = \monImpl{ \bufImpl{ p }{ P }{ \vect m } }{ \mc{M'} }{ \vect n }$.

                            By \satref{i:sTau}{Tau}, then $( \net P' , \RTs , \Lbls ) \in \mathcal{R}$.
                            By \cref{l:LOAlpha}, $\Omega_1 = \LO( p , \RTs , \Lbls ) = \Omega$.
                            Since the step from $\mc{M}$ to $\mc{M'}$ has no effect on $\RTs$, $\coherent{ \gtc{G_0} }{ p }{ \mc{M'} }{ \RTs }{ \vect n }$.
                            We have $\net Q \bLtrans*{\Omega}{\tau}{\Omega_1} \net Q$.
                            Then $( \net P' , \Omega_1 , \net Q ) \in \mathcal{B}$.

                        \item
                            Transition~\ruleLabel{mon-end}.
                            Then $\alpha = \pend$, $\mc{M} = \pend$, $\vect n = \epsi$, $\bufImpl{ p }{ P }{ \vect m } \ltrans{\pend} \bufImpl{ p }{ P' }{ \vect m }$, and $\net P' = \monImpl{ \bufImpl{ p }{ P' }{ \vect m } }{ \checkmark }{ \epsi }$.

                            The transition of the buffered blackbox is derived from Transition~\ruleLabel{buf-end}: $P \ltrans{\pend} P'$.
                            Then by the same transition, $\net Q \ltrans{\pend} \bufImpl{ p }{ P' }{ \vect m } =: \net Q'$.
                            By \cref{l:LOAlpha}, $\Omega_1 = \LO( p , \emptyset , \emptyset )$.
                            Then $\net Q \bLtrans*{\Omega}{\pend}{\Omega_1} \net Q'$.
                            By \satref{i:sEnd}{End}, for every $q \in \dom(\RTs)$, $\unfold\big(\RTs(q)\big) = \pend$, and $( \net P' , \emptyset , \emptyset ) \in \mathcal{R}$.
                            Moreover, clearly, $\coherent{ G_0 }{ p }{ \checkmark }{ \emptyset }{ \epsi }$.
                            Then, by definition, $( \net P' , \Omega_1 , \net Q' ) \in \mathcal{B}$.
                    \end{itemize}
            \end{itemize}

        \item
            Take any $\net Q',\alpha,\Omega_1$ such that $\net Q \bLtrans{\Omega}{ \alpha }{\Omega_1} \net Q' \checkmark$.
            The analysis depends on the rule from \cref{d:enhancedLTS} used to derive the transition (Transition~\ruleLabel{buf}, \ruleLabel{dep}, or~\ruleLabel{no-dep}).
            \begin{itemize}
                \item
                    Transition~\ruleLabel{buf-mon}.
                    This rule does not apply to monitored blackboxes.

                \item
                    Transition~\ruleLabel{buf-unmon}.
                    Then $\alpha \in \braces{ p \recv q (x) , p \recv q \Parens{x} }$, $\net Q' = \bufImpl{ p }{ P }{ u , \vect n , \vect m }$ for $u \in \braces{ q \send p (x) , q \send p \Parens(x) }$, and $\Omega_1 = \Omega(\alpha)$.

                    By \cref{l:LOAlpha}, we have $( q , \rtc{R^q} ) \in \RTs$ where $\unfold(\rtc{R^q})$ is a (dependency) message from $q$ to $p$.
                    For the sake of simplicity, assume w.l.o.g.\ that the message is no dependency.
                    Then $\unfold(\rtc{R^q}) = q \send p^{\mbb L} \rtBraces{ \msg i<T_i> \rtc. \rtc{R^q_i} }_{i \in I}$, $\alpha = p \recv q (x)$, $x = \msg j<T_j>$ with $j \in I$, and $u = q \send p (x)$.
                    Let $\RTs' := \RTs \update{q \mapsto \rtc{R^q_j}}$ and $\Lbls' := \Lbls \update{\mbb L \mapsto j}$.
                    Let $\net P' := \monImpl{ \bufImpl{ p }{ P }{ \vect m } }{ \mc{M} }{ u , \vect n }$.
                    By \satref{i:sInput}{Input}, $( \net P' , \RTs' , \Lbls' ) \in \mathcal{R}$.

                    Since $\mc{M}$ is not updated, and the addition of $u$ to the buffer only adds an input to the \nameref{d:coherentSetup} reflected by the updated for $q$ in $\RTs'$, we have $\coherent{ \gtc{G_0} }{ p }{ \mc{M} }{ \RTs' }{ u , \vect n }$.
                    By \cref{l:LOAlpha}, $\Omega_1 = \LO( p , \RTs' , \Lbls' )$.
                    Let $\Omega_2 := \Omega_1$ and $\vect b := \epsi$.
                    We have $\net P \bLtrans*{\Omega}{\vect b}{\Omega} \net P$.
                    By Transition~\ruleLabel{buf}, $\net P \bLtrans{\Omega}{\alpha}{\Omega_2} \net P'$ so $\net P \bLtrans*{\Omega}{\vect b , \alpha}{\Omega_2} \net P'$.
                    Let $\net Q'' := \net Q'$; we have $\net Q' \bLtrans*{\Omega_2}{\vect b}{\Omega_2} \net Q''$.
                    Then, since $\net Q \bLtrans{\Omega}{\alpha}{\Omega_2} \net Q'$, we have $\net Q \bLtrans{\Omega}{\alpha , \vect b}{\Omega_2} \net Q''$.
                    Finally, by definition, $( \net P' , \Omega_2 , \net Q'' ) \in \mathcal{B}$.

                \item
                    Transition~\ruleLabel{dep}.
                    Then $\alpha = \tau$, $\net Q \ltrans{\alpha'} \net Q'$ with $\alpha' = s \send r (\ell)$ for some $s,r,\ell$.
                    There are no rules to derive this transition, so this case does not apply.

                \item
                    Transition~\ruleLabel{no-dep}.
                    Then $\alpha \neq s \send r (\ell)$ for any $s,r,\ell$, $\net Q \ltrans{\alpha} \net Q'$, and $\Omega_1 = \Omega(\alpha)$.
                    The analysis depends on the derivation of the transition.
                    Some rules are impossible: there is no parallel composition in $\net Q$, and no rules to derive transitions for monitored blackboxes are possible.

                    As in case~1 above, we may first need to unfold recursion in $\mc{M}$, which we do inductively.
                    We consider each possible rule (Transition~\ruleLabel{buf-out}, \ruleLabel{buf-in}, \ruleLabel{buf-in-dep}, \ruleLabel{buf-tau}, and~\ruleLabel{buf-end}).
                    \begin{itemize}
                        \item
                            Transition~\ruleLabel{buf-out}.
                            Then $\alpha = p \send q (\msg j<T_j>)$, $\net Q' = \bufImpl{ p }{ P' }{ \vect n , \vect m }$, and $P \ltrans{\alpha} P'$.
                            Then, also by Transition~\ruleLabel{buf-out}, $\bufImpl{ p }{ P }{ \vect m } \ltrans{\alpha} \bufImpl{ p }{ P' }{ \vect m }$.

                            By \cref{l:satisfactionOutput}, $\mc{M} = \mc{M'}$ or $\mc{M} = p \send D (\ell) \mc. \mc{M'}$ where $\mc{M'} = p \send q \mBraces{ \msg i<T_i> \mc. \mc{M'_i} }_{i \in I}$ for some $I \supseteq \braces{j}$.
                            W.l.o.g., assume the latter.
                            Let $\net P' := \monImpl{ \bufImpl{ p }{ P }{ \vect m } }{ \mc{M'} }{ \vect n }$.
                            By induction on the size of $D = \braces{ r_1 , \ldots , r_k }$, we show that $\net P \bLtrans*{\Omega}{\tau }{\Omega} \net P'$, where there exists $\RTs'$ such that $( \net P' , \RTs' , \Lbls ) \in \mathcal{R}$, $\coherent{ \gtc{G_0} }{ p }{ \mc{M'} }{ \RTs' }{ \vect n }$, and $\Omega = \LO( p , \RTs' , \Lbls )$.

                            In the base case, $D = \emptyset$.
                            By Transition~\ruleLabel{mon-out-dep-empty}, $\net P \ltrans{ \tau } \net P'$.
                            Then, by Transition~\ruleLabel{no-dep}, $\net P \bLtrans{\Omega}{\tau}{\Omega} \net P'$ so $\net P \bLtrans*{\Omega}{\tau}{\Omega} \net P'$.
                            Let $\RTs':=\RTs$.
                            By \satref{i:sTau}{Tau}, $( \net P' , \RTs' , \Lbls ) \in \mathcal{R}$.
                            Since the step from $\mc{M}$ to $\mc{M'}$ does not affect any relative types, then also $\coherent{ \gtc{G_0} }{ p }{ \mc{M'} }{ \RTs' }{ \vect n }$.
                            The condition on $\Omega$ holds by \cref{l:LOAlpha}.

                            In the inductive case, $D = D' \cup \braces{r_k}$.
                            By Transition~\ruleLabel{mon-out-dep}, $\net P \ltrans{ p \send r_k \Parens{\ell} } \monImpl{ \bufImpl{ p }{ P }{ \vect m } }{ p \send D' (\ell) \mc. \mc{M'} }{ \vect n }$.
                            Then, by Transition~\ruleLabel{dep}, $\net P \bLtrans{\Omega}{\tau}{\Omega} \monImpl{ \bufImpl{ p }{ P }{ \vect m } }{ p \send D' (\ell) \mc. \mc{M'} }{ \vect n }$.
                            By \satref{i:sDependencyOutput}{Dependency output}, there exists $\RTs'$ such that $( \monImpl{ \bufImpl{ p }{ P }{ \vect m } }{ p \send D' (\ell) \mc. \mc{M'} }{ \vect n } , \RTs' , \Lbls ) \in \mathcal{R}$.
                            Since $\RTs'$ only updates the entry of $r$ (which was a dependency output in $\RTs$), also $\coherent{ \gtc{G_0} }{ p }{ p \send D' (\ell) \mc. \mc{M'} }{ \RTs' , \vect n }$.
                            Using the same reasoning, by \cref{l:labelOracleDepOut}, $\Omega = \LO( p , \RTs' , \Lbls )$.
                            Then, by the IH, $\monImpl{ \bufImpl{ p }{ P }{ \vect m } }{ p \send D' (\ell) \mc. \mc{M'} }{ \vect n } \bLtrans*{\Omega}{\tau}{\Omega} \net P'$ so $\net P \bLtrans*{\Omega}{\tau}{\Omega} \net P'$.

                            Now, let $\net P'' := \monImpl{ \bufImpl{ p }{ P' }{ \vect m } }{ \mc{M'_j} }{ \vect n }$.
                            By Transition~\ruleLabel{mon-out}, $\net P' \ltrans{\alpha} \net P''$.
                            Let $\vect b := \epsi$ and $\Omega_2 := \Omega_1$.
                            By Transition~\ruleLabel{no-dep}, since $\Omega_2 = \Omega(\alpha)$, $\net P' \bLtrans{\Omega}{\alpha}{\Omega_2} \net P''$ so $\net P \bLtrans*{\Omega}{\vect b , \alpha}{\Omega_2} \net P''$.

                            By \satref{i:sOutput}{Output}, we have $( q , \rtc{R^q} ) \in \RTs'$ with $\unfold(\rtc{R^q}) = p \send q^{\mbb{L}} \rtBraces{ \msg i<T_i> \rtc. \rtc{R^q_i} }_{i \in I}$ and $j \in I$.
                            Let $\RTs'' := \RTs' \update{q \mapsto \rtc{R^q_j}}$ and $\Lbls' := \Lbls \update{\mbb L \mapsto j}$.
                            Then $( \net P'' , \RTs'' , \Lbls' ) \in \mathcal{R}$.
                            Since $\RTs''$ only updates the entry for $r$ (which was an output in $\RTs'$), we have $\coherent{ \gtc{G_0} }{ p }{ \mc{M'_j} }{ \RTs'' }{ \vect n }$.
                            By \cref{l:LOAlpha}, $\Omega_2 = \LO( p , \RTs'' , \Lbls' )$.
                            Trivially, $\net Q' \bLtrans*{\Omega_1}{\vect b}{\Omega_2} \net Q'$, so $\net Q \bLtrans*{\Omega}{ \alpha , \vect b }{\Omega_2} \net Q'$.
                            Finally, by definition, $( \net P'' , \Omega_2 , \net Q' ) \in \mathcal{B}$.

                        \item
                            Transition~\ruleLabel{buf-in}.
                            Then $\alpha = \tau$, $\vect n , \vect m = \vect n' , \vect m' , u$ where $u = q \send p (\msg j<T_j>)$, $\net Q' = \bufImpl{ p }{ P' }{ \vect n' , \vect m' }$, and $P \ltrans{ p \recv q (\msg j<T_j>) } P'$.
                            By \cref{l:LOAlpha}, $\Omega_1 = \Omega = \LO( p , \RTs , \Lbls )$.
                            We know $u$ appears in $\vect n$ or $\vect m$.
                            We discuss each case separately.
                            \begin{itemize}
                                \item
                                    We have $u$ appears in $\vect n$.
                                    Then $\vect n = \vect n' , u$ and $\vect m = \vect m'$, and there are no messages in $\vect m$ with sender $q$.

                                    By \cref{l:finiteReceive}, from $\mc{M}$ any path leads to the input by $p$ from $q$ in finitely many steps.
                                    We show by induction on the maximal number of such steps that there are $\vect b , \vect c , \vect d , \Omega_2$ such that $\net P \bLtrans*{\Omega}{\vect b}{\Omega_2} \monImpl{ \bufImpl{ p }{ P }{ \vect d , \vect m } }{ p \recv q \mBraces{ \msg i<T_i> \mc. \mc{M'_i} }_{i \in I} }{ \vect c , u } =: \net P'$ where $j \in I$, $\net Q' \bLtrans*{\Omega_1}{\vect b}{\Omega_2} \bufImpl{ p }{ P' }{ \vect c , \vect d , \vect m } =: \net Q''$, and that there exist $\RTs',\Lbls'$ such that $( \net P' , \RTs' , \Lbls' ) \in \mathcal{R}$, $\coherent{ \gtc{G_0} }{ p }{ p \recv q \mBraces{ \msg i<T_i> \mc. \mc{M'_i} }_{i \in I} }{ \RTs' }{ \vect c , u }$, and $\Omega_2 = \LO( p , \RTs' , \Lbls' )$.

                                    In the base case, $\mc{M} = p \recv q \mBraces{ \msg i<T_i> \mc. \mc{M'_i} }_{i \in I}$ with $j \in I$.
                                    Let $\vect b := \vect d := \epsi , \vect c := \vect n' , \Omega_2 := \Omega$.
                                    Then $\net P' = \net P$ and $\net Q'' = \net Q'$, so the thesis holds trivially.

                                    In the inductive case, by \cref{l:monitorNoOutput}, $\mc{M} \in \braces{ p \recv r \mBraces{ \msg k<T_k> \mc. \mc{M_k} }_{k \in K} , p \recv r \mBraces{ k \mc. \mc{M_k} }_{k \in K} , p \send D (\ell) \mc. \mc{M'} \mid r \neq q }$.
                                    We discuss each case separately.
                                    \begin{itemize}
                                        \item
                                            $\mc{M} = p \recv r \mBraces{ \msg k<T_k> \mc. \mc{M_k} }_{k \in K}$ for $r \neq q$.
                                            This case depends on whether there is a message from $r$ in $\vect n'$.
                                            We discuss each case separately.

                                            \medskip
                                            If there is a message from $r$ in $\vect n'$, then $\vect n' , u = \vect n'' , u , w$ where $w = r \send p (\msg k'<T_{k'}>)$ for $k' \in K$.
                                            Let $\net P'' := \monImpl{ \bufImpl{ p }{ P }{ w , \vect m } }{ \mc{M_{k'}} }{ \vect n'' , u }$.
                                            By Transition~\ruleLabel{mon-in}, $\net P \ltrans{\tau} \net P''$.
                                            By \satref{i:sTau}{Tau}, $( \net P'' , \RTs , \Lbls ) \in \mathcal{R}$.
                                            Since both $\vect n'$ and $\mc{M}$ have correspondingly updated, $\coherent{ \gtc{G_0} }{ p }{ \mc{M_{k'}} }{ \RTs }{ \vect n'' , u }$.
                                            By \cref{l:LOAlpha}, we have $\Omega(\tau) = \Omega$.
                                            By Transition~\ruleLabel{no-dep}, $\net P \bLtrans{\Omega}{\tau}{\Omega} \net P''$.
                                            By the IH, there are $\vect b , \vect c , \vect d , \Omega_2$ such that $\net P'' \bLtrans*{\Omega}{\vect b}{\Omega_2} \net P'$ so $\net P \bLtrans*{\Omega}{\vect b}{\Omega_2} \net P'$, and $\net Q' \bLtrans*{\Omega_1}{\vect b}{\Omega_2} \net Q''$
                                            Moreover, there are $\RTs',\Lbls'$ such that $( \net P' , \RTs' , \Lbls' ) \in \mathcal{R}$, $\coherent{ \gtc{G_0} }{ p }{ p \recv q \mBraces{ \msg i<T_i> \mc. \mc{M'_i} }_{i \in I} }{ \RTs' }{ \vect c , u }$, and $\Omega_2 = \LO( p , \RTs' , \Lbls' )$.

                                            \medskip
                                            If there is no message from $r$ in $\vect n'$, by the definition of \nameref{d:coherentSetup}, we have $( r , \rtc{R^r} ) \in \RTs$ with $\unfold(\rtc{R^r}) = r \send p^{\mbb L} \rtBraces{ \msg k<T_k> \rtc. \rtc{R^r_k} }_{k \in K}$.
                                            Take any $k' \in K$, and let $w := r \send p (\msg k'<T_{k'}>)$.
                                            Let $\RTs'' := \RTs \update{r \mapsto \rtc{R^r_{k'}}}$ and $\Lbls'' := \Lbls \update{\mbb L \mapsto k'}$.
                                            Also, let $\Omega'_2 := \LO( p , \RTs'' , \Lbls'' )$.
                                            Then, by \cref{l:LOAlpha}, $\Omega(p \recv r (\msg k'<T_{k'}>)) = \Omega'_2$.

                                            Let $\net P''_1 := \monImpl{ \bufImpl{ p }{ P }{ \vect m } }{ \mc{M} }{ \vect n' , u , w }$ and $\net Q''' := \bufImpl{ p }{ P' }{ \vect n' , w , \vect m }$.
                                            By \satref{i:sInput}{Input}, $(\net P''_1 , \RTs'' , \Lbls'' ) \in \mathcal{R}$.
                                            Since the buffer and relative types have changed accordingly, $\coherent{ \gtc{G_0} }{ p }{ \mc{M} }{ \RTs'' }{ \vect n' , u , w }$.
                                            By Transition~\ruleLabel{buf}, $\net P \bLtrans{\Omega}{w}{\Omega'_2} \net P''_1$ and $\net Q' \bLtrans{\Omega_1}{w}{\Omega'_2} \net Q'''$.
                                            Let $\net P''_2 := \monImpl{ \bufImpl{ p }{ P }{ w , \vect m } }{ \mc{M_{k'}} }{ \vect n' , u }$.
                                            By Transition~\ruleLabel{mon-in}, $\net P''_1 \ltrans{\tau} \net P''_2$.
                                            By \cref{l:LOAlpha}, $\Omega'_2(\tau) = \Omega'_2$.
                                            Then, by Transition~\ruleLabel{no-dep}, $\net P''_1 \bLtrans{\Omega'_2}{\tau}{\Omega'_2} \net P''_2$.
                                            By \satref{i:sTau}{Tau}, $(\net P''_2 , \RTs'' , \Lbls'' ) \in \mathcal{R}$.
                                            Since the buffer and monitor have changed accordingly, $\coherent{ \gtc{G_0} }{ p }{ \mc{M_{k'}} }{ \RTs'' }{ \vect n' , u }$.

                                            By the IH, there are $\vect b , \vect c , \vect d , \Omega_2$ such that $\net P''_2 \bLtrans*{\Omega'_2}{\vect b}{\Omega_2} \net P'$ so $\net P \bLtrans*{\Omega}{w,\vect b}{\Omega_2} \net P'$, and $\net Q''' \bLtrans*{\Omega'_2}{\vect b}{\Omega_2} \net Q''$ so $\net Q' \bLtrans*{\Omega_1}{w,\vect b}{\Omega_2} \net Q''$.
                                            Moreover, there are $\RTs',\Lbls'$ such that $( \net P' , \RTs' , \Lbls' ) \in \mathcal{R}$, $\coherent{ \gtc{G_0} }{ p }{ p \recv q \mBraces{ \msg i<T_i> \mc. \mc{M'_i} }_{i \in I} }{ \RTs' }{ \vect c , u }$, and $\Omega_2 = \LO( p , \RTs' , \Lbls' )$.

                                        \item
                                            $\mc{M} = p \recv r \mBraces{ k \mc. \mc{M_k} }_{k \in K}$ for $r \neq q$.
                                            This case is analogous to the one above.

                                        \item
                                            $\mc{M} = p \send D (\ell) \mc. \mc{M'}$.
                                            Let $\net P'' := \monImpl{ \bufImpl{ p }{ P }{ \vect m } }{ \mc{M'} }{ \vect n' , u }$.
                                            Similar to the case of Transition~\ruleLabel{buf-out} above, $\net P \bLtrans*{\Omega}{\tau}{\Omega} \net P''$.
                                            Moreover, there are $\RTs'',\Lbls''$ such that $(\net P'' , \RTs'' , \Lbls'' ) \in \mathcal{R}$, and $\coherent{ \gtc{G_0} }{ p }{ \mc{M'} }{ \RTs'' }{ \vect n' , u }$.
                                            By \cref{l:labelOracleDepOut}, $\Omega = \LO( p , \RTs'' , \Lbls'' )$.

                                            By the IH, there are $\vect b , \vect c , \vect d , \Omega_2$ such that $\net P'' \bLtrans*{\Omega}{\vect b}{\Omega_2} \net P'$ so $\net P \bLtrans*{\Omega}{\vect b}{\Omega_3} \net P'$, and $\net Q' \bLtrans*{\Omega_1}{\vect b}{\Omega_2} \net Q''$
                                            Moreover, there are $\RTs',\Lbls'$ such that $( \net P' , \RTs' , \Lbls' ) \in \mathcal{R}$, $\coherent{ \gtc{G_0} }{ p }{ p \recv q \mBraces{ \msg i<T_i> \mc. \mc{M'_i} }_{i \in I} }{ \RTs' }{ \vect c , u }$, and $\Omega_2 = \LO( p , \RTs' , \Lbls' )$.
                                    \end{itemize}

                                    Let $\net P'''_1 := \monImpl{ \bufImpl{ p }{ P }{ \vect d , \vect m , u } }{ \mc{M'_j} }{ \vect c }$.
                                    By Transition~\ruleLabel{mon-in}, $\net P' \ltrans{\tau} \net P'''_1$.
                                    Since the buffer and monitor changed accordingly, $\coherent{ \gtc{G_0} }{ p }{ \mc{M'_j} }{ \RTs' }{ \vect c }$.
                                    By Transition~\ruleLabel{no-dep}, $\net P' \bLtrans{\Omega_2}{\tau}{\Omega_2} \net P'''_1$.
                                    Let $\net P'''_2 := \monImpl{ \bufImpl{ p }{ P' }{ \vect d , \vect m } }{ \mc{M'_j} }{ \vect c }$.
                                    By \cref{l:LOAlpha}, $\Omega_2(\tau) = \Omega_2$.
                                    By Transition~\ruleLabel{buf-in}, $\bufImpl{ p }{ P }{ \vect d , \vect m , u } \ltrans{\tau} \bufImpl{ p }{ P' }{ \vect d , \vect m }$, so, by Transition~\ruleLabel{mon-tau}, $\net P'''_1 \ltrans{\tau} \net P'''_2$.
                                    By Transition~\ruleLabel{no-dep}, $\net P'''_1 \bLtrans{\Omega_2}{\tau}{\Omega_2} \net P'''_2$.
                                    By \satref{i:sTau}{Tau}, $(\net P'''_2 , \RTs' , \Lbls') \in \mathcal{R}$.
                                    We have $\net P \bLtrans*{\Omega}{\vect b,\tau}{\Omega_2} \net P'''_2$ and $\net Q \bLtrans{\Omega}{\tau}{\Omega_1} \net Q' \bLtrans*{\Omega_1}{\vect b}{\Omega_2} \net Q''$.
                                    Recall that $\Omega_2 = \LO( p , \RTs' , \Lbls' )$.
                                    Then, by definition, $( \net P'''_2 , \Omega_2 , \net Q'' ) \in \mathcal{B}$.

                                \item
                                    We have $u$ appears in $\vect m$.
                                    Then $\vect n = \vect n'$ and $\vect m = \vect m' , u$.
                                    Then, by Transition~\ruleLabel{buf-in}, $\bufImpl{ p }{ P }{ \vect m' , u } \ltrans{ \tau } \bufImpl{ p }{ P' }{ \vect m' }$.

                                    Let $\net P' := \monImpl{ \bufImpl{ p }{ P' }{ \vect m' } }{ \mc{M} }{ \vect n }$.
                                    By Transition~\ruleLabel{mon-tau}, $\net P \ltrans{\tau} \net P'$.

                                    By \cref{l:LOAlpha}, $\Omega(\tau) = \Omega$, so $\Omega_1 = \Omega$.
                                    Let $\Omega_2 := \Omega$.
                                    Then, by Transition~\ruleLabel{no-dep}, $\net P \bLtrans{\Omega}{\tau}{\Omega_2} \net P'$ so $\net P \bLtrans*{\Omega}{\tau}{\Omega_2} \net P'$.
                                    By \satref{i:sTau}{Tau}, $( \net P' , \RTs , \Lbls ) \in \mathcal{R}$.
                                    Finally, $\net Q \bLtrans{\Omega}{\tau}{\Omega_1} \net Q' \bLtrans*{\Omega_1}{\epsi}{\Omega_2} \net Q'$.
                                    Then $( \net P' , \Omega_2 , \net Q' ) \in \mathcal{B}$.
                            \end{itemize}

                        \item
                            Transition~\ruleLabel{buf-in-dep}.
                            Analogous to Transition~\ruleLabel{buf-in}.

                        \item
                            Transition~\ruleLabel{buf-tau}.
                            Then $\alpha = \tau$, $\net Q' = \bufImpl{ p }{ P' }{ \vect n , \vect m }$, and $P \ltrans{ \tau } P'$.
                            By Transition~\ruleLabel{buf-tau}, also $\bufImpl{ p }{ P }{ \vect m } \ltrans{ \tau } \bufImpl{ p }{ P' }{ \vect m }$.

                            By \cref{l:monitorNotCheck}, $\mc{M} \neq \checkmark$.
                            Let $\net P' := \monImpl{ \bufImpl{ p }{ P' }{ \vect m } }{ \mc{M} }{ \vect n }$.
                            By Transition~\ruleLabel{mon-tau}, $\net P \ltrans{ \tau } \net P'$.

                            By \cref{l:LOAlpha}, $\Omega(\tau) = \Omega$, so $\Omega_1 = \Omega$.
                            Let $\Omega_2 := \Omega$.
                            Then, by Transition~\ruleLabel{no-dep}, $\net P \bLtrans{\Omega}{ \alpha }{\Omega} \net P'$ so $\net P \bLtrans*{\Omega}{\epsi , \alpha }{\Omega_2} \net P'$.
                            By \satref{i:sTau}{Tau}, $( \net P' , \RTs , \Lbls ) \in \mathcal{R}$.
                            Finally, we have $\net Q \bLtrans{\Omega}{\alpha}{\Omega_1} \net Q' \bLtrans*{\Omega_1}{\epsi}{\Omega_2} \net Q'$.
                            Then $( \net P' , \Omega_2 , \net Q' ) \in \mathcal{B}$.

                        \item
                            Transition~\ruleLabel{buf-end}.
                            Then $\alpha = \pend$, $P \ltrans{ \pend } P'$, and $\net Q' = \bufImpl{ p }{ P' }{ \vect n , \vect m }$.
                            By Transition~\ruleLabel{buf-end}, also $\bufImpl{ p }{ P }{ \vect m } \ltrans{\pend} \bufImpl{ p }{ P' }{ \vect m }$.

                            By \cref{l:satisfactionEnd}, $\vect n = \epsi$, and $\mc{M} = \pend$ or $\mc{M} = p \send D (\ell) \mc. \pend$. W.l.o.g., assume the latter.
                            Let $\net P' := \monImpl{ \bufImpl{ p }{ P }{ \vect m } }{ \pend }{ \epsi }$.
                            Similar to the case for Transition~\ruleLabel{buf-out}, $\net P \bLtrans*{\Omega}{\tau}{\Omega} \net P'$.

                            Let $\net P'' := \monImpl{ \bufImpl{ p }{ P' }{ \vect m } }{ \checkmark }{ \epsi }$.
                            By Transition~\ruleLabel{mon-end}, $\net P' \ltrans{\pend} \net P''$.
                            Let $\vect b := \epsi$ and $\Omega_2 := \LO( p , \emptyset , \emptyset )$.
                            By \cref{l:LOAlpha}, $\Omega(\pend) = \Omega_2 = \Omega_1$.
                            By Transition~\ruleLabel{no-dep}, since $( \pend , \Omega_2 ) \in \Omega$, $\net P' \bLtrans{\Omega}{\pend}{\Omega_2} \net P''$, so $\net P \bLtrans*{\Omega}{\vect b , \pend}{\Omega_2} \net P''$.

                            By \satref{i:sEnd}{End}, for every $q \in \dom(\RTs)$, $\unfold\big(\RTs(q)\big) = \pend$, and $( \net P'' , \emptyset , \emptyset ) \in \mathcal{R}$.
                            Clearly, $\coherent{\gtc{G_0}}{p}{\checkmark}{\emptyset}{\vect n}$.
                            Then, trivially, $\net Q' \bLtrans*{\Omega_1}{\vect b}{\Omega_2} \net Q'$, so $\net Q \bLtrans*{\Omega}{\pend , \vect b}{\Omega_2} \net Q'$.
                            Finally, by definition, $( \net P'' , \Omega_2 , \net Q' ) \in \mathcal{B}$.
                            \qedhere
                    \end{itemize}
            \end{itemize}
    \end{enumerate}
\end{proof}

\end{document}